%% file: main.tex
\documentclass[11pt]{article}
\usepackage[numbers]{natbib}
\usepackage{amsmath}
\usepackage{amssymb}
\usepackage{amsthm}
\usepackage{xcolor}
\usepackage{graphicx}
\usepackage{algorithm}
\usepackage{algpseudocode}
\usepackage{enumitem}
\usepackage{array}
\usepackage{booktabs}
\usepackage{longtable}
\usepackage[margin=1in]{geometry}
\usepackage{hyperref}
\hypersetup{colorlinks=true,citecolor=blue,linkcolor=blue,urlcolor=blue}
\allowdisplaybreaks
\input{_0_vars}

\input{_1_maths}
\theoremstyle{plain}
\newtheorem{theorem}{Theorem}[section]
\newtheorem{lemma}[theorem]{Lemma}

\theoremstyle{definition}
\newtheorem{definition}[theorem]{Definition}

\begin{document}
\date{September 11, 2026}
\title{\paperTitle}
\author{\paperAuthor}
\maketitle
\begin{abstract}
\input{00_abstract}
\end{abstract}
\newpage
\input{_2_body}

\section*{Acknowledgements and AI Disclosure}
In the August 29, 2026 version of this draft (\url{10.5281/zenodo.22239302}), the author used Gemini Pro 3.1, Codex 5.6 Sol, and Claude Code Fable 5 to assist with language editing and grammar checking. In that version, the author was only able to obtain results for $\epsilon$ around $10^{-30}$. In the current version, the author used Codex 6, ChatGPT Pro 6, and Fable 5.1 to further improve $\epsilon$ to around $10^{-18}$. The main ideas for improving the paper came from AI tools. All of the proofs have been completely rewritten by the author. The author takes full responsibility for the correctness of the proofs.
\bibliographystyle{alpha}
\bibliography{ref}
\end{document}

%% file: _0_vars.tex
\newcommand{\paperTitle}{A Sharper Explicit Bound on the Subtour-LP Integrality Gap for Metric TSP}
\newcommand{\paperAuthor}{Zhao Song\thanks{\texttt{magic.linuxkde@gmail.com}.  The author would lilke to thank Victor Reis and Lichen Zhang for very helpful discussions.  This is version 2 of this draft. We provide a detailed comparison between the two versions in the Acknowledgments section of this paper.
} }

%% file: _1_maths.tex
\DeclareMathOperator{\E}{\mathbb{E}}

%% file: 00_abstract.tex
Karlin, Klein, and Oveis Gharan introduced a randomized better-than-$3/2$ approximation algorithm for metric TSP \cite{kko21} and subsequently established the corresponding improvement in the integrality gap of the subtour-elimination LP \cite{kko22}, with an explicit constant $\varepsilon>1.00000\cdot10^{-36}$. Gurvits, Klein, and Leake subsequently improved the certified saving to $2.18000\cdot10^{-34}$ \cite{gkl24}. In this paper, we obtain a randomized polynomial-time $(3/2-\varepsilon)$-approximation for every fixed $0<\varepsilon<\varepsilon_\star$, where $\varepsilon_\star>2.78621\cdot10^{-18}$, and consequently the subtour-elimination LP has integrality gap at most $3/2-\varepsilon_\star$.
The classical worst-case integrality-gap lower bound is $4/3$~\cite{williamson90}.

%% file: _2_body.tex
\input{01_intro}
\input{02_tech}
\input{10_preli}
\input{40_proof}
\input{60_improve}

%% file: 01_intro.tex
\section{Introduction}

The traveling salesperson problem is among the oldest and most extensively
studied problems in combinatorial optimization; formulations of the routing
question go back to the nineteenth century~\cite[Chapter~1]{applegate2007}.
The modern polyhedral theory began with the cutting-plane work of Dantzig,
Fulkerson, and Johnson~\cite{dfj54}.  Held and Karp subsequently developed the
linear-programming lower bound now called the subtour-elimination or Held--Karp
relaxation~\cite{heldkarp70}.  Metric TSP remains
APX-hard~\cite{karpinski15}.  In particular, it is NP-hard to approximate
within $123/122$.  Thus the
central algorithmic question has long been how closely a polynomial-time
algorithm can approach the optimum.

The classical $3/2$-approximation was discovered independently by Christofides
and Serdyukov in the 1970s~\cite{christofides76,serdyukov78}.  Their
method combines a minimum spanning tree with a minimum-cost parity correction
on its odd-degree vertices.  Wolsey later showed that the same $3/2$ guarantee
holds relative to the subtour LP~\cite{wolsey80}, which also established the
long-standing upper bound $3/2$ on its integrality gap.  The widely studied
$4/3$ conjecture predicts that the true integrality gap is $4/3$.  A related
conjecture of Schalekamp, Williamson, and Zuylen asserts that the largest
gap is attained by a half-integral vertex of the subtour polytope
~\cite{schalekamp14}; this made half-integral instances a particularly
important testing ground for progress below $3/2$.

Substantial improvements were obtained earlier for restricted metrics.
Euclidean TSP admits polynomial-time approximation schemes due independently
to Arora~\cite{arora98} and Mitchell~\cite{mitchell99}.  For graph metrics,
Oveis Gharan, Saberi, and Singh introduced a randomized rounding method based
on maximum-entropy spanning trees and obtained the first
$(3/2-\epsilon_0)$-approximation~\cite{oss11}.  M\"omke and Svensson then gave
a combinatorial $1.461$-approximation~\cite{momke_svensson11}; Mucha improved
the factor to $13/9$~\cite{mucha12}, and Seb\H{o} and Vygen reached
$7/5$~\cite{sebo_vygen14}.  This graphic-TSP line demonstrated that parity
correction could be made cheaper than half the optimum, while the
maximum-entropy approach suggested a possible route for general metrics.

For instances with a half-integral optimum subtour-LP solution,
\cite{kko20} gives a $1.49993$-approximation.  For general metric TSP,
\cite{kko21} proves the first randomized approximation with ratio
$3/2-\epsilon$ and gives the explicit scale
$\epsilon>1.00000\cdot10^{-36}$.  The subtour-LP
analysis in \cite{kko22} proves that its integrality gap is strictly below
$3/2$ and also gives a better-than-$3/2$ result for the 2-edge-connected
multi-subgraph problem.  The conditional-expectation method in \cite{kko23}
derandomizes the max-entropy algorithm.

Gurvits, Klein, and Leake later revisited the probabilistic part of the KKO
analysis using new capacity bounds for real stable polynomials~\cite{gkl24}.
Their estimates raised the controlling common-event probability from
$2\cdot10^{-10}$ to $1.5\cdot10^{-9}$~\cite[Section~4.2]{gkl24} and yielded
the explicit approximation constant
$\epsilon=2.18000\cdot10^{-34}$~\cite[Corollary~4.6]{gkl24}.  This refinement
preserved the max-entropy algorithm and the combinatorial framework while
sharpening the quantitative probability bounds used in its analysis.

\subsection{Our result}
We state our result as follows.

\begin{theorem}[Informal version of Theorem~\ref{thm:our_main_formal}]
\label{thm:our_main_informal}
For some absolute constant
$\epsilon>2.78621\cdot10^{-18}$, there exists a randomized polynomial-time
algorithm that, on every metric TSP instance, returns a tour whose expected
cost is at most
$(3/2-\epsilon)\operatorname{OPT}_{\mathrm{LP}}$.
Consequently, the integrality gap of the subtour-elimination LP is at most
$3/2-\epsilon$.
\end{theorem}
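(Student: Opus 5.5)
The plan is to keep the Karlin--Klein--Oveis Gharan algorithm and its combinatorial analysis intact, and to re-derive the explicit constant by tightening the quantitative probability estimates that the final bound depends on. The algorithm is as in \cite{kko21,kko22}. Solve the subtour LP to get $x$ and attach a degree-1 auxiliary edge. Compute a maximum-entropy distribution $\mu$ over spanning trees whose edge marginals match $x$ up to a $(1+\eta)$ factor. Sample $T\sim\mu$ and add a minimum-cost $O$-join, where $O$ is the set of odd-degree vertices of $T$. Then shortcut to a Hamiltonian tour.

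The first step is to recall the accounting reduction from \cite{kko22}. The expected cost of the $O$-join is bounded by constructing a random fractional $O$-join solution $y$, obtained by reducing $x_e/2$ on certain "good" edges. Here "good" is defined through the near-min-cut hierarchy: polygon cuts, triangles, degree cuts, and the $(1-\epsilon_\eta)$-near-min-cut structure. Any increase needed to repair cuts whose parity is odd is charged to slack from other edges. The final saving has the form $\epsilon = c_1 \cdot p \cdot \beta - c_2(\eta,\epsilon_\eta)$. Here $p$ is the lower bound on the probability of the common "good event" for an edge (the probability controlled by \cite[Section~4.2]{gkl24}), $\beta$ is the per-edge reduction size, and $c_1$ is a combinatorial fraction of LP mass that is good, of order the polygon and hierarchy constants. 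I would write this formula symbolically, with every constant of \cite{kko22} made explicit as a parameter. This reduces Theorem~\ref{thm:our_main_informal} to improving the parameters.

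The second step, which is where I expect the gain to come from, is to improve the exponents in the chain of parameter dependencies rather than only $p$. Going from $10^{-34}$ to $10^{-18}$ is far more than a constant-factor gain in $p$. So the improvement must come from the hierarchy parameters, such as $\epsilon_\eta$, the polygon slack, and the fraction of good edges, which in \cite{kko22} are chosen as powers of each other. Concretely, I would:
\begin{enumerate}[label=(\roman*)]
\item replace the lossy union bounds in the polygon and degree-cut analysis with the real-stable capacity bounds of \cite{gkl24}, applied jointly to the relevant cut events;
\item choose the reduction $\beta$ and the matching-increase threshold as the optimum of the resulting (now explicit) inequality system, instead of the fixed choices made in prior work.
\end{enumerate}

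The third step is the arithmetic. I would verify that the optimized parameters satisfy all constraints, including the near-min-cut requirements and the requirement that $y$ remains a feasible $O$-join LP solution in expectation. I would then evaluate $\epsilon_\star>2.78621\cdot 10^{-18}$. For any fixed $\epsilon<\epsilon_\star$, choosing $\eta$ small enough (polynomially many bits) makes the max-entropy computation polynomial and absorbs the $\eta$ loss. Since the expected tour cost is compared to $\mathrm{OPT}_{\mathrm{LP}}$, the integrality-gap statement follows immediately.

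The main obstacle is the second step: proving the sharper joint probability lower bound for the good event under the strongly Rayleigh measure $\mu$. The conditioning in \cite{kko22} (on parity of several nested cuts simultaneously) is where the exponential losses in the constants arise. My first attempt would be to bound the relevant generating polynomial after the appropriate specializations via capacity, as in \cite{gkl24}. I would treat the events for an edge's two endpoints' cut chains together rather than sequentially, so that the probability does not degrade multiplicatively at each level.
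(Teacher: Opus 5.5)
Your outline is a research plan rather than a proof, and the lever it proposes is not the one that produces the improvement.

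Bringing real-stable capacity bounds to bear on the happiness events is exactly what GKL already did. It moved $p$ only from $2\cdot10^{-10}$ to $1.5\cdot10^{-9}$. At $\epsilon_{1/2}=0.0002$ those events are certified as constants times $\epsilon_{1/2}^2$, so sharper joint estimates at the old parameters can only improve the constants.

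Nor does the KKO accounting lose probability multiplicatively along an edge's chain of nested cuts, so there is nothing for a joint treatment to remove. Here $p$ bounds a \emph{local} event at one bundle and its endpoints. The ancestors enter only through conditional parity bounds that are summed level by level (Lemma~\ref{lem:candidate_ancestor}).

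The paper instead moves to a much larger local regime:
\begin{itemize}
\item half-bandwidth $h\approx0.0047$;
\item degree-partition parameter $r\approx0.002$, which violates KKO's $12r\le h$;
\item polygon budget $\epsilon_M\approx0.0033$ instead of $1/4000$;
\item a separate crossing cutoff $\omega$.
\end{itemize}
This raises $p$ to about $10^{-5}$, nearly four orders of magnitude, and the payment coefficient $a=\zeta rp$ above $1.37\cdot10^{-8}$. So the gain in $p$ is far from a constant factor, and it comes from the local bundle parameters rather than the near-minimum-cut hierarchy.

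The price is that every input that silently relied on tiny parameters must be reproved on the new domain. This includes:
\begin{itemize}
\item a new notion of good half bundle;
\item the electrical remote-influence and competing-half estimates (Lemmas~\ref{lem:final23_electrical_remote} and~\ref{lem:final24_competing}), so that bad half bundles still form a matching;
\item atom-dependent Newton inequalities and tail profiles for every happiness event;
\item a localized matching replacing $\epsilon_B\ge21h$;
\item the bad-endpoint parity bootstrap;
\item the hierarchy interface up to $\eta\le1.4\cdot10^{-9}$;
\item sharper repair coefficients $5$ and $12$.
\end{itemize}
None of this is supplied or even identified in your proposal.

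Your accounting also has two slips. First, $y$ must be a feasible fractional join for the odd set of \emph{every} sampled tree; only its cost is averaged. That is why the paper needs the pointwise bounds $Z_e\ge-\beta x_e$ and $x(\delta(S))/2+Z(\delta(S))\ge1$ on every odd cut, not feasibility in expectation.

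Second, the saving is not of the form $c_1p\beta-c_2$. Cuts of excess at least $\eta$ get no repair, which forces $\beta\le\eta/(4+2\eta)$, while repairs cost order $\eta\beta$. So the saving is roughly quadratic in $a$. The paper exploits this by combining $10^6$ thresholds $u\le H=1.38217\cdot10^{-9}$ on the same tree (Lemma~\ref{lem:finite_layer_combination}). Each layer mixes two parameter rows so that the degree-parent and one-sided credits agree. The constant $2.78621\cdot10^{-18}$ is the exact value $B_*$ of that certificate (Lemma~\ref{lem:final24_layered}), so your third step has nothing concrete to evaluate.

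The transfer to the computable law is fine in spirit. The paper does it once, by a total-variation bound applied to $T\mapsto c(T)+\mathsf J(T)$ with $0\le\mathsf J(T)\le c(x)/2$ (Lemma~\ref{lem:prelim_slack_to_tour}).
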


\begin{table}[ht]
\centering
\caption{Upper and lower bounds on the integrality gap of the subtour LP
for metric TSP. Part (a) compares the certified savings $\epsilon$ in the
upper bound $3/2-\epsilon$. Part (b) records the classical lower bound,
approached by a family of graphic metric instances.}
\label{tab:main_theorems}
{\bf (a) Upper bounds: certified savings below $3/2$.}\par\smallskip
\begin{tabular}{|l|l|l|l|}
\toprule
Year & Authors & Reference & Certified lower bound on $\epsilon$ \\
\midrule
2022
& Karlin, Klein, and Oveis Gharan
& \cite{kko22}
& $>1.00000\cdot10^{-36}$ \\
2024
& Gurvits, Klein, and Leake
& \cite{gkl24}
& $\geq2.18000\cdot10^{-34}$ \\
2026
& This work
& Theorem~\ref{thm:our_main_informal}
& $>2.78621\cdot10^{-18}$ \\
\bottomrule
\end{tabular}
\par\medskip
{\bf (b) Lower bound on the integrality gap.}\par\smallskip
\begin{tabular}{|l|l|l|l|}
\toprule
Year & Authors & Reference & Integrality-gap lower bound \\
\midrule
1990
& Williamson
& \cite[p.~35, Figure~2.3]{williamson90}
& $4/3\approx1.33333$ \\
\bottomrule
\end{tabular}
\end{table}

{\bf Organization.} Section~\ref{sec:technique_overview}
reviews the KKO and GKL frameworks and summarizes the refinements that yield
our improved constant. Section~\ref{sec:preliminaries} introduces the notation,
records the max-entropy reduction and its finite-precision guarantee, and gives
the conversion from slack vectors to tours. Section~\ref{sec:proof_improved_theorem}
establishes the probability, payment, and parity-correction estimates, constructs the
layered slack vector, and proves the main result of this paper.

%% file: 02_tech.tex
\section{Technique overview}
\label{sec:technique_overview}

Section~\ref{sec:tech_kko21} reviews the KKO21 framework, including
max-entropy spanning trees, happiness events, and the payment and repair
arguments.
Section~\ref{sec:tech_gkl24} explains how GKL24 sharpens the probability
estimates while retaining this framework.
Section~\ref{sec:tech_ours} summarizes our refined probability, payment, and
repair estimates and explains how they combine to improve the constant.

\subsection{The KKO21 framework}
\label{sec:tech_kko21}

{\bf Step 1: Max-entropy tree and parity correction.}
Here KKO21 refers to the max-entropy framework of Karlin, Klein, and
Oveis Gharan~\cite{kko21}; we also use its subtour-LP extension
~\cite{kko22}.  Start with an optimum subtour-LP solution $x^0$ and perform
the standard vertex-splitting reduction, introducing a zero-cost edge
$e_0=\{u_0,v_0\}$ of value one.  After deleting $e_0$, the remaining vector
$x$ lies in the spanning-tree polytope.
The algorithm samples a maximum-entropy spanning tree $T$ with marginals $x$,
lets $O$ be the set of odd-degree vertices of $T$, and augments $T$ by a
minimum-cost $O$-join.  The zero-cost edge $e_0$ is used only in the LP
reduction and the fractional-join bookkeeping.  Thus the tree has expected
cost $c(x)$, and the
problem is to construct a feasible fractional $O$-join of expected cost
strictly below $c(x)/2$; see \cite[Sections~2 and~6]{kko22}.

{\bf Step 2: Structure of near-minimum cuts.}
After a small coordinatewise decrease from $x/2$, only cuts whose $x$-value
is extremely close to two can violate an odd-cut constraint.  KKO therefore focus on the
$\eta$-near-minimum cuts.  Crossing components of these cuts admit polygon
representations, while the remaining relevant cuts can be organized into a
laminar hierarchy.  This approximate-uncrossing description is what prevents
one edge from being charged by an unbounded number of near-minimum cuts.  It
also separates cuts crossed on both sides, cuts crossed on at most one side,
and the degree, triangle, and outer-polygon cuts appearing in the hierarchy.
The extension in \cite{kko22} replaces charging to edges of an optimum tour by
repair sets extracted from the polygon representation.  Consequently the
entire comparison is against $c(x)$, which is the key step needed for an
integrality-gap bound rather than only an approximation guarantee.

{\bf Step 3: Happiness events and payment.}
The quantitative heart of the method is a random signed slack vector.  A
hierarchy edge bundle is declared \emph{good} when one of several local tree
events has a uniform positive probability.  These events say, for example,
that prescribed boundary sets contain exactly one or two tree edges; they are
called happiness events.  On a happiness event, KKO decrease the slack of the
corresponding bundle.  A max-flow matching distributes the burden of an
upward boundary among good horizontal bundles, and the payment theorem proves
that every good edge receives a negative expected contribution.  The common
probability $p$ is the minimum of six local happiness estimates, so the
payment coefficient $\epsilon_P$ is proportional to $p$.

{\bf Step 4: Feasibility repair and parameter balance.}
Negative slack cannot be used without repairing the odd cuts on which it may
create a deficit.  KKO construct nonnegative repair vectors for cuts crossed
on both sides and for the one-sided polygon configurations.  Their support is
chosen so that each relevant odd cut receives enough positive slack and every
edge has bounded expected repair.  In the notation of
\cite[Theorem~6.1]{kko22}, the signed vector contributes a decrease of order
$\epsilon_P\beta x_e$, whereas the repair costs order $\eta\beta x_e$.
Choosing $\eta$ proportional to $\epsilon_P$ makes the net saving quadratic in
$\epsilon_P$, hence quadratic in $p$.  This architecture---max-entropy tree,
near-minimum-cut hierarchy, happiness events, matching and payment, and
bounded-congestion repair---is the part retained by all subsequent work.

\subsection{The GKL24 refinement}
\label{sec:tech_gkl24}

{\bf Steps 1, 2, and 4 in Section~\ref{sec:tech_kko21} are reused directly.}
GKL use the same max-entropy spanning-tree algorithm and parity correction
from Step~1, the same hierarchy and polygon structure from Step~2, and the
same repair vectors and parameter balancing from Step~4.  Within Step~3 they
also retain the definitions of good bundles and happiness events, as well as
the KKO matching and payment arguments.  Thus GKL do not change the sampled
tree, the combinatorial architecture, or the final $O$-join construction.

{\bf Only the probability certification in Step 3 is modified.}
A maximum-entropy spanning-tree distribution is strongly Rayleigh.  Therefore
its multivariate generating polynomial is real stable.  If
$A_1,\ldots,A_m$ count tree edges in disjoint edge sets, then the probability
of an exact pattern $A_i=\kappa_i$ is a coefficient of this polynomial, while
the expectations of the $A_i$ are entries of its gradient at the all-ones
vector.  GKL prove new capacity bounds that convert control of these
expectations, including the expectations of partial sums, into a lower bound
for the desired coefficient \cite[Theorem~4.1 and Corollary~4.2]{gkl24}.
This replaces several case-specific Bernoulli-sum estimates in KKO by one
multivariate stable-polynomial inequality.  For the harder configurations,
GKL first use capacity to reduce to a structured special case and then combine
it with the original log-concavity and stochastic-dominance arguments.

{\bf How the modified probability argument works.}
The contrast with the KKO probability lemma is especially useful here.  That
lemma turns lower-tail and upper-tail estimates for every partial sum into a
joint exact-count estimate, but its dependence on the tail parameter is
doubly exponential in the number of counted sets.  KKO observed that suitable
expectation bounds imply the required tail estimates, yet the resulting
generic constant was too small for their final analysis, so they proved the
important happiness estimates by ad hoc arguments.  GKL instead show, under
the slightly stronger condition that every partial-sum expectation stays a
fixed distance from the neighboring integer, that the exact-count probability
has a simply exponential lower bound, with the correct dependence on that
distance \cite[Theorems~2.5 and~2.6]{gkl24}.  Their proof productizes a
stable polynomial into affine factors, reduces the associated matrices to
extreme points supported on bipartitioned forests, and inducts on forest
leaves \cite[Section~3]{gkl24}.  Removing the dependence on the total degree
is what makes this capacity bound effective for the TSP events.

{\bf Quantitative effect of the modification.}
For events~2--5, the KKO coefficients of $\epsilon_{1/2}^2$ are
$0.005$, $0.006$, $0.005$, and $0.020$, respectively.  GKL replace them by
$0.039$, $0.038$, $0.0498$, and $0.0485$
\cite[Section~4.2]{gkl24}; the first event is unchanged.  At
$\epsilon_{1/2}=0.0002$, the KKO and GKL lower bounds for the first five
events are, respectively,
\[
\begin{array}{c|ccccc}
 & 1 & 2 & 3 & 4 & 5 \\
\hline
\text{KKO} & 1.50 & 0.20 & 0.24 & 0.20 & 0.80 \\
\text{GKL} & 1.50 & 1.56 & 1.52 & \geq 1.90 & \geq 1.90
\end{array}
\quad\text{times }10^{-9}.
\]
The sixth event is far from the bottleneck; GKL note that its threshold changes
slightly when $p$ is raised and record that it remains above the new common
value.  Thus the minimum rises from $p_{\mathrm{KKO}}=2\cdot10^{-10}$ to
$p_{\mathrm{GKL}}=1.5\cdot10^{-9}$, an improvement by a factor of $7.5$
\cite[Section~4.2]{gkl24}.

{\bf How the modification improves $\epsilon$.}
Nothing after this substitution requires a new TSP construction.  The KKO
payment theorem still makes $\epsilon_P$ linear in $p$, and the final choice
of $\eta$ still makes the approximation saving quadratic in $p$.  Consequently,
the $7.5$-fold improvement in $p$ becomes a factor of $7.5^2=56.25$: the same
parameterized formula changes from $3.88\cdot10^{-36}$ to
$2.1825\cdot10^{-34}$.  GKL package this dependence as
$9.7p^2\cdot10^{-17}$ and report $2.18\cdot10^{-34}$
\cite[Lemma~4.5 and Corollary~4.6]{gkl24}.  Comparing the displayed theorem
guarantees $1.00\cdot10^{-36}$ and $2.18\cdot10^{-34}$ gives a factor of
$218>10^2$ \cite[Theorem~1.1]{kko21}.  This ratio is larger than $56.25$
because the KKO theorem states its final constant conservatively: against the
sharper pre-GKL bound $4.11\cdot10^{-36}$ quoted by GKL, the ratio is about
$53$ \cite[Sections~2 and~4.2]{gkl24}.  In short, KKO supply the combinatorial
mechanism that turns local happy events into a cheap $O$-join, whereas GKL
supply a stronger real-stable-polynomial method for proving that those same
events occur often enough.

\subsection{Summary of our approach}
\label{sec:tech_ours}

{\bf Steps 1 and 2 retain the structural framework of
Section~\ref{sec:tech_kko21}.}
We use the same subtour-LP reduction, max-entropy spanning-tree algorithm,
hierarchy and polygon representation of near-minimum cuts, and conversion
of a feasible slack vector into a cheap tour. The new work strengthens the
quantitative slack certificate under the exact tree law. The finite-precision
comparison is applied only after all layers have been combined, to the
actual tree-plus-minimum-join cost.

{\bf Section~\ref{sec:proof_improved_theorem} uses thirty-four parameter choices.}
The probability estimates, localized matching, and conditional parity bound
feed the same ancestor and payment argument for each row. The separated repairs and
aggregate-credit lemma then give a centered slack vector, and one finite-layer
lemma converts its threshold-dependent guarantees into all-cut feasibility.
The hierarchy domain and all scalar conditions are checked for each complete tuple.

{\bf Step 3 strengthens the local probability and parity estimates.}
For a fixed row, a half bundle is good at conditional probability
$\gamma_{\rm good}h$, with that row's parameters.
This class is not assumed to be contained in the source good-half class.
An electrical-energy argument bounds the smaller remote-vertex mean
increase by $0.22$ throughout $h\leq0.0055$. The competing-half
calculation then gives probability greater than $0.01330$ for one of
two incident half bundles, so bad half bundles form a matching.
Atom-dependent Newton inequalities and a normalized concave rank-balance expression show
that upward mass at least $1/2+k_{\rm good}h$ forces goodness
(Lemmas~\ref{lem:final23_electrical_remote} and~\ref{lem:candidate_good_threshold}).

For the window event, we condition on the bundle being selected before
estimating its residual lower tail. A pointwise inequality in the binary
covering coupling gives
\[
L_*:=\frac{(\gamma_{\rm good}-2)h-12d_0}{3/2-h-3d_0}>0.
\]
Factoring the parallel-edge choice and applying full-rank monotonicity
to deletion preserve this total-rank lower tail. The distinct window-premise
subcount still retains its additive deletion allowance. The conditional
rank-three atom first exceeds $0.248$. Four mean updates and an
atom-dependent Newton coefficient strengthen this estimate separately
on three rank ranges. A root bound gives central mass $m_*$, and a
mean-sensitive quadratic split proves the small and middle window ranges.
The large range retains the quadratic tail estimate. The direct mixed
event uses the same cutoff-dependent products and keeps both its actual
conditioning probability and independent endpoint-selection factor.
The low-mean CDF domain extends through $1.512$.

For the non-half events, full projected-rank conditioning changes each
subunion mean by its rank-excess budget. In the small case, a sharper
excess-tail estimate and two concavity arguments reduce small rank-three
probability to two outer endpoint checks; the complementary range uses up/down tails.
In the large case, proper-cut support and Newton bounds control the
rank-three means, and a homogeneous cubic split gives the stronger event
probability. All dimension reductions and shared interval endpoints are
explicit.
The paired proof keeps crossing-dependent lower and upper excess moments.
Nested-event mean comparisons and an asymmetric binary separator sharpen both branches.
For upper subset means, deleting the residual block before the final binary
selection gives the same final law with stronger bounds. Both complete
subset tables are retained on a closed crossing-mass cover, and a vertex
calculation retains the additive relations between subset means.
Proper-cut exclusions give coefficient factors
$e^{-2}/2$ and $e^{-1}/2$ in the low and high branches, respectively.
The two opposite selected-bundle events are disjoint, so their probabilities
are added within the same separator branch.

For polygons, a small-excess joint-atom estimate and a complete max-flow cut
proof preserve both separate marginal budgets. Each row's top and polygon
rates are $p_j$ and $P_j=s_jp_j$. Independent thinning preserves the conditional
laws; paired reductions share one coin. The crossing cutoff $\omega_j$ is
separate from the half-bandwidth $h_j$. Event classification is repeated for
each complete parameter tuple in Definition~\ref{def:final24_parameters}.

At a bad endpoint, both bundle states contribute to the same badness budget.
An outward-rounded bootstrap and closed-interval certificate bound its covering
mean difference by the row's $\Delta_{*,j}$. For a polygon child, its boundary
count reduces outside an explicit exceptional event to a Bernoulli sum of
degree at most two, independent of the internal child-tree count. The overlap
thresholds $0.0025$ and $0.25$ give stronger bottom savings. Neither argument
divides by the rare polygon-event probability.

The ancestor proof retains actual bottom savings, including terminal root-bad
edges and all large-, small-, and fractional-mass cases. Only a fractional
atom incident to an internal bad bundle receives the matching discount.
The top and bottom credits are $a_j,a_{{\rm bot},j}$. The two reduction
rates remain separate. Exact cut-excess accounting and fractional endpoint
allocation give one-sided repair coefficient $12$; the two-sided coefficient
is $5$. The one-sided repair remains supported on bottom edges.

{\bf Step 4 mixes threshold-dependent cutwise credit.}
Fix a hierarchy at threshold $u$ before choosing a parameter row.
In row $j$, put $W_j=s_j^{\rm pay}+r_j^{(1)}$ and retain the edgewise
credit $m_j(u)=\min\{a_j,a_{{\rm bot},j}-R_1(u)u\}>0$.
At a child incident to a bad bundle, the horizontal credit is the sum
of a baseline term $a_j^{(4)}I$ and a nonnegative term proportional to
the upward bad mass; neither term is discarded. Together with the
upward good-edge credit, this gives the degree-parent bound $A_j(u)$
in Lemma~\ref{lem:final24_credit}. The remaining one-sided cuts retain
bottom credit $B_j(u)=(1-d_0)(a_{{\rm bot},j}-R_1(u)u)$.

Mix the row witnesses with weights $w_j$. Their good sets may differ,
but the two cut classes agree because the hierarchy is fixed. Hence
the guaranteed cut credit is
\[
M_w:=\min\{\sum_jw_jA_j(u),\sum_jw_jB_j(u)\},\quad
\bar m:=\sum_jw_jm_j(u).
\]
The minimum is taken after mixing. Define
\[
C_w:=\frac{M_w}{(1-\bar m)(2+u)+M_w},\quad
\lambda:=\frac{2+u}{(1-\bar m)(2+u)+M_w}.
\]
For $W=\sum_jw_jW_j$, the centered vector
$\lambda(W-\E[W])-C_w\beta x$ preserves the coordinate and one-sided
cut bounds: $\lambda(1-\bar m)+C_w=1$ and
$\lambda M_w=C_w(2+u)$. Adding one two-sided repair leaves saving
$C_w-R_2(u)u$.

Use $10^6$ equally spaced thresholds up to $H=138217/10^{14}$, all on the
same sampled tree. An explicit thirty-three-interval schedule uses two rows
whose mixed cut credits agree. Coordinate losses
telescope, and cut excess pays for layers below that cut's threshold.
The rational ledger and finite-layer argument give
\[
B_*>2.78621079751\cdot10^{-18}.
\]
The definition and finite-sum proof are in Lemma~\ref{lem:final24_layered}.
A one-time implementation loss $10^{-28}$ still leaves
$\varepsilon>2.78621\cdot10^{-18}$. The centered expectations and auxiliary
layers are analysis witnesses, not computations required of the algorithm.

%% file: 10_preli.tex
\section{Preliminaries}
\label{sec:preliminaries}

Section~\ref{sec:prelim_notation} introduces the notation for the subtour-LP
solution, cuts, the KKO hierarchy, and the exact and approximate max-entropy
distributions. Section~\ref{sec:prelim_setup} records the standard reduction
to the spanning-tree polytope and the finite-precision max-entropy guarantee.
Finally, Section~\ref{sec:prelim_slack_to_tours} bounds the tour-cost loss
when passing to the approximate tree law.

\subsection{Notation}
\label{sec:prelim_notation}

Throughout the paper, references to numbered statements in \cite{kko21},
\cite{kko22}, and \cite{gkl24} use, respectively, the full versions
arXiv:2007.01409v6, arXiv:2105.10043v3, and arXiv:2311.09072v2.

Let $x^0$ denote a feasible subtour-LP solution after adjoining the zero-cost edge $e_0=\{u_0,v_0\}$, and set $E:=\operatorname{supp}(x^0)\setminus\{e_0\}$.
Let $x$ be the restriction of $x^0$ to $E$, let $G=(V,E)$, and write $n:=|V|$. We use $\operatorname{OPT}_{\mathrm{LP}}$ for the optimum subtour-LP value.

For a vector $z\in\mathbb R^E$ and an edge set $F\subseteq E$, write $z(F):=\sum_{e\in F}z_e$ and $\operatorname{supp}(z):=\{e\in E:z_e>0\}$. For a nonnegative edge-cost vector $c$, write $c(z):=\sum_{e\in E}c_ez_e$.

For a vertex set $S\subseteq V$, let $E(S):=\{\{u,v\}\in E:u,v\in S\}$ and $\delta(S):=\{\{u,v\}\in E:|\{u,v\}\cap S|=1\}$.
The notation $e_0\in\delta(S)$ is shorthand for $|\{u_0,v_0\}\cap S|=1$ in the augmented graph; sums over $\delta(S)$ remain over $E$ unless the $e_0$ coordinate is explicitly included.
If $T\subseteq E$ is a spanning tree and $F\subseteq E$, write $F_T:=|F\cap T|$. In particular, $\delta(S)_T$ is the number of tree edges crossing the cut $S$.

When invoking the KKO hierarchy and polygon classification, we identify $u_0$ and $v_0$. Thus every classified cut satisfies $e_0\notin\delta(S)$; cuts with $e_0\in\delta(S)$ are handled separately by setting $y_{e_0}=1$. A cut $S$ with $e_0\notin\delta(S)$ is \emph{$\eta$-near-minimum} if $x(\delta(S))<2+\eta$.

We use the hierarchy and polygon terminology of~\cite[Appendix~B]{kko22}.
The construction in the proof of \cite[Theorem~B.3]{kko22}, together with
\cite[Theorem~6.2 and Fact~B.4]{kko22}, produces for every $x$ and
$\eta\leq10^{-12}$ a valid hierarchy whose root is
$V\setminus\{u_0,v_0\}$, whose one-sided and two-sided classes exhaust the
relevant $\eta$-near-minimum cuts, and whose added hierarchy cuts are
$7\eta$-near-minimum. In the payment estimates we use the conservative
uniform envelope $d:=14\eta$, consistent with
\cite[Fact~4.34]{kko21}; the one-sided repair estimate retains the sharper
$7\eta$ bound. For a non-root hierarchy cut $S$, write $\mathsf p(S)$ for
its parent, $\delta^\uparrow(S):=\delta(S)\cap\delta(\mathsf p(S))$, and
$\delta^\rightarrow(S):=\delta(S)\setminus\delta(\mathsf p(S))$.
The children of a hierarchy cut are called its atoms. For disjoint atoms $u,v$, the set $E(u,v)$ is the edge bundle with one endpoint in each. Hierarchy cuts are classified as degree cuts or near-cycle cuts. Following \cite[Definition~B.1]{kko22}, a \emph{triangle cut} is a hierarchy cut with exactly two children. Separately, the three-atom degree configuration in \cite[Case~1 in the proof of Lemma~7.6]{kko21} will always be called a \emph{three-atom degree configuration}. The KKO polygon classification partitions the relevant near-minimum cuts into those crossed on both sides and those crossed on at most one side. When applying the KKO payment theorem, we follow \cite[Definition~4.31]{kko21} and regard a triangle cut as a degenerate polygon cut with $C=\varnothing$.

Whenever a KKO hierarchy is used, write $d$ for the uniform error envelope just described. The one-sided repair theorem retains its sharper error $7\eta$.
For the final parameter choice, Lemma~\ref{lem:final24_hierarchy}
extends this hierarchy interface to $\eta\leq1.4\cdot10^{-9}$ directly
from the same construction and its unrounded geometric bounds.
As in \cite[Definition~5.13]{kko21}, edges incident to $u_0$ or $v_0$
are declared bad and carry zero payment; they are neither top nor bottom
edges.

We use $\mu$ for the exact max-entropy distribution on spanning trees with marginals $x$, and $\mu_\lambda$ for a computable $\lambda$-uniform approximation. Unless a subscript is displayed, probabilities and expectations are taken over the current random spanning tree and any auxiliary randomness in the slack vectors.

\subsection{Setup and max-entropy approximation}
\label{sec:prelim_setup}

We first record the standard reduction that places the subtour-LP solution in the spanning-tree polytope, together with the finite-precision max-entropy guarantee.

\begin{lemma}[Subtour-LP reduction and max-entropy approximation]
\label{lem:prelim_reduction_approximation}
An optimal extreme point $x^0$ may be written in the form above so that $x$ belongs to the spanning-tree polytope and
\[
c(x)=\operatorname{OPT}_{\mathrm{LP}}.
\]
Every positive coordinate of $x^0$ is at least $1/n!$. For every rational $\delta>0$, one can compute, in time polynomial in $n$ and $\log(1/\delta)$, a $\lambda$-uniform spanning-tree distribution $\mu_\lambda$ satisfying
\[
\Pr_{T\sim\mu_\lambda}[e\in T]\leq(1+\delta)x_e
\quad\text{for every }e\in E.
\]
Moreover, there is an explicitly computable universal constant
$C_{\mathrm{stab}}>0$, independent of the instance, such that, for
$0<\delta\leq1$, the exact and approximate distributions satisfy
\[
q:=\|\mu-\mu_\lambda\|_1
\leq C_{\mathrm{stab}}n^3
\sqrt{\delta(1+\log(n/\delta))}.
\]
\end{lemma}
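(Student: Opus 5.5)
The lemma has three independent parts, and I would prove them in the order stated, mostly by assembling standard results.

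\textbf{The reduction.} Take an optimal extreme point $\hat x$ of the subtour LP on the original vertex set. Pick any vertex $v$ and split it into two copies $u_0,v_0$. Assign the LP edges at $v$ to $u_0$ or $v_0$ so that each copy carries fractional degree $1$ among the original edges. Then add the zero-cost edge $e_0=\{u_0,v_0\}$ with value $1$. This is the construction of \cite[Section~2]{kko22}; equivalently, one deletes $e_0$ and uses Held--Karp's observation that $\hat x$ restricted to $V\setminus\{v\}$ plus one unit of edges at $v$ lies in the spanning-tree polytope.

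Let $x$ be the restriction of the resulting vector to $E$. Feasibility of $x$ in the spanning-tree polytope comes from the subtour constraints:
\begin{itemize}[label={}]
\item $x(E)=n-1$ follows from summing degree constraints and accounting for $e_0$;
\item $x(E(S))\le |S|-1$ follows from $x^0(\delta(S))\ge 2$.
\end{itemize}
Since $c_{e_0}=0$ and the split preserves costs, $c(x)=c(\hat x)=\operatorname{OPT}_{\mathrm{LP}}$.

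\textbf{The denominator bound.} An extreme point is the unique solution of a nonsingular subsystem of the constraints, whose coefficients lie in $\{0,\pm1\}$ and whose right-hand sides are integers. By Cramer's rule, each coordinate is a ratio whose denominator is the determinant of a $\{0,1\}$-type matrix. The standard bound (as in \cite{kko21}) is that positive coordinates are at least $1/n!$. Splitting a vertex keeps values unchanged except for $e_0=1$, so the bound transfers.

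\textbf{The approximate max-entropy distribution.} I would cite the Asadpour et al.\ / Singh--Vishnoi algorithm as used in \cite[Theorem~2.?]{kko21}. For the marginals $x$ lying in the relative interior after mild shrinking, it computes $\lambda$ with $\Pr_{\mu_\lambda}[e\in T]\le(1+\delta)x_e$ in time $\mathrm{poly}(n,\log(1/\delta),\log(1/x_{\min}))$. The $1/n!$ bound makes $\log(1/x_{\min})=O(n\log n)$, which gives polynomial time.

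\textbf{The stability bound.} This is the main obstacle. The plan has four steps.
\begin{enumerate}[label=(\roman*)]
\item Let $\lambda^*$ be the exact max-entropy weights. The entropy dual $F(\lambda)=\sum_e x_e\log\lambda_e-\log\sum_T\prod_{e\in T}\lambda_e$ is concave.
\item Marginals within factor $(1+\delta)$ imply a dual suboptimality gap $F(\lambda^*)-F(\lambda)=O(n\delta)$. Write the gradient as $x-\Pr_{\mu_\lambda}[\cdot]$ and note that both vectors sum to $n-1$, so the small upward errors force small downward errors in $\ell_1$ of size $O(n\delta)$.
\item The gap equals the KL divergence $D(\mu\|\mu_\lambda)$, because $\mu$ has marginals $x$ and both laws are product-form.
\item Pinsker's inequality gives $\|\mu-\mu_\lambda\|_1\le\sqrt{2D}$.
\end{enumerate}

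The delicate point is step (ii): turning the gradient error into a function-value gap requires bounding $\|\log\lambda-\log\lambda^*\|$. I would first try a line-search/self-concordance-type estimate along the segment between $\lambda$ and $\lambda^*$. This is where the polynomial factor $n^3$ and the $\log(n/\delta)$ term arise, since the dual iterate's magnitude is controlled by $\log(1/x_{\min})$ and $\log(1/\delta)$. I expect to use crude bounds and absorb all universal constants into $C_{\mathrm{stab}}$, restricting to $\delta\le1$ so that $\log$ terms are at least $1$.
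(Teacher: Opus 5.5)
Your reduction, the $1/n!$ bound, and the approximation algorithm are handled by citation, as in the paper (which cites \cite{kko22}). One small correction: the split must in general divide edge values, since KKO give each copy half of every edge at the split vertex. So coordinates do change, and the $1/n!$ bound should be read with the new vertex count.

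\textbf{Step (ii) of the stability bound is false.} This is a genuine gap, not just an unfinished estimate. The only thing the lemma gives you about $\mu_\lambda$ is the one-sided bound $x^\lambda_e\le(1+\delta)x_e$, and this does not control $D(\mu\Vert\mu_\lambda)$. Here is a counterexample:
\begin{enumerate}
\item Take $\mu=\mu_{\lambda^*}$ and an edge $f$ with $0<x_f\le\delta/(1+\delta)$. This is allowed, since coordinates may be as small as $1/n!$ while $\delta$ is only polynomially small in the application.
\item Setting $\lambda_f:=0$ produces $\mu(\cdot\mid f\notin T)$. Its marginals satisfy $x^\lambda_e\le x_e/(1-x_f)\le(1+\delta)x_e$ and $x^\lambda_f=0$, so this $\lambda$ meets the lemma's guarantee.
\item Yet $D(\mu\Vert\mu_\lambda)=+\infty$, while $\|\mu-\mu_\lambda\|_1=2x_f$.
\item Taking $\lambda_f$ tiny but positive makes the divergence finite but arbitrarily large.
\end{enumerate}
Your step (iii) identifies the dual gap with exactly this divergence. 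So no line-search or self-concordance estimate can give $O(n\delta)$, and Pinsker in this direction is unusable. Moreover, for the vectors arising here there is in general no finite $\lambda^*$. For example, $x(\delta(u_0))=1$ forces $u_0$ to be a leaf of every tree charged by $\mu$. Hence $x$ lies on a proper face of the spanning-tree polytope whenever $u_0$ has two incident edges, and $\|\log\lambda-\log\lambda^*\|$ is not even defined.

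\textbf{The missing idea is to compare two exact max-entropy laws instead of bounding a divergence.} The paper argues as follows.
\begin{enumerate}
\item $\mu_\lambda$ is itself the max-entropy law for its own marginal vector $x^\lambda$. Put $E_+=\{e:\lambda_e>0\}$. Every tree law $\nu$ with marginals $x^\lambda$ is supported on trees inside $E_+$, where $\log\mu_\lambda(T)=\sum_{e\in T}\log\lambda_e-\log Z_\lambda$ is linear in the edge indicators. Hence $0\le D(\nu\Vert\mu_\lambda)=H(\mu_\lambda)-H(\nu)$, and this holds even when some weights vanish.
\item The one-sided bound and equal coordinate sums give $\|x-x^\lambda\|_1\le2(n-1)\delta$, exactly as in your remark.
\item The quantitative stability theorem \cite[Theorem~10]{straszak2019maximum} is applied to the spanning-tree polytope, with ambient dimension at most $n^2$, unary facet complexity, vertex diameter below $\sqrt{2n}$, and at most $n^{n-2}$ spanning trees. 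It turns this marginal perturbation directly into the total-variation bound $C_{\mathrm{stab}}n^3\sqrt{\delta(1+\log(n/\delta))}$.
\end{enumerate}
The $\log(n/\delta)$ factor comes out of that theorem, not from the size of a dual iterate.
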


\begin{proof}
The reduction and the spanning-tree-polytope assertion are
from~\cite[Section~2.1 and Fact~2.2]{kko22}. The approximation algorithm is
\cite[Theorem~2.1]{kko22}, and the coordinate bound is recorded in
\cite[Section~6.2]{kko22}. Let $x^\lambda$ be the marginal vector of
$\mu_\lambda$, and define $E_+:=\{e:\lambda_e>0\}$. Since
$x^\lambda_e=0$ for $e\notin E_+$, every spanning-tree law with marginals
$x^\lambda$ is supported on trees contained in $E_+$. On this support,
\[
\log\mu_\lambda(T)=\sum_{e\in T}\log\lambda_e-\log Z_\lambda.
\]
Thus, for every spanning-tree law $\nu$ with marginals $x^\lambda$,
\[
0\leq D(\nu\Vert\mu_\lambda)=H(\mu_\lambda)-H(\nu),
\]
because $\nu$ and $\mu_\lambda$ have the same expectation of the preceding
display. Hence $\mu_\lambda$ is the maximum-entropy law with marginals
$x^\lambda$, also when some weights vanish. Since
$x^\lambda_e\leq(1+\delta)x_e$ and both marginal
vectors have coordinate sum $n-1$,
\[
\|x-x^\lambda\|_1\leq2(n-1)\delta.
\]
Apply the quantitative form \cite[Theorem~10]{straszak2019maximum} to the
spanning-tree polytope. Its ambient dimension is at most $n^2$, its unary
facet complexity is one, the diameter of its vertices is smaller than
$\sqrt{2n}$, and the number of spanning trees is at most $n^{n-2}$.
Substitution in that theorem gives
\[
\|\mu-\mu_\lambda\|_1
\leq C_{\mathrm{stab}}n^3
\sqrt{\delta(1+\log(n/\delta))}
\]
after increasing one universal constant. The proof of the cited theorem is
effective, so we fix any computable value supplied by that proof; no
optimization of this value is needed here.
\end{proof}
\subsection{Change of tree law}
\label{sec:prelim_slack_to_tours}

We record the output-cost comparison used after the slack construction.

\begin{lemma}[Change of tree law]
\label{lem:prelim_slack_to_tour}
Let $\mathsf J(T)$ denote the cost of a minimum join for the odd vertices of
$T$. For the law $\mu_\lambda$ in
Lemma~\ref{lem:prelim_reduction_approximation},
\[
\E_{\mu_\lambda}[c(T)+\mathsf J(T)]
\leq
\E_\mu[c(T)+\mathsf J(T)]
+(\delta+\frac q2)c(x).
\]
\end{lemma}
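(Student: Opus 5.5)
Split the cost into the tree part and the join part, and bound each separately.

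\textbf{Tree part.} By linearity and the marginal guarantee of Lemma~\ref{lem:prelim_reduction_approximation},
\[
\E_{\mu_\lambda}[c(T)]=\sum_{e\in E}c_e\Pr_{\mu_\lambda}[e\in T]\leq(1+\delta)c(x)=\E_\mu[c(T)]+\delta c(x),
\]
since $\mu$ has marginals exactly $x$. This uses only $c\geq0$.

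\textbf{Join part.} Compare the two laws through total variation. For any nonnegative function $f$ on spanning trees with $0\leq f\leq M$,
\[
\E_{\mu_\lambda}[f]-\E_\mu[f]\leq\sum_{T:\mu_\lambda(T)>\mu(T)}(\mu_\lambda(T)-\mu(T))\,M=\tfrac{q}{2}M,
\]
because $\mu$ and $\mu_\lambda$ are both probability measures, so the positive part of $\mu_\lambda-\mu$ has mass $\|\mu-\mu_\lambda\|_1/2$. Apply this with $f=\mathsf J$. We need the uniform bound $\mathsf J(T)\leq c(x)$ for every spanning tree $T$ in either support.

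\textbf{The uniform bound (main obstacle).} This step needs care. The natural route is the Wolsey-type argument: for any even vertex set $O$, the vector $x^0/2$ is a fractional $O$-join. Every cut with $|S\cap O|$ odd has $x^0(\delta(S))\geq2$, so it receives value at least $1$. Hence, by the $O$-join polytope (Edmonds--Johnson), $\mathsf J(T)\leq c(x^0)/2=c(x)/2$, since $c_{e_0}=0$.

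A subtlety is that the join in the algorithm may only use edges of $E$, not $e_0$. If $e_0$ is genuinely needed, one can note that the metric completion allows shortcutting, and $u_0,v_0$ arise from splitting one vertex. Either way $\mathsf J(T)\leq c(x)/2\leq c(x)$.

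With $M=c(x)/2$ we get join-part loss $\frac q4c(x)$, and with the weaker $M=c(x)$ we get $\frac q2c(x)$. Either suffices for the stated $(\delta+\frac q2)c(x)$. So I would simply use $\mathsf J(T)\leq c(x)$ to match the statement.

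\textbf{Conclusion.} Add the two parts.
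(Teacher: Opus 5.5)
Your proof is correct and is essentially the paper's argument. It uses the same split into tree and join costs, the marginal bound $\E_{\mu_\lambda}[c(T)]\leq(1+\delta)c(x)$, and the fractional-join bound $0\leq\mathsf J(T)\leq c(x)/2$; the paper takes $x/2$ on $E$ and $1$ on $e_0$, so $e_0$ is explicitly allowed in the join and your shortcutting remark is not needed. The only difference is in the last step: the paper bounds $\sum_T|\mu_\lambda(T)-\mu(T)|\,\mathsf J(T)\leq q\,c(x)/2$ directly, while your positive-part bound with $M=c(x)/2$ would even give the sharper $\frac q4c(x)$.
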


\begin{proof}
For every spanning tree $T$, the
vector that equals $x/2$ on $E$ and one on the zero-cost edge $e_0$ is a
feasible fractional join for the odd vertices of $T$. Therefore
\[
0\leq\mathsf J(T)\leq c(x)/2.
\]
Moreover,
\[
\E_{\mu_\lambda}[c(T)]=c(x^\lambda)\leq(1+\delta)c(x),
\quad \E_\mu[c(T)]=c(x).
\]
Using $q=\|\mu-\mu_\lambda\|_1$ gives
\[
\E_{\mu_\lambda}[\mathsf J(T)]-\E_\mu[\mathsf J(T)]
\leq\sum_T|\mu_\lambda(T)-\mu(T)|\,\mathsf J(T)
\leq\frac q2c(x).
\]
Adding the tree- and join-cost comparisons proves the claim.
\end{proof}

%% file: 40_proof.tex
\section{Proof of upper bound}
\label{sec:proof_improved_theorem}

Section~\ref{sec:proof_main_theorem} derives the main theorem from the
layered slack vector, and Section~\ref{sec:probability_primitives}
records the parameter choices.
Sections~\ref{sec:atom_tail_rank_estimates}--\ref{sec:tree_law_estimates}
establish the atom, tail, conditioning, and structured tree-law estimates.
Sections~\ref{sec:good_common_events}--\ref{sec:paired_matching}
certify bundle and polygon events and construct the localized matching.
Section~\ref{sec:reparam_payment_estimates} proves the conditional parity
bounds and synchronizes the reduction events.
Section~\ref{sec:ancestor_polygon_charges} controls ancestor credit and
polygon charges, and Section~\ref{sec:payment_theorem} assembles the
payment guarantee.
Section~\ref{sec:repair_combined_estimates} constructs the separated
repairs, and Section~\ref{sec:final_cutwise_credit} combines them using
aggregate expected credit.
Section~\ref{sec:layered_certificate} proves the finite-layer lemma and
evaluates the final certificate.
Section~\ref{sec:arithmetic_details} collects the exact arithmetic details.

\subsection{Proof of the main theorem}
\label{sec:proof_main_theorem}

The layered slack-vector construction supplies the improvement under the exact
max-entropy law. The finite-precision comparison then transfers this bound to
the polynomial-time implementation and yields the main result.

\begin{theorem}[Formal version of Theorem~\ref{thm:our_main_informal}]
\label{thm:our_main_formal}
There exists an explicitly computable constant
$\varepsilon_\star>2.78621\cdot10^{-18}$
such that, for every fixed $0<\varepsilon<\varepsilon_\star$ and every metric TSP instance, there is a randomized polynomial-time implementation of the max-entropy algorithm whose expected tour cost is at most
$(\frac32-\varepsilon)\operatorname{OPT}_{\mathrm{LP}}$.
Consequently, the integrality gap of the subtour-elimination LP is at most $3/2-\varepsilon_\star$.
\end{theorem}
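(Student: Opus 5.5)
The plan is to assemble the theorem from three ingredients: the exact-law saving certified by the layered slack vector, the change-of-law comparison of Lemma~\ref{lem:prelim_slack_to_tour}, and the finite-precision guarantee of Lemma~\ref{lem:prelim_reduction_approximation}.

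\textbf{Step 1: exact-law saving.} I will take as given the output of the layered construction (Lemma~\ref{lem:final24_layered} and the finite-layer argument). This output is a random vector $y=y(T)$, defined on the sampled tree $T\sim\mu$ and any auxiliary randomness, with three properties: it is coordinatewise nonnegative with $y_{e_0}=1$ on cuts crossing $e_0$; it satisfies $y(\delta(S))\geq1$ for every cut $S$ with $\delta(S)_T$ odd; and
\[
\E_\mu[c(y)]\leq\Bigl(\tfrac12-B_*\Bigr)c(x).
\]
Such a $y$ is a feasible fractional $O$-join for every tree. By integrality of the $O$-join polytope, $\mathsf J(T)\leq c(y(T))$ pointwise. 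Averaging then gives
\[
\E_\mu[c(T)+\mathsf J(T)]\leq\Bigl(\tfrac32-B_*\Bigr)c(x),
\]
using $\E_\mu[c(T)]=c(x)$. Set $\varepsilon_\star:=B_*-10^{-28}$. The rational ledger gives $B_*>2.78621079751\cdot10^{-18}$, so $\varepsilon_\star>2.78621\cdot10^{-18}$. It is explicitly computable, since $B_*$ is a finite rational expression in the row parameters.

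\textbf{Step 2: passing to a computable law.} Fix $0<\varepsilon<\varepsilon_\star$ and let $\theta:=\varepsilon_\star-\varepsilon>0$. I choose rational $\delta\in(0,1]$ so that $\delta+\tfrac q2\leq\theta$. Lemma~\ref{lem:prelim_reduction_approximation} bounds $q$ by $C_{\mathrm{stab}}n^3\sqrt{\delta(1+\log(n/\delta))}$. Taking $\delta=n^{-K}\theta^{4}/C'$ for a suitable fixed $K$ and a computable $C'$ depending only on $C_{\mathrm{stab}}$ makes this at most $\theta/2$. Then $\log(1/\delta)=O(\log n+\log(1/\theta))$, which is polynomial for fixed $\varepsilon$. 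The algorithm is as follows: solve the subtour LP for an optimal extreme point, perform the reduction, compute $\mu_\lambda$, sample $T$, and add a minimum-cost $O$-join, which takes polynomial time via matching. Shortcutting is harmless because the costs are metric. Lemma~\ref{lem:prelim_slack_to_tour} then yields
\[
\E_{\mu_\lambda}[c(T)+\mathsf J(T)]\leq\Bigl(\tfrac32-\varepsilon_\star-10^{-28}\Bigr)c(x)+\theta\,c(x)\leq\Bigl(\tfrac32-\varepsilon\Bigr)\operatorname{OPT}_{\mathrm{LP}}.
\]
Here $c(x)=\operatorname{OPT}_{\mathrm{LP}}$ by Lemma~\ref{lem:prelim_reduction_approximation}; the $10^{-28}$ margin is spare.

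\textbf{Step 3: integrality gap.} The optimum tour costs at most the expected output, so $\operatorname{OPT}\leq(\tfrac32-\varepsilon)\operatorname{OPT}_{\mathrm{LP}}$ for every $\varepsilon<\varepsilon_\star$. Letting $\varepsilon\uparrow\varepsilon_\star$ gives the gap bound $3/2-\varepsilon_\star$. A cleaner alternative is to apply Step 1 directly to the exact law $\mu$, which exists non-algorithmically; this gives the bound with $B_*\geq\varepsilon_\star$ immediately.

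\textbf{Main obstacle.} The real content lies entirely in Step 1, namely that the layered vector is a feasible join certificate with the claimed saving. The main theorem itself reduces to bookkeeping. Within that bookkeeping, the delicate point is confirming that feasibility holds pointwise for every tree, including cuts through $e_0$ and cuts that are not near-minimum, so that the pointwise inequality $\mathsf J(T)\leq c(y(T))$ is valid. A second point is making sure the quantities in the slack vector are analysis-only objects computed under $\mu$, so that the change of law is applied only to the actual join cost. I would check both against the statement of Lemma~\ref{lem:final24_layered} before writing the argument.
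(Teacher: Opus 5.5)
Your proposal is correct and follows the paper's proof essentially step for step. You form the feasible fractional join $y=x/2+Z$ (with $y_{e_0}=1$) from Lemma~\ref{lem:final24_layered} to get the exact-law bound, transfer it once to $\mu_\lambda$ via Lemma~\ref{lem:prelim_slack_to_tour} with $\delta$ chosen through Lemma~\ref{lem:prelim_reduction_approximation}, and let $\varepsilon\uparrow\varepsilon_\star$ for the gap; the only differences are cosmetic: the paper takes $\varepsilon_\star:=B_*$ rather than $B_*-10^{-28}$, and it fixes $\delta$ as the largest power of two at most $\min\{n^{-2},\rho/2,(\rho/(\sqrt2C_{\mathrm{stab}}n^3))^4\}$ with $\rho=(\varepsilon_\star-\varepsilon)/2$, which also guarantees that $\delta$ is rational even when $\varepsilon$ is not.
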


\begin{proof}
Solve the subtour LP and choose an optimal extreme point $x^0$. Apply Lemma~\ref{lem:prelim_reduction_approximation}, and let $x$ be the resulting vector in the spanning-tree polytope. Then
\[
c(x)=\operatorname{OPT}_{\mathrm{LP}}.
\]

Apply Lemma~\ref{lem:final24_layered} and define
$\varepsilon_\star:=B_*>2.78621\cdot10^{-18}$, where $N=10^6$.
The construction uses thirty-four certified parameter choices and couples all
finitely many analysis layers to the same sampled tree. Each layer has a uniform edgewise
certificate; the rows mixed within it share one geometric hierarchy,
but their good-edge sets may differ. Different layers need not use
the same hierarchy.
For the sampled tree
$T$, define $y_e:=x_e/2+Z_e$ for $e\in E$ and $y_{e_0}:=1$.
Since $\beta(H)<1/2$, every coordinate of $y$ is nonnegative.  A cut
crossed by $e_0$ has $y$-value at least one, and every other cut containing
an odd number of tree edges has $y$-value at least one by the same lemma.
Thus $y$ is a feasible fractional join for the odd vertices of $T$.
The minimum join is no more expensive than $y$, so under the exact
max-entropy law the expected tour cost is at most
\[
c(x)+\frac{c(x)}2+\E[c(Z)]
\leq(\frac32-\varepsilon_\star)c(x).
\]

It remains to control the computable approximation. Fix $0<\varepsilon<\varepsilon_\star$ and define
\[
\rho:=\frac{\varepsilon_\star-\varepsilon}{2}.
\]
Define
\[
B:=\min\{
\frac1{n^2},\frac\rho2,
(\frac{\rho}{\sqrt2C_{\mathrm{stab}}n^3})^4
\}
\]
and choose the largest number of the form $2^{-m}$, $m\in\mathbb N$, satisfying
$\delta\leq B$.
Then $B/2<\delta\leq B$. Since $\delta\leq n^{-2}$ and
$1+\log n\leq n$,
\[
1+\log(n/\delta)
\leq n+\log(1/\delta)
\leq2\delta^{-1/2}.
\]
Lemma~\ref{lem:prelim_reduction_approximation} therefore gives
\[
q\leq\sqrt2C_{\mathrm{stab}}n^3\delta^{1/4}\leq\rho,
\quad
q/2+\delta\leq\rho.
\]
The running time is polynomial because
\[
\log(1/\delta)\leq\log(2/B)
=O(\log n+\log(1/\rho)).
\]
The comparison in Lemma~\ref{lem:prelim_slack_to_tour} is applied once to
the output-cost function $T\mapsto c(T)+\mathsf J(T)$; the layers incur no separate change-of-law loss. Combining that lemma with the
exact-law estimate above gives
\[
\E_{\mu_\lambda}[c(T)+\mathsf J(T)]
\leq(\frac32-\varepsilon_\star+\delta+\frac q2)c(x)
\leq(\frac32-\varepsilon_\star+\rho)c(x)
<
(\frac32-\varepsilon)\operatorname{OPT}_{\mathrm{LP}}.
\]

The integrality-gap conclusion follows because the expectation is an average over tours. Taking $\varepsilon\uparrow\varepsilon_\star$ proves the final claim.
\end{proof}

%% file: 60_improve.tex
\subsection{Parameters}
\label{sec:probability_primitives}

The construction balances local probability gains against repair costs using
several parameter choices. We collect these choices here so that every later
estimate can be checked against the same data.

\begin{definition}[Parameters for the final bound]
\label{def:final24_parameters}
Set $d_0=2\cdot10^{-8}$,
$\epsilon_F=0.1$, $t=1$, $H=138217/10^{14}$, and $N=10^6$.
At each threshold $0<u\leq H$, fix the geometric hierarchy before
choosing any of the thirty-four complete parameter tuples below. In all local
estimates fix one row and suppress its index. The final construction
mixes the resulting payment vectors on this common hierarchy.
\begin{center}
\begin{tabular}{c r r r}
\toprule
$j$&$h_j$&$r_j$&$\omega_j$\\
\midrule
1&0.00474337168262&0.00202678149813&0.01318549812643\\
2&0.00467675511181&0.00177671136598&0.01363267584577\\
3&0.00471839213351&0.00194637284476&0.01333199308802\\
4&0.00467695582122&0.00167963135242&0.01379715841829\\
5&0.00476419702921&0.00209078256027&0.01306816751213\\
6&0.00478554874980&0.00214507537291&0.01296586991455\\
7&0.00469744222034&0.00186122615296&0.01348297583738\\
8&0.00477485672223&0.00211791900739&0.01301704112751\\
9&0.00468704811896&0.00181885335662&0.01355805311931\\
10&0.00479624453261&0.00217224497700&0.01291466357717\\
11&0.00478093276463&0.00213336218116&0.01298794337151\\
12&0.00475369051062&0.00205853505742&0.01312729181588\\
13&0.00472189722769&0.00196060306704&0.01330674550642\\
14&0.00469246359492&0.00184095832576&0.01351889261962\\
15&0.00467685541771&0.00172816616502&0.01371493768943\\
16&0.00479007494455&0.00215658452102&0.01294418104241\\
17&0.00474803683734&0.00204115404231&0.01315915568507\\
18&0.00473284373636&0.00199437365229&0.01324488540165\\
19&0.00467690654287&0.00170372751009&0.01375634072287\\
20&0.00477774383791&0.00212525639700&0.01300321614393\\
21&0.00474803674645&0.00204115375515&0.01315915619965\\
22&0.00473284630570&0.00199438156806&0.01324487091759\\
23&0.00469546411802&0.00185319478263&0.01349721042547\\
24&0.00467690653116&0.00170373303383&0.01375633135951\\
25&0.00477593005096&0.00212064513522&0.01301190413386\\
26&0.00475087683756&0.00204988662224&0.01314314732933\\
27&0.00472735768247&0.00197745550883&0.01327587851396\\
28&0.00469395131816&0.00184702620783&0.01350814109403\\
29&0.00467693271912&0.00169129246971&0.01377740530048\\
30&0.00479318352889&0.00216447610735&0.01292930715794\\
31&0.00475087633852&0.00204988508855&0.01314315014649\\
32&0.00472735784162&0.00197745599867&0.01327587760705\\
33&0.00469395061194&0.00184702332724&0.01350814619543\\
34&0.00479318337235&0.00216447570999&0.01292930790696\\
\bottomrule
\end{tabular}

\begin{tabular}{c r r r r}
\toprule
$j$&$\gamma_{\rm good,j}$&$k_{\rm good,j}$&$K_j$&$p_j$\\
\midrule
1&2.67354783909477&5.35418823079187&4.96008779244273&0.00001003268213\\
2&2.71408821368829&5.43538975894109&5.07030735453861&0.00001088046515\\
3&2.68540726863074&5.37792621474281&4.99978369762098&0.00001028003047\\
4&2.71398974378901&5.43519265469476&5.08250650488441&0.00001102426051\\
5&2.66443186986062&5.33594598225196&4.92760756885928&0.00000984296697\\
6&2.65791872884318&5.32293146871799&4.89565657947595&0.00000970589056\\
7&2.69977465171858&5.40671188815691&5.03508369990248&0.00001057963385\\
8&2.66117751560732&5.32944317680430&4.91161532968433&0.00000977442922\\
9&2.70695058757152&5.42108919157615&5.05273102911632&0.00001073010768\\
10&2.65467532679840&5.31644856301738&4.87976413806608&0.00000963760377\\
11&2.65932635906375&5.32574221806823&4.90253569452708&0.00000973547056\\
12&2.66902627330834&5.34513803439368&4.94395202649790&0.00000993845706\\
13&2.68301696864247&5.37313517401852&4.99391013733910&0.00001023037233\\
14&2.70320889816446&5.41359055417716&5.04352301654999&0.00001065154649\\
15&2.71404012047016&5.43529148470283&5.07640547300905&0.00001095235831\\
16&2.65654680620950&5.32018780771587&4.88892068041669&0.00000967698045\\
17&2.67150320791336&5.35009437762919&4.95278160889408&0.00000999002972\\
18&2.67818285132013&5.36346118548895&4.97662534451924&0.00001012934226\\
19&2.71401508196225&5.43524103236718&5.07947623224218&0.00001098856174\\
20&2.66029793293741&5.32768327848892&4.90729806989373&0.00000975591695\\
21&2.67150325018243&5.35009445612074&4.95278174863043&0.00000999003057\\
22&2.67818171890139&5.36345891749554&4.97662130437009&0.00001012931866\\
23&2.70113907251115&5.40944323344689&5.03843180808855&0.00001060815651\\
24&2.71401508996976&5.43524104365391&5.07947553635127&0.00001098855356\\
25&2.66085091227518&5.32878823967807&4.91000983148454&0.00000976755463\\
26&2.67025844606347&5.34760344099774&4.94834336049095&0.00000996410629\\
27&2.68060335247908&5.36830490594996&4.98527688086611&0.00001017990775\\
28&2.70218239257611&5.41153357075155&5.04099776124092&0.00001063002161\\
29&2.71400213313538&5.43521510762811&5.08103892950298&0.00001100698583\\
30&2.65560345763351&5.31830281031666&4.88430394459967&0.00000965713143\\
31&2.67025866429681&5.34760387842857&4.94834414148665&0.00000996411085\\
32&2.68060328295096&5.36830476542260&4.98527662674262&0.00001017990628\\
33&2.70218287999458&5.41153454692878&5.04099895874939&0.00001063003182\\
34&2.65560350511665&5.31830290519690&4.88430417693565&0.00000965713243\\
\bottomrule
\end{tabular}

\begin{tabular}{c r r r r}
\toprule
$j$&$s_j$&$\epsilon_{M,j}$&$s_{0,j}$&$\Delta_{*,j}$\\
\midrule
1&1.75695012458503&0.00333673995714&0.00207689444253&0.06197219683386\\
2&1.75771062070818&0.00347571662414&0.00179283742356&0.06199620810775\\
3&1.75743198589334&0.00337811461027&0.00196337167292&0.06190838025713\\
4&1.75787708346393&0.00349879158233&0.00169520561035&0.06199670760157\\
5&1.75650704739913&0.00330460186055&0.00212391896109&0.06204089288943\\
6&1.75588145196533&0.00328090997277&0.00221443543505&0.06217465972923\\
7&1.75757177175423&0.00342715930297&0.00187780153496&0.06195256910820\\
8&1.75619431297988&0.00329277525335&0.00218345205937&0.06210777941927\\
9&1.75764137076370&0.00345153172861&0.00183520707661&0.06197449432340\\
10&1.75556848699708&0.00326904852370&0.00223457989067&0.06224158615535\\
11&1.75601638832737&0.00328603546358&0.00219199931186&0.06214581977932\\
12&1.75673019112439&0.00332081496804&0.00210365970105&0.06200633339280\\
13&1.75740863150922&0.00336991743431&0.00197766980151&0.06190098665974\\
14&1.75760505778367&0.00343882836210&0.00185742858879&0.06196306994659\\
15&1.75779384507810&0.00348727187106&0.00174402106003&0.06199645819419\\
16&1.75574887078568&0.00327589292239&0.00223421522581&0.06220301792610\\
17&1.75685056947241&0.00332954019361&0.00209378827695&0.06198765883662\\
18&1.75717463128261&0.00335300109505&0.00201198991729&0.06193734887532\\
19&1.75783575393709&0.00349307675298&0.00171944228772&0.06199658229672\\
20&1.75610977474780&0.00328957422079&0.00219199931186&0.06212585195524\\
21&1.75685057146048&0.00332954033973&0.00209378827695&0.06198765853056\\
22&1.75717457643687&0.00335299713066&0.00201198991729&0.06193735738935\\
23&1.75758495934691&0.00343179226736&0.00186972870900&0.06195674052212\\
24&1.75783574446115&0.00349307544276&0.00171944784346&0.06199658227262\\
25&1.75616290306021&0.00329158685523&0.00219199931186&0.06211449230334\\
26&1.75679008718679&0.00332515712082&0.00209847179819&0.06199704234655\\
27&1.75729185259445&0.00336147786956&0.00199489911154&0.06191913674650\\
28&1.75759509097792&0.00343533967974&0.00186352811633&0.06195993160999\\
29&1.75785708136135&0.00349602729773&0.00170693502434&0.06199664498911\\
30&1.75565796989904&0.00327244443398&0.00223530670442&0.06222245346856\\
31&1.75679009780906&0.00332515789087&0.00209847179819&0.06199704069726\\
32&1.75729184919600&0.00336147762397&0.00199489911154&0.06191913727617\\
33&1.75759509570897&0.00343534133597&0.00186352522077&0.06195993309965\\
34&1.75565797447607&0.00327244460765&0.00223530670442&0.06222245248995\\
\bottomrule
\end{tabular}
\end{center}

Define $P=sp$ and retain
\begin{align*}
\epsilon_B&=\frac{2((k_{\rm good}+1)h+2d_0h)}{1+2k_{\rm good}h}+10^{-11},&
\vartheta&=\frac{(1-r)(1-2r-2d_0)}{1+d_0}-10^{-9},\\
\chi&=\vartheta r,&\sigma&=\chi+2\chi^2+d_0,\\
\rho_{\rm sm}&=1-\frac{\sigma+5d_0+2d_0\sigma+6d_0^2}{1-\sigma-2d_0},&
\zeta&=\vartheta\rho_{\rm sm}-2d_0/r-10^{-9},\\
a&=\zeta rp,&
a_{\rm bot}&=p(s-(1+d_0)(7/4+h/2+3d_0/2)-2d_0s).
\end{align*}
Also set
\[
T_{18}=\sum_{i=0}^{18}\frac{(2(1-\epsilon_M/2-3d_0))^i}{i!},
q_0=(1+1/T_{18})/2+10^{-12},\quad
\bar q=\epsilon_M/2,\quad q_{\rm unh}=q_0+4d_0/(1-\bar q).
\]
Each half-bundle class is defined directly by conditional
$2$--$2$ probability at least $\gamma_{\rm good,j}h_j$. All thirty-four $\gamma_{\rm good,j}$ are below three. No inclusion in a source good-half class is
used. The crossing cutoff is $\omega$, not $h$; it does not replace $h$ in the
half-bandwidth or the full-pair saving. All events are reclassified at the
row's own $p$. Put $w_0:=s_0(1-s_0)$.
All terminating decimals in the parameter tables are exact rationals.
The local error satisfies $0\leq d\leq d_0$, and the crossing cutoff
has the separate domain $h\leq\omega<0.02$. Its individual probability,
structural, and ancestor inequalities are checked for every row; no
uniform assertion over this larger box is used.

\end{definition}

All probability statements refer to the exact max-entropy tree law.
The current hierarchy has error $0\leq d\leq d_0$.
For an arbitrary payment amplitude $\beta>0$, put $\tau:=t\beta=\beta$.
Top events are thinned to probability $p$ and polygon events to $P$.
Whenever $x(\delta^\uparrow(S))=0$, set the degree-cut increase
$I_{e,S}:=0$. Matching conservation then forces the corresponding
$m_{e,S}=0$, so no quotient by zero occurs.

Scalar comparisons are verified over the rationals. For $x\geq0$ we use
\[
\sum_{j=0}^{n}\frac{x^j}{j!}
\leq e^x\leq
\sum_{j=0}^{n}\frac{x^j}{j!}
+\frac{x^{n+1}}{(n+1)!}\frac1{1-x/(n+2)},\quad x<n+2,
\]
with $n=40$, or the shorter Taylor certificates specified below.
The arithmetic rules and finite evaluations are detailed in
Section~\ref{sec:arithmetic_details}.

\subsection{Atom, tail, and rank estimates}
\label{sec:atom_tail_rank_estimates}

Our local event estimates repeatedly require lower bounds on a few small counts.
We begin with dimension-independent atom and tail inequalities for sums of
independent Bernoulli variables.

\begin{lemma}[Bernoulli atom and tail estimates]
\label{lem:bernoulli_estimates}
For a Poisson-binomial variable $Z$, the following bounds hold in every
finite dimension:
\[
\begin{array}{c|c}
\text{hypothesis}&\text{conclusion}\\ \hline
\E[Z]\leq1.01&\Pr[Z\leq1]>0.70\\
\E[Z]\leq1.5&\Pr[Z\leq1]\geq7/16\\
\E[Z]\leq1.504&\Pr[Z\leq1]>0.434\\
\E[Z]\leq1.512&\Pr[Z\leq1]>0.4284\\
\E[Z]\leq1.53&\Pr[Z\leq1]>0.4147\\
\E[Z]\leq2.01&\Pr[Z\leq2]>0.66\\
\E[Z]\leq2.5&\Pr[Z\leq2]>0.42\\
0.99\leq\E[Z]\leq1.512&\Pr[Z=1]>0.33\\
1\leq\E[Z]\leq1.2&\Pr[Z=1]>0.36\\
1.2\leq\E[Z]\leq2.5&\Pr[Z=2]>0.1637\\
1.49\leq\E[Z]\leq2.02&\Pr[Z=2]>1/4\\
1.47\leq\E[Z]\leq2.02&\Pr[Z=2]>c_3:=0.248.
\end{array}
\]
For $0<\delta\leq0.02$, one also has
\begin{align}
0.99\leq\E[Z]\leq2-\delta&\implies
 \Pr[Z=1]\geq\delta(1-\delta/2),\label{eq:near_two_atom}\\
\E[Z]\leq3-\delta&\implies
 \Pr[Z\leq2]\geq\delta(1-\delta/3).\label{eq:near_three_cdf}
\end{align}
Moreover, $\Pr[Z=1]\geq\mu e^{-\mu}$ for $0\leq\mu:=\E[Z]\leq1$.
For the same $\delta$, if $\E[Z]\geq1+\delta$, then
\[
\Pr[Z\geq2]\geq\delta e^{-\delta}\geq\delta c(\delta),
\quad c(x):=1-x+x^2/2-x^3/6.
\]
\end{lemma}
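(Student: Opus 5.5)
The plan is to reduce every line of the table to a one- or two-parameter extremal problem. For that we use the classical extremal theory of Poisson-binomial laws due to Hoeffding. For a Poisson-binomial $Z=\sum_i B_i$ with $\E[Z]=\mu$ fixed, any functional $\Pr[Z\in A]$, with $A$ an interval of integers, is minimized over the product-Bernoulli polytope at a vector whose probabilities take at most three distinct values, one of them being $0$ or $1$. Concretely, the vector has the form $(1,\dots,1,p,\dots,p,0,\dots,0)$, possibly with one extra fractional coordinate.

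The tool for this reduction is a pairwise-averaging argument. Conditioned on the other coordinates, the target functional is affine in $(p_i+p_j)$ and in $p_ip_j$. Hence with $p_i+p_j$ fixed it is monotone in $p_ip_j$, and the coordinates can be pushed either to equality or to the boundary. Coordinates equal to $1$ shift the threshold: $\Pr[Z\le k]$ becomes $\Pr[Z'\le k-m]$ with $\E[Z']=\mu-m$. Zeros can be discarded. Letting the number of equal coordinates tend to infinity yields the Poisson limit, so every case reduces to a finite family of binomial laws $\mathrm{Bin}(n,\mu'/n)$ with $n\in\mathbb N$, together with the Poisson law.

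The CDF bounds come first, since $\Pr[Z\le k]$ is decreasing in $\mu$. Each of them reduces to the endpoint mean. For $\Pr[Z\le1]$ at $\mu\le1.5$, the minimum over binomials is attained at $n=2$, $p=3/4$, which gives $1-9/16=7/16$. This explains that exact constant and suggests the proof pattern. For each threshold $1.01,1.504,1.512,1.53,2.01,2.5$, we compare the finitely many small $n$ against the monotone-in-$n$ tail and the Poisson limit. The $n$-monotonicity of $\Pr[\mathrm{Bin}(n,\mu/n)\le k]$ beyond a small $n_0$ is classical and can be certified by the Taylor bounds for $e^x$ from Definition~\ref{def:final24_parameters}.

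The atom bounds $\Pr[Z=1]$ and $\Pr[Z=2]$ over mean intervals use the same extremal reduction. Now, however, the functional is not monotone in $\mu$, so we must take the minimum over both endpoints and the interior. We use log-concavity of the Poisson-binomial pmf, which follows from Newton's inequalities, to show that the minimum over the interval occurs at an endpoint once $n$ is fixed. This leaves a finite rational check per line; for example, $c_3=0.248$ comes from the endpoints $1.47$ and $2.02$.

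For \eqref{eq:near_two_atom} and \eqref{eq:near_three_cdf}, the extremizer near the upper endpoint is $m$ deterministic ones plus a residual of mean $\delta$-close behaviour. Thus $\mu=2-\delta$ forces at least $\delta$-order mass on $\{Z=1\}$. Taking a single coordinate with $p=1-\delta$ next to a one gives $\delta$, and the correction $\delta(1-\delta/2)$ accounts for spreading. We prove these statements by the same reduction followed by a first-order expansion in $\delta\le0.02$.

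The bound $\Pr[Z=1]\ge\mu e^{-\mu}$ for $\mu\le1$ follows from $\Pr[Z=1]=\sum_ip_i\prod_{j\ne i}(1-p_j)\ge\sum_ip_i\prod_j(1-p_j)\ge\mu\,e^{-\mu}\cdot\ldots$. That chain is too crude as written, so instead we use the extremal reduction: on binomials, $n p(1-p)^{n-1}$ decreases to the Poisson value $\mu e^{-\mu}$. Finally, $\Pr[Z\ge2]\ge\delta e^{-\delta}$ when $\mu\ge1+\delta$ follows by monotonicity in $\mu$ together with the extremal form $1+\mathrm{Poisson}/\mathrm{Bin}$ with residual mean $\delta$, giving $1-e^{-\delta}\ge\delta e^{-\delta}$. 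The inequality $e^{-\delta}\ge c(\delta)$ is the alternating Taylor bound.

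The main obstacle is making the extremal reduction rigorous for atoms over mean intervals, where the optimizer need not be binomial-with-ones. We also need a uniform-in-$n$ certificate that avoids checking infinitely many binomials. We would handle the latter with explicit monotonicity in $n$ for $n\ge n_0$ and rational interval arithmetic for $n<n_0$.
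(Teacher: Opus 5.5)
Your architecture is the paper's. You take a minimizer with the fewest fractional parameters, so it is a block of forced successes plus one binomial (the pairwise affine-in-product argument, as in \cite[Theorem~A.9]{gkl24}). Forced successes shift the threshold, CDFs are pushed to the upper mean endpoint, and each line ends in finitely many exact binomial evaluations.

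The genuine gap is the step that makes the bounds hold in \emph{every} finite dimension. You defer it to a ``classical'' monotonicity of $n\mapsto\Pr[\mathrm{Bin}(n,\mu/n)\le k]$ beyond some $n_0$, compared against the Poisson limit. No such result is available in the needed form, and the Poisson value is not a minorant here. For $k=1$, with $F_n(M)=(1-M/n)^{n-1}(1+M-M/n)$,
\[
\log F_n(M)=-M+\log(1+M)+\frac{M-M^2/2-M/(1+M)}{n}+O(n^{-2}),
\]
and the coefficient of $1/n$ is negative for every $M>1$ (about $-0.0025$ at $M=1.01$). So large binomials lie strictly below the Poisson value, and the CDF at two behaves the same way at $M=2.01$. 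Certifying eventual monotonicity with an explicit $n_0$ would require controlling all higher-order terms, which Taylor bounds for $e^x$ alone do not do.

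The missing idea is a closed-form minorant that is monotone in $n$ even though the probability need not be. From $\log(1-z)\ge-z-z^2/(2(1-z))$ one gets $(n-1)\log(1-M/n)\ge-M$ whenever $M\le2n/(n+1)$, and $(n-2)\log(1-M/n)\ge-M$ whenever $M\le4n/(n+2)$. Hence, for instance, $F_n(M)\ge e^{-M}(1+M(n-1)/n)$, which increases with $n$. One evaluation at a threshold dimension plus finitely many exact checks then settles each line. The same inequality gives $\Pr[\mathrm{Bin}(n,\mu/n)=1]\ge\mu e^{-\mu}$ for $\mu\le2n/(n+1)$, and the sharp target-two lines use \cite[Lemma~2.21]{kko21}.

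Three smaller points.
\begin{itemize}
\item Endpoint minimization of an atom over a mean interval follows from unimodality of $p\mapsto p^j(1-p)^{n-j}$ for each fixed number of forced successes and each fixed $n$, not from log-concavity of the pmf in $k$. Applying the fixed-mean reduction pointwise in $\mu$ and then this unimodality removes the obstacle you name.
\item In \eqref{eq:near_two_atom}, equality holds for two variables with $p=1-\delta/2$ at mean $2-\delta$; a coordinate $1-\delta$ beside a forced success gives $\delta$, which is larger. So a first-order expansion cannot certify it. You need the exact two-variable value together with concavity of $\mu(1-\mu/2)$ on $[0.99,2-\delta]$, plus a uniform bound for $n\ge3$, e.g.\ $\mu(1-\mu/n)^{n-1}\ge0.99e^{-8/3}>0.02\ge\delta$ on $[0.99,2]$.
\item For $\Pr[Z\ge2]\ge\delta e^{-\delta}$ you treat only the forced-success branch. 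You must also bound the branch $\mathrm{Bin}(n,(1+\delta)/n)$ with no forced success; it is bounded below by an absolute constant far above $0.02$.
\end{itemize}
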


\begin{proof}
At fixed mean and dimension, choose a minimizer of the desired atom or
tail with the fewest fractional Bernoulli parameters. For two parameters
of fixed sum, the objective is affine in their product. A positive
coefficient forces an endpoint, a zero coefficient permits an endpoint,
and a negative coefficient forces equality. Thus all remaining fractional
parameters are equal: a minimizer is a forced integer plus a binomial,
as in \cite[Theorem~A.9]{gkl24}. We use the inequality
\[
\log(1-z)\geq-z-\frac{z^2}{2(1-z)},\quad 0\leq z<1.
\]

{\bf CDFs at one.}
Monotonicity permits the upper mean endpoint $M$. One forced success
gives $2-M$; an unforced binomial with $n\geq2$ gives
\[
F_n(M)=(1-M/n)^{n-1}(1+M-M/n).
\]
The logarithmic inequality gives
$(n-1)\log(1-M/n)\geq-M$ when $M\leq2n/(n+1)$.
The following finite checks and dimension-uniform lower bounds suffice:
\[
\begin{array}{c|c|c}
M&\text{dimensions checked exactly}&\text{bound in all larger dimensions}\\ \hline
1.01&2\leq n\leq20&e^{-M}(1+20M/21)>0.70\\
1.5&n=2,3&e^{-M}(1+3M/4)>7/16\\
1.504&n=2,3&e^{-M}(1+3M/4)>0.434\\
1.512&n=2,3&e^{-M}(1+3M/4)>0.4284\\
1.53&n=2,3&e^{-M}(1+3M/4)>0.4147.
\end{array}
\]
Dimensions at most one are immediate; equality $7/16$ occurs for
$M=1.5,n=2$.

{\bf CDFs at two.}
Two forced successes give $3-M$, and one forced success reduces to
the preceding CDF at mean $M-1$. In the unforced case, for
$M\leq4n/(n+2)$ the logarithmic inequality gives
\[
\Pr[\operatorname{Bin}(n,M/n)\leq2]
\geq e^{-M}(1+M(1-2/n)+M^2(1/2-3/(2n)+1/n^2)).
\]
The parenthesized expression increases with $n$ on the required ranges.
At $M=2.01$, check $3\leq n\leq99$ and use the displayed bound at
$n=100$ thereafter. At $M=2.5$, check $3\leq n\leq9$ and use the
bound at $n=10$ thereafter. These exceed $0.66$ and $0.42$,
respectively. Dimensions at most two have CDF one.

{\bf Atoms at one.}
A forced success gives $2-\mu$. Without one, the logarithmic bound gives
\[
\Pr[\operatorname{Bin}(n,\mu/n)=1]\geq\mu e^{-D(n,\mu)},
\quad
D(n,\mu):=\frac{\mu(n-1)(2n-\mu)}{2n(n-\mu)}.
\]
For $\mu\in[0.99,1.512]$, check $n=2,3$ at the mean endpoints.
For every $n\geq4$, $\mu<2n/(n+1)$ gives $D(n,\mu)\leq\mu$;
both $0.99e^{-0.99}$ and $1.512e^{-1.512}$ exceed $0.33$.
On $[1,1.2]$ this logarithmic condition holds already for $n\geq2$,
and $1.2e^{-1.2}>0.36$. The same argument gives
$\Pr[Z=1]\geq\mu e^{-\mu}$ on $[0,1]$.
The one-variable and integer endpoints follow directly.

{\bf Atoms at two.}
For $\mu\in[1.2,2.5]$, two forced successes leave a zero atom at
least $1/2$. One forced success leaves an atom at one with residual
mean in $[0.2,1.5]$. Check one and two fractional variables directly;
for $n\geq3$, the logarithmic bound gives
\[
\min\{0.2e^{-0.2},1.5e^{-1.5}\}>0.1637.
\]
Without a forced success, check $n=2,3,4$ at the admissible mean
endpoints. The atom is unimodal in its mean. For $n\geq5$,
$\mu<4n/(n+2)$ gives
\[
\Pr[Z=2]\geq0.4\mu^2e^{-\mu}>0.1637.
\]

For the sharper target-two bounds below mean two, the zero- and
one-forced-success branches of \cite[Lemma~2.21]{kko21} are at least
\[
1.49^2e^{-1.49}/2>1/4,\quad 0.49e^{-0.49}>1/4.
\]
Above mean two, the correction exponent is at most $0.02$.
For zero or one forced success its base is at least $1/4$.
Since $0.97^{50}<1/4$, the correction exceeds $0.97$.
The two branches are consequently at least
\[
0.97(2.02)^2e^{-2.02}/2>1/4,\quad
0.97(1.02)e^{-1.02}>1/4.
\]
The Poisson factors decrease above their target. With two forced
successes the bound is $0.98e^{-0.02}>1/4$.
On $[1.47,2.02]$, only the lower endpoints change:
$1.47^2e^{-1.47}/2>0.248$ and $0.47e^{-0.47}>0.248$.

{\bf Near-integer bounds.}
For Eq.~\eqref{eq:near_two_atom}, a forced success gives at least
$\delta$. With two unforced variables, concavity of
$\mu(1-\mu/2)$ gives $\delta-\delta^2/2$ at the upper endpoint;
the other endpoint is larger. For $n\geq3$,
\[
D(n,\mu)\leq D(n,2)=2+\frac2{n(n-2)}\leq8/3,
\]
so $0.99e^{-8/3}>0.02$ suffices. One variable is immediate.

For Eq.~\eqref{eq:near_three_cdf}, monotonicity permits mean
$3-\delta$ whenever attainable. Two forced successes give at least
$\delta$. One forced success leaves a CDF at one of mean $2-\delta$.
With two fractional variables this is $\delta-\delta^2/4$; with
$n\geq3$, its atom at one is at least $1.98e^{-8/3}>0.02$.
With no forced success, $n=3$ gives
$\delta-\delta^2/3+\delta^3/27$. Check $n=4,5$ at mean three.
For $n\geq6$, the logarithmic condition $3\leq4n/(n+2)$ bounds
the atom at two below by $(5/12)(2.98)^2e^{-3}>0.02$.
Smaller dimensions only improve the CDF.

Finally, decreasing Bernoulli parameters to total mean $1+\delta$
can only decrease the event of at least two successes.
At that mean, \cite[Lemma~2.21]{kko21} has zero correction exponent
for target two. Its one-forced branch is $\delta e^{-\delta}$,
and its zero-forced branch is at least $e^{-1.2}/2>\delta$.
The cubic Taylor lower bound proves the last assertion.
All finite comparisons use the rational exponential enclosures in
Section~\ref{sec:arithmetic_details}; the logarithmic arguments cover
every larger dimension.
\end{proof}

When a conditional count has degree at most two, bounds on its mean or its two
tails give explicit lower bounds on its middle atom.

\begin{lemma}[Quadratic mean and tail bounds]
\label{lem:quadratic_bounds}
Let $J\in\{0,1,2\}$ have a real-rooted probability generating polynomial,
and put $m=\Pr[J=1]$ and $\psi(v)=v(1-v/2)$.
If $v\leq\E[J]\leq2-v$ for $0\leq v\leq1$, then $m\geq\psi(v)$.

If both tails at one have probability at least $\alpha\in(0,1)$, then
\[
m\geq\phi(\alpha):=2\sqrt{1-\alpha}-2(1-\alpha)
>\alpha(1-\alpha).
\]
In particular, for $0<b<\alpha$, the rational comparison
\[
4(\alpha-b)(1-\alpha)-b^2>0
\]
implies $m>b$.
\end{lemma}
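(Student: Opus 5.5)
Since $J\in\{0,1,2\}$ has a real-rooted generating polynomial, write it as $(1-a+az)(1-b+bz)$ with $a,b\in[0,1]$. This exhibits $J$ as a sum of two independent Bernoulli variables. Then
\[
m=a(1-b)+b(1-a)=a+b-2ab,\qquad \E[J]=a+b,\qquad \Pr[J=2]=ab,\qquad \Pr[J=0]=(1-a)(1-b).
\]
Real-rootedness of the quadratic is equivalent to $m^2\geq4\Pr[J=0]\Pr[J=2]$, i.e.\ Newton's inequality. Both parts of the lemma reduce to elementary optimization over $(a,b)\in[0,1]^2$.

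\textbf{Mean bound.} Put $s=a+b\in[v,2-v]$. For fixed $s$ we have $m=s-2ab$, and $ab\leq s^2/4$ by AM--GM. Hence
\[
m\geq s-s^2/2=s(1-s/2)=\psi(s).
\]
The map $\psi$ is a concave parabola symmetric about $s=1$, so on $[v,2-v]$ its minimum is at the endpoints, where it equals $\psi(v)=\psi(2-v)$. This gives $m\geq\psi(v)$. (When $s>1$ the extreme $ab$ is still attainable, with $a=b=s/2\leq1$, so no boundary issue arises.)

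\textbf{Tail bound.} Let $A:=\Pr[J=0]$ and $C:=\Pr[J=2]$. The hypothesis says $\Pr[J\geq1]=1-A\geq\alpha$ and $\Pr[J\leq1]=1-C\geq\alpha$, so $A,C\leq1-\alpha$. We also have $m=1-A-C$ and, by Newton, $m^2\geq4AC$. We minimize $m$ subject to these constraints. Writing $m=(\sqrt{1-a}\cdots)$ directly is awkward, so I would parametrize instead.

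By symmetry and convexity I expect the minimum to occur with $A=1-\alpha$ and Newton tight, i.e.\ $a=b$. In that case $(1-a)^2=1-\alpha$, so $1-a=\sqrt{1-\alpha}$, and then $C=a^2=(1-\sqrt{1-\alpha})^2$. This requires $C\leq1-\alpha$, which holds when $\sqrt{1-\alpha}\geq1/2$. The resulting value is
\[
m=2a(1-a)=2\sqrt{1-\alpha}(1-\sqrt{1-\alpha})=2\sqrt{1-\alpha}-2(1-\alpha)=\phi(\alpha).
\]
To justify this, fix $A$ and minimize $m=1-A-C$, which means maximizing $C$. Newton gives $C\leq m^2/(4A)$; solving $1-A-C=m$ with $C=m^2/(4A)$ gives $m=2\sqrt A-2A$. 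Thus for each $A$ the minimal $m$ is $\max\{2\sqrt A-2A,\ \alpha-A\}$, the second term coming from $C\leq1-\alpha$. The function $2\sqrt A-2A$ increases on $A\leq1/4$ and decreases beyond. The main obstacle is precisely this case analysis in the regime $1-\alpha<1/4$ or when the $C$-cap binds. I would handle it by checking that on $A\in[0,1-\alpha]$ the minimum of $\max\{2\sqrt A-2A,\alpha-A\}$ is at least $\phi(\alpha)$. When $A=1-\alpha\leq1/4$, the two expressions compare via $\alpha-(1-\alpha)=2\alpha-1\geq\phi(\alpha)$ or similar elementary checks, possibly yielding only the weaker bound $\alpha(1-\alpha)$. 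In that case I would state $\phi$ over the applicable range and use the symmetric roles of $A$ and $C$ (the reflection $J\mapsto2-J$).

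\textbf{The last two claims.} The inequality $\phi(\alpha)>\alpha(1-\alpha)$ follows from setting $r=\sqrt{1-\alpha}\in(0,1)$. Then $\phi=2r(1-r)$ and $\alpha(1-\alpha)=(1-r^2)r^2$, and the difference is
\[
2r(1-r)-r^2(1-r)(1+r)=r(1-r)\bigl(2-r-r^2\bigr)=r(1-r)^2(2+r)>0.
\]
For the rational criterion, $m>b$ follows from $\phi(\alpha)>b$, i.e.\ $2r>b+2r^2$. Both sides are positive, so squaring gives the equivalent condition $4(1-\alpha)>(b+2(1-\alpha))^2$. Expanding, this is $4(1-\alpha)\alpha-4b(1-\alpha)-b^2>0$, which is exactly $4(\alpha-b)(1-\alpha)-b^2>0$.
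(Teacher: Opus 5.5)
Your factorization, the mean bound, the identity $\phi(\alpha)-\alpha(1-\alpha)=r(1-r)^2(2+r)$, and the squaring argument for the rational test are all correct. The first three match the paper. Deriving the rational test from $\phi(\alpha)>b$ by squaring is a valid alternative to the paper's observation that $4(\alpha-m)(1-\alpha)-m^2$ decreases in $m$.

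The gap is the central inequality $m\ge\phi(\alpha)$. You correctly reduce it to showing $g(A):=\max\{2\sqrt A-2A,\ \alpha-A\}\ge\phi(\alpha)$ for all $A\in[0,1-\alpha]$, but you never carry this out. Instead you guess the minimizer ``by symmetry and convexity'' and then consider settling for $\alpha(1-\alpha)$, or for a restricted range of $\alpha$. Neither fallback proves the lemma, which asserts $\phi(\alpha)$ for every $\alpha\in(0,1)$. Your proof of the rational test passes through $\phi(\alpha)>b$, so it inherits the same gap. Your guessed extremal configuration is also right only for $\alpha\le3/4$, as your own condition $\sqrt{1-\alpha}\ge1/2$ indicates. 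For $\alpha>3/4$ the minimum is $2\alpha-1$, attained at $\Pr[J=0]=\Pr[J=2]=1-\alpha$ with Newton's inequality slack; this is still at least $\phi(\alpha)$.

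The missing step is short. Put $t:=\sqrt{1-\alpha}$ and $s:=\sqrt A\in[0,t]$. Then $\alpha-\phi(\alpha)=(1-t)^2$, and $2s(1-s)-2t(1-t)=2(s-t)(1-s-t)$. So $2\sqrt A-2A\ge\phi(\alpha)$ exactly when $s$ lies between $1-t$ and $t$.
\begin{itemize}
\item If $t\ge1/2$, the range $s\le1-t$ gives $\alpha-A\ge\phi(\alpha)$, and the range $1-t\le s\le t$ gives $2\sqrt A-2A\ge\phi(\alpha)$. Together they cover $[0,t]$.
\item If $t<1/2$, then $A\le t^2<(1-t)^2$, so $\alpha-A>\phi(\alpha)$ throughout.
\end{itemize}
Hence $m\ge\phi(\alpha)$ for all $\alpha\in(0,1)$, and nothing needs weakening.

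The paper avoids this case analysis by reading the hypotheses as lower bounds. If $m<\alpha$, then $\Pr[J=0],\Pr[J=2]\ge\alpha-m$ with sum $1-m$, so their product is at least $(\alpha-m)(1-\alpha)$. Newton's inequality then gives $m^2\ge4(\alpha-m)(1-\alpha)$, whose positive root is exactly $\phi(\alpha)$. If $m\ge\alpha$, use $\phi(\alpha)\le\alpha$.
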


\begin{proof}
Factor the count law into two Bernoulli factors, padding by a zero
parameter if necessary. If their sum is $\mu$, their product is at
most $\mu^2/4$, so $m\geq\mu-\mu^2/2\geq\psi(v)$ on
$[v,2-v]$.

For the tail assertion, write the other probabilities as $p_0,p_2$.
If $m<\alpha$, then $p_0,p_2\geq\alpha-m$ and
$p_0+p_2=1-m$. Hence real-rootedness gives
\[
m^2\geq4p_0p_2\geq4(\alpha-m)(1-\alpha).
\]
Solving this quadratic gives $m\geq\phi(\alpha)$.
If $m\geq\alpha$, the same conclusion follows from
$\phi(\alpha)\leq\alpha$.
For $t=\sqrt{1-\alpha}\in(0,1)$,
$\phi(\alpha)-\alpha(1-\alpha)=t(1-t)^2(t+2)>0$.
Finally $4(\alpha-m)(1-\alpha)-m^2$ decreases with $m\geq0$,
which proves the rational test.
The factorization describes the count distribution, not independence
of the actual sampled edges.
\end{proof}

To control a joint event, we also need to compare how probability is distributed
between two coordinate blocks at consecutive total ranks. The following stability
inequality bounds the mixed rank-two coefficient from their truncated means.

\begin{lemma}[Adjacent-rank balance]
\label{lem:candidate_adjacent_rank_balance}
Let a bivariate stable probability generating polynomial have
degree-one and degree-two parts
\[
a x+b y,\quad c x^2+mxy+e y^2,
\]
where $a,b>0$. Put $P:=a+b$ and $q:=c+m+e>0$.
If both degree-at-most-two contributions to its coordinate means are at
least $L$, and $W:=P+2q-2L$, then
\[
m\geq\frac q2-\frac{W^2}{8(P+q)}.
\]
\end{lemma}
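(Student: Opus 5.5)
The plan is to turn stability into one quadratic inequality among the low-order coefficients, and then solve a small extremal problem in $(a,b,c,m,e)$.

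\textbf{Normalize the mean hypotheses.} The degree-at-most-two contributions to the two coordinate means are
\[
X_0:=a+2c+m,\qquad Y_0:=b+m+2e .
\]
Their sum is $P+2q$. So with $X:=X_0-L\geq0$ and $Y:=Y_0-L\geq0$ we have $W=X+Y\geq0$. The only way the hypothesis enters is then through $W\geq|X-Y|=|a-b+2(c-e)|$. The inequality to prove,
\[
\frac q2-m\leq\frac{W^2}{8(P+q)},
\]
says that the deficit of the mixed coefficient below half the degree-two mass is controlled by the imbalance between the two coordinates, normalized by the total low-order mass $P+q$.

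\textbf{Extract stability inequalities.} Stability is used only through real-rootedness of one-variable specializations. For $u,v\geq0$ the polynomial $t\mapsto g(1+ut,1+vt)$ is real-rooted with nonnegative coefficients, so Newton's inequality applies to its coefficients. I will also use real-rootedness of the diagonal restrictions $t\mapsto g(ut,vt)$, which have nonnegative coefficients. In the latter, the coefficients of $t$ and $t^2$ are $au+bv$ and $cu^2+muv+ev^2$. The target consequence is a family of inequalities of the form
\[
(au+bv)^2\geq\kappa\,(cu^2+muv+ev^2)\cdot(\text{lower mass}),
\]
together with the homogeneous inequality $m^2\geq4ce$. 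The homogeneous inequality comes from stability of an extreme homogeneous part, obtained by Hurwitz limits of $g(sx,sy)/s^k$ after the stability-preserving inversion.

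Given $m^2\geq4ce$, the identity $(c-e)^2=(q-m)^2-4ce$ yields
\[
(c-e)^2\geq q(q-2m).
\]
Hence the deficit $q/2-m$ is at most $(c-e)^2/(2q)$. The remaining task is to bound $|c-e|$ through $W$.

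\textbf{Optimize.} From the first step, $2(c-e)=\pm W'-(a-b)$ with $|W'|\leq W$, so the linear imbalance $a-b$ must be absorbed. Here the directional Newton inequalities enter. They tie the linear part to the quadratic part, so that a large $|a-b|$ forces a correspondingly tilted quadratic part. Combining the two bounds by Cauchy--Schwarz with weights proportional to $P$ and $q$ should produce the denominator $8(P+q)$ in place of $8q$. Concretely, I would minimize $m$ subject to the stability constraints and $X,Y\geq0$ with $P,q,W$ fixed. I expect the extremal configuration to be the one where $m^2=4ce$ and the linear and quadratic imbalances are aligned, and then I would verify the closed form there.

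\textbf{Main obstacle.} The hard step is isolating the correct stability inequality that couples $(a,b)$ to $(c,e)$. The homogeneous bound $m^2\geq4ce$ alone only gives the weaker estimate with $(W+P)^2$ in place of $W^2$, with the constant $1/(8q)$ in place of $1/(8(P+q))$. I would first try Newton's inequality for $t\mapsto g(1+ut,1+vt)$ with the direction $(u,v)$ chosen as the normalized imbalance direction $(X_0,Y_0)$. If that is not enough, I would use the multiaffine polarization of $g$ and a Rayleigh-type inequality for the polarized variables.
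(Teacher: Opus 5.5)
Your normalization is correct and matches the paper. The truncated means are $a+2c+m$ and $b+m+2e$, their sum is $P+2q$, and the hypothesis enters only through $|s|\le W$ with $s:=(a-b)+2(c-e)$. The gap is the step you flag yourself as the main obstacle. The proof needs a specific stability inequality coupling the linear part to the quadratic part, and the proposal does not produce one. The required inequality is
\[
mab\ge cb^2+ea^2,
\]
i.e.\ the quadratic part $cx^2+mxy+ey^2$ is nonpositive at $(b,-a)$, the kernel of $ax+by$. The paper obtains it from stability of the adjacent-rank homogenization $Q(x,y,z)=(ax+by)z+cx^2+mxy+ey^2$ \cite[Lemma~4.16]{bbl09}. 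If $N:=cb^2-mab+ea^2>0$, put $x=b+i\tau$ and $y=-a+i\tau$. Then $ax+by=i\tau P$, and solving $Q=0$ gives $\operatorname{Im}z=(N-\tau^2q)/(\tau P)>0$ for small $\tau>0$. This is a zero of $Q$ with all three coordinates in the open upper half-plane, contradicting stability. After that the argument is pure algebra. With $f:=a-b$, $c+e=q-m$ and $c-e=(s-f)/2$, the inequality becomes
\[
m\ge\frac q2+\frac{(P+q)f^2}{2P^2}-\frac{fs}{2P},
\]
and minimizing over $f$ gives $-s^2/(8(P+q))\ge-W^2/(8(P+q))$. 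This coupling inequality already implies $m^2\ge4ce$ by AM--GM, so your homogeneous bound is superseded, not combined with it.

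The sources you propose would not yield this coupling.
\begin{itemize}
\item Newton's inequality for $t\mapsto g(ut,vt)$ with $u,v\ge0$ gives $cu^2+muv+ev^2\le(au+bv)^2/(2g(0,0))$. That is an upper bound on the quadratic part, the wrong direction for a lower bound on $m$, and it is vacuous when $g(0,0)=0$.
\item Newton for $t\mapsto g(1+ut,1+vt)$ involves all higher-degree coefficients, on which the lemma imposes nothing.
\end{itemize}
The needed statement evaluates the quadratic form at the mixed-sign point $(b,-a)$. The auxiliary variable $z$ attached to the linear part is what makes this point reachable.

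Your guessed extremal configuration is also not the right one. The chain above is tight only when $ax+by$ divides the quadratic part and $f=sP/(2(P+q))$, and for $s\ne0$ these force $m^2>4ce$. Separately, Hurwitz limits of $g(sx,sy)/s^k$ isolate only the lowest or highest homogeneous part, which need not be the degree-two part. Stability of the degree-two part comes instead from differentiating the homogenization.
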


\begin{proof}
The adjacent-rank homogenization in \cite[Lemma~4.16]{bbl09}, with a
positive rescaling of the homogenizing variable, shows that
\[
Q(x,y,z):=(ax+by)z+cx^2+mxy+ey^2
\]
is stable. In particular,
\[
mab\geq cb^2+ea^2.
\]
Indeed, if $N:=cb^2-mab+ea^2>0$, take $x=b+i\tau$ and $y=-a+i\tau$.
Solving $Q=0$ for $z$ gives
\[
\operatorname{Im}z=\frac{N-\tau^2q}{\tau(a+b)}>0
\]
for sufficiently small $\tau>0$, contradicting stability.

Set $f:=a-b$, $g:=c-e$, and $s:=f+2g$.
The two truncated means have sum $P+2q$ and difference $s$, so
$|s|\leq W$. Rewriting the preceding coefficient inequality gives
\[
m\geq\frac q2+\frac{(P+q)f^2}{2P^2}-\frac{fs}{2P}.
\]
Completing the square in $f$ bounds the last two terms below by
$-s^2/(8(P+q))\geq-W^2/(8(P+q))$.
\end{proof}

The preceding comparison uses only low ranks, so we must account for the
probability and mean carried by higher ranks. Newton's inequalities give the
required tail bounds from the first two atoms.

\begin{lemma}[Newton tail and excess bounds]
\label{lem:newton_tails}
Let $(p_j)$ be a Poisson-binomial rank distribution with
$p_1\geq c>0$ and $p_2=q<2c$. Put $z=q/(2c)$ and
\[
T_c(q):=\frac{2q^2}{3c}\frac{3-2z}{(1-z)^2},
\quad
T_c^{\rm ex}(q):=\frac{2q^2}{3c}(1-z)^{-2}.
\]
Then
\[
\sum_{j\geq3}jp_j\leq T_c(q),
\quad
\sum_{j\geq3}(j-2)p_j\leq T_c^{\rm ex}(q).
\]
\end{lemma}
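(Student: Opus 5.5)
The plan is to use only the log-concavity of a Poisson-binomial rank sequence in its sharpened Newton form, turn it into ratio bounds $p_{j+1}/p_j$, and then dominate the tail $(p_j)_{j\geq3}$ by a geometric series starting at an explicit bound for $p_3$. Since $(p_j)$ is the coefficient sequence of a real-rooted polynomial $\prod_i(1-\pi_i+\pi_i x)$, Newton's inequalities give
\[
p_j^2\geq p_{j-1}p_{j+1}\Bigl(1+\frac1j\Bigr)\Bigl(1+\frac1{n-j}\Bigr)\geq\frac{j+1}{j}\,p_{j-1}p_{j+1}\qquad(j\geq1).
\]
I will work with this product form rather than with quotients, so that vanishing coefficients cause no division by zero. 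The sequence has no internal zeros: the $p_j$ are the elementary symmetric polynomials of the odds $\pi_i/(1-\pi_i)$ times $\prod_i(1-\pi_i)$, or one reads this off the displayed inequality. Hence if some $p_j=0$ with $j\geq3$, all later terms vanish and the estimates only improve. I may therefore assume $p_1,\dots,p_k>0$ up to the last nonzero index and define $r_j:=p_{j+1}/p_j$ there.

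\textbf{Step 1 (ratio recursion).} The displayed inequality reads $r_j\leq\frac{j}{j+1}r_{j-1}$ for $j\geq2$. Telescoping gives $r_j\leq\frac{2}{j+1}r_1$. With $r_1=q/p_1\leq q/c=2z$, this yields
\[
r_2\leq\frac{4z}{3},\qquad r_j\leq\frac{4z}{j+1}\leq z\quad(j\geq3).
\]
\textbf{Step 2 (first tail term).} From $p_3=r_2\,q\leq\frac23 r_1 q$, I get
\[
p_3\leq\frac{2q^2}{3c}.
\]
\textbf{Step 3 (geometric domination).} For $k\geq0$, Step~1 gives $p_{3+k}\leq z^k p_3$. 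The hypothesis $q<2c$ is exactly $z<1$, so
\[
\sum_{j\geq3}jp_j\leq p_3\sum_{k\geq0}(3+k)z^k=p_3\Bigl(\frac{3}{1-z}+\frac{z}{(1-z)^2}\Bigr)=p_3\,\frac{3-2z}{(1-z)^2}\leq T_c(q).
\]
In the same way, $\sum_{j\geq3}(j-2)p_j\leq p_3\sum_{k\geq0}(k+1)z^k=p_3(1-z)^{-2}\leq T_c^{\rm ex}(q)$.

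\textbf{Main obstacle.} The only delicate point is recording the Newton inequality in the sharpened form with the factor $(j+1)/j$. Plain log-concavity $p_j^2\geq p_{j-1}p_{j+1}$ would only give $r_j\leq r_1=2z$, which is useless as a geometric ratio when $z\geq1/2$. The sharpened form instead yields the factor $2/3$ in Step~2 and the contraction to ratio $z$ from index three onward. These match the constants $2/(3c)$ and $(1-z)^{-2}$ in $T_c$ and $T_c^{\rm ex}$ exactly, so no slack is lost elsewhere. I would cite the classical Newton inequalities for real-rooted polynomials with nonnegative coefficients (equivalently, ultra-log-concavity of Poisson-binomial laws) and drop the harmless factor $1+1/(n-j)\geq1$.
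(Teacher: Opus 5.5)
Your proof is correct and is essentially the paper's argument: the weakened Newton inequality $p_j^2\geq\frac{j+1}{j}p_{j-1}p_{j+1}$ gives $p_3\leq 2q^2/(3c)$ and ratio bound $p_{j+1}/p_j\leq z$ from $j=3$ on, followed by the same two geometric sums. The only difference is that you telescope the sharpened inequality, whereas the paper applies it at $j=2,3$ and then uses plain log-concavity; one small fix is that gaplessness of the support should come from the interval support of Poisson-binomial laws, since the displayed product inequality alone does not rule out a gap of two consecutive zeros.
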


\begin{proof}
Newton's inequalities, after weakening their dimension factors, give
\[
p_3\leq\frac{2q^2}{3c},
\quad
\frac{p_4}{p_3}\leq\frac34\frac{p_3}{p_2}\leq z.
\]
Log-concavity therefore gives
$p_{3+i}\leq(2q^2/(3c))z^i$ for every $i\geq0$.
Sum $(3+i)z^i$ and $(1+i)z^i$ to obtain the two bounds.
If $q=0$ or $p_3=0$, gapless rank support makes the tail zero,
so no division by zero is needed.
\end{proof}

Combining adjacent-rank balance with the tail bounds gives a joint-event
certificate that is concave in the rank-two probability. This lets us verify the
certificate on an interval by checking its endpoints.

\begin{lemma}[Concave adjacent-rank certificate]
\label{lem:concave_rank_certificate}
Let $X,Y$ have a stable joint probability generating polynomial, and
put $Z:=X+Y$. Suppose
\[
\E[X],\E[Y]\geq L_0,\quad \E[Z]\geq1,\quad
\Pr[Z=1]\geq c>0,\quad \Pr[Z=2]=q<2c.
\]
Define
\[
\Phi(q):=1-q-T_c(q),\quad W(q):=q+1-2L_0+2T_c(q).
\]
If $\Phi(q)>0$, $W(q)\geq0$, and $L_0-2q-T_c(q)>0$, then
\[
\Pr[X=Y=1]\geq F(q):=q/2-W(q)^2/(8\Phi(q)).
\]
On every interval where these conditions hold, $F$ is concave.
\end{lemma}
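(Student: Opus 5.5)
The plan is to apply Lemma~\ref{lem:candidate_adjacent_rank_balance} to the bivariate stable generating polynomial of $(X,Y)$. Its degree-one coefficients are $a=\Pr[X=1,Y=0]$ and $b=\Pr[X=0,Y=1]$, so $P=a+b=\Pr[Z=1]\geq c$. Its degree-two coefficients have sum $\Pr[Z=2]=q$, and $m=\Pr[X=Y=1]$. Since $Z$ has a real-rooted (stable) univariate generating polynomial, it is Poisson-binomial. Lemma~\ref{lem:newton_tails} with $p_1\geq c$ and $p_2=q<2c$ therefore gives $\sum_{j\geq3}jp_j\leq T_c(q)$. If $a$ or $b$ vanishes, I will handle it by a limiting or perturbation argument; otherwise Lemma~\ref{lem:candidate_adjacent_rank_balance} applies directly.

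The first step bounds the truncated means. The degree-at-most-two contribution to $\E[X]$ is $\E[X]-\E[X\mathbf 1_{Z\geq3}]\geq L_0-T_c(q)=:L$, and the same holds for $Y$. Then $W=P+2q-2L\leq P+2q-2L_0+2T_c(q)$. We must not lose the unknown $P$. The lemma's bound is
\[
\frac q2-\frac{W^2}{8(P+q)},
\]
so I will substitute $P+q=\Pr[Z\in\{1,2\}]$ and use $\Pr[Z=0]+\Pr[Z\geq3]+(P+q)=1$. The hypothesis $\E[Z]\geq1$ together with the tail bound controls $\Pr[Z=0]$. Indeed $1\leq\E[Z]\leq P+2q+T_c(q)$, and
\[
P+q=1-\Pr[Z=0]-\Pr[Z\geq3]\geq1-\Pr[Z=0]-T_c(q)/3.
\]
The cleanest route is to treat $P$ as a free variable in an admissible range and minimize the right-hand side over $P$. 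Write $W=P+q+(q-2L)$, so that
\[
\frac{W^2}{P+q}=(P+q)+2(q-2L)+\frac{(q-2L)^2}{P+q}.
\]
The hypothesis $L_0-2q-T_c(q)>0$ should make $q-2L<0$ in the relevant regime. Then $W^2/(P+q)$ is monotone in $P+q$ on the admissible range, and the extremal value is $P+q=1-q-T_c(q)$, which is where $\Phi(q)$ appears. There $W=1-2L_0+2T_c(q)+q=W(q)$ after using $P+q\leq1-\Pr[Z\geq3]$ and $\Pr[Z\geq3]\leq T_c(q)$. The main obstacle is this bookkeeping: checking that the worst case of $W^2/(8(P+q))$ over the range permitted by $P+q\geq\Phi(q)$ and $W\geq0$ really sits at $P+q=\Phi(q)$ with $W=W(q)$. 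I would verify the sign of $\frac{d}{dP}\bigl[W^2/(P+q)\bigr]$ using the three stated positivity conditions. If $W$ turns out to increase with $P$, I would instead bound $W\leq W(q)+(P+q-\Phi(q))$ and check that the ratio is still dominated.

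For concavity, $T_c$ is a rational function of $q$ on $[0,2c)$ with $T_c''\geq0$. Then $\Phi$ is positive and concave, and $W$ is convex and nonnegative. I will show that $(W,\Phi)\mapsto W^2/\Phi$ is jointly convex on $\Phi>0$ (it is a perspective of $W^2$) and nondecreasing in $W$ for $W\geq0$ and nonincreasing in $\Phi$. Composing it with a convex $W$ and a concave $\Phi$ gives a convex function of $q$. Hence $F=q/2-W^2/(8\Phi)$ is concave on every interval where the conditions hold, and endpoint checks suffice.
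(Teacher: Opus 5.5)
Your setup is right: Lemma~\ref{lem:newton_tails} gives truncated means at least $L:=L_0-T_c(q)$. Your concavity argument is also valid. It uses joint convexity of $(w,\phi)\mapsto w^2/\phi$, monotone in each argument, composed with convex $W$ and concave $\Phi$, and is a clean alternative to the paper's explicit second-derivative identity. The gap is in converting the bound $q/2-W_{\rm lem}^2/(8(P+q))$ of Lemma~\ref{lem:candidate_adjacent_rank_balance}, with $W_{\rm lem}=P+2q-2L$, into $F(q)$.

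With $s:=P+q$, your own expansion gives $\frac{d}{ds}\bigl(W_{\rm lem}^2/s\bigr)=1-(2L-q)^2/s^2$. This is nonnegative exactly when $W_{\rm lem}\geq0$. So on the admissible range the ratio is nondecreasing in $s$, and its worst value is at the top end $s=1$ (that is, $P=1-q$), not at $s=\Phi(q)$. Moreover, at $s=\Phi(q)$ one gets $W_{\rm lem}=1-2L_0+T_c(q)=W(q)-q-T_c(q)$, not $W(q)$. No admissible $P$ has both $s=\Phi(q)$ and $W_{\rm lem}=W(q)$, so the configuration you call the main obstacle does not exist. Evaluating at $s=\Phi(q)$ would use the \emph{smallest} value of the ratio, which gives an invalid bound. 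Your fallback $W_{\rm lem}\leq W(q)+(s-\Phi(q))$ does not rescue this. The quantity $(W(q)+s-\Phi(q))^2/s$ exceeds $W(q)^2/\Phi(q)$ for $s$ slightly above $\Phi(q)$ whenever $W(q)<2\Phi(q)$, which is the typical regime.

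The missing step is to bound numerator and denominator separately, and you already wrote down both ingredients. From $1\leq\E[Z]\leq P+2q+T_c(q)$ you get $P+q\geq\Phi(q)>0$. From $P\leq1-q$ you get $W_{\rm lem}\leq1+q-2L=W(q)$. Also $W_{\rm lem}\geq0$, because the two truncated means sum to $P+2q$ and each is at least $L$. Hence $W_{\rm lem}^2/(P+q)\leq W(q)^2/\Phi(q)$, which is exactly the paper's argument. Done with the correct sign, your monotonicity route gives the even stronger bound $W_{\rm lem}^2/s\leq W(q)^2$.

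Finally, no perturbation argument is needed for vanishing degree-one coefficients. Each is at least its truncated mean minus $2q$, hence at least $L_0-2q-T_c(q)>0$. That, rather than $q-2L<0$, is the real role of the third hypothesis.
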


\begin{proof}
Write $P:=\Pr[Z=1]$. Lemma~\ref{lem:newton_tails} bounds the
mean contribution from ranks at least three by $T_c(q)$. Both
degree-one coefficients are at least $L_0-2q-T_c(q)>0$, and
both truncated means are at least $L_0-T_c(q)$. Also
\[
P+q\geq1-q-T_c(q)=\Phi(q),\quad P\leq1-q.
\]
Lemma~\ref{lem:candidate_adjacent_rank_balance} therefore gives
the displayed bound. This applies after projection to the counted
union; no outside coordinates are retained in a non-extremal rank
condition. Compulsory monomial factors can be removed and the quotient
polarized, since these operations preserve stability.

The power series of $T_c$ has nonnegative coefficients, so
$T_c',T_c''\geq0$, $W''\geq0$, and $\Phi''\leq0$. Thus
\[
(W^2/\Phi)''=
\frac{2(W'\Phi-W\Phi')^2}{\Phi^3}
+\frac{2WW''}{\Phi}-\frac{W^2\Phi''}{\Phi^2}\geq0.
\]
Hence $F''\leq0$, and its minimum on a compact interval is attained
at an endpoint.
\end{proof}

A large central atom contains information that the usual Newton inequalities do
not exploit. We retain this information to strengthen the coefficient ratios used
in the later tail estimates.

\begin{lemma}[Atom-dependent Newton inequalities]
\label{lem:atom_newton}
Let $a_j=\Pr[Z=j]$ for a Poisson-binomial variable $Z$.
For $1/2\leq c\leq1$, put $\delta=(1-c)/c$.
For $(L,\nu)=(1/2,2/3)$ or $(1/4,1)$ and $0<x\leq1/3$, define
\[
q_L(x)=x-(1-L)x^2,\quad w_L(x)=Lx^2(1-\nu x),\quad
d_L(x)=\min_{S>0}\{1/S+q_L(x)S+w_L(x)S^2\}.
\]
Set $d_L(0)=0$, and let $X_L(\delta)\in[0,1/3)$ be the unique
solution of $d_L(X_L(\delta))=\delta$. Put $x_c=X_{1/2}(\delta)$,
$y_c=X_{1/4}(\delta)$, and
\[
J(c)=\frac{2(1-x_c/2)^2}{1-2x_c/3},\quad
K(c)=\frac{4(1-3y_c/4)^2}{1-y_c}.
\]

If $a_1\geq c$, then
$a_2^2\geq\max\{3/2,J(c)\}a_1a_3$.
If in addition $Z$ has degree at most three, then
$a_2^2\geq\max\{3,K(c)\}a_1a_3$.

For $0<c\leq1$, put $\delta=(1-c)/c$ and, for $r>0$, define
\[
Q_r(\delta)=\max_{t\geq0}\{\delta t-t^2-t^3/r\},\quad
F_\delta(r)=r(4-r)-9Q_r(\delta).
\]
If $\delta=0$, set $r_*(c)=4$. Otherwise set $r_*(c)=2$ when
$F_\delta(2)\leq0$, and let $r_*(c)$ be the unique zero of
$F_\delta$ in $(2,4)$ when $F_\delta(2)>0$. Define
\[
k_0(c)=\max\{2,4-3\delta^2/4\},\quad
k(c)=\max\{k_0(c),\lfloor10^{20}r_*(c)\rfloor/10^{20}\},\quad
D(c)=ck(c).
\]

If $a_2\geq c$, then $a_1^2\geq k(c)a_0a_2$.
The functions $k,D$ are increasing, $2\leq k\leq4$, and
$D(c')-D(c)\geq2(c'-c)$ for $c'\geq c$.
\end{lemma}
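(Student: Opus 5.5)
The plan is to reduce everything to the factorization of the generating polynomial $\sum_j a_jz^j=\prod_i(1-p_i+p_iz)$, whose roots are real and nonpositive. The large atom then forces the roots to be concentrated in a way that sharpens the classical Newton ratios.

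\textbf{Step 1 ($a_1\ge c$).} Write $a_1=a_0\sum_i\lambda_i$ with $\lambda_i=p_i/(1-p_i)$. Let $s_k$ denote the power sums of the $\lambda_i$, so $a_k=a_0e_k(\lambda)$. Normalize so that $e_1=1$. Then $a_0\le 1/e_1$ times a constant, and $a_0\sum_k e_k=1$, so $a_1\ge c$ gives
\[
\sum_{k\ne1}e_k\le(1-c)/c=\delta .
\]
Both $e_2$ and $e_3$ are expressed through $s_2=x$ and $s_3$. We have $e_2=(1-x)/2$, and $e_3$ is controlled by $s_3\le x^{3/2}$ and related bounds, so $e_0+e_2+e_3\ge$ a function of $x$. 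I read the scale parameter $S$ in $d_L$ as the free rescaling of the $\lambda_i$ (equivalently the choice of $a_0$). Minimizing $1/S+q_L(x)S+w_L(x)S^2$ therefore gives the smallest possible $\sum_{k\neq1}e_k$ at a given $x$, and $d_L(x)\le\delta$ forces $x\le X_L(\delta)$.

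With $x$ bounded, the ratio
\[
\frac{e_2^2}{e_1e_3}\ \ge\ \frac{(1-x/2)^2\cdot 2}{1-2x/3}
\]
is derived from the Schur-type inequality $e_3\le\frac16(1-3x+2s_3)$ together with $s_3\le x\cdot\max\lambda_i$. Specialized to $L=1/2$ this gives $J(c)$. For degree three the stronger Newton constant $3$, and similarly $K(c)$, come from the same computation with the $n=3$ dimension factors $(n-1)(n-2)$ and the choice $L=1/4$. The baseline constants $3/2$ and $3$ are the unweakened Newton inequalities.

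\textbf{Step 2 ($a_2\ge c$).} Here I compare $a_1^2$ with $a_0a_2$. Normalize around the rank-two mass and set $t$ to measure the spread of the roots. The condition $a_2\ge c$ forces the masses off rank two to total at most $\delta$. Writing the deviation from a perfectly balanced (binomial-like) configuration via $t$, one obtains
\[
a_1^2-r\,a_0a_2\ \ge\ \text{(positive)}\cdot\bigl(r(4-r)-9(\delta t-t^2-t^3/r)\bigr),
\]
after the worst case over $t$ is taken, which is exactly $F_\delta(r)\ge0$. The largest admissible $r$ is therefore $r_*(c)$. The floor to twenty digits only weakens the constant, and $k_0$ is a separate crude bound from second-order expansion near $\delta=0$. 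The structural claims then follow. Since $\delta$ decreases in $c$ and $Q_r(\delta)$ increases in $\delta$, $r_*$ and $k_0$ increase in $c$, so $2\le k\le4$ and $D=ck$ is increasing. For the slope claim, $D(c')-D(c)\ge k(c)(c'-c)\ge2(c'-c)$.

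\textbf{Main obstacle.} The hardest step is Step 1: identifying precisely which extremal inequality yields $q_L$ and $w_L$, that is, the sharp lower bound of $e_0+e_2+e_3$ at fixed $s_2$. I would first try reducing, by the same affine-in-product argument used in Lemma~\ref{lem:bernoulli_estimates}, to configurations with all fractional roots equal except one. On those configurations both $d_L$ and the ratio become explicit one-variable inequalities. These can then be verified on $0<x\le1/3$, and uniqueness of $X_L$ follows from monotonicity of $d_L$.
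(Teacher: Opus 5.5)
There is a genuine gap in Step~1, at the claim that minimizing $1/S+q_L(x)S+w_L(x)S^2$ gives the smallest possible $\sum_{k\neq1}e_k$ at given $x$, so that $x\le X_L(\delta)$.

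\textbf{Wrong variable.} Your $x$ is not the lemma's. With $x=s_2$ one has $e_2/S^2=(1-s_2)/2\neq q_L(s_2)$. Your displayed ratio bound is then false: $n$ equal odds give $s_2=1/n$ and ratio $3(n-1)/(2(n-2))\to3/2$, while the right side tends to $2$. In the paper's proof, $x=U/S$ is the share of the total odds $S$ lying outside the largest odds $T$.

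\textbf{Wrong direction.} Even with that $x$, write $e_2=TU+V$, $e_3=TV+W$ and $\lambda=V/U^2\in(0,L]$. Then $e_2/S^2=x-(1-\lambda)x^2\le q_L(x)$ and $e_3/S^3\le w_L(x)$. So $q_L,w_L$ are the \emph{largest} normalized coefficients at fixed $x$ (attained at $\lambda=L$), and the normalization $\delta\ge(1+e_2+e_3)/S$ does not give $d_L(x)\le\delta$. Indeed $x\le X_L(\delta)$ is false in general. Take $T$, one further odds, and a negligible third. Then $a_1\ge c$ needs only about $2\sqrt{x(1-x)}\le\delta$, and $2\sqrt{x(1-x)}<d_L(x)$, so some $x>X_L(\delta)$ is admissible.

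\textbf{Missing coupling.} Small $\lambda$, which spoils any bound on $x$, is exactly what makes $R:=a_2^2/(a_1a_3)=e_2^2/(Se_3)\ge F(\lambda,x)=(1-(1-\lambda)x)^2/(\lambda(1-\nu x))$ large. The paper exploits this by contradiction, with $X=X_L(\delta)$:
\begin{enumerate}
\item Suppose $R<r:=F(L,X)$. Then $e_3>e_2^2/(Sr)$ can be inserted into the normalization, giving $v:=e_2/S^2\le Q_r(\delta)$.
\item The stationarity equations for $d_L(X)=\delta$ show $Q_r(\delta)=q_L(X)$.
\item Since $F(\lambda,\cdot)$ decreases, $R\ge F(\lambda,z_\lambda)$, where $z_\lambda-(1-\lambda)z_\lambda^2=q_L(X)$.
\item The sign of $G'$ for $G(z)=(z^2-z+q_L(X))(1-\nu z)$ shows this is smallest at $\lambda=L$, i.e.\ $R\ge r$.
\end{enumerate}
Your bound $s_3\le s_2^{3/2}$ in fact yields exactly $J$ in the variable $1-\sqrt{s_2}$. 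There $e_2/S^2=(1-s_2)/2=q_{1/2}(1-\sqrt{s_2})$ holds with equality, so adding this contradiction step would close the $J$ case in your variable; without it you have no bound on $s_2$. Also, the equal-odds reduction of Lemma~\ref{lem:bernoulli_estimates} does not transfer, because $a_2^2-\kappa a_1a_3$ is quadratic, not affine, in $p_ip_j$.

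Step~2 has the same problem in milder form. Your display is asserted, not derived, and the ingredient that produces $F_\delta$ is absent: $S^3+9e_3-4Se_2\ge0$ for nonnegative odds. The paper proves it by showing that merging the two smallest odds lowers the left side when at least four remain, then checking three odds directly. Set $r:=S^2/e_2=a_1^2/(a_0a_2)$ (the actual ratio, not a free parameter) and $t:=r/S$. The normalization $\delta\ge(1+S+e_3)/e_2$ then reads $r(4-r)\le9(\delta t-t^2-t^3/r)\le9Q_r(\delta)$, i.e.\ $F_\delta(r)\le0$. You need $\partial_rF_\delta=4-2r-9t_r^3/r^2<0$ on $[2,4]$ together with $F_\delta(4)<0$ for three things: to conclude $r\ge r_*(c)$, to make $r_*$ well defined, and to get $r_*$ increasing in $c$, which your last paragraph uses.

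Your account of $k_0$ is also off. It is not a local expansion at $\delta=0$, which would not give a bound valid for every $c$; it is global. Cauchy--Schwarz gives $e_3\ge(4-r)S^3/(3r^2)$, whence $\delta\ge r/S+(4-r)S/(3r)\ge2\sqrt{(4-r)/3}$. The degenerate cases $a_3=0$ and $a_0=0$ (infinite odds) also need separate treatment. Your monotonicity and slope argument for $k$ and $D$ is fine once these pieces are in place.
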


\begin{proof}
For $0<x\leq1/3$, the minimizing $S$ in $d_L(x)$ is unique and
satisfies $1=q_L(x)S^2+2w_L(x)S^3$. Since $q_L',w_L'>0$,
$d_L$ is strictly increasing. Testing $S=x^{-1/2}$ proves continuity
at zero, and
$d_L(1/3)>2\sqrt{q_L(1/3)}\geq1$. This proves the asserted
existence and uniqueness of $X_L$.

The fixed coefficients $3/2$ and $3$ follow from Newton's inequalities.
When $a_3=0$ the first assertions are immediate. When $a_0=0<a_1$,
exactly one success is compulsory; Newton applied after removing it gives
coefficients $2$ and $4$, respectively. In the remaining case $\delta>0$.
Let $t_i$ be the finite
Bernoulli odds, $S=\sum_i t_i$, and $e_j$ their elementary symmetric sums.
Writing $T=\max_i t_i$, $U=S-T$, and $x=U/S$, normalization gives
\[
\frac1S+\frac{e_2}S\leq\delta,\quad
U\leq S-\frac{\sum_i t_i^2}{S}=\frac{2e_2}{S},\quad
x\leq\frac{2\delta}S-\frac2{S^2}\leq\frac{\delta^2}2\leq\frac12.
\]
Let $V,W$ be the second and third elementary symmetric sums of the odds
remaining after $T$. Then $e_2=TU+V$, $e_3=TV+W$, $V\leq U^2/2$,
and $3W\leq UV$. Put $\lambda=V/U^2>0$ and $R=e_2^2/(Se_3)$.
Use $(L,\nu)=(1/2,2/3)$ in general and $(1/4,1)$ in degree
at most three; in the latter case $W=0$ and $V\leq U^2/4$. Thus
\[
0<\lambda\leq L,\quad
R\geq F(\lambda,x):=
\frac{(1-(1-\lambda)x)^2}{\lambda(1-\nu x)}.
\]
Write $X=X_L(\delta)$ and $r=F(L,X)=q_L(X)^2/w_L(X)$.
Assume $R<r$, since otherwise the desired bound holds.
Set $v=e_2/S^2=x-(1-\lambda)x^2$ and $P=e_2/S>0$.
Retaining the third atom in normalization yields
\[
\delta\geq\frac{1+e_2+e_3}{S}
=\frac vP+P+\frac{P^2}R
\geq\frac vP+P+\frac{P^2}r,
\quad v\leq Q_r(\delta)=:D.
\]
Let $S_*$ attain $d_L(X)$ and put $P_*=q_L(X)S_*$. The stationary
equation and $d_L(X)=\delta$ give
\[
q_L(X)=P_*^2+2P_*^3/r,\quad \delta=2P_*+3P_*^2/r.
\]
Hence $P_*$ maximizes $Q_r$, so $0<D=q_L(X)\leq\delta^2/4\leq1/4$.

For either chosen pair $(L,\nu)$, $F(\lambda,x)$ decreases on
$0\leq x\leq1/2$: its logarithmic derivative has the sign of
$\nu-2(1-\lambda)+\nu(1-\lambda)x$, which is negative there.
Let $z_\lambda\leq1/2$ solve
$z_\lambda-(1-\lambda)z_\lambda^2=D$. This solution exists, is at
least $x$, and decreases with $\lambda$. At equality,
\[
F(\lambda,z_\lambda)=\frac{D^2}{G(z_\lambda)},\quad
G(z)=(z^2-z+D)(1-\nu z).
\]
For $z\in[z_L,z_\lambda]$, we have $D\geq z-z^2$, so
\[
G'(z)=-3\nu z^2+2(1+\nu)z-1-\nu D
\leq(2z-1)(1-\nu z)\leq0.
\]
All denominators are positive, and $z_L=X$. Hence
$R\geq F(\lambda,z_\lambda)\geq F(L,X)=r$, a contradiction.
This gives $J(c)$ and $K(c)$ without asserting that $x_c$ or $y_c$
separately bounds the actual $x$. In degree three the coefficient is
sharp for $c<1$: the odds $((1-y_c)S_*,y_cS_*/2,y_cS_*/2)$ have
first atom $c$ and ratio $K(c)$.

For the second assertion, $a_0=0$ is harmless. Otherwise put
$r=S^2/e_2\geq2$. The case $r\geq4$ is immediate, so suppose $r<4$.
Cauchy--Schwarz gives $\sum_i t_i^3\geq(\sum_i t_i^2)^2/S$, whence
\[
e_3\geq\frac{(4-r)S^3}{3r^2},\quad
\delta\geq\frac{1-a_2}{a_2}
\geq\frac rS+\frac{(4-r)S}{3r}\geq2\sqrt{(4-r)/3}.
\]
This proves $r\geq k_0(c)$. We also have
\[
S^3+9e_3-4Se_2\geq0.
\]
Here is a direct proof in arbitrary dimension. Merging the two smallest
positive odds $u,v$ changes the left side by
$uv(9(u+v)-5S)\leq-Suv/2$ when at least four odds remain.
At three odds $x\geq y\geq z\geq0$, it equals
$(x-y)^2(x+y-z)+z(x-z)(y-z)\geq0$.
Iterating the merge proves the inequality.
Retain the constant term in normalization and set $t=r/S$. Then
\[
\delta\geq r/S^2+r/S+(4-r)S/9,\quad
r(4-r)\leq9(\delta t-t^2-t^3/r)\leq9Q_r(\delta).
\]
For $\delta>0$, the maximizer $t_r$ of $Q_r$ is positive and unique,
with $\delta=2t_r+3t_r^2/r$. Therefore
\[
\frac{\partial F_\delta(r)}{\partial r}
=4-2r-9t_r^3/r^2<0\quad(2\leq r\leq4).
\]
Also $F_\delta(4)<0$. Thus the defining root exists when needed,
and $F_\delta(r)\leq0$ implies $r\geq r_*(c)$.
Since $Q_r$ increases with $\delta$, $r_*(c)$ increases with $c$.
The endpoint $\delta=0$ follows by continuity, or from $a_2=1$.
Downward rounding preserves the bound and monotonicity. Retaining
$k_0$ is necessary, since it can exceed $r_*$ for small $c$.
Finally $k\geq2$ gives the asserted increment bound for $D$.
\end{proof}

We propagate the improved coefficient ratios to the higher ranks while retaining a
relation between the omitted probability and omitted mean. This relation
strengthens the normalized adjacent-rank certificate.

\begin{lemma}[Factorial tails and normalized adjacent ranks]
\label{lem:normalized_rank_tail}
Suppose $a_1\geq c>0$, $a_2=q$, and
$a_2^2\geq k a_1a_3$. Put $b=3q/(kc)<4$, $z=b/6$, and
\begin{align*}
S(b)&=1+b/4+\frac{b^2}{20(1-z)},\\
U(b)&=3+b+\frac{b^2(5-4z)}{20(1-z)^2},&
V(b)&=1+b/2+\frac{b^2(3-2z)}{20(1-z)^2},\\
\mathcal T(q,c,k)&=\frac{q^2}{kc}U(b),&
\mathcal E(q,c,k)&=\frac{q^2}{kc}V(b).
\end{align*}
Writing $M=\sum_{j\geq3}ja_j$ and $B=\sum_{j\geq3}a_j$, we have
\[
M\leq\mathcal T(q,c,k),\quad
\sum_{j\geq3}(j-2)a_j\leq\mathcal E(q,c,k),\quad
B\geq\frac{S(b)}{U(b)}M.
\]

More generally, let $X,Y$ have a stable joint generating polynomial,
$\Pr[X+Y=2]=q$, and $\E[X],\E[Y]\geq L_0$.
Suppose their omitted probability and mean satisfy $B\geq\lambda M$,
$M\leq T(q)$, where $0\leq\lambda\leq1/3$ is fixed. Define
\[
\Phi(q)=1-\lambda T(q),\quad
W(q)=1+q-2L_0+(2-\lambda)T(q),\quad
F(q)=q/2-W(q)^2/(8\Phi(q)).
\]
If $\Phi>0$, $W\geq0$, and $L_0-2q-T>0$, then
$\Pr[X=Y=1]\geq F(q)$.
If $T,T',T''\geq0$ on an interval and these conditions hold there,
then $F$ is concave on that interval.
\end{lemma}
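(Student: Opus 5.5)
The argument splits into a one-variable tail computation (the first three inequalities) and a repetition of the proof of Lemma~\ref{lem:concave_rank_certificate} with the new tail data (the bivariate assertion).

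\textbf{Factorial tail.} I would propagate the ratio bounds upward with the dimension-free Newton inequalities $a_j^2\geq(1+1/j)a_{j-1}a_{j+1}$ for Poisson-binomial atoms. From $a_2^2\geq k a_1a_3$ and $a_1\geq c$ we get
\[
a_3\leq\frac{q^2}{kc},\qquad \frac{a_3}{a_2}\leq\frac{q}{kc}=\frac b3 .
\]
Newton then gives $a_{j+1}/a_j\leq\frac{j}{j+1}\,a_j/a_{j-1}$. This yields $a_4/a_3\leq b/4$ and $a_5/a_4\leq b/5$, so $a_5\leq a_3b^2/20$. For $j\geq5$ each further ratio is at most $b/6=z<2/3$, using $b<4$. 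Hence
\[
a_{3+i}\leq a_3\,\rho_i,\qquad \rho_0=1,\ \rho_1=b/4,\ \rho_{2+i}=\frac{b^2}{20}z^i .
\]
Summing $\rho_i$, $(3+i)\rho_i$ and $(1+i)\rho_i$ gives exactly $S(b)$, $U(b)$ and $V(b)$. For example,
\[
\sum_{i\geq0}(5+i)z^i=\frac{5-4z}{(1-z)^2},\qquad \sum_{i\geq0}(3+i)z^i=\frac{3-2z}{(1-z)^2}.
\]
Multiplying by $a_3\leq q^2/(kc)$ proves $M\leq\mathcal T$ and the excess bound $\sum_{j\geq3}(j-2)a_j\leq\mathcal E$.

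\textbf{The ratio $B\geq (S/U)M$.} The ratio $M/B$ is the mean of $Z$ conditioned on $Z\geq3$. That conditional law has successive ratios $a_{j+1}/a_j$ bounded by the successive ratios of the profile $(\rho_i)$. I would check that the actual ratio at each step is at most the profile ratio: $b/4$, then $b/5$, then $z$ thereafter. This requires $a_3/a_2\leq b/3$, i.e.\ exactly the hypothesis on $k$, so no extra condition is needed. Given this, the actual conditional law is dominated in likelihood-ratio order by the normalized profile $\rho_i/S(b)$. Its mean is therefore at most $U(b)/S(b)$, which is the claim.

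\textbf{Bivariate certificate.} I would follow the proof of Lemma~\ref{lem:concave_rank_certificate}. First project to the counted union, strip compulsory factors, and apply Lemma~\ref{lem:candidate_adjacent_rank_balance}. Put $P=\Pr[Z=1]$, where $Z=X+Y$. The degree-one coefficients are at least $L_0-2q-M\geq L_0-2q-T>0$, and the truncated means are at least $L':=L_0-M$. The lemma then gives
\[
\Pr[X=Y=1]\geq \frac q2-\frac{(y+q-2L')^2}{8y},\qquad y:=P+q .
\]
The map $y\mapsto(y+q-2L')^2/y$ is increasing for $y\geq|q-2L'|$, so I would replace $y$ by its upper bound $y\leq1-B\leq1-\lambda M$. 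The numerator becomes at most $1-\lambda M+q-2L_0+2M\leq W(q)$, using $2-\lambda>0$ and $M\leq T$. Likewise $1-\lambda M\geq\Phi(q)$ handles the denominator, which gives $F(q)$.

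\textbf{Main obstacle.} This monotonicity step is the delicate one. When $y$ falls below $|q-2L'|$, the function is no longer monotone, so I would first try the original $s$-form of Lemma~\ref{lem:candidate_adjacent_rank_balance} and bound $|s|\leq W$ directly. I also expect to have to watch the $a_0$ term, which enters through $P=1-a_0-q-B$.

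\textbf{Concavity.} This is identical to the earlier lemma. Since $T,T',T''\geq0$, $\lambda\geq0$ and $2-\lambda>0$, we have $\Phi''\leq0$ and $W''\geq0$. Together with $\Phi>0$ and $W\geq0$, the displayed second-derivative identity for $W^2/\Phi$ makes it convex, so $F$ is concave.
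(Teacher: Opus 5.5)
Your outline follows the paper's proof step for step: weakened Newton ratios give the profile $w_i$, a likelihood-ratio (covariance) comparison gives $M/B\le U/S$, and Lemma~\ref{lem:candidate_adjacent_rank_balance} is followed by replacing $P+q$ with $1-\lambda M$, then $M$ with $T$, and the same second-derivative identity. However, you leave the one nontrivial step of the bivariate part open. You need the actual $y=P+q$ to lie in the range $y\ge|q-2L'|$ on which $y\mapsto(y+q-2L')^2/y$ is nondecreasing. This holds automatically:
\begin{itemize}
\item The two truncated means (degree-$\le2$ contributions to $\E[X],\E[Y]$) sum to exactly $P+2q$, and each is at least $L'=L_0-M$. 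Hence $y+(q-2L')=P+2q-2L'\ge0$.
\item Since $L'\ge L_0-T>2q$, we have $y-(q-2L')=P+2L'>0$.
\end{itemize}
So the derivative $(y^2-(q-2L')^2)/y^2$ is nonnegative on $[P+q,\infty)$, and replacing $y$ by $1-\lambda M\ge y$ is legitimate. The same fact gives $1+q-2L_0+(2-\lambda)M\ge P+2q-2L'\ge0$. You need this nonnegativity, which you did not state, before passing from ``numerator $\le W(q)$'' to ``numerator$^2\le W(q)^2$''; the hypothesis $W\ge0$ alone does not give it.

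Your proposed fallback through the $s$-form would not close the gap. Bounding $|s|$ still leaves the denominator $8(P+q)$, and under these hypotheses $P+q$ has only the upper bound $1-a_0-B\le1-\lambda M$. There is no assumption $\E[X+Y]\ge1$ here, unlike Lemma~\ref{lem:concave_rank_certificate}. So $\Phi(q)$ can enter only through this monotonicity, not through a lower bound on $P+q$. For the same reason your concern about $a_0$ is moot: it only decreases $P+q$, which monotonicity absorbs. The rest of the proposal is correct, including the profile sums and the likelihood-ratio domination. You should only add that vanishing atoms cause no division problem, because the Poisson-binomial support is gapless.
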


\begin{proof}
Newton gives $(j+1)a_{j+1}/a_j\leq3a_3/q\leq b$ for $j\geq3$.
Thus $a_{3+i}\leq(q^2/(kc))w_i$, where
$w_0=1,w_1=b/4$, and $w_i=(b^2/20)z^{i-2}$ for $i\geq2$.
Summing $w_i,(3+i)w_i,(1+i)w_i$ gives $S,U,V$.
Moreover $a_{3+i}/w_i$ is decreasing: the first two ratio bounds are
$b/4,b/5$, and subsequent ratios are at most $b/6$.
Pairwise expansion of the covariance of an increasing and a decreasing
sequence shows that the mean of $3+i$ under the normalized $a_{3+i}$
is at most its mean under the normalized $w_i$. This proves $M/B\leq U/S$.
Vanishing tails follow by continuity or gapless support. The ratio $U/S$
increases with $b$, again by the covariance identity applied to the powers
of $b$ in $w_i$. A uniform upper bound on $b$ therefore supplies one
fixed $\lambda=S(b)/U(b)$ for an entire interval.

For the adjacent-rank assertion, put $P=\Pr[X+Y=1]$ and $L=L_0-M$.
Both degree-one coefficients are positive, and
Lemma~\ref{lem:candidate_adjacent_rank_balance} gives the loss
$(P+2q-2L)^2/(8(P+q))$. The numerator's unsquared quantity is nonnegative.
For $t\geq P$,
\[
\frac{d}{dt}\frac{(t+2q-2L)^2}{t+q}
=\frac{(t+2q-2L)(t+2L)}{(t+q)^2}\geq0.
\]
Since $P+q\leq1-B\leq1-\lambda M$, the loss is at most
\[
\frac{(1+q-2L_0+(2-\lambda)M)^2}{8(1-\lambda M)}.
\]
Its numerator before squaring is nonnegative at the actual $M$ and
increases thereafter. Replacing $M$ by $T$ therefore increases the loss.
This argument uses $B\geq\lambda M$, not the generally false inequality
$B\geq\lambda T$. The second-derivative identity in
Lemma~\ref{lem:concave_rank_certificate}, now with this $W,\Phi$,
proves concavity. For a law of degree at most three, $B=M/3$, so
$\lambda=1/3$ is available exactly.
\end{proof}

We now package two rounds of atom and tail improvement into explicit functions.
These profiles keep the subsequent probability estimates finite and reproducible.

\begin{definition}[Two-step tail profiles]
\label{def:tail_profiles}
For $q\geq0$, $\alpha\geq0$, and $f\in\{0,1\}$, define
\begin{align*}
c_1(q)&=1+\alpha-2q-\mathcal T(q,0.36,3/2),\\
j_1(q)&=J(c_1(q))\quad(f=0),\quad j_1(q)=3/2\quad(f=1),\\
c_2(q)&=1+\alpha-2q-\mathcal T(q,c_1(q),j_1(q)),\\
T_{\alpha,f}(q)&=\mathcal T(q,c_2(q),J(c_2(q))),&
E_{\alpha,f}(q)&=\mathcal E(q,c_2(q),J(c_2(q))).
\end{align*}
For the degree-three profile, put
\[
g_1(q)=1-2q-q^2/0.36,\quad
g_2(q)=1-2q-\frac{3q^2}{g_1(q)K(g_1(q))},\quad
T_3(q)=\frac{3q^2}{g_2(q)K(g_2(q))}.
\]
The ordinary profile $T_c^{\rm ord}$ is defined by two updates
$c_0=c$, $c_{j+1}(q)=1-2q-\mathcal T(q,c_j(q),3/2)$, $j=0,1$,
followed by $T_c^{\rm ord}(q)=\mathcal T(q,c_2(q),3/2)$.
\end{definition}

Before applying these profiles, we establish both their probabilistic validity and
the curvature properties needed for endpoint checks.

\begin{lemma}[Validity and concavity of the tail profiles]
\label{lem:tail_profiles}
Suppose $a_1\geq0.36$, $a_2=q$, and $\E[Z]\geq1+\alpha$.
On an interval where $2q\geq\alpha$, $c_1,c_2\geq1/2$, and each
used $J(c_j)\geq3/2$, and every factorial ratio parameter is below four,
the quantities $T_{\alpha,f},E_{\alpha,f}$
bound respectively the omitted mean and excess in
Lemma~\ref{lem:normalized_rank_tail}. They are nonnegative, increasing,
and convex. The degree-three profile has the same properties on
$[0,0.1]$, with $T_3'(q)\geq3q/2$; it bounds the omitted mean when
$\E[Z]\geq1$, $a_1\geq0.36$, and the degree is at most three.
The ordinary profile is valid, increasing, and convex on $[0,0.11]$
for $c=0.33$, and on $[0,0.1]$ for $c=0.36$.
\end{lemma}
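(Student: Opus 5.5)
The plan is to argue validity by induction along the two updates, and then handle monotonicity and convexity as real-analytic properties of compositions.

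\textbf{Validity.} The starting point is a lower bound on the first atom. Since $\E[Z]=a_1+2q+M\geq1+\alpha$ and $M$ is the omitted mean, any upper bound $M\leq T$ gives $a_1\geq 1+\alpha-2q-T$.

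The first round uses Lemma~\ref{lem:normalized_rank_tail} with $c=0.36$ and the unconditional Newton ratio $k=3/2$. This gives $M\leq\mathcal T(q,0.36,3/2)$, hence $a_1\geq c_1(q)$.

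Because $c_1\geq1/2$ on the interval, Lemma~\ref{lem:atom_newton} applies and gives the ratio $J(c_1)$. The branch $f=1$ instead keeps $3/2$, which is always valid. In either case Lemma~\ref{lem:normalized_rank_tail} then gives $M\leq\mathcal T(q,c_1,j_1)$, hence $a_1\geq c_2$.

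Repeating this once more yields the bounds $T_{\alpha,f}$ and $E_{\alpha,f}$. The hypotheses that each ratio parameter $b<4$ and each $J\geq3/2$ are exactly what Lemma~\ref{lem:normalized_rank_tail} requires at every step. Since $\mathcal T$ is decreasing in $c$ and $k$, the bound remains valid when the true $a_1$ exceeds $c_j$. The degree-three profile follows the same pattern. Using $\E[Z]\geq1$, a first update with the degree-three Newton coefficient $3$ and $B=M/3$ gives $M\leq 3q^2/(0.36\cdot 3)\cdot$const, i.e. $a_1\geq g_1$. The ratio $K(g_1)$ then gives $g_2$, and finally $T_3$. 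The ordinary profile is the same computation with the fixed ratio $3/2$.

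\textbf{Shape.} To get nonnegativity and monotonicity, I will show that each $c_j$ is decreasing in $q$, and that $J$ and $K$ are increasing in $c$. Then $q^2/(kc)$ is increasing, and so is $b$. Since $U$ and $V$ are power series in $b$ with nonnegative coefficients, the composites are increasing.

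That $J$ is increasing follows because $x_c=X_{1/2}((1-c)/c)$ decreases in $c$ (as $d_L$ is increasing) and $J$ is decreasing in $x$ on $[0,1/3]$. The same argument works for $K$.

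Convexity is the main obstacle. A composition of convex increasing functions with the decreasing functions $c_j$ is not automatically convex, since $J\circ c_1$ has no obvious sign of curvature. I would first try writing $T=q^2\,\Psi(q)$, where $\Psi=U(b)/(kc)$ is increasing and positive. Then $T''=2\Psi+4q\Psi'+q^2\Psi''$, so it suffices to bound $|\Psi''|$ relative to $\Psi/q^2$ on the interval, using explicit derivative bounds for $x_c$ obtained from the implicit equation $d_L(X)=\delta$.

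If such a symbolic bound is too weak, the fallback is a closed-interval certificate. On each of finitely many subintervals, I would compute outward-rounded enclosures of $T''$ via interval arithmetic on the implicit definitions, checked over the rationals as in Section~\ref{sec:arithmetic_details}. The estimate $T_3'\geq3q/2$ follows from the leading term $3q^2/(g_2K)$ with $g_2K\leq1\cdot$ and $K\leq4$, since $g_2\leq1$ gives $3q^2/(g_2K)\geq 3q^2/4$ and the derivative dominates its leading part by monotonicity of the remaining factor.
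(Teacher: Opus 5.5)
Your validity argument (the identity $a_1=\E[Z]-2q-M$ driving two updates) and your monotonicity argument ($c_j$ decreasing in $q$, $cJ(c)$ and $cK(c)$ increasing in $c$, and $U,V$ power series in $b$ with nonnegative coefficients) match the paper's. The convexity part, however, has a genuine gap: you never prove it. The $T=q^2\Psi$ route stops at ``bound $|\Psi''|$'' with no bound supplied. The interval-arithmetic fallback cannot establish the lemma as stated, because the lemma is quantified over every $\alpha\geq0$, both values of $f$, and every interval satisfying the endpoint conditions. It also could not replace the analytic statement used downstream: the endpoint reductions need convexity of the unrounded profile, not of finitely many rounded enclosures of $T''$ built from nested implicit functions.

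\emph{The missing idea: reparametrize by the deficit.} Do not treat $c$ and $J$ separately; view $1/(cJ(c))$ as a function of the deficit $u:=1-c$. With $f(x)=(1-2x/3)/(2(1-x/2)^2)$ one has $1/(cJ(c))=f(x_c)/(1-u)$ and $x_c=X_{1/2}(u/(1-u))$. Each ingredient is nonnegative, increasing and convex:
\begin{itemize}
\item $f$ on $[0,1/2]$, by direct differentiation;
\item $u/(1-u)$ and $1/(1-u)$, trivially;
\item $X_L$, because it is the inverse of the increasing, strictly concave $d_L$. Differentiating the minimized value and using $1/S^2=q_L+2w_LS$ gives $(2/S^3+2w_L)d_L''=-(1+2LxS)^2<0$.
\end{itemize}
The same holds for $1/(cK(c))=g(y_c)/c$ with $g(x)=(1-x)/(4(1-3x/4)^2)$ on $[0,1/3]$. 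Nonnegative increasing convex functions are closed under composition and under products, since $(ab)''=a''b+2a'b'+ab''\geq0$. By induction, $u_j(q)=2q+T_{\rm prev}(q)-\alpha$ is nonnegative, increasing and convex, so $1/(c_jk_j)$ is too. Then $\mathcal T$ and $\mathcal E$, being sums of positive multiples of $(q^2/(ck))(3q/(ck))^i$, inherit all three properties; in particular $\Psi''\geq0$ outright. Your remark that $J\circ c_1$ has no obvious curvature sign is exactly the obstruction this reparametrization removes.

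This is also where the hypothesis $2q\geq\alpha$ enters, and you never use it. It gives $c_j\leq1$, which is needed both for Lemma~\ref{lem:atom_newton} ($J(c)$ is defined only for $1/2\leq c\leq1$) and to place $u_j\geq0$ in the domain where these convexity facts hold. For the concrete intervals you also omit the right-endpoint checks that monotonicity reduces everything to:
\begin{itemize}
\item For the degree-three profile on $[0,0.1]$, check $g_1\geq139/180>3/4$ and $g_1K(g_1)>3(0.36)$ (so $g_2\geq g_1$) before invoking $K$. Once that is in place, your $T_3'\geq3q/2$ argument, via $1/(g_2K(g_2))\geq1/4$ nondecreasing, is fine.
\item For the ordinary profiles, check positivity of the updated atoms and $b<4$ at the endpoints $0.11$ and $0.1$.
\end{itemize}
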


\begin{proof}
The identity $a_1=\E[Z]-2q-\sum_{j\geq3}ja_j$ justifies each
update from the preceding proved tail; only two updates are used.
Put $u=1-c$ and
\[
f(x)=\frac{1-2x/3}{2(1-x/2)^2},\quad
g(x)=\frac{1-x}{4(1-3x/4)^2}.
\]
Then $1/(cJ(c))=f(x_c)/c$ and $1/(cK(c))=g(y_c)/c$.
Direct differentiation gives
\begin{align*}
f'(x)&=\frac{1-x}{6(1-x/2)^3},&
f''(x)&=\frac{1-2x}{12(1-x/2)^4},\\
g'(x)&=\frac{2-3x}{16(1-3x/4)^3},&
g''(x)&=\frac{3-9x}{32(1-3x/4)^4}.
\end{align*}
Thus $f$ is nonnegative, increasing, and convex on $[0,1/2]$,
and $g$ has these properties on $[0,1/3]$.
For either $(L,\nu)$, abbreviate $q=q_L(x)$, $w=w_L(x)$, and let
$S$ minimize $d_L(x)$. Differentiation of the minimizing value gives
\[
d_L''(x)=q''S+w''S^2-\frac{(q'+2w'S)^2}{2/S^3+2w}.
\]
Using $1/S^2=q+2wS$, multiplication by its positive denominator yields
\begin{align*}
(2/S^3+2w)d_L''(x)
&=2q''q-(q')^2+(6q''w+2qw''-4q'w')S\\
&\phantom{=}+(6ww''-4(w')^2)S^2
=-(1+2LxS)^2<0.
\end{align*}
Consequently $X_L(\delta)$ is increasing and convex, including at
zero by continuity, and $x_c,y_c<1/3$.
Since $\delta=u/(1-u)$ and $1/c=1/(1-u)$ are nonnegative,
increasing, and convex, both reciprocal products have these properties
as functions of $u\in[0,1/2]$.

Now $u(q)=2q+T_{\rm previous}(q)-\alpha$ is nonnegative,
increasing, and convex. Composition preserves these properties,
as do products, by $(ab)''=a''b+2a'b'+ab''\geq0$.
The formulas for $\mathcal T,\mathcal E$ are sums of positive multiples
of $q^2/(ck)$ times powers of $3q/(ck)$, proving the assertion inductively.
The fixed coefficient $3/2$ is treated in the same way.
The right endpoint bounds on the atoms and coefficients suffice,
since the atoms decrease. The condition $2q\geq\alpha$ ensures $c_j\leq1$.
All ratio denominators are checked positive at the right endpoint.
For the ordinary profiles the two updated atoms increase from their
initial value and stay positive at the stated endpoint.

For degree three, the omitted mean is $3a_3$. Ordinary degree-three
Newton gives the first update, and Lemma~\ref{lem:atom_newton} gives the
second and final tail. On $[0,0.1]$, we have
$g_1\geq139/180>3/4$ and $g_1K(g_1)>9/4>3(0.36)$, so
$g_2\geq g_1>3/4$; also $K(g_j)>3$. Since $1/(g_2K(g_2))\geq1/4$ is nondecreasing in $q$,
differentiating $T_3(q)=3q^2/(g_2K(g_2))$ gives $T_3'(q)\geq3q/2$.
Similarly, $T_{\alpha,f}(q)\geq3q^2/2$,
since $c_2\leq1$, $J(c_2)\leq2$, and $U\geq3$.

For finite evaluation only, round each intermediate tail upward and each
updated atom downward to multiples of $10^{-20}$. Evaluate $J,K$ with
the directed root bounds in Section~\ref{sec:arithmetic_details}.
Since $cJ(c)$ and
$cK(c)$ increase, this gives upper enclosures for the final tail and
excess. Choose the interval coupling at its right endpoint using the
downward atom enclosure. Concavity concerns the unrounded analytic
profile, not its rounded evaluations. In particular, certify $W\geq0$
using the lower bound $3q^2/2$ (or the displayed exact numerator),
never merely an upper enclosure for the tail. Endpoint losses may then
use the upper enclosures conservatively.
\end{proof}

For some rank-two events, a bound on one coordinate mean is more informative than
a uniform tail estimate. The next lemma combines this bound with nonzero
probabilities for the other two coordinates.

\begin{lemma}[A mean-sensitive rank-two split]
\label{lem:mean_sensitive_split}
Let $(X,Y,V)$ have a stable homogeneous quadratic probability generating
polynomial. Suppose
\[
\E[X]\leq U,\quad 1/3\leq U<1,\quad
\Pr[Y\geq1]\geq b,\quad 0<b\leq1/5,\quad \Pr[V\geq1]\geq1/4.
\]
Then $\Pr[X=0,Y=V=1]\geq(1-U)b$.
\end{lemma}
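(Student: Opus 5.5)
The plan is to reduce everything to the six coefficients of the quadratic and to use the Lorentzian (one-positive-eigenvalue) characterization of stable homogeneous quadratics. Write the generating polynomial as
\[
g(x,y,v)=a x^2+\beta y^2+c v^2+d\,xy+e\,xv+f\,yv,
\]
with nonnegative coefficients summing to one. The target probability is exactly $f=\Pr[X=0,Y=V=1]$. Homogeneity gives $X+Y+V=2$ almost surely. The three hypotheses then read
\[
\E[X]=2a+d+e\leq U,\quad
\Pr[Y\geq1]=\beta+d+f\geq b,\quad
\Pr[V\geq1]=c+e+f\geq1/4.
\]
A homogeneous quadratic with nonnegative coefficients is stable exactly when its Hessian
\[
\begin{pmatrix}2a&d&e\\ d&2\beta&f\\ e&f&2c\end{pmatrix}
\]
has at most one positive eigenvalue. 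In particular every $2\times2$ principal minor is nonpositive ($d^2\geq4a\beta$, $e^2\geq4ac$, $f^2\geq4\beta c$), and the determinant is nonnegative. I would cite this as the standard quadratic case of the Lorentzian/stable correspondence. Alternatively, I would derive the three minor inequalities directly by setting one variable to zero and using real-rootedness of the resulting binary form, as in the proof of Lemma~\ref{lem:quadratic_bounds}.

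The first step is the easy mass bound. Since $X\leq2$, we have $\Pr[X\geq1]\leq\E[X]\leq U$, so
\[
\Pr[X=0]=\beta+c+f\geq1-U.
\]
Conditioned on $X=0$, the pair $(Y,V)$ has the real-rooted binary law $(\beta y^2+f yv+cv^2)/(\beta+c+f)$. So the statement is equivalent to showing that the conditional middle atom is at least $b$ times a factor that the slack $\Pr[X=0]-(1-U)$ can absorb.

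I would split into two cases according to which of $\beta$ and $c$ carries the conditional mass. If $\beta$ is large relative to $f$, then $f^2\geq4\beta c$ forces $c$ to be small. The hypothesis $\Pr[V\geq1]\geq1/4$ then forces $e$ to be substantial. The idea is to use the determinant condition, together with $d^2\geq4a\beta$, to show that a large $\beta$ and a large $e$ together push $d$ or $a$ upward. That would violate $\E[X]\leq U$ unless $f$ is already large.

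If instead $\beta$ is small, then $\Pr[Y\geq1]\geq b$ is carried by $d+f$. I would then minimize $f$ as a linear program in $(a,\beta,c,d,e,f)$ with the fixed Lorentzian constraints. The candidate extremizer is $\beta=0$, with $d$ as large as $\E[X]\leq U$ permits, and $f=(1-U)b$. The small ranges $b\leq1/5$ and $U<1$ should be exactly what keeps the extremizer in the interior of this regime, while $\Pr[V\geq1]\geq1/4$ rules out the degenerate corner $c+e+f\approx0$.

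The main obstacle is the intermediate regime, where neither the minor inequalities nor the mean bound alone suffices and the $3\times3$ determinant condition must be used quantitatively. There I would fix $f$ and the marginal constraints, observe that the determinant is affine in each of $a$ and $c$ separately, and push $a$ and $c$ to endpoints. This reduces the problem to a two-parameter family in which the inequality $f\geq(1-U)b$ becomes a polynomial check. I would then verify that check using the bounds $1/3\leq U<1$ and $b\leq1/5$.
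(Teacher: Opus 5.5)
Your preliminary facts are correct. A nonnegative homogeneous quadratic is stable exactly when its Hessian has at most one positive eigenvalue. This gives $d^2\ge4a\beta$, $e^2\ge4ac$, $f^2\ge4\beta c$ and $\det\ge0$, and you also have $\Pr[X=0]\ge1-U$. The gap is that the argument itself is never carried out, and parts of the plan would not work as described.

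\textbf{First case.} The mechanism is plausible: for instance, $\det\ge0$ together with $f^2\ge4\beta c$ gives $df\ge\beta e$. But you never say what ``$\beta$ large relative to $f$'' means, or show that the resulting bound on $f$ reaches $(1-U)b$.

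\textbf{Second case.} Minimizing $f$ is not a linear program, because the minor and determinant constraints are quadratic and cubic. The determinant contains the term $8a\beta c$, and normalization, the mean bound and $e^2\ge4ac$ tie $a$ and $c$ to the other coefficients. So pushing $a$ and $c$ to endpoints at fixed $f$ is not a feasibility-preserving reduction. At your candidate extremizer, it is a stability inequality, not $\E[X]\le U$ alone, that caps $d$ at $Ub$.

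\textbf{Overall.} The intermediate regime, which you call the main obstacle, is left as an unspecified polynomial check. The hypotheses $U\ge1/3$, $b\le1/5$ and $\Pr[V\ge1]\ge1/4$ never enter a concrete inequality. This matters because the lemma is sharp: $g=(by+(1-b)v)(Ux+(1-U)v)$ satisfies every hypothesis and has $f=(1-U)b$, so every estimate in a case analysis must be exact at the boundary.

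\textbf{The missing inequality.} It comes from your own Lorentzian condition, taken in the direction $\mathbf1$. Write $g=z^{\mathsf T}Qz$, so that $\mathbf1^{\mathsf T}Q\mathbf1=1$ and $\mu=(\mu_X,\mu_Y,\mu_V):=Q\mathbf1=\tfrac12(\E[X],\E[Y],\E[V])$. For every real $z$, the polynomial $\lambda\mapsto g(z+\lambda\mathbf1)=\lambda^2+2\lambda\,z^{\mathsf T}\mu+z^{\mathsf T}Qz$ is real-rooted, which gives $\mu\mu^{\mathsf T}-Q\succeq0$. Factor the real-rooted $Y$-marginal into Bernoulli parameters $s\ge t$, so that $\E[Y]=s+t$ and $\beta=st$. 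The $(X,Y)$ principal minor gives $(\mu_X^2-a)(\mu_Y^2-\beta)\ge(\mu_X\mu_Y-d/2)^2$. Using $\mu_X^2-a\le\mu_X^2$ and $\mu_Y+\sqrt{\mu_Y^2-\beta}=s$, this yields
\[
d=\Pr[X=Y=1]\le\E[X]\,s\le Us.
\]

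\textbf{How the proof closes.} Split on $T:=\Pr[Y\ge1]=s+t-st=\beta+d+f$, not on $\beta$.

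If $T\le1/4$, then $f\ge T-st-Us$, so $f-(1-U)T\ge t(U(1-s)-s)\ge0$, because $s\le T\le1/4\le U/(1+U)$. Hence $f\ge(1-U)b$. This is where $U\ge1/3$ enters.

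If $T>1/4$, condition on $X=0$, which has probability at least $1-U>0$. Negative association raises both $\Pr[Y\ge1]$ and $\Pr[V\ge1]$, so the stable rank-two law of $(Y,V)$ has both tails at one at least $1/4$. Lemma~\ref{lem:quadratic_bounds} then gives a middle atom above $1/5$, so $f>(1-U)/5\ge(1-U)b$. This is where $\Pr[V\ge1]\ge1/4$ and $b\le1/5$ are used.

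Conditioning on $X=0$ alone cannot handle small $T$. With conditional tails known only to be at least $b$ and $1/4$, real-rootedness guarantees a middle atom of only $2\sqrt{1-b}-2(1-b)$, which is strictly less than $b$. Some unconditional control of $d$, such as the bound above, is indispensable.
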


\begin{proof}
Write the generating polynomial as $z^{\mathsf T}Qz$ and put
$\mu=Q\mathbf1$, so $\mathbf1^{\mathsf T}Q\mathbf1=1$.
Real-rootedness on every real line parallel to $\mathbf1$ implies
$\mu\mu^{\mathsf T}-Q\succeq0$ by the discriminant test.
Let $x=\E[X]$, $y=\E[Y]$, $a_X=\Pr[X=2]$, $a_Y=\Pr[Y=2]$,
and $d=\Pr[X=Y=1]$. A principal minor therefore gives
\[
d/2\leq xy/4+\sqrt{(x^2/4-a_X)(y^2/4-a_Y)}
\leq x(y+\sqrt{y^2-4a_Y})/4.
\]
The real-rooted marginal of $Y$ is a sum of two Bernoulli variables,
with parameters $s\geq t$. Hence $d\leq xs\leq Us$.
Put $T=\Pr[Y\geq1]=s+t-st$ and $F=\Pr[X=0,Y=V=1]$.
Since $T=st+d+F$, if $T\leq1/4$, then $s\leq T$ and
\[
F-(1-U)T\geq t(U(1-s)-s)\geq0.
\]
If $T>1/4$, then $\Pr[X=0]\geq1-U>0$. Deleting $X$ increases
the two other nonzero probabilities by negative association. Under this
stable rank-two law both are at least $1/4$, so
Lemma~\ref{lem:quadratic_bounds} gives a middle atom greater than $1/5$:
$4(1/4-1/5)(1-1/4)-(1/5)^2=11/100>0$.
Multiplying by $\Pr[X=0]$ proves the claim.
\end{proof}

\subsection{Conditioning and structured tree-law estimates}
\label{sec:tree_law_estimates}

Conditioning on the presence or absence of a bundle changes the means of the
remaining counts. We record bounds for binary selection and coordinate deletion
that will control these changes.

\begin{lemma}[Binary covering and deleted mass]
\label{lem:binary_covering}
Let a homogeneous strongly Rayleigh law have a binary coordinate $E$,
with $0<q:=\Pr[E=1]<1$, and let $F$ be a disjoint coordinate set.
Then
\[
0\leq\E[F_T]-\E[F_T\mid E=1]\leq1-q,\quad
0\leq\E[F_T\mid E=0]-\E[F_T]\leq q.
\]
Deleting a coordinate set $D$ with positive deletion probability
increases the mean of any disjoint coordinate set by at most $x(D)$.
\end{lemma}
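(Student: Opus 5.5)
The plan is to derive all three assertions from negative association and the homogeneity of a strongly Rayleigh law, which is the structure the max-entropy spanning-tree law provides.

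\textbf{Signs.} Write $E$ also for the indicator of the binary coordinate. Negative association for strongly Rayleigh measures, applied to the increasing functions $E$ and $F_T$ on disjoint coordinates, gives $\operatorname{Cov}(E,F_T)\leq0$. Conditioning on $E=1$ therefore does not increase $\E[F_T]$, and conditioning on $E=0$ does not decrease it. This proves both lower bounds $0\leq\cdot$. It also uses the standard fact that conditioning a strongly Rayleigh law on a coordinate value keeps it strongly Rayleigh, though this is needed only for the upper bounds.

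\textbf{Upper bounds.} Let $R$ denote all coordinates outside $E$. Homogeneity of degree $k$ gives $E+R_T=k$, so $\E[R_T\mid E=1]=k-1$ and $\E[R_T\mid E=0]=k$. Stochastic covering (Pemantle--Peres / Borcea--Brändén--Liggett) says that the conditional law given $E=1$ is stochastically covered by the law given $E=0$. Hence there is a coupling in which the configuration given $E=0$ is obtained from the one given $E=1$ by adding at most one element outside $E$. Since the total rank outside $E$ differs by exactly one, exactly one element is added. Restricting this coupling to $F\subseteq R$ gives
\[
0\leq\E[F_T\mid E=0]-\E[F_T\mid E=1]\leq1.
\]
Now $\E[F_T]=q\E[F_T\mid E=1]+(1-q)\E[F_T\mid E=0]$, so
\[
\E[F_T]-\E[F_T\mid E=1]=(1-q)\Delta\leq1-q,
\qquad
\E[F_T\mid E=0]-\E[F_T]=q\Delta\leq q,
\]
where $\Delta:=\E[F_T\mid E=0]-\E[F_T\mid E=1]\in[0,1]$.

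\textbf{Deletion.} The same covering reasoning handles a coordinate set $D$. Conditioning on $D_T=0$ is a sequence of single-coordinate deletions, each conditioning a strongly Rayleigh law on a binary coordinate being zero; alternatively it is the limit $z_D\to0$ of an external field. To avoid an $x(D)$-fold product of factors, I would instead use the global identity. Conditioned on $D_T=0$ the law is still homogeneous of degree $k$, so the total mean of the coordinates outside $D$ rises from $k-x(D)$ to $k$, an increase of exactly $x(D)$. By negative association, conditioning on the decreasing event $D_T=0$ does not decrease the mean of any coordinate outside $D$. Since every such coordinate mean weakly increases and their total increase is $x(D)$, the increase for any disjoint set $F$ is at most $x(D)$. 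The positivity assumption on the deletion probability makes this conditioning well defined.

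\textbf{Main obstacle.} The delicate step is justifying the monotonicity ``each coordinate's marginal weakly increases upon conditioning on $D_T=0$''. The event $D_T=0$ is decreasing and the indicator of a coordinate outside $D$ is increasing, so negative association gives exactly this, provided one uses the version of negative association stated for events on disjoint coordinate sets. I would cite Borcea--Brändén--Liggett for both negative association and stochastic covering. The single-coordinate case could alternatively be checked via Rayleigh's inequality $\partial_i g\,\partial_j g\geq g\,\partial_i\partial_j g$ on the generating polynomial.
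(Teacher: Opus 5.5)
Your proposal is correct and follows the paper's proof. The paper obtains the one-coordinate covering coupling between the $E=1$ and $E=0$ laws by projecting off $E$ and applying the consecutive-rank coupling of \cite[Theorem~2.10]{kko21}; this is the same fact as the stochastic covering property you cite, and averaging $\Delta\in[0,1]$ with weights $q,1-q$ then gives both bounds. Deletion is handled exactly as in the paper, by negative association together with the homogeneity identity that the total surviving mean rises by exactly $x(D)$. Your separate negative-association argument for the lower bounds is valid but unnecessary, since $\Delta\geq0$ already yields them.
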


\begin{proof}
Project away $E$. Its two conditional laws have adjacent total ranks.
The coupling in \cite[Theorem~2.10]{kko21} adds exactly one remaining
coordinate when passing from $E=1$ to $E=0$. Their $F$-mean difference
therefore lies in $[0,1]$. Averaging with weights $q,1-q$ proves
both inequalities. For deletion of $D$, negative association makes
all surviving-coordinate changes nonnegative. Homogeneity makes their
total change exactly $x(D)$, which proves the last assertion.
We use the inclusion loss $1-q$ in this form, rather than the $q$
printed in the last inequality of \cite[Lemma~A.1]{gkl24}.
\end{proof}

A proper cut must contain a tree edge, which rules out the constant term in its
count polynomial. The next estimate exploits this support restriction to improve
the conversion from capacity to a target coefficient.

\begin{lemma}[Compulsory-cut coefficient bound]
\label{lem:candidate_compulsory_capacity}
For a polynomial with nonnegative coefficients, write
\[
\operatorname{Cap}_{\boldsymbol\kappa}(f)
:=\inf_{\boldsymbol x>0}
\frac{f(\boldsymbol x)}{\prod_i x_i^{\kappa_i}}.
\]
If $q(x,y)$ is real stable with nonnegative coefficients and $q(0,0)=0$, then
\[
[xy]q\geq\frac{e^{-1}}2\operatorname{Cap}_{(1,1)}(q).
\]
Consequently, if $f(x,y,z)$ is real stable with nonnegative coefficients
and $f(0,0,z)=0$ identically,
then
\[
[xyz]f\geq\frac{e^{-2}}2\operatorname{Cap}_{(1,1,1)}(f).
\]
\end{lemma}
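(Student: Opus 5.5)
The plan is to handle the trivariate consequence first, since it follows from the bivariate statement by one Gurvits step. Then I turn to the bivariate core, which is where the constant-term condition has to be used.

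\textbf{Trivariate consequence.} Given $f(x,y,z)$, I put
\[
g(x,y):=\partial_z f(x,y,z)\big|_{z=0}.
\]
Differentiation and specialization at a real point preserve real stability and nonnegativity of coefficients, so $g$ is real stable with nonnegative coefficients. Moreover $[xyz]f=[xy]g$.

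Since $f(0,0,z)\equiv0$, also $\partial_zf(0,0,z)\equiv0$, hence $g(0,0)=0$.

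For fixed $x,y>0$, the univariate polynomial $z\mapsto f(x,y,z)$ is real-rooted with nonnegative coefficients. The degree-uniform univariate Gurvits inequality gives $p'(0)\geq(1-1/n)^{n-1}\operatorname{Cap}_1(p)\geq e^{-1}\operatorname{Cap}_1(p)$, the bound also underlying \cite[Theorem~4.1]{gkl24}. Applied here, it yields
\[
g(x,y)\geq e^{-1}\inf_{z>0}\frac{f(x,y,z)}{z}.
\]
Dividing by $xy$ and taking infima gives $\operatorname{Cap}_{(1,1)}(g)\geq e^{-1}\operatorname{Cap}_{(1,1,1)}(f)$. The bivariate statement then gives
\[
[xyz]f\geq\frac{e^{-1}}2\operatorname{Cap}_{(1,1)}(g)\geq\frac{e^{-2}}2\operatorname{Cap}_{(1,1,1)}(f).
\]

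\textbf{Bivariate core: reduction.} Write
\[
q(x,y)=ax+by+cx^2+mxy+ey^2+(\text{higher terms}),
\]
with no constant term. I would follow the same pattern in two steps. First, take one Gurvits step in $y$ to pass to the univariate polynomial $h(x):=\partial_yq(x,0)=b+mx+\cdots$, which is real-rooted. Then bound $[x]h=m$ below by $\operatorname{Cap}_1(h)$, and $\operatorname{Cap}_1(h)$ below by $\operatorname{Cap}_{(1,1)}(q)$.

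Doing both steps naively costs $e^{-2}$, so the absence of a constant term must buy back a factor.

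\textbf{Where $q(0,0)=0$ should enter.} For a univariate real-rooted $p(y)=y\,s(y)$ with a compulsory root at $0$, one has $\operatorname{Cap}_1(p)=\inf_{y>0}s(y)\leq s(0)=p'(0)$, so that step costs a factor $1$ rather than $e^{-1}$. I would try to arrange the first step so that this compulsory-root case applies.

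The first thing I would try is to work on the slice through the origin, i.e.\ the diagonal $x=y=t$ or, more generally, the ray $x=\lambda t$, $y=\mu t$. On this ray $q(\lambda t,\mu t)$ is real-rooted in $t$ and vanishes at $t=0$. Its capacity can be compared with $\operatorname{Cap}_{(1,1)}(q)$ by choosing $\lambda,\mu$ from the capacity minimizer.

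The remaining loss should then come only from a degree-two Gurvits step, since $[xy]$ involves exactly two variables. This gives the factor $(1-1/2)^{1}=1/2$, on top of a single $e^{-1}$ from the other direction.

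To make the degree-two step rigorous, I would use the coefficient inequality $mab\geq cb^2+ea^2$ from the proof of Lemma~\ref{lem:candidate_adjacent_rank_balance}. This controls the low-degree part of $q$ and shows that the quadratic form cannot place its mass away from $xy$ without forcing capacity down. The higher-degree terms would then be handled through the univariate Gurvits step in the remaining variable.

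\textbf{Main obstacle.} The hardest step is justifying the exact product $\tfrac12\cdot e^{-1}$ rather than $e^{-2}$. Concretely, I must identify which of the two univariate reductions inherits a compulsory root from $q(0,0)=0$; for $x>0$, $q(x,0)$ need not vanish.

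If the ray argument fails, my fallback is a substitution. Replace $q(x,y)$ by $q(x,xy)/x$, which is a polynomial because $q(0,0)=0$ forces every monomial $x^iy^j$ with $j>0$ to satisfy... This needs checking, and stability of this substitution is exactly what I would have to verify.
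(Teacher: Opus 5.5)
\textbf{Verdict: genuine gap.} Your trivariate reduction matches the paper exactly: pass to $g=[z]f$, note $g(0,0)=0$, and apply the univariate contraction $[z]p\geq e^{-1}\operatorname{Cap}_1(p)$ at each fixed $x,y>0$. The bivariate core, however, is left open, and that is the actual content of the lemma. Your ``main obstacle'' paragraph says as much.

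\textbf{How the vanishing constant term is used.} It does not make a coordinate Gurvits step free. You correctly observe that neither $q(x,0)$ nor $\partial_yq(x,0)$ inherits a root at $0$. Instead, it lowers the target exponent along a ray. For $\lambda,\mu>0$, the polynomial $t\mapsto q(\lambda t,\mu t)$ is real-rooted with nonnegative coefficients and divisible by $t$. So
\[
R(t):=q(\lambda t,\mu t)/t
\]
is again real-rooted with nonnegative coefficients, and:
\begin{itemize}
\item $[t]R=H(\lambda,\mu)$, where $H=cx^2+mxy+ey^2$ is the degree-two homogeneous part of $q$;
\item $\operatorname{Cap}_1(R)=\inf_{t>0}q(\lambda t,\mu t)/t^2\geq\lambda\mu\operatorname{Cap}_{(1,1)}(q)$.
\end{itemize}
The univariate contraction you already use then gives $H(\lambda,\mu)\geq e^{-1}\lambda\mu\operatorname{Cap}_{(1,1)}(q)$. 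This step still costs $e^{-1}$, because $R(0)=a\lambda+b\mu$ need not vanish. Since the bound holds on every ray, taking the infimum gives $\operatorname{Cap}_{(1,1)}(H)\geq e^{-1}\operatorname{Cap}_{(1,1)}(q)$.

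Without $q(0,0)=0$ you would have to compare the $t^2$-coefficient of $q(\lambda t,\mu t)$ with an exponent-two capacity, which is more expensive. The paper phrases the same step through the homogenization $Q=w^Nq(x/w,y/w)$, whose $w$-degree drops to $N-1$, followed by reversal in $w$. The reversed polynomial is exactly this $R$.

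Two smaller points about the ray step:
\begin{itemize}
\item The bound must be used on every ray, or on the ray minimizing $H(\lambda,\mu)/(\lambda\mu)$. A ray taken from a capacity minimizer of $q$ need not exist, and there $c\lambda^2+e\mu^2\leq m\lambda\mu$ can fail, so you cannot isolate $m$.
\item The higher-degree terms of $q$ are not handled by a further step in some remaining variable. They enter only through $\operatorname{Cap}_1(R)$.
\end{itemize}

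\textbf{Where the factor $\tfrac12$ comes from.} It comes from $H$ alone, and it requires $H$ to be stable (here $e$ is your coefficient, not Euler's number). Stability gives $m^2\geq4ce$, so $\operatorname{Cap}_{(1,1)}(H)=m+2\sqrt{ce}\leq2m$, and therefore $[xy]q=m\geq\tfrac{e^{-1}}2\operatorname{Cap}_{(1,1)}(q)$.

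Your ``degree-two Gurvits step'' is the same estimate, but you never justify that this middle homogeneous part is stable. That is not automatic when $q$ has a linear part. The paper obtains it from $H=[w^{N-2}]Q$, by differentiating the stable $Q$ in $w$ and setting $w=0$.

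The inequality $mab\geq cb^2+ea^2$ you cite does not replace this. It yields $m\geq2\sqrt{ce}$ only after an AM--GM step, and it says nothing when the linear part of $q$ vanishes.

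\textbf{The fallback substitution fails.} The map $q\mapsto q(x,xy)/x$ does not preserve stability. For example, $x+y+xy$ is stable, but $1+y+xy$ is not.
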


\begin{proof}
We first prove the needed univariate contraction.
Let $R(t)$ have nonnegative coefficients, only nonpositive real roots,
and degree at most $D\geq2$. Then
\[
[t]R\geq(\frac{D-1}{D})^{D-1}\operatorname{Cap}_{1}(R)
\geq e^{-1}\operatorname{Cap}_{1}(R).
\]
If $R(0)=0$, letting $t$ decrease to zero in $R(t)/t$ proves the
stronger factor one. Otherwise write
\[
R(t)=R(0)\prod_{i=1}^{D}(1+a_it),\quad a_i\geq0,
\]
padding with zero parameters if necessary, and set $A:=\sum_i a_i$.
For $A=0$ the capacity is zero. For $A>0$, the arithmetic-geometric
mean inequality gives
\[
R(t)\leq R(0)(1+At/D)^D.
\]
The infimum of the right side divided by $t$ is attained at
$t=D/(A(D-1))$. Hence
\[
\operatorname{Cap}_1(R)
\leq R(0)A(\frac{D}{D-1})^{D-1}
=[t]R(\frac{D}{D-1})^{D-1}.
\]
The bound $(1+1/(D-1))^{D-1}\leq e$ proves the assertion.
For degree one the capacity is exactly $[t]R$, and for degree zero
it is zero, so the $e^{-1}$ contraction holds in every degree.

Now let $N:=\deg q$. The zero polynomial is immediate.
If $N<2$, scaling $x=y$ to infinity gives
$\operatorname{Cap}_{(1,1)}(q)=0$.
For $N\geq2$, \cite[Theorem~4.5]{bbl09} shows that the homogenization
\[
Q(x,y,w):=w^Nq(x/w,y/w)
\]
is stable. This closure theorem applies to all nonnegative stable
polynomials, without a multiaffinity assumption.
Since $q(0,0)=0$, $\deg_wQ\leq N-1$, and homogeneity gives
\[
\operatorname{Cap}_{(1,1,N-2)}(Q)
=\operatorname{Cap}_{(1,1)}(q).
\]
Let $H(x,y):=[w^{N-2}]Q$, the degree-two part of $q$.
Differentiation followed by real specialization shows that $H$ is stable
or zero.

For $N\geq3$, fix $x,y>0$, put $D:=N-1$, and reverse the polynomial in $w$:
\[
R(t):=t^D Q(x,y,1/t).
\]
It has nonnegative coefficients and only nonpositive real roots, with
\[
[t]R=H(x,y),\quad
\operatorname{Cap}_{1}(R)
=\inf_{w>0}\frac{Q(x,y,w)}{w^{N-2}}.
\]
The univariate contraction, followed by division by $xy$ and
infimization over $x,y$, yields
\[
\operatorname{Cap}_{(1,1)}(H)
\geq e^{-1}\operatorname{Cap}_{(1,1)}(q).
\]
For $N=2$ this holds with factor one, since taking $w$ down to zero
in the capacity of $Q$ leaves exactly $H$.
Write $H=ax^2+bxy+cy^2$. Stability gives $b^2\geq4ac$, and therefore
\[
\operatorname{Cap}_{(1,1)}(H)=b+2\sqrt{ac}\leq2b.
\]
This proves the bivariate assertion, including degenerate quadratics.

Finally set $q(x,y):=[z]f(x,y,z)$.
It is stable or zero and has $q(0,0)=0$.
Apply the univariate contraction in $z$ for each $x,y>0$ and take
the infimum after division by $xy$:
\[
\operatorname{Cap}_{(1,1)}(q)
\geq e^{-1}\operatorname{Cap}_{(1,1,1)}(f).
\]
The bivariate assertion gives the claimed $e^{-2}/2$ factor.
If $q$ is zero, the displayed inequality forces the latter capacity
to be zero, so the conclusion still holds.

\end{proof}

The competing-half-bundle argument needs a bound on how deleting one edge changes
the count at a remote endpoint. An electrical-network calculation provides the
sharper estimate used below.

\begin{lemma}[Electrical remote influence]
\label{lem:final23_electrical_remote}
In the setting of \cite[Lemma~2.27]{kko21}, let $u,v,w$ be distinct
vertices and let $f=uv$, $g=vw$ have marginals in $[1/2-e,1/2+e]$,
where $0\leq e<1/6$. The smaller increase in the mean of
$\delta(w)\setminus\{g\}$ on deleting $f$, or of
$\delta(u)\setminus\{f\}$ on deleting $g$, is at most
\[
\frac{1/2+e}{1/2-e}\max_{0\leq z\leq1}\{z/(1+z)-(1/2-e)z^2\}<0.22
\quad\text{if }0\leq e\leq0.00550006.
\]
Consequently the bound $0.22$ applies after endpoint-tree conditioning
whenever $h\leq0.0055$ and $d_0\leq2\cdot10^{-8}$.
\end{lemma}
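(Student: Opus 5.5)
We plan to compute both mean increases exactly through electrical flows, using the standard identity for spanning-tree measures with edge weights $\lambda$. Deleting $f$ raises the marginal of an edge $e\neq f$ by $\Pr[f\in T]\Pr[e\in T]-\Pr[e,f\in T]$ divided by $1-\Pr[f\in T]$. By the transfer-current theorem this equals $\lambda_e\lambda_f\,(b_e^{\mathsf T}Lb_f)^2/(1-x_f)$, where $L$ is the weighted Laplacian pseudoinverse. Summing over $e\in\delta(w)\setminus\{g\}$, the increase at $w$ is a sum of squared transfer currents through edges at $w$. Equivalently, it is the change in expected $w$-degree when the potential $\phi=Lb_f$ is restricted to edges at $w$. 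We would write $\phi_{w}$ for the potentials induced by a unit $u\to v$ flow. By Kirchhoff, the current leaving $w$ through non-$g$ edges equals the current entering through $g$. This relates the remote increase to the transfer current $I_{g,f}=\lambda_g(\phi_v-\phi_w)$ and to a local ``spread'' term.

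The key steps are as follows. First, express $\Pr[g\in T]$ changes and the $w$-increase in terms of three scalars: the transfer impedance $Y=\lambda_f\lambda_g(b_g^{\mathsf T}Lb_f)^2$, and the two effective-resistance quantities $x_f=\lambda_fR_f$ and $x_g=\lambda_gR_g$. These satisfy $x_f,x_g\in[1/2-e,1/2+e]$. Second, observe that the total expected degree of $w$ after deleting $f$ increases by the sum over $\delta(w)$ of the squared transfers. The $g$ term is subtracted out, leaving the sum over $\delta(w)\setminus\{g\}$. Cauchy--Schwarz together with the in-flow identity then bounds this by a function of the potential drop ratio. Third, parametrize by $z$, the normalized transfer ratio between the two deletions. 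The symmetric role of $u$ (deleting $g$) gives the reciprocal ratio. Taking the smaller of the two increases produces the expression $z/(1+z)-(1/2-e)z^2$, up to the multiplicative factor $(1/2+e)/(1/2-e)$ coming from dividing by $1-x_f\ge 1/2-e$ and bounding $x_g\le1/2+e$. Finally, maximize over $z\in[0,1]$ by calculus. At $e=0.0055$ the critical point satisfies $1/(1+z)^2=2(1/2-e)z$, and a rational evaluation shows the value times the prefactor is below $0.22$.

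For the ``consequently'' clause, we condition on endpoint trees as in \cite[Lemma~2.27]{kko21}. The conditioned law is again a weighted spanning-tree (strongly Rayleigh, after contraction/deletion) measure. Its marginals move by at most $O(d_0)$ and $h$, so they remain in $[1/2-e,1/2+e]$ with $e\le h+$ small slack $\le0.00550006$. The identity above applies verbatim to the conditioned measure.

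The main obstacle is the second step, which must obtain exactly the $-(1/2-e)z^2$ correction. The naive bound (increase $\le$ transfer into $g$) gives only $z/(1+z)$-type bounds. The negative term must come from the $g$ edge itself absorbing part of the energy: the squared transfer on $g$, of size about $x_g z^2$, is subtracted because $g$ is excluded from the count. We would try first to write the $w$-side increase as (total current out of $w$ squared-sum) minus the $g$-term. We would then use the reciprocity relation between deleting $f$ and deleting $g$ to fix which of the two is at most the stated expression, choosing $z\le1$ as the smaller side.
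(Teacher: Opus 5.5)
Your starting identity $I_f(W)=\frac{c_f}{1-a}\sum_{j\in W}c_j(b_j^{\mathsf T}V)^2$, with $V=L^+b_f$, is correct. So is your decomposition: the energy of the whole star at $w$ minus the energy on $g$, where the latter contributes exactly $\frac{a}{1-a}\,bz^2$ with $z^2=Y/(x_fx_g)$. The gap is the step that carries the lemma: an \emph{upper} bound on the star energy $\sum_{j\in\delta(w)}c_j(b_j^{\mathsf T}V)^2$. You do not supply one, and the tool you name points the wrong way.

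Kirchhoff's law at $w$ fixes only the signed sum $\sum_{j\in\delta(w)}c_j(V_w-V_{j'})=0$, where $j'$ is the other endpoint of $j$. Cauchy--Schwarz applied to a fixed linear form bounds a sum of squares from \emph{below}. Combined with the maximum principle, it yields only a bound like $V_w(V_u-V_w)\,c(\delta(w))$, proportional to the total conductance at $w$, which the marginals do not control. For the same reason no exact ``three-scalar'' formula in $x_f,x_g,Y$ can exist: these three numbers do not determine the remote increase, nor even the reduced network.

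The missing idea is Kron reduction. Eliminate every vertex except $u,v,w$ to obtain triangle conductances $A$ on $uv$, $B$ on $vw$, $C$ on $uw$, with $A\geq c_f$ and $B\geq c_g$. Every eliminated vertex has potential equal to a harmonic-measure average of $V_u,V_v,V_w$. Convexity of the square, summed with weights $c_{wj}$, then gives
\[
\sum_{j\in\delta(w)}c_j(b_j^{\mathsf T}V)^2\leq C(V_w-V_u)^2+B(V_w-V_v)^2,
\]
because the resulting coefficients are exactly the Schur-complement conductances. Solve the triangle with $V_v=0$ and $D=AB+AC+BC$: then $V_u=(B+C)/D$, $V_w=C/D$, $c_f=aD/(B+C)$ and $c_g=bD/(A+C)$. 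This gives $I_f(W)\leq\frac{a}{1-a}(BC/D-bz^2)$ and $I_g(U)\leq\frac{b}{1-b}(AC/D-az^2)$, with $z=C/\sqrt{(A+C)(B+C)}$.

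With these bounds in hand, $z/(1+z)$ does not come from a ``reciprocal ratio'' between the two deletions, since $z$ is symmetric in $f$ and $g$. The asymmetry that taking the minimum exploits is between $AC/D$ and $BC/D$, through $\min\{AC/D,BC/D\}\leq z/(1+z)$. To see this, suppose $A\leq B$ and put $S:=\sqrt{(A+C)(B+C)}$; then $A\leq S-C$, hence $A(S+C)\leq S^2-C^2=D$.

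The prefactor is $\frac{a}{1-a}$ for the \emph{deleted} edge, bounded using $a\leq1/2+e$. The other edge enters only through the subtracted term, using $b\geq1/2-e$. Attributing the factor $1/2+e$ to $x_g$ when $f$ is deleted mixes these two roles.

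You also skip two further points:
\begin{enumerate}
\item The exact max-entropy law may lie on a face of the spanning-tree polytope, so it need not be a weighted-tree law with finite positive weights. The transfer-current identities then need a limiting argument; the paper uses the maximizers at $(1-t)x+t\bar x$, with $\bar x$ the uniform-tree marginals on the support, together with uniqueness of the maximizer.
\item The numerical check must cover $e$ up to $0.00550006$, not just $0.0055$. This needs monotonicity in $e$ and a rigorous enclosure of the maximum, e.g.\ a quadratic tangent bound at a rational point, rather than an evaluation at the root of the critical-point cubic.
\end{enumerate}
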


\begin{proof}

The exact max-entropy law may lie on a face of the marginal polytope.
To justify the weighted-tree calculation there, let $\bar\mu$ be
uniform on all spanning trees of the support graph, with marginals
$\bar x$, and let $\mu_t$ maximize entropy at
$x_t:=(1-t)x+t\bar x$, $0<t<1$. This is a relative-interior point,
so $\mu_t$ is a positive weighted-tree law. Every subsequential
limit $\nu$ has marginals $x$, and
\[
H(\mu_t)\geq H((1-t)\mu+t\bar\mu)\longrightarrow H(\mu).
\]
Continuity of entropy on the finite probability simplex and uniqueness
of the maximizer imply $\nu=\mu$. Hence $\mu_t\to\mu$.
Stability and the weighted-tree factorization identities pass to this
limit. Conditional identities pass whenever their limiting conditioning
probabilities are positive, as they are at every use below.

Let $f=uv$ and $g=vw$ be the two distinct incident edges, with
conductances $c_f,c_g$ and marginals $a,b$. Put
$U:=\delta(u)\setminus\{f\}$ and $W:=\delta(w)\setminus\{g\}$.
For a set $F$ disjoint from $f$, write
\[
I_f(F):=\E[F_T\mid f\notin T]-\E[F_T].
\]
Let $L^+$ be the graph Laplacian pseudoinverse, and let $b_j$ be an
oriented incidence vector for edge $j$. Differentiating the reduced
Laplacian determinant gives
\[
\Pr[j\in T]=c_jb_j^TL^+b_j,\quad
\operatorname{Cov}(\mathbf1_f,\mathbf1_j)
=-c_fc_j(b_f^TL^+b_j)^2\quad(j\ne f).
\]
Thus, for the unit-current voltage $V:=L^+b_f$,
\[
I_f(F)=\frac{c_f}{1-a}\sum_{j\in F}c_j(b_j^TV)^2.
\]

Eliminate all vertices other than $u,v,w$ by harmonic extension.
Write the reduced triangle conductances as $A$ on $uv$, $B$ on $vw$,
and $C$ on $uw$, and set $D:=AB+AC+BC>0$.
Then $A\geq c_f$ and $B\geq c_g$.
For an interior vertex $j$, let $H_{jt}$ be its harmonic measure on
the boundary. These numbers are nonnegative and sum to one, so
\[
(V_w-V_j)^2\leq
\sum_{t\in\{u,v,w\}}H_{jt}(V_w-V_t)^2.
\]
Multiplying by $c_{wj}$ and summing, including direct boundary edges,
gives the boundary-star inequality
\[
\sum_{j\in\delta(w)}c_j(b_j^TV)^2
\leq C(V_w-V_u)^2+B(V_w-V_v)^2.
\]
Indeed, the coefficient at boundary vertex $t$ is
$c_{wt}+\sum_jc_{wj}H_{jt}$, exactly the corresponding
Schur-complement conductance. In particular, reduction bounds the
original star energy; it does not identify that energy with the total
energy of the reduced network.

Take $V_v=0$. The unit-current equations and the two marginal
identities give
\[
V_u=(B+C)/D,\quad V_w=C/D,\quad
c_f=aD/(B+C),\quad c_g=bD/(A+C).
\]
Subtract the actual $g$-energy from the star bound. Substitution and
the symmetric calculation yield, with
$z:=C/\sqrt{(A+C)(B+C)}\in[0,1)$,
\[
I_f(W)\leq\frac a{1-a}(BC/D-bz^2),\quad
I_g(U)\leq\frac b{1-b}(AC/D-az^2).
\]
Both parentheses are nonnegative: before simplification their energy
coefficients are $C,B-c_g$ and $C,A-c_f$, respectively.
Also
\[
\min\{AC/D,BC/D\}\leq z/(1+z).
\]
For example, if $A\leq B$, put $S:=\sqrt{(A+C)(B+C)}$.
Then $A\leq S-C$, so $A(S+C)\leq D$, proving the assertion.
It follows that
\[
\min\{I_f(W),I_g(U)\}
\leq\frac{1/2+e}{1/2-e}
\max_{0\leq z\leq1}\{z/(1+z)-(1/2-e)z^2\}.
\]
For $f_e(z):=z/(1+z)-(1/2-e)z^2$, we have
$f_e''(z)\leq-(1-2e)$. The quadratic tangent bound at $z_0:=233/500$ gives
\[
\max_{0\leq z\leq1}\{f_e(z)\}
\leq f_e(z_0)+\frac{f_e'(z_0)^2}{2(1-2e)}.
\]
The maximum and the prefactor increase with $e$. Exact rational substitution
at $e=0.00550006$ gives
\[
\frac{1/2+e}{1/2-e}
(f_e(z_0)+\frac{f_e'(z_0)^2}{2(1-2e)})<0.22.
\]
Vanishing reduced conductances follow by continuity; the interior Dirichlet
matrix is invertible because every interior component touches the boundary.

Finally, conditioning disjoint endpoint atoms to be trees and contracting
them preserves the exterior weighted-tree law. Parallel edges aggregate
with their summed conductance and preserve the bundle count.
The endpoint-tree mean-change budget gives conditional half-width
at most $h+3d_0$, so the same bound applies to the half bundles.
\end{proof}

We next turn bounds on the two block means into a splitting probability
conditional on their total rank being two. The accompanying atom inequalities will
support the later rank updates.

\begin{lemma}[Conditional rank-two split]
\label{lem:final24_rank_split}
Let $X,Y$ count disjoint coordinate blocks of a strongly Rayleigh law,
let $Z=X+Y\geq1$, and write $p_i:=\Pr[Z=i]$ and $m:=\E[Z]$.
Suppose $p_2>0$ and $\E[X],\E[Y]\leq2-b$. If
$0<b-p_1\leq1$, then
\[
\Pr[X=Y=1\mid Z=2]\geq\psi(b-p_1),\quad
\psi(v):=v(1-v/2).
\]
If $m\geq1+\nu$ with $\nu>0$, then
\[
p_1\leq e^{-\nu},\quad p_1\leq p_2/\nu,\quad
p_2^2\geq2p_1p_3.
\]
If also $m\leq3$, then
\[
p_3\geq1-3p_1-2p_2,\quad 3-m\leq2p_1+p_2.
\]
\end{lemma}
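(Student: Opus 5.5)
The plan is to treat the three groups of assertions separately. The first needs a conditioning argument; the other two are facts about the rank distribution of $Z$ alone.

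\textbf{Conditional split.} Conditioning a strongly Rayleigh law on its total rank $Z=2$ gives a homogeneous quadratic stable law. Projected to $(X,Y)$, the conditional count $X\mid Z=2$ therefore has a real-rooted generating polynomial of degree at most two. By the first part of Lemma~\ref{lem:quadratic_bounds}, it suffices to show $v\leq\E[X\mid Z=2]\leq2-v$ with $v:=b-p_1$. By symmetry in $X,Y$ and the identity $\E[X\mid Z=2]=2-\E[Y\mid Z=2]$, it is enough to prove $\E[Y\mid Z=2]\leq2-b+p_1$.

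For this I use the rank coupling of \cite[Theorem~2.10]{kko21}, as in Lemma~\ref{lem:binary_covering}. Passing from $Z=k$ to $Z=k+1$ adds exactly one coordinate, so the mean of $Y$ is nondecreasing in $k$ and increases by at most one per step. Hence $\E[Y\mid Z=j]\geq\E[Y\mid Z=2]$ for $j\geq2$ and $\E[Y\mid Z=1]\geq\E[Y\mid Z=2]-1$. Averaging over $j\geq1$ (recall $Z\geq1$) gives
\[
2-b\geq\E[Y]\geq\E[Y\mid Z=2]-p_1 .
\]
This is the needed bound. The hypothesis $0<b-p_1\leq1$ puts $v$ in the range of Lemma~\ref{lem:quadratic_bounds}, and $p_2>0$ makes the conditioning legitimate.

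\textbf{Rank inequalities when $m\geq1+\nu$.} $Z$ is Poisson-binomial with Bernoulli parameters $q_i$, and $Z\geq1$ forces $p_0=0$, so some $q_{i_0}=1$. Write $Z=1+R$, where $R$ is Poisson-binomial with mean at least $\nu$. Then $p_1=\Pr[R=0]=\prod(1-q_j)\leq e^{-\E R}\leq e^{-\nu}$.

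Next, $\Pr[R=1]/\Pr[R=0]=\sum_j q_j/(1-q_j)\geq\E R\geq\nu$, which gives $p_1\leq p_2/\nu$. This needs care when some $q_j=1$: then $p_1=0$ and the claim is trivial.

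Finally, Newton's inequality for $R$ at index one, $\Pr[R=1]^2\geq2\Pr[R=0]\Pr[R=2]$ (the weakened dimension factor), is exactly $p_2^2\geq2p_1p_3$.

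\textbf{The case $m\leq3$.} This is pure bookkeeping of the mean, again using $p_0=0$. First, $m\geq p_1+2p_2+3p_3+4(1-p_1-p_2-p_3)$, and combining with $m\leq3$ gives $p_3\geq1-3p_1-2p_2$. Second, $m\geq p_1+2p_2+3(1-p_1-p_2)$, which rearranges to $3-m\leq2p_1+p_2$.

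\textbf{Main obstacle.} The only delicate step is the direction of the coupling estimate in the first part. The naive averaging using only monotonicity gives $\E[Y\mid Z=2]\leq(2-b)/(1-p_1)$, which is not the stated form. The bounded-increment property from $Z=1$ to $Z=2$ is what yields exactly the additive loss $p_1$. I would check that this coupling applies after projection onto the union of the two blocks, with conditioning only on $Z$.
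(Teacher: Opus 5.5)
Your proposal is correct and follows essentially the paper's proof. Your averaging bound $\E[Y\mid Z=2]\leq\E[Y]+p_1$ is the paper's bound $\E[X\mid Z=2]\geq\E[X]-(m-2+p_1)$ rewritten via $\E[X\mid Z=j]+\E[Y\mid Z=j]=j$, and the odds/Newton argument for $Z-1$ and the mean bookkeeping for $m\leq3$ match the paper step for step; the one fact you leave implicit, which the paper states, is that the rank support of $Z$ has no gaps (by real-rootedness), so the consecutive-rank coupling can be chained through every supported rank.
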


\begin{proof}
Project onto the union of the two coordinate blocks. The consecutive
rank laws admit the one-coordinate monotone coupling of
\cite[Theorem~2.10]{kko21}. Thus, writing
$\mu_X(j):=\E[X\mid Z=j]$,
\[
\mu_X(j)\leq\mu_X(2)+\max\{j-2,0\}.
\]
The rank support has no gaps. Averaging gives
\[
\mu_X(2)\geq\E[X]-(m-2+p_1)=2-\E[Y]-p_1\geq b-p_1.
\]
The same holds for $Y$. The rank-two law is stable, and the two mean bounds give
$\E[X\mid Z=2]\in[b-p_1,2-b+p_1]$.
Lemma~\ref{lem:quadratic_bounds} proves the split bound.

The real-rooted generating polynomial of $Z$ has a compulsory factor
of its variable. Hence $Z-1$ is Poisson-binomial. Its zero atom is at
most $e^{-\nu}$. When $p_1>0$, let $t_j$ be the odds of its
Bernoulli factors. Then
\[
\frac{p_2}{p_1}=\sum_jt_j\geq m-1\geq\nu,\quad
p_2^2-2p_1p_3=p_1^2\sum_jt_j^2\geq0.
\]
The case $p_1=0$ is immediate. Finally, assigning rank at least four
to every term beyond rank three gives
$p_3\geq4-m-3p_1-2p_2$; assigning rank at least three beyond
rank two gives $3-m\leq2p_1+p_2$.
\end{proof}

Conditioning a counted union to a specified total rank also shifts the means of
its subunions. The following comparison charges this shift to the excess and
deficit of the total rank.

\begin{lemma}[Projected full-rank mean changes]
\label{lem:projected_rank_mean_shift}
Project a strongly Rayleigh law to a counted union with total rank $R$.
If $\Pr[R=k]>0$ and $F$ is any subunion of the counted blocks, then
\[
\E[F]-\E[(R-k)_+]\leq\E[F\mid R=k]
\leq\E[F]+\E[(k-R)_+].
\]
The errors apply once to the union $F$, not once per block.
\end{lemma}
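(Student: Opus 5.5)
We plan to reduce both inequalities to the one-coordinate monotone coupling between consecutive rank laws, exactly as in the proof of Lemma~\ref{lem:final24_rank_split}. First project the strongly Rayleigh law onto the counted union. The projection is again strongly Rayleigh, its total rank $R$ has gapless support, and the conditional laws given $R=j$ are homogeneous strongly Rayleigh. Write $\mu_F(j):=\E[F\mid R=j]$ for every $j$ with $\Pr[R=j]>0$.

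By \cite[Theorem~2.10]{kko21}, applied to the consecutive rank laws $R=j$ and $R=j+1$, there is a coupling that adds exactly one coordinate of the union when passing from rank $j$ to rank $j+1$. Since $F$ counts a subset of the union's coordinates, its count either stays the same or increases by one. Hence
\[
0\leq\mu_F(j+1)-\mu_F(j)\leq1 .
\]
Because the rank support is an integer interval containing $k$, we can chain these steps from $k$ to any other $j$ in the support. This gives
\[
\mu_F(k)\leq\mu_F(j)+(k-j)_+,\qquad \mu_F(j)\leq\mu_F(k)+(j-k)_+ .
\]
For $j\leq k$ we use monotonicity together with the unit increments; for $j\geq k$ the roles are reversed.

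Now average over the law of $R$. Taking expectations of the second inequality gives $\E[F]\leq\mu_F(k)+\E[(R-k)_+]$, which is the lower bound. Taking expectations of the first gives $\mu_F(k)\leq\E[F]+\E[(k-R)_+]$, which is the upper bound. Because the coupling is applied to the single count $F$, each increment is at most one for the whole subunion, not one per block. This is the asserted remark that the errors apply once to $F$.

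The only delicate point is justifying the chaining. The coupling theorem must be applicable to every consecutive pair of ranks in the support, and the support must have no gaps; both follow from the strongly Rayleigh property of the projection, in the same way as in Lemma~\ref{lem:final24_rank_split}. When $F$ is empty or $\Pr[R=j]=0$ for some intermediate $j$, no division by zero arises, because gaplessness ensures every intermediate rank has positive probability.
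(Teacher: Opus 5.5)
Your proposal is correct and matches the paper's proof. Both use the one-coordinate monotone coupling of consecutive supported full ranks from \cite[Theorem~2.10]{kko21} to place $\E[F\mid R=j]-\E[F\mid R=k]$ in $[0,j-k]$ for $j\geq k$ and in $[j-k,0]$ for $j\leq k$, and then average over the law of $R$.
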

\begin{proof}
Consecutive supported full ranks admit a coupling differing by one coordinate
\cite[Theorem~2.10]{kko21}.  Thus the conditional $F$-mean at rank $j$ differs
from the rank-$k$ mean by a number in $[0,j-k]$ when $j\geq k$, and in
$[j-k,0]$ when $j\leq k$.  Average these two inequalities.
\end{proof}

We use the up/down inequality only on disjoint complementary counted sets with
positive conditioning probability: for $A+B=n_A+n_B$,
\[
\Pr[A\geq n_A\mid A+B=n_A+n_B]
\geq\Pr[A\geq n_A]\Pr[B\leq n_B],
\]
The lower-tail statement is symmetric. See \cite[Lemma~A.10]{gkl24}
and \cite[Lemma~5.4]{kko21}.
A stable count polynomial is polarized when needed, and every rank condition
below is a full rank after projection.  Outside coordinates are not retained
under arbitrary subset-rank conditioning.

For total rank three, we need an event that selects one element from each of three
blocks. The following lemma obtains this event from two-sided bounds on the
coordinate means.

\begin{lemma}[Homogeneous cubic split]
\label{lem:cubic_split}
Let $X+Y+Z=3$ have a stable homogeneous probability generating
polynomial with all coordinate means in $[\ell,U]$, where
$0<\ell\leq0.01$. For either row below,
\[
\Pr[X=Y=Z=1]>c_U\ell e^{-\ell}.
\]
\[
\begin{array}{c|c|c|c}
U&F_U&v_U&c_U\\ \hline
1.504&0.434&0.21&0.3728\\
1.53&0.4147&0.19&0.3594
\end{array}
\]
\end{lemma}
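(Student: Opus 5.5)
Without loss of generality let $X$ have the smallest mean, so $\E[X]\in[\ell,U]$. The plan is to write the target event as
\[
\Pr[X=Y=Z=1]=\Pr[X=1]\cdot\Pr[Y=Z=1\mid X=1],
\]
and bound the two factors separately. Under $X=1$ the pair $(Y,Z)$ has total rank two, so the second factor is the middle atom of a stable rank-two law.

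\textbf{First factor.} The marginal of $X$ is Poisson-binomial of degree at most three, with mean in $[\ell,U]$.
\begin{itemize}
\item When $\E[X]\leq1$, Lemma~\ref{lem:bernoulli_estimates} gives $\Pr[X=1]\geq\E[X]e^{-\E[X]}$.
\item When $\ell\leq0.01$ and $\E[X]\ge\ell$, this yields at least $\ell e^{-\ell}$, because $\mu e^{-\mu}$ increases on $[0,1]$.
\item For the range $1<\E[X]\leq U$ (only $U\leq 1.53$ is needed), I will instead use a constant lower bound on $\Pr[X=1]$. This comes from the atom estimate $0.99\leq\E\leq1.512$ together with the CDF bound $F_U$. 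Since $\ell e^{-\ell}\le 0.01$, a constant atom is far stronger, so this case is harmless.
\end{itemize}

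\textbf{Second factor.} Conditioning on $X=1$ is a full-rank condition after projection, so Lemma~\ref{lem:projected_rank_mean_shift} applies to the counted union $X$ with $k=1$. It controls the conditional means of $Y$ and $Z$. More directly, the one-coordinate coupling of \cite[Theorem~2.10]{kko21} between consecutive values of $X$ gives $\E[Y\mid X=1]\le\E[Y\mid X=0]$, and the analogous statement for $Z$.

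I then want both tails at one of the conditional rank-two law to be at least some $\alpha$, and finish with the $\phi(\alpha)$ bound of Lemma~\ref{lem:quadratic_bounds}. The steps are:
\begin{enumerate}
\item Get the upper tail $\Pr[Y\ge1\mid X=1]$ and its counterpart for $Z$ from the CDF estimates $\Pr[\,\cdot\leq1]\geq F_U$ of Lemma~\ref{lem:bernoulli_estimates}, applied to the complementary count. Since $Y+Z=2$ on this event, $\Pr[Y\geq1\mid X=1]=\Pr[Z\le1\mid X=1]$.
\item Transfer the unconditional CDF to the conditional law via the up/down inequality \cite[Lemma~A.10]{gkl24} with complementary sets $X$ and $Y+Z$:
\[
\Pr[Z\leq1\mid X=1]\ \ge\ \Pr[Z\le 1\mid X\le1]\ \gtrsim\ F_U .
\]
\item Since $\E[Z]\le U$, the first step gives $\Pr[Z\le1]\ge F_U$.
\item With both tails at least $\alpha$ and $\alpha=F_U$ (or a slightly degraded value), Lemma~\ref{lem:quadratic_bounds} gives
\[
\Pr[Y=Z=1\mid X=1]\geq\phi(\alpha)>\alpha(1-\alpha).
\]
\end{enumerate}

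The column $v_U$ is presumably the certified value of this middle atom after the losses below. I expect the final comparison to have the form $c_U<(\text{atom factor})\cdot v_U$-type bookkeeping, checked with the rational test in Lemma~\ref{lem:quadratic_bounds}. For example, at $U=1.504$,
\[
\phi(0.434)=2\sqrt{0.566}-1.132\approx0.3726,
\]
which matches $c_U\approx0.3728$ closely. This strongly suggests that $c_U$ is essentially $\phi(F_U)$, slightly adjusted, and that $v_U$ enters only in the side case.

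\textbf{Main obstacle.} The difficult step is the conditioning transfer. The tails must survive conditioning on $X=1$ without dividing by the small probability $\Pr[X=1]\approx\ell$. The up/down inequality gives
\[
\Pr[A\ge n_A\mid A+B=n]\ge\Pr[A\geq n_A]\Pr[B\le n_B],
\]
which loses a factor $\Pr[X\le1]$ or $\Pr[X\ge1]$. The second of these is only about $\ell$, so that direction is fatal. I would therefore try to use only the direction whose loss factor is $\Pr[X\le 1]$, which is close to one when $\E[X]$ is small. In the large-mean side case I would rely on $v_U$ as a fallback floor, $\Pr[Y=Z=1\mid X=1]\ge v_U$, obtained through Lemma~\ref{lem:mean_sensitive_split}-style arguments.

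If that fails, the alternative is a direct argument. Write the homogeneous cubic as
\[
x\cdot q(y,z)+(\text{terms of degree}\ \ne1\ \text{in } x),
\]
and apply the compulsory/capacity contraction of Lemma~\ref{lem:candidate_compulsory_capacity} to $q$. Its capacity would be controlled by the means of $Y$ and $Z$ through the convexity bound $\operatorname{Cap}\ge\prod(\text{mean})^{\text{mean}}$-type estimates.
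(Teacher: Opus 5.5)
Your first factor matches the paper's: with $\mu:=\E[X]$ for the coordinate of smallest mean, $\Pr[X=1]\ge\mu e^{-\mu}\ge\ell e^{-\ell}$. Your case $\mu>1$ is vacuous, since three means summing to $3$ have minimum at most $1$.

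\textbf{Where the argument fails.} The gap is in the second factor. When $\mu$ is close to $\ell$, the first factor gives only about $\ell e^{-\ell}$. You therefore need $\Pr[Y=Z=1\mid X=1]>c_U$ with essentially no slack. No tail bound of size $F_U$ can deliver this in the row $U=1.504$. Even with both conditional tails at least $0.434$ and no transfer loss, Lemma~\ref{lem:quadratic_bounds} yields only $\phi(0.434)\approx0.37266<0.3728=c_U$. Equivalently, the rational test fails: $4(0.434-0.3728)(0.566)-0.3728^2<0$. This bound is sharp for tail information, and the number you computed lies below $c_U$. So $c_U$ is not ``essentially $\phi(F_U)$''.

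Your use of the up/down inequality is also off. After projecting away $X$, the event $X=1$ is the full-rank condition $Y+Z=2$, and the complementary blocks are $Y$ and $Z$. The blocks $X$ and $Y+Z$ always have total rank three, so conditioning on their sum is vacuous. The loss factor is therefore a nonzero probability such as $\Pr[Y\ge1]>1-e^{-(3-U)/2}>0.52$, not $\Pr[X\le1]$ or $\Pr[X\ge1]$. This gives tails only at least $0.52F_U$ and central mass only about $v_U\approx0.2$, well below $c_U$. So $v_U$ cannot serve as a floor by itself.

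\textbf{Missing ideas.} You need a split on $\mu$, and a mean-based rather than tail-based estimate in the critical regime.

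If $\mu\ge0.02$, then $\Pr[X=1]\ge0.02e^{-0.02}$, about twice $0.01\ge\ell e^{-\ell}$. This slack absorbs the up/down loss, since $0.02(0.98)v_U>0.01c_U$ in both rows.

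If $\mu\le0.02$, you mention that Lemma~\ref{lem:projected_rank_mean_shift} controls the conditional means, but you never turn that into a bound, and that is the key step. Apply it to $W=Y+Z$ at $k=2$, where $(W-2)_+=\mathbf{1}_{\{X=0\}}$. Using $\E[X]+\E[Y]+\E[Z]=3$ and $\E[X(X-1)]\le\mu^2$,
\[
\E[Y\mid W=2]\ge\E[Y]-\Pr[X=0]=2-\E[Z]-\Pr[X=2]-2\Pr[X=3]\ge2-U-\mu^2/2,
\]
and similarly for $Z$. Hence both conditional means lie in $[v,2-v]$ with $v=2-U-0.0002$. The mean part of Lemma~\ref{lem:quadratic_bounds} then gives central mass at least $\psi(v)$, which exceeds $c_U$ in both rows: $\psi(0.4958)\approx0.37289$ and $\psi(0.4698)\approx0.35944$. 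Here the conditional mean information is stronger than any tail information you could transfer.

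\textbf{The capacity fallback.} This does not close the gap either. Available capacity lower bounds such as Lemma~\ref{lem:coupled_subset_capacity} contain the factor $e_1e_2d_1\le\E[X]^2$. They are therefore of order $\ell^2$ when $\E[X]\approx\ell$, even before the further contraction losses of Lemma~\ref{lem:candidate_compulsory_capacity}.
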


\begin{proof}
Choose the coordinate of smallest mean, say $X$, and put $\mu=\E[X]$.
Then $\ell\leq\mu\leq1$, and each other mean is at least $(3-U)/2$.
Lemma~\ref{lem:bernoulli_estimates} gives
$\Pr[X=1]\geq\mu e^{-\mu}>0$. Project to $Y,Z$ and condition
their full rank to two, which is precisely the event $X=1$.

If $\mu\geq0.02$, both unconditioned nonzero tails of $Y,Z$ exceed
$1-e^{-(3-U)/2}>0.52$, and their CDFs at one exceed $F_U$.
The up/down inequality gives both conditional tails at one at least
$\alpha_U=0.52F_U$. The rational test in
Lemma~\ref{lem:quadratic_bounds} gives conditional central mass greater
than $v_U$, since
\[
4(\alpha_U-v_U)(1-\alpha_U)-v_U^2>0
\]
in both rows. Hence the joint atom exceeds
\[
0.02e^{-0.02}v_U>0.02(0.98)v_U>0.01c_U
\geq c_U\ell e^{-\ell}.
\]

If $\mu\leq0.02$, put $W=Y+Z=3-X$. Lemma~\ref{lem:projected_rank_mean_shift} gives
\begin{align*}
\E[Y\mid W=2]
&\geq\E[Y]-\Pr[X=0]\\
&=2-\E[Z]-\Pr[X=2]-2\Pr[X=3]\\
&\geq2-U-\mu^2/2\geq2-U-0.0002.
\end{align*}
Here $\E[X(X-1)]\leq\mu^2$ follows from the Bernoulli factorization.
The same lower bound holds for $Z$. Lemma~\ref{lem:quadratic_bounds}
gives conditional central mass at least $\psi(2-U-0.0002)>c_U$.
Multiply by $\Pr[X=1]\geq\ell e^{-\ell}$.
Both row comparisons are exact rationals. The coordinate choice is
made from the law, and all rank conditioning is after projection.
\end{proof}

In other applications, marginal tail probabilities are available instead of
precise means. We therefore give a second rank-three estimate directly in terms of
the marginal tails.

\begin{lemma}[Homogeneous cubic bound from two marginal tails]
\label{lem:cubic_two_tail}
Let $X+Y+Z=3$ have a stable homogeneous probability generating polynomial.
Suppose every coordinate has nonzero probability at least $\alpha=0.2574$
and probability at most one at least $\beta$, where $0<\beta\leq0.01$.
Put
\[
c_0=0.4911=3(0.1637),\quad b=\frac{\beta}{1+\beta/c_0},\quad
\psi(v)=v(1-v/2).
\]
Then $\Pr[X=Y=Z=1]\geq b\psi(\alpha-b^2/c_0)$.
\end{lemma}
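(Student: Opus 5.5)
The plan is to single out one coordinate, say $X$, show that it has a usable atom at one, and then condition on $X=1$. Since $X+Y+Z=3$, the event $X=1$ is exactly the full-rank event $Y+Z=2$ after projecting to $Y,Z$. On that event I would apply the mean form of Lemma~\ref{lem:quadratic_bounds}. This gives
\[
\Pr[X=Y=Z=1]=\Pr[X=1]\,\Pr[Y=Z=1\mid Y+Z=2]\geq b\,\psi(v),
\]
provided I prove two things: $\Pr[X=1]\geq b$, and the conditional means of $Y$ and $Z$ given $Y+Z=2$ lie in $[v,2-v]$ with $v=\alpha-b^2/c_0$. The form of $b$, namely $b+b\beta/c_0=\beta$, and the loss term $b^2/c_0$ both suggest a Newton inequality $\Pr[X=0]\Pr[X=2]\leq\Pr[X=1]^2$ for the real-rooted marginal of $X$, combined with a lower bound $c_0$ on the relevant rank-two mass.

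\textbf{Step 1 (choice of $X$ and its atom).} Each marginal is real-rooted of degree at most three, so $\Pr[W=1]^2\geq\Pr[W=0]\Pr[W=2]$ for $W\in\{X,Y,Z\}$. The mean of the three coordinates is one. Lemma~\ref{lem:bernoulli_estimates} (the row $1.2\leq\E\leq2.5$) gives $\Pr[W=2]>0.1637$ for any coordinate with a suitable mean, and summing over the three coordinates is what I expect to produce $c_0=3(0.1637)$. Concretely, I would show there is a coordinate $X$ with $\Pr[X=2]\geq c_0\cdot(\text{its share})$, or else one where the Newton inequality reads
\[
\Pr[X=0]\leq \Pr[X=1]^2/c_0 .
\]
From $\Pr[X=0]+\Pr[X=1]\geq\beta$ we then get $a:=\Pr[X=1]$ with $a+a^2/c_0\geq\beta$. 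Since $a\leq1$, this should force $a\geq b$, because $b$ solves the linearized version $b(1+\beta/c_0)=\beta$ and $a^2/c_0\leq a\beta/c_0$ whenever $a\leq\beta$. If $a>\beta$, then $a\geq b$ trivially.

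\textbf{Step 2 (conditional means).} Here I use Lemma~\ref{lem:projected_rank_mean_shift}, applied to the projection onto $Y,Z$ with $k=2$. This gives
\[
\E[Y\mid Y+Z=2]\geq\E[Y]-\E[(Y+Z-2)_+]=\E[Y]-\Pr[X=0].
\]
Then $\E[Y]\geq\Pr[Y\geq1]\geq\alpha$, and $\Pr[X=0]\leq b^2/c_0$. To get the latter I must redo Step 1 so that the chosen coordinate has $\Pr[X=1]$ close to $b$ from above, or bound $\Pr[X=0]$ directly by $b^2/c_0$. The same bound holds for $Z$, and $\E[Y\mid\cdot]+\E[Z\mid\cdot]=2$ supplies the upper end $2-v$. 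Lemma~\ref{lem:quadratic_bounds} then applies, because the rank-two conditional law is stable with real-rooted count. The final multiplication is just the identity displayed at the start.

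\textbf{Main obstacle.} The delicate point is making Steps 1 and 2 consistent. Step 1 wants $\Pr[X=1]$ large, while Step 2 wants $\Pr[X=0]\leq b^2/c_0$, which Newton only gives when $\Pr[X=1]$ is at most about $b$ and $\Pr[X=2]\geq c_0$. I would first try to choose $X$ by minimizing $\Pr[X=1]$ among the coordinates, splitting into two cases: $\Pr[X=1]\leq b$, where Newton controls $\Pr[X=0]$, and $\Pr[X=1]>b$. I also need to verify, using the Bernoulli atom table and $\alpha=0.2574$, that the mean window $[1.2,2.5]$ is met by enough coordinates to justify the constant $c_0$.
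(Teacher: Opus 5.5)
Your skeleton matches the paper's: condition on $X=1$ as the full rank-two event for $(Y,Z)$, lose $\Pr[X=0]$ via Lemma~\ref{lem:projected_rank_mean_shift}, control $\Pr[X=0]$ by Newton, and finish with the mean form of Lemma~\ref{lem:quadratic_bounds}. However, the step you flag as the ``main obstacle'' is a genuine gap, and re-choosing $X$ (for example, minimizing $\Pr[X=1]$) does not close it. Newton gives $p_0\le p_1^2/c_0$ with $p_1\ge b$. It does not give $p_0\le b^2/c_0$ unless $p_1=b$.

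The missing idea is to avoid substituting $b$ early. Your own argument yields $\Pr[X=Y=Z=1]\ge f(p_1)$, where $f(x):=x\psi(\alpha-x^2/c_0)$, and
\[
f'(x)=\psi(\alpha)-\frac{3(1-\alpha)x^2}{c_0}-\frac{5x^4}{2c_0^2}>0\quad(0\le x\le0.1).
\]
Hence $f(p_1)\ge f(b)$ whenever $b\le p_1<0.1$.

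For $p_1\ge0.1$ you need a separate branch. There $\E[X]$ may lie in $[1,1.2]$, so you have no $0.1637$ bound and no Newton control of $p_0$; the crude bound $p_0\le(2/3)^3$ already exceeds $\alpha$. The paper instead uses the AM--GM bound $p_0\le(1-\mu/3)^3$ for the degree-three factorization of $X$. Since the other two means are at least $\max\{\alpha,3-2\mu\}$, the shifted conditional means are at least $\max\{\alpha-p_0,\,6p_0^{1/3}-3-p_0\}>0.095$, uniformly in $p_0$. This gives a joint atom of at least $0.1\psi(0.095)>0.01\psi(\alpha)\ge f(b)$.

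Your accounting of $c_0$ is also wrong. The constant $c_0=3(0.1637)$ does not come from summing over three coordinates. It comes from the degree-three Newton inequality $p_1^2\ge3p_0p_2$ (each marginal has at most three Bernoulli factors, since $X+Y+Z=3$), combined with $p_2>0.1637$ for the single chosen coordinate. The plain log-concavity $p_1^2\ge p_0p_2$ that you wrote loses this factor $3$ and does not give the stated constants.

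To obtain $p_2>0.1637$, choose $X$ of \emph{largest} mean. Then $1\le\mu\le3-2\alpha<2.5$, because each other coordinate has mean at least its nonzero probability, hence at least $\alpha$. When $p_1<0.1$, the atom bound $\Pr[X=1]>0.36$ on $[1,1.2]$ forces $\mu>1.2$, and Lemma~\ref{lem:bernoulli_estimates} gives $p_2>0.1637$. With that corrected Newton bound, your deduction of $p_1\ge b$ from $p_1+p_1^2/c_0\ge\beta$ is fine.
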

\begin{proof}
Choose a coordinate of largest mean, say $X$, and put $\mu=\E[X]$ and
$p_j=\Pr[X=j]$. This choice depends on the law, not on its realization.
Then $1\leq\mu\leq3-2\alpha<2.5$. Gapless rank support, $p_0+p_1\geq\beta>0$,
and $\mu\geq1$ imply $p_1>0$.
The Bernoulli factorization of $X$ has degree at most three. Padding by zero
parameters and applying arithmetic--geometric mean gives
\[
p_0\leq(1-\mu/3)^3,\quad \mu\leq3(1-p_0^{1/3}).
\]
Each other coordinate has mean at least $\alpha$ and at least $3-2\mu$.
Project to the other pair and condition its full rank to two, an event of
probability $p_1>0$. Lemma~\ref{lem:projected_rank_mean_shift} loses at most
$\E[(Y+Z-2)_+]=p_0$. Thus both conditional means are at least
\[
\max\{\alpha-p_0,6p_0^{1/3}-3-p_0\}.
\]
This quantity exceeds $0.095$: for $p_0\leq0.16$ use
$\alpha-0.16=0.0974$; above $0.16$ the second expression increases, and
$0.5425^3<0.16$ with $6(0.5425)-3-0.16=0.095$ suffices.
Lemma~\ref{lem:quadratic_bounds} bounds the conditional central atom
below by $\psi(0.095)$.
If $p_1\geq0.1$, the homogeneous quadratic mean bound gives joint atom at
least $0.1\psi(0.095)$.

Suppose $p_1<0.1$. The rank-one bound in Lemma~\ref{lem:bernoulli_estimates} on $[1,1.2]$ forces
$\mu>1.2$, and its rank-two bound on $[1.2,2.5]$ gives $p_2>0.1637$.
Newton's degree-three inequality gives
\[
p_1^2\geq3p_0p_2,\quad p_0\leq p_1^2/c_0.
\]
The inequality $b+b^2/c_0<\beta\leq p_0+p_1$ implies $p_1\geq b$.
Both conditional pair means are at least $\alpha-p_1^2/c_0>0$.
For $f(x)=x\psi(\alpha-x^2/c_0)$,
\[
f'(x)=\psi(\alpha)-3(1-\alpha)x^2/c_0-5x^4/(2c_0^2).
\]
This decreases with $x^2$ and is positive at $x=0.1$, by exact substitution.
Thus the joint atom is at least $f(p_1)\geq f(b)$.
Finally $f(b)\leq0.01\psi(\alpha)<0.1\psi(0.095)$, so the first case also
suffices.
The full-rank moment comparison follows from the adjacent-rank coupling
of \cite[Theorem 2.10]{kko21}, after projection. No independence of actual
sampled edges is used.
\end{proof}

\subsection{Good bundles and window estimates}
\label{sec:good_common_events}

For non-half top bundles and bottom bundles, retain
\cite[Definition~5.13]{kko21}. A half top bundle is good when its
conditional $2$--$2$ probability is at least $\gamma_{\rm good}h$.
We use this definition directly, without an inclusion assumption for
the source good-half class. The selected-state tail, competing-half estimate,
and structural trichotomy below establish its required interfaces directly.

We first check that deleting unwanted coordinates preserves the lower-tail
information on the total residual count.

\begin{lemma}[Deletion and the total-rank lower tail]
\label{lem:total_rank_deletion}
Let $\nu$ be a strongly Rayleigh law on a finite ground set $F$,
let $R:=|T\cap F|$, and let $D\subseteq F$.
If $\Pr[D_T=0]>0$, then for every integer $k$,
\[
\Pr[R\leq k\mid D_T=0]\geq\Pr[R\leq k].
\]
\end{lemma}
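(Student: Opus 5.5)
The plan is to deduce the claim from negative association (equivalently, the stochastic covering property) of strongly Rayleigh laws, applied to the indicator of $\{D_T=0\}$ and the count $R$. The event $\{D_T=0\}$ is a decreasing event in the coordinates of $D$. The event $\{R\leq k\}$ is a decreasing event in all coordinates of $F$. Both events are monotone, but their supports overlap, since $D\subseteq F$, so negative association cannot be used directly. We therefore split $R=D_T+R'$, where $R':=|T\cap(F\setminus D)|$. On the event $\{D_T=0\}$ we have $R=R'$, so
\[
\Pr[R\leq k\mid D_T=0]=\Pr[R'\leq k\mid D_T=0].
\]

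The second step compares conditioning on $D_T=0$ with no conditioning for the count $R'$, which lives on a set disjoint from $D$. For strongly Rayleigh measures, conditioning on all coordinates of $D$ being absent yields a law on $F\setminus D$ that stochastically dominates the unconditioned projection to $F\setminus D$. Concretely, by negative association the event $\{D_T=0\}$ (decreasing in $D$) and $\{R'\leq k\}$ (decreasing in $F\setminus D$) satisfy
\[
\Pr[R'\leq k,\,D_T=0]\leq\Pr[R'\leq k]\Pr[D_T=0],
\]
where we apply NA to the two increasing complements. Hence $\Pr[R'\leq k\mid D_T=0]\leq\Pr[R'\leq k]$. This is the wrong direction for $R'$ alone, which is expected: deletion raises the surviving mean, as in Lemma~\ref{lem:binary_covering}. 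So the proof must use the total $R$, and it must exploit the fact that the deleted coordinates themselves contributed to $R$.

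The key step, and the main obstacle, is therefore to show that deletion lowers the total count stochastically. I would use the stochastic covering property of strongly Rayleigh laws \cite[Theorem~2.10]{kko21}. Conditioning one coordinate $e$ to be absent yields a law that can be coupled with the unconditioned law so that the conditioned sample is obtained from the original by either doing nothing, or removing $e$ and adding at most one other coordinate. More directly, couple the law conditioned on $e\notin T$ with the law conditioned on $e\in T$. The latter minus $e$ is covered by the former with at most one extra element. So the total count under $e\notin T$ is at most the total under $e\in T$, and the law conditioned on $e\notin T$ is stochastically dominated in $R$ by both the conditioned-in law and hence by the mixture, i.e.\ by the unconditioned law. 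This gives $\Pr[R\leq k\mid e\notin T]\geq\Pr[R\leq k]$ for a single coordinate.

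Finally, induct on $|D|$. Strong Rayleighness is preserved under conditioning a coordinate to be absent, and the positivity hypothesis $\Pr[D_T=0]>0$ guarantees that every intermediate conditioning event has positive probability. At each step the conditional probability of $\{R\leq k\}$ does not decrease, and chaining the steps proves the claim. If the ground set is not homogeneous, one first homogenizes, as in the projection conventions used earlier; the rank count $R$ is then read on $F$ only. No closure is needed for this, since projection to $F$ preserves the strong Rayleigh property.
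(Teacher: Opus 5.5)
Your proof is correct, but it takes a different route from the paper's.

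\textbf{The paper's route.} The paper handles all of $D$ in one step. It conditions on the total rank and uses the consecutive-rank coupling of \cite[Theorem~2.10]{kko21}, which adds one element when passing from rank $j$ to rank $j+1$. This shows that $a_j:=\Pr[D_T=0\mid R=j]$ is nonincreasing in $j$. The Chebyshev covariance inequality for the two nonincreasing functions $a_R$ and $\mathbf 1_{\{R\le k\}}$ then gives $\Pr[R\le k,\,D_T=0]\ge\Pr[R\le k]\Pr[D_T=0]$.

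\textbf{Your route.} You use the one-coordinate stochastic covering property: the law given $e\in T$, restricted to $F\setminus\{e\}$, is covered by the law given $e\notin T$. This yields a pointwise coupling in which the total count under $e\notin T$ is at most the total count under $e\in T$. Averaging over the two states of $e$ gives $\Pr[R\le k\mid e\notin T]\ge\Pr[R\le k]$. You then iterate over the elements of $D$, using closure of strong Rayleighness under specialization at zero and the positivity hypothesis at each step.

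\textbf{Comparison.} Your approach produces an explicit stochastic domination of $R$ under deletion and never conditions on the total rank. It costs an induction, and it needs the covering property for possibly non-homogeneous laws. That property is due to Pemantle and Peres, or it follows by homogenizing and applying the consecutive-rank coupling, as you indicate. As this paper uses it, \cite[Theorem~2.10]{kko21} is the consecutive-rank statement, so your homogenization remark is exactly what links the citation to your covering claim. The paper's approach treats the whole set $D$ at once and needs only the rank-conditioned laws of the projected measure.

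Your first two paragraphs are correct but unused in the final argument and can be dropped: the split $R=D_T+R'$, and the observation that negative association pushes $R'$ the wrong way.
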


\begin{proof}
The adjacent-rank monotone coupling in \cite[Theorem~2.10]{kko21}
makes $a_j:=\Pr[D_T=0\mid R=j]$ nonincreasing on the rank support.
The function $j\mapsto\mathbf1_{\{j\leq k\}}$ is also nonincreasing.
For an independent copy $R'$ of $R$, the covariance identity gives
\[
\begin{aligned}
2\operatorname{Cov}(a_R,\mathbf1_{\{R\leq k\}})
&=\E[(a_R-a_{R'})(\mathbf1_{\{R\leq k\}}-\mathbf1_{\{R'\leq k\}})]\\
&\geq0.
\end{aligned}
\]
Divide the resulting joint-probability inequality by $\Pr[D_T=0]$.
The coupling uses full ranks of the projected law on $F$.
\end{proof}

Goodness supplies a useful lower tail after a half bundle has been selected. The
following bound retains that information when the degree remainder is subsequently
deleted.

\begin{lemma}[Selected-state tail of a good half bundle]
\label{lem:candidate_two_tail}
Let $e=(u,v)$ be a good half bundle. Conditional on the endpoint trees and
on $e_T=1$, its residual union count $R$ satisfies
\[
\Pr[R\leq2]\geq L_*:=
\frac{(\gamma_{\rm good}-2)h-4\rho_t}{3/2-h-\rho_t},\quad
\rho_t:=3d_0.
\]
After deleting the degree remainder $C$, the same lower bound holds for
the projected total residual rank. At the fixed parameters,
\[
L_*>0.00209.
\]
\end{lemma}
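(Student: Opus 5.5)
The plan has three parts. First, I would derive the inequality for the selected state $e_T=1$ from the defining goodness condition, using the one-coordinate coupling of Lemma~\ref{lem:binary_covering} (from \cite[Theorem~2.10]{kko21}) together with the degree means. Second, I would transfer the bound through deletion of $C$ using Lemma~\ref{lem:total_rank_deletion}. Third, I would check the numerical value at every row.

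\textbf{Setup.} Fix the endpoint trees, contract them, and project to the union $\delta(u)\cup\delta(v)$. By the endpoint-tree mean-change budget in the proof of Lemma~\ref{lem:final23_electrical_remote}, the conditional marginal of $e$ satisfies $q:=\Pr[e_T=1]\in[1/2-h-\rho_t,\,1/2+h+\rho_t]$ with $\rho_t=3d_0$. The conditional degree means satisfy $\E[\delta(u)_T],\E[\delta(v)_T]\leq2+\rho_t$. Let $R_1$ and $R_0$ denote the residual union count in the two states $e_T=1$ and $e_T=0$. Since the projected union has full rank and $e$ is a binary coordinate, the coupling adds exactly one residual coordinate when passing from $e=1$ to $e=0$, so $R_0=R_1+1$ pointwise in the coupling.

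\textbf{Key inequality.} The $2$--$2$ event is contained in $\{e_T=1,R=2\}\cup\{e_T=0,R=4\}$. I would write
\[
\gamma_{\rm good}h\leq\Pr[\text{2--2}]\leq q\Pr[R_1\leq2]+(1-q)\Pr[R_0=4].
\]
The aim is to control the second term by $\Pr[R_1\leq 2]$ plus a term paid by the mean budget. Under the coupling, $\{R_0=4\}\subseteq\{R_1=3\}$. The excess of $R_1$ above two is then bounded through $\E[R_1]$: it equals the degree-sum mean minus the bundle contribution, adjusted by at most $1-q$ as in Lemma~\ref{lem:binary_covering}.

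Rearranging the resulting pointwise inequality should produce a linear inequality in $A:=\Pr[R_1\leq2]$. I expect its coefficient on $A$ to be $3/2-h-\rho_t$, coming from $q+(1-q)\cdot(\text{mass shift})$ at the extreme marginal. I expect its right-hand side to be $(\gamma_{\rm good}-2)h-4\rho_t$: the term $2h$ arises from the deviation of $q$ from $1/2$ at each endpoint, and the $4\rho_t$ collects the two degree errors and the two marginal errors. The displayed $L_*$ should then follow by dividing.

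\textbf{Main obstacle.} The hard step is getting exactly this pointwise form. In particular, I must charge the $e=0$ contribution to the same lower tail rather than to a Markov bound, which would lose a constant factor. My first attempt would be to average the coupling inequality $\mathbf 1\{R_0\le 4\}\le \mathbf 1\{R_1\le 2\}+\mathbf 1\{R_1=3\}$ against the identity $\E[R_1]\le 3+2\rho_t+(1-q)$. If that is too lossy, I would instead condition on total rank as in Lemma~\ref{lem:projected_rank_mean_shift}.

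\textbf{Deletion and numerics.} Deleting the degree remainder $C$ is immediate from Lemma~\ref{lem:total_rank_deletion}, applied to the conditional law given $e_T=1$. That law is still strongly Rayleigh, and $\Pr[C_T=0]>0$ is required for the conditioning to make sense; the lemma then gives $\Pr[R\leq2\mid C_T=0]\geq\Pr[R\leq2]$ for the projected total residual rank. For the numerics, $L_*$ increases with $\gamma_{\rm good}$ and with $h$ over the admissible range. I would check each of the thirty-four rows by exact rational substitution. The tightest case is row $10$, with $\gamma_{\rm good}\approx2.6547$ and $h\approx0.004796$, which gives approximately $0.6547\cdot0.004796/1.495\approx0.00210>0.00209$.
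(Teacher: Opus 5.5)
You have the right outline (selected-state coupling, a mean identity, goodness, then deletion and row checks), but the central linear inequality in $A:=\Pr[R_1\le 2]$ is never derived, and the two ingredients you propose for it would not produce it.

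\textbf{First, the coupling is misstated.} The identity $R_0=R_1+1$ is false. The one-coordinate coupling of \cite[Theorem~2.10]{kko21} requires consecutive \emph{full} ranks. The law projected to $\delta(u)\cup\delta(v)$ is not homogeneous, so the coupling must be taken on the whole remaining ground set. The added edge may then lie outside the residual union $\delta(u\cup v)$, and you only get $R_0-R_1\in\{0,1\}$. Consequently $\{R_0=4\}\not\subseteq\{R_1=3\}$: it also contains outcomes with $R_1=4$ and zero increment.

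\textbf{Second, your fallback cannot close.} No upper bound on $\E[R_1]$ can cap $\Pr[R_1=3]$ or $\Pr[R_0=4]$; consider $R_1\equiv 3$. The bound you wrote is also not the relevant one: the correct bound is $\E[R_1]\le\E[R]\le 4+2\rho_t-2q$, not $3+2\rho_t+(1-q)$.

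\textbf{The missing idea.} Charge the absent-edge state through the increment in the \emph{mixture} mean,
\[
\E[R]=\E[R_1]+(1-q)\E[R_0-R_1]=\E[\delta(u)_T]+\E[\delta(v)_T]-2q\le 4+2\rho_t-2q .
\]
Combine this with two support facts you did not use. First, $R_1\ge1$, because $u\cup v$ is a proper cut. Second, $R_0\ge2$, because each endpoint cut has a compulsory success when $e$ is absent. The second fact forces increment one on $\{R_1=1\}$. Together these give the pointwise inequality
\[
q\mathbf 1_{\{R_1=2\}}+(1-q)\mathbf 1_{\{R_0=4\}}\le R_1+(1-q)(R_0-R_1)-3+(1+q)\mathbf 1_{\{R_1\le 2\}},
\]
which you check case by case for $R_1=1,2,3$ and $R_1\ge4$. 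Averaging it gives
\[
\gamma_{\rm good}h\le 1-2q+2\rho_t+(1+q)A .
\]
For $0\le A\le1$ the right side decreases in $q$, so substituting $q\ge 1/2-h-\rho_t$ yields exactly $L_*$. In particular, the coefficient $3/2-h-\rho_t$ is $1+q$ at that endpoint, not $q+(1-q)\cdot(\text{shift})$ as you suggested.

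\textbf{Deletion and numerics.} Your deletion step is essentially right, with one detail to fix. The remainder $C$ can meet the bundle, so $C$ need not lie in the residual union. Either use the independent parallel-edge factorization, as the paper does, or apply Lemma~\ref{lem:total_rank_deletion} on the residual union together with the bundle, where the count is identically $R+1$. Your row-$10$ value $L_*\approx0.0020999$ is correct.
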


\begin{proof}
Work first under the endpoint-tree law, let $R$ denote its residual union
count before conditioning on $E:=e_T$, and put $q:=\Pr[E=1]$.
Sequential maximum-rank conditioning has total mean-change budget at most
$3d/2$. Thus, with the conservative $\rho_t=3d_0$,
\[
q\in[1/2-h-\rho_t,1/2+h+\rho_t],\quad
\E[\delta(u)_T],\E[\delta(v)_T]\in[2-\rho_t,2+\rho_t].
\]
The bundle is binary. Lemma~\ref{lem:binary_covering} and the adjacent-rank coupling give a
coupling of the residual union counts $R_1,R_0$ under $E=1,E=0$ with
$R_0-R_1\in\{0,1\}$. The union is a proper cut, so $R_1\geq1$.
When $E=0$, the two endpoint cuts each contribute a compulsory success,
so $R_0\geq2$.

For every such pair of integer counts,
\[
q1_{\{R_1=2\}}+(1-q)\mathbf1_{\{R_0=4\}}
\leq R_1+(1-q)(R_0-R_1)-3+(1+q)\mathbf1_{\{R_1\leq2\}}.
\]
For $R_1=1$, necessarily $R_0=2$, and equality holds. For $R_1=2$, the
two possible right sides are $q$ and $1$. For $R_1=3$, equality holds for
both increments. For $R_1\geq4$, the inequality follows directly, with
right side at least $R_1-3$.

Write $A:=\Pr[R_1\leq2]$ and let $g$ be the conditional $2$--$2$
probability of the bundle. Its present- and absent-edge events require
$R_1=2$ and $R_0=4$, respectively. Averaging the preceding inequality gives
\[
g\leq \E[R]-3+(1+q)A
\leq1-2q+2\rho_t+(1+q)A.
\]
Here $\E[R]=\E[\delta(u)_T]+\E[\delta(v)_T]-2q$.
For $0\leq A\leq1$, the final expression decreases with $q$. Substitute
$q\geq1/2-h-\rho_t$ and $g\geq\gamma_{\rm good}h$ to obtain the claim.
The numerator and denominator defining $L_*$ are strictly positive.

After endpoint and union tree conditioning, the unique parallel-bundle
choice factors from the residual weighted-tree law, as in Lemma~\ref{lem:candidate_gkl_window}.
Splitting $C$ into its bundle and residual parts shows that deleting the
bundle part only restricts the independent selected edge. Project to the
residual union and apply Lemma~\ref{lem:total_rank_deletion} to deletion of its residual part. This
cannot decrease the total-rank lower tail. The conditions commute and all
have positive probability by the outer bound in Lemma~\ref{lem:candidate_gkl_window}. The argument
does not apply to the different window-premise subcount.
\end{proof}

To locate the bad bundles, we need a sufficient condition for a half bundle to be
good. The next lemma expresses this condition in terms of the upward mass at an
endpoint.

\begin{lemma}[Good-bundle threshold]
\label{lem:candidate_good_threshold}
In the setting of \cite[Lemma~5.16]{kko21}, if
$x(\delta^\uparrow(u))\geq1/2+k_{\rm good}h$,
then the half bundle incident to $u$ is good.
\end{lemma}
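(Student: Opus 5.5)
We work in the setting of \cite[Lemma~5.16]{kko21}. There $u$ and $v$ are the two atoms of a degree cut, $e=E(u,v)$ is a half bundle with $x_e\in[1/2-h,1/2+h]$, and each atom cut has $x$-value in $[2,2+d]$. We condition on the endpoint trees and contract them, as in the proof of Lemma~\ref{lem:final23_electrical_remote}. We must show that the conditional probability of the $2$--$2$ event, $\delta(u)_T=\delta(v)_T=2$, is at least $\gamma_{\rm good}h$.

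\textbf{Step 1: reduce to a bivariate count.} Put $X:=\delta(u)_T$ and $Y:=\delta(v)_T$ after this conditioning. Split into the present-edge branch $e_T=1$ and the absent-edge branch $e_T=0$, and use Lemma~\ref{lem:binary_covering} to track the residual counts. In the present-edge branch, the $2$--$2$ event asks that exactly one further edge leave each of $u$ and $v$. We group these residual edges into two blocks. The first block is $A:=\delta^\uparrow(u)$ together with the horizontal edges of $u$ other than $e$; the second block is the corresponding set for $v$. The event then becomes $\Pr[A=B=1]$ for the pair $(A,B)$, which has a stable joint generating polynomial once polarized. The hypothesis on the upward mass makes $\E[A]$ at least $1/2+k_{\rm good}h-O(d_0)$. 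It also pushes mass away from $v$'s side, so the mean of $Z:=A+B$ is at least $1+O(h)$, with the correction explicit in $h$ and $d_0$.

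\textbf{Step 2: atom-dependent rank balance.} Let $c$ be a lower bound for $\Pr[Z=1]$ and let $q:=\Pr[Z=2]$. We obtain $c$ from Lemma~\ref{lem:bernoulli_estimates} on the interval $[1,1.2]$, which gives $c\geq0.36$. Lemma~\ref{lem:atom_newton} then supplies the coefficient $J(c)$, and Lemma~\ref{lem:normalized_rank_tail} converts it into a tail bound $T(q)$ together with the omitted-mass ratio $\lambda$. Substituting into the normalized adjacent-rank certificate of Lemma~\ref{lem:normalized_rank_tail} gives $\Pr[A=B=1]\geq F(q)$, where $L_0$ is the smaller of the two block means. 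Since $F$ is concave (Lemma~\ref{lem:tail_profiles}), it suffices to check the endpoints of the admissible $q$-interval. The lower endpoint of that interval comes from Lemma~\ref{lem:final24_rank_split}, which gives $p_2\geq\nu p_1$.

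\textbf{Step 3: combine with the branch weight.} Multiply the conditional bound by $\Pr[e_T=1]\geq1/2-h-3d_0$. If the branch bound alone does not reach $\gamma_{\rm good}h$, we add the absent-edge contribution $\Pr[e_T=0]\cdot\Pr[A'=B'=2\mid e_T=0]$, treating it the same way at rank four. The target is an inequality of the form
\[
(1/2-h-3d_0)\,F_{\rm min}\geq\gamma_{\rm good}h ,
\]
where $F_{\rm min}$ is the endpoint minimum of $F$ and is linear in $(k_{\rm good}-\text{const})h$ to leading order. This is consistent with the tabulated $k_{\rm good}\approx2\gamma_{\rm good}$. We finish by verifying this scalar inequality rationally for each of the thirty-four rows, on $h\le h_j$ and $0\leq d\leq d_0$.

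\textbf{Main obstacle.} The hard part is Step 2. The excess of $\E[A]$ above $1/2$ is only $O(h)$, so $W(q)$ is nearly zero and the quadratic loss $W^2/(8\Phi)$ must be controlled to second order. This forces us to use the sharper factorial tail, the exact relation $B\geq\lambda M$, and precise projected mean shifts from Lemma~\ref{lem:projected_rank_mean_shift}. My first attempt would be to linearize $F$ in $h$ at the worst endpoint and check that the linear coefficient exceeds $2\gamma_{\rm good}$, leaving the $O(h^2)$ and $O(d_0)$ remainders to the per-row rational verification.
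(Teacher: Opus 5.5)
There is a genuine gap in Step~1, and everything after it depends on it. You condition only on the endpoint trees and on $e_T=1$. Your blocks $A=(\delta(u)\setminus e)_T$ and $B=(\delta(v)\setminus e)_T$ are then exactly the residual counts of Lemma~\ref{lem:final24_bad_parity}. Each has conditional mean at least $1-O(d_0)$ whatever the upward mass. Their sum $Z=A+B=\delta(u\cup v)_T\geq1$ has mean between roughly $5/2-h$ and $3+2h$. So the inputs of your Step~2 fail for these counts. $\E[Z]$ does not lie in $[1,1.2]$, and $\Pr[Z=1]\leq e^{-3/2+O(h)}$ rather than $\geq0.36$. When $\E[Z]$ is near $3$, the atom $\Pr[Z=2]$ can be arbitrarily small, which is exactly what a bad bundle looks like. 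The paper uses this very present/absent decomposition, but for the opposite purpose: to show that a bad bundle has a small residual mean gap $\Delta<\Delta_*$. Nowhere does your argument turn $x(\delta^\uparrow(u))\geq1/2+k_{\rm good}h$ into a constraint on this law; ``pushes mass away from $v$'s side'' is not a mechanism. The absent-edge branch is also left unanalyzed.

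The missing idea is the source's extra conditioning on $S$ and $W:=S\setminus u$ being trees. Then $u$ has exactly one tree edge into $W$, so $\delta(u)_T=1+\delta^\uparrow(u)_T$. The $2$--$2$ event becomes $\{X=Y=1\}$ with $X=\delta^\uparrow(u)_T$ and $Y=\delta(v)_T-1$; these count disjoint sets and form a stable pair after dividing out the compulsory $\delta(v)$ variable. This is where the hypothesis acts. \cite[Eq.~(23)]{kko21} gives $\E[X],\E[Y]\geq1/2+k_{\rm good}h-4d_0$ and $\E[Z]\geq1+\alpha$ with $\alpha=2k_{\rm good}h-5d_0$. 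The outer factor is $\Pr[S,W\text{ trees}\mid u,v\text{ trees}]\geq1/2+k_{\rm good}h-3d_0$, because the rank deficit of $W$ is about $1-x(\delta^\uparrow(u))$. This factor, not $\Pr[e_T=1]\approx1/2-h$, is essential to the constants. The small-$q$ bound is about $(1/2+k_{\rm good}h)(\alpha-\alpha^2)/2\approx k_{\rm good}h/2$, with the second-order terms cancelling. With your weight $1/2-h-3d_0$ and a branch bound of at most about $k_{\rm good}h$, as your linearization anticipates, you would need $k_{\rm good}\gtrsim\gamma_{\rm good}/(1/2-h)\approx5.398$ in row~1, whereas $k_{{\rm good},1}\approx5.354$.

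Even in the right setup, two further pieces are required:
\begin{enumerate}
\item A separate large-$q$ branch, because the concave certificate and the bound $\Pr[Z=1]>0.36$ are available only for small $q$. The paper treats $q\geq q_c=(2\gamma_{\rm good}/0.164+0.01)h$ by the up/down inequality with tails $(1-e^{-1/2})7/16$ and Lemma~\ref{lem:quadratic_bounds}, which gives $\Pr[X=1\mid Z=2]>0.164$.
\item The sharp lower endpoint $q\geq\alpha e^{-\alpha}\geq q_*$ from \cite[Lemma~2.21]{kko21}. Near that endpoint the bound is about $Cq/2$, so your estimate $p_2\geq\nu p_1$ with $p_1\geq0.36$ loses a factor of nearly $2.8$.
\end{enumerate}
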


\begin{proof}
Put $k:=k_{\rm good}$. Follow the source conditioning on $u,v,S,W$ being trees,
where $W:=S\setminus u$, and define
\[
X:=\delta^\uparrow(u)_T,\quad
Y:=\delta(v)_T-1,\quad Z:=X+Y.
\]
The unshifted coordinate sets $\delta^\uparrow(u)$ and $\delta(v)$ are
disjoint. Projection and diagonalization of the conditional spanning-tree
generating polynomial therefore give a stable bivariate polynomial.
Since $\delta(v)_T\geq1$, it is divisible by the variable corresponding to
$\delta(v)$. Dividing by that variable preserves stability and gives the
joint generating polynomial of $(X,Y)$. Thus
Lemma~\ref{lem:candidate_adjacent_rank_balance} applies to these shifted counts.

The exact expectation estimates in \cite[Eq.~(23)]{kko21} imply
\[
\E[X],\E[Y]\geq C:=\frac12+kh-4d_0,\quad
\E[Z]\geq1+\alpha,\quad \alpha:=2kh-5d_0.
\]
Also $\E[X]\leq1+d_0$, $\E[Y]\leq1.5-kh+3d_0$, and hence
$\E[Z]\leq2.5-kh+4d_0<2.5$. Both individual means lie in $[1/2,3/2]$.
Here and below replacing the hierarchy error by $d_0$
only weakens the estimates; the source's rounded sufficient condition
$k\geq9$ is not used.

Define
\[
q_c:=(\frac{2\gamma_{\rm good}}{0.164}+0.01)h,
\quad q:=\Pr[Z=2].
\]
Suppose first that $q\geq q_c$.
The four tail probabilities in the first case of \cite[Lemma~5.16]{kko21}
give
\[
\epsilon_0:=(1-e^{-1/2})\frac7{16}.
\]
The up/down truncation inequality in \cite[Lemma~A.10]{gkl24} gives
both tails of $X$ at one, conditional on $Z=2$, at least $\epsilon_0$.
The conditional bivariate rank-two law is stable, so its univariate
rank polynomial is real-rooted. Lemma~\ref{lem:quadratic_bounds}
therefore implies $\Pr[X=1\mid Z=2]>0.164$.
For a rational certificate,
\[
e^{1/2}>\sum_{j=0}^{7}\frac{1}{2^j j!}>
\frac{10^6}{606531},
\]
so $e^{-1/2}<0.606531$.
Put $\alpha_0:=(1-0.606531)7/16<\epsilon_0$. The rational test in
Lemma~\ref{lem:quadratic_bounds} applies because $0.164<\alpha_0$ and
\[
4(\alpha_0-0.164)(1-\alpha_0)-0.164^2
=0.000067933561109375>0.
\]
Consequently
\[
\Pr[X=Y=1]>q_c(0.164)>2\gamma_{\rm good}h.
\]

Suppose now that $q<q_c<0.1637$. The target-two bound in
Lemma~\ref{lem:bernoulli_estimates} forces $\E[Z]<1.2$.
The same lemma gives $\Pr[Z=1]>0.36$.
The target-two bound in \cite[Lemma~2.21]{kko21} gives
\[
q\geq\alpha e^{-\alpha}\geq
q_*:=\alpha(1-\alpha+\alpha^2/2-\alpha^3/6).
\]
Indeed, on $[1+\alpha,1.2]$, the one-forced-success branch is at
least $\alpha e^{-\alpha}$, and the zero-forced branch is at least
$e^{-1.2}/2>\alpha$. The correction exponent is zero.
Split $[q_*,q_c]$ at $0.1$. On its first interval use
$T_{\alpha,0}$ from Definition~\ref{def:tail_profiles}; on its second
use $T_{\alpha,1}$. For either interval $[l,u]$, let $c_2^-(u)$ be
the downward endpoint enclosure in Lemma~\ref{lem:tail_profiles}, and put
\[
b_u=\frac{3u}{c_2^-(u)J(c_2^-(u))},\quad
\lambda_u=\frac{S(b_u)}{U(b_u)},\quad
F_u(q)=\frac q2-
\frac{(q-\alpha+3d_0+(2-\lambda_u)T_{\alpha,f}(q))^2}
{8(1-\lambda_u T_{\alpha,f}(q))}.
\]
The factor $\lambda_u$ is fixed on that interval.
Exact endpoint checks give $2l>\alpha$, $c_1(u),c_2(u)>1/2$,
the required $J$ bounds, positive denominators, and
\[
C-2u-T_{\alpha,f}(u)>0,\quad
l-\alpha+3d_0+(2-\lambda_u)3l^2/2>0.
\]
These verify the hypotheses throughout the interval by
Lemma~\ref{lem:tail_profiles}. Lemma~\ref{lem:normalized_rank_tail}
therefore gives $\Pr[X=Y=1]\geq\min\{F_u(l),F_u(u)\}$.
It remains to restore the outer conditioning. Put $A:=1/2+kh<1$.
The source estimates give
\[
\Pr[W\text{ is a tree}\mid u,v,S\text{ are trees}]\geq A-2d_0,
\]
and
\[
\Pr[S\text{ is a tree}\mid u,v\text{ are trees}]
\geq1-\frac{d_0}{2(1-d_0)}\geq1-d_0.
\]
Their product is at least $(1-d_0)(A-2d_0)\geq A-3d_0>A-4d_0>1/2$.
This proves goodness in the large-rank branch. In the small-rank branch,
each of the four endpoint comparisons is
$CF_u(q)>\gamma_{\rm good}h$, with $q\in\{l,u\}$ on its interval.
Also $0<q_*<q_c$, $1/2+kh<1$, and $kh>4d_0$, so the
source outer and mean calculations apply. This proves the goodness
threshold. 

\end{proof}

We also need to rule out two bad half bundles at the same atom. The next estimate
establishes this exclusion throughout the enlarged parameter range.

\begin{lemma}[Competing half bundles on the enlarged domain]
\label{lem:final24_competing}
For $0<h\leq0.0055$ and
$0\leq d_0\leq\min\{h^2,2\cdot10^{-8}\}$, one of any two half
bundles incident to the same atom in \cite[Lemma~5.17]{kko21}
has conditional $2$--$2$ probability greater than $0.01330$.
In particular, the bad half bundles form a matching for
Definition~\ref{def:final24_parameters}.
\end{lemma}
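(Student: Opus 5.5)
We follow the structure of the source proof of \cite[Lemma~5.17]{kko21}, with the constant $0.22$ from Lemma~\ref{lem:final23_electrical_remote} in place of the source's weaker remote-influence bound. Let $e=uv$ and $e'=vw$ be the two half bundles incident to the atom $v$, and condition on the endpoint atoms being trees. After this conditioning the bundles have marginals in $[1/2-h-3d_0,1/2+h+3d_0]$, and $h+3d_0\leq0.0055+6\cdot10^{-8}\leq0.00550006$. So Lemma~\ref{lem:final23_electrical_remote} applies and shows that, for one of the two orderings, deleting $e$ raises the mean of $\delta(w)\setminus\{e'\}$ by at most $0.22$, say. We estimate the conditional $2$--$2$ probability of the bundle $e'$ whose remote endpoint is controlled in this way.

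The target event requires $e'$ present with two tree edges at both $v$ and $w$, or the analogous absent-edge version. We plan to use the present-edge branch and condition on $e'_T=1$ and $e_T=0$, which has probability at least $1/2-O(h)$. We then track the means of the residual counts $X:=(\delta(v)\setminus\{e,e'\})_T$ and $Y:=(\delta(w)\setminus\{e'\})_T$.

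\begin{enumerate}[label=(\roman*)]
\item The unconditioned means of $X$ and $Y$ are $1\pm O(h+d_0)$. The endpoint cut $v$ has $x$-mass $2\pm d$, and the two half bundles at $v$ account for about $1$ of it.
\item Conditioning on $e_T=0$ adds at most $0.22$ to $\E[Y]$ by the electrical bound, and at most $x_e\approx1/2$ to $\E[X]$ by Lemma~\ref{lem:binary_covering}. Conditioning on $e'_T=1$ removes at most $1-q$ from each mean, again by Lemma~\ref{lem:binary_covering}.
\item The resulting means lie in an explicit window. We will bound $\Pr[X=1]$ and $\Pr[Y=1]$ from below using the Bernoulli atom bounds of Lemma~\ref{lem:bernoulli_estimates}. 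The ranges $0.99\leq\mu\leq1.512$, and $\mu e^{-\mu}$ for $\mu\leq1$, should cover them.
\item We combine the two marginal events through the bivariate stable law of $(X,Y)$. We first project and condition on the total rank $X+Y=2$. Lemma~\ref{lem:projected_rank_mean_shift} controls the mean shift under this conditioning, and Lemma~\ref{lem:quadratic_bounds} (via the up/down inequality) then lower-bounds the split $X=Y=1$. The probability $\Pr[X+Y=2]$ comes from the target-two bound $1.2\leq\mu\leq2.5\Rightarrow\Pr=2>0.1637$, applied to $Z=X+Y$.
\end{enumerate}

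Multiplying the conditioning probability, the rank-two atom and the split gives a product. This product should be a fixed constant minus $O(h)$, and we will verify that it exceeds $0.01330$ at the worst corner $h=0.0055$, $d_0=2\cdot10^{-8}$. Every piece is monotone in $h$ and $d_0$, so checking that corner suffices; the hypothesis $d_0\leq h^2$ keeps all $d_0$-losses negligible against $h$.

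For the matching conclusion, every row has $\gamma_{{\rm good},j}h_j<3\cdot0.0048<0.01330$. Hence at any atom at most one incident half bundle can have conditional $2$--$2$ probability below $\gamma_{\rm good}h$. That bundle is the only possible bad one at the atom, so bad half bundles are vertex-disjoint.

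The main obstacle is step (iv). The mean bound from the electrical lemma holds only for one of the two bundles, and we still need a pointwise-in-law lower bound on the joint event with constant near $0.0133$. The naive route is independence-style multiplication of $\Pr[X=1]$ and $\Pr[Y=1]$, which is invalid. We would therefore first try the full-rank route: use the rank-two atom of $Z$ from Lemma~\ref{lem:bernoulli_estimates}, then the adjacent-rank split. If the constant falls short, we would fall back on the concave certificate of Lemma~\ref{lem:concave_rank_certificate} and check its endpoints.
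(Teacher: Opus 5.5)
\textbf{Gap.} You fix the candidate $e'=vw$ in advance from the electrical orientation alone. Lemma~\ref{lem:final23_electrical_remote} controls only the remote count $W=\delta(w)\setminus e'$, not the middle count $V=\delta(v)\setminus(e\cup e')$. As you note in (ii), deleting $e$ can raise $\E[V_T]$ by up to $x(e)\approx1/2$. Also, contrary to (i), $\E[W_T]\approx x(\delta(w))-x(e')\approx3/2$, not $1$.

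Hence under $e'_T=1$, $e_T=0$ the shifted count $V_T+W_T-1$ can have mean up to about $2$. Then there is no positive lower bound on $\Pr[V_T+W_T=2]$, since the law may sit at $V_T+W_T=3$. Neither the $1.512$ window of (iii) nor the $[1.2,2.5]$ target-two bound of (iv) covers such means. Lemma~\ref{lem:concave_rank_certificate} cannot rescue this: it converts existing rank-two mass into a split; it does not produce that mass. So the electrical orientation alone does not determine which bundle has large $2$--$2$ probability.

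\textbf{Missing idea.} The paper adds a dichotomy on how much deleting $\mathbf e$ raises the $V$-mean, with threshold $\tau_0=0.073$.

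\emph{Large increase.} If the increase is at least $\tau_0$, the happy bundle is the \emph{other} one, $\mathbf e$, via its absent-edge $2$--$2$ event. Remove compulsory successes from $U_T$ and $(V\cup\mathbf f)_T$. The resulting counts satisfy $1+v\le\E[X+Y]<1.512$ with $v=\tau_0-2h-3d_0$, so $\Pr[X+Y=2]\ge ve^{-v}$. The split then comes from the up/down test when $q\ge0.11$, with a thin margin of $0.0133133$ against $0.01330$. When $q<0.11$ it comes from the normalized rank-tail certificate with two endpoint checks.

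\emph{Small increase.} If the increase is below $\tau_0$, then $0.22+\tau_0$ bounds the shifted mean, under $\mathbf f$ selected and $\mathbf e$ deleted, by $M=1.80950018$. This gives an atom at one above $0.172$, which is a separate fixed-mean check not among the listed Bernoulli bounds. Consequently $\Pr[Z=2]\ge o(0.172)$ for $Z=(V\cup\mathbf e)_T+W_T$, taken in the law \emph{without} the deletion. In that law the individual means are at most $3/2+2h+d_0$, so Lemma~\ref{lem:final24_rank_split} gives the split $\psi(b(q))$.

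\textbf{Smaller points.} First, $\Pr[e'_T=1,e_T=0]$ is only guaranteed to be about $1/4$, not $1/2-O(h)$; the paper pays the factor $o\approx0.494$ twice. Second, in the matching step $3\cdot0.0048=0.0144$ exceeds $0.01330$, so you must use the actual products, e.g.\ $\gamma_{\rm good,j}h_j<2.72\cdot0.0048<0.0131$.
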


\begin{proof}
Use the source notation $\mathbf e=(u,v)$, $\mathbf f=(v,w)$, and
\[
U:=\delta(u)\setminus\mathbf e,\quad
V:=\delta(v)\setminus(\mathbf e\cup\mathbf f),\quad
W:=\delta(w)\setminus\mathbf f.
\]
Condition on the three endpoint atoms being trees. Orient the bundles
so that Lemma~\ref{lem:final23_electrical_remote} bounds the increase
of the $W$-mean on deleting $\mathbf e$ by $0.22$. Put
\[
\tau_0:=0.073,\quad o:=0.494,\quad
h_{\max}:=0.0055,\quad d_{\max}:=2\cdot10^{-8}.
\]
The endpoint-tree and edge-selection factors exceed $o$ on this box,
since $(1-3d_{\max})(1/2-h_{\max}-3d_{\max})>o$.

Suppose first that deleting $\mathbf e$ raises the $V$-mean by at
least $\tau_0$. The source conditional-mean identities give, after
removing a compulsory success from each of $U_T$ and
$(V\cup\mathbf f)_T$, disjoint projected counts $X,Y$ with
\[
\E[X],\E[Y]\geq1/2-h-3d_0,\quad
1+v\leq\E[X+Y]\leq1.5+2h+3d_0<1.512,
\]
where $v:=\tau_0-2h-3d_0$. Removing compulsory monomial factors
and polarizing preserve stability. Put $q:=\Pr[X+Y=2]$.
Lemma~\ref{lem:bernoulli_estimates} gives a rank-one atom
greater than $0.33$, and \cite[Lemma~2.21]{kko21} gives
$q\geq ve^{-v}$. If $q\geq0.11$, the up/down inequality
\cite[Lemma~A.10]{gkl24} and the quadratic split, with tail product
$0.39(0.70)$, give split greater than $0.245$. The restored probability
is then greater than $0.494(0.11)(0.245)=0.0133133$.

For $q<0.11$, use Lemma~\ref{lem:normalized_rank_tail} at the
worst mean bound $L_0=1/2-h_{\max}-3d_{\max}$ with
$T=T_{0.33}^{\rm ord}$. Choose its fixed coupling from the final
ordinary profile's atom at $q=0.11$, as in
Lemma~\ref{lem:normalized_rank_tail}. Write the resulting bound as
$F(q)$. Put
\[
v_{\min}:=0.073-2h_{\max}-3d_{\max},\quad
q_*:=v_{\min}(1-v_{\min}+v_{\min}^2/2-v_{\min}^3/6).
\]
The function $v(1-v+v^2/2-v^3/6)$ is increasing for
$0\leq v\leq0.073$, so $q\geq q_*$. Both degree-one coefficients
and $\Phi(q)$ stay positive through $q=0.11$, and $W(q)\geq0$.
Concavity reduces the whole interval to the exact comparisons
\[
oF(q_*)>0.01330,\quad oF(0.11)>0.01330.
\]
This proves the first branch uniformly in the box.

In the other branch, select $\mathbf f$ and delete $\mathbf e$.
After removing the compulsory success, the source mean interval is
contained in $[0.975,M]$, where
\[
M:=1.5+0.22+0.073+3(0.0055)+9\cdot2\cdot10^{-8}=1.80950018.
\]
Its atom at one exceeds $0.172$. To see this in every dimension, use
the fixed-mean extremizer reduction of
Lemma~\ref{lem:bernoulli_estimates}. A forced success gives
$2-M>0.172$. One summand is immediate. For $2\leq n\leq9$,
check the unforced binomial atom at the two mean endpoints. For
$n\geq10$, $M<20/11\leq2n/(n+1)$ gives the lower bound
\[
\min\{0.975e^{-0.975},Me^{-M}\}>0.172.
\]

Return to the law with $\mathbf f$ selected but without the extra
deletion. Put $X:=(V\cup\mathbf e)_T$, $Y:=W_T$, and $Z:=X+Y$.
This is a proper union cut, so $Z\geq1$, and the source bounds give
\[
\E[X],\E[Y]\leq1.5+2h+d_0,\quad
\E[Z]\geq2-4h-6d_0=1+\nu_*,\quad \nu_*:=1-4h-6d_0.
\]
Negative association gives probability at least $o$ for the extra
deletion, so $q:=\Pr[Z=2]\geq q_{\min}:=o(0.172)$.
If $q\geq0.11$, the tail product $0.6275(0.4284)$ again gives split
greater than $0.245$, which suffices. Otherwise
Lemma~\ref{lem:final24_rank_split} gives split at least
\[
\psi(b(q)),\quad b(q):=1/2-2h-d_0-q/\nu_*.
\]
The derivative of $q\psi(b(q))$ on $[q_{\min},0.11]$ is at least
\[
\psi(b(0.11))-0.11(1-b(0.11))/\nu_*>0.
\]
Increasing $h,d_0$ decreases $b$; at their largest values,
$oq_{\min}\psi(b(q_{\min}))>0.01330$ by exact substitution.
This proves both branches. Finally,
$\gamma_{\rm good}h<0.01330$, so two incident bad
half bundles are impossible.
\end{proof}

Two good half bundles concentrated on the same degree side cannot both lack the
lower-tail event needed by the mixed-bundle argument. We quantify this alternative
using the separate crossing cutoff.

\begin{lemma}[Same-side mixed-bundle lower tail]
\label{lem:candidate_same_side_tail}
In the setting of \cite[Lemma~5.23]{kko21}, let
$\mathbf e=(v,u)$ and $\mathbf f=(v,w)$ be good half top bundles,
and let $A,B,C$ be the degree partition of $\delta(v)$ with
$x_{\mathbf e}(B),x_{\mathbf f}(B)\leq\omega$. For the fixed parameters
and $d\leq d_0$, one bundle satisfies
\[
\Pr[U_T+(A-\mathbf e)_T\leq1]\geq0.49(1-M_{\rm same}/2)>Kh,
\]
after interchanging the bundles if necessary, where
\[
U:=\delta(u)-\mathbf e,\quad
M_{\rm same}:=1.72+\omega+4d_0+\frac{4\omega+r+d_0}{0.49}.
\]
\end{lemma}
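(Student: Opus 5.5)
Throughout, condition on the endpoint atoms $u,v,w$ being trees, as in the source setting of \cite[Lemma~5.23]{kko21}. The plan has three stages: first bound the mean of $Y:=U_T+(A-\mathbf e)_T$ for a suitably chosen one of the two bundles, then convert that mean bound into a lower tail via the CDF-at-one row of Lemma~\ref{lem:bernoulli_estimates}, and finally compare with $Kh$ by exact rational substitution.

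\textbf{Choosing the bundle.} Apply Lemma~\ref{lem:final23_electrical_remote} to the pair $\mathbf e,\mathbf f$ sharing the endpoint $v$, interchanging the bundles if necessary. This guarantees that deleting one bundle raises the remote endpoint mean by at most $0.22$. That oriented bundle is the one for which the tail bound will be proved, which is why the statement allows an interchange.

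\textbf{Conditioning on the selected bundle.} Both bundles are good half top bundles. Their marginals therefore lie within $h+3d_0$ of $1/2$, and the selection probability exceeds $0.49$ on the parameter domain, as in the factor $o$ of Lemma~\ref{lem:final24_competing}.

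\textbf{The mean bound.} Assemble $\E[Y]$ from four contributions, each charged separately and never per block:
\begin{itemize}
\item $\E[U_T]\leq 3/2+h+d_0$, coming from the half marginal of $\mathbf e$ and the degree constraint at $u$;
\item $x(A-\mathbf e)$, which the degree partition bounds via $x_{\mathbf e}(B),x_{\mathbf f}(B)\leq\omega$, so that $A$ carries at most about $1/2+\omega$ beyond $\mathbf e$;
\item the conditioning shift from selecting or deleting the competing bundle, bounded by the binary covering estimate of Lemma~\ref{lem:binary_covering} together with the $0.22$ electrical bound;
\item the remaining top-mass allowance $r$.
\end{itemize}
Dividing the crossing and residual terms by the conditioning probability $0.49$ should give exactly
\[
\E[Y\mid\cdot]\leq M_{\rm same}=1.72+\omega+4d_0+\frac{4\omega+r+d_0}{0.49}.
\]
Here $1.72=1.5+0.22$, and the quotient by $0.49$ records the mass charged through conditioning on an event of probability at least $0.49$.

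\textbf{From mean to tail.} $Y$ is a count under a strongly Rayleigh law, so it is Poisson-binomial. Markov-type reasoning on the excess, $\Pr[Y\geq2]\leq \E[Y]/2$, then gives $\Pr[Y\leq1]\geq 1-M_{\rm same}/2$. A more careful route uses the CDF-at-one rows of Lemma~\ref{lem:bernoulli_estimates}, since $M_{\rm same}<1.53$ should hold for $\omega<0.0138$. Restoring the outer selection factor $0.49$ yields $0.49(1-M_{\rm same}/2)$. The final inequality $0.49(1-M_{\rm same}/2)>Kh$ is a per-row exact rational check over the thirty-four tuples of Definition~\ref{def:final24_parameters}.

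\textbf{Main obstacle.} The hard step is the mean accounting. The deletion of the competing bundle, the crossing-mass terms $4\omega$, and the restriction to $A-\mathbf e$ must each be charged once and sum exactly to $M_{\rm same}$. In particular it is essential that the $0.22$ remote bound is used for the correct orientation, and that the conditional shifts are taken after projection to the counted union, as required by Lemma~\ref{lem:projected_rank_mean_shift}.
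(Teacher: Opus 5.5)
Your outer structure is right: orient with Lemma~\ref{lem:final23_electrical_remote}, bound a conditional mean by $M_{\rm same}$ on an event of probability at least $0.49$, apply Markov, and multiply back. The gap is the mean bound, which you flag as the hard step but do not carry out. The accounting you sketch cannot produce $M_{\rm same}$. You charge $x(A-\mathbf e)$, which is indeed about $1/2+\omega$, because it contains the competing half bundle's share $x_{\mathbf f}(A)$. Added to $1.72$, this gives a mean above $2$, and Markov becomes vacuous.

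The missing idea is that the conditioning event must be the \emph{deletion} of the competing bundle, $\mathcal H:=\{\mathbf f\notin T,\ u,v,w\text{ trees}\}$, not a selection. On $\mathcal H$ one has $(A-\mathbf e)_T=(A-\mathbf e-\mathbf f)_T$, so $\mathbf f$'s half unit drops out of the count. The residual is then tiny. The degree-partition identities give
\[
x(A)\le1+r+d-x(C),\quad x_{\mathbf e}(C)+x_{\mathbf f}(C)\le x(C),
\]
\[
x_{\mathbf e}(A)+x_{\mathbf f}(A)\ge1-2h-2\omega-x_{\mathbf e}(C)-x_{\mathbf f}(C).
\]
The last inequality is where both hypotheses $x_{\mathbf e}(B),x_{\mathbf f}(B)\le\omega$ enter. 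Together with $h\le\omega$, they yield $x(A-\mathbf e-\mathbf f)\le2h+2\omega+r+d\le4\omega+r+d$. Endpoint-tree conditioning does not increase this external mean. Since the count is nonnegative, conditioning on $\mathcal H$ costs at most a factor $1/\Pr[\mathcal H]\le1/0.49$, and only this residual receives that factor.

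The $U$ term is handled separately. The orientation is chosen precisely so that deleting $\mathbf f$ raises the mean of $U=\delta(u)-\mathbf e$ by at most $0.22$. This gives $\E[U_T\mid\mathcal H]\le x(U)+0.22+3d\le1.72+h+4d$, and summing the two terms gives $M_{\rm same}$. Lemma~\ref{lem:binary_covering} plays no role here: its deletion bound would only give a shift of order $\Pr[\mathbf f\in T]\approx1/2$. Neither does Lemma~\ref{lem:projected_rank_mean_shift}.

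With $\E[Y\mid\mathcal H]\le M_{\rm same}$ in hand, Markov gives $\Pr[Y\le1\mid\mathcal H]\ge1-M_{\rm same}/2$, and then $\Pr[Y\le1]\ge\Pr[\mathcal H](1-M_{\rm same}/2)$, as you say. Your alternative ``more careful route'' is not available. $M_{\rm same}>1.72$, so the claim $M_{\rm same}<1.53$ is false. The CDF-at-one rows of Lemma~\ref{lem:bernoulli_estimates} stop at mean $1.53$ and do not apply, so Markov is the argument.
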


\begin{proof}
Orient the bundles using Lemma~\ref{lem:final23_electrical_remote}.
Under $\mathcal H:=\{\mathbf f\notin T,\ u,v,w\text{ trees}\}$,
\[
\E[U_T\mid\mathcal H]\leq x(U)+0.22+3d,
\quad\Pr[\mathcal H]\geq0.49.
\]
Set $c_{\mathbf e}:=x_{\mathbf e}(C)$ and
$c_{\mathbf f}:=x_{\mathbf f}(C)$. The degree-partition identities
give $x(A)\leq1+r+d-x(C)$,
$c_{\mathbf e}+c_{\mathbf f}\leq x(C)$, and
\[
x_{\mathbf e}(A)+x_{\mathbf f}(A)
\geq1-2h-2\omega-c_{\mathbf e}-c_{\mathbf f}.
\]
Thus $x(A-\mathbf e-\mathbf f)\leq2h+2\omega+r+d\leq4\omega+r+d$.
Endpoint-tree conditioning does not increase marginals on this
external set. Dividing its mean by the probability of $\mathcal H$
and using $x(U)\leq3/2+h+d$ gives
\[
\E[U_T+(A-\mathbf e)_T\mid\mathcal H]
\leq1.72+\omega+4d+\frac{4\omega+r+d}{0.49}\leq M_{\rm same}.
\]
Markov's inequality now proves the claim. The exact comparison at
$d_0$ is positive. No replacement of the residual-mass fraction by
$9h+3d$ is used.
\end{proof}

The oriented-window argument compares unconditional information with a law of
total rank two. We record how full-rank monotonicity transfers a mean bound and a
lower-tail bound to that law.

\begin{lemma}[Rank-two monotonicity bounds]
\label{lem:rank_two_monotonicity}
Let $X,Y,V$ count disjoint coordinate sets of a strongly Rayleigh law,
and put $R:=X+Y+V\geq1$. Suppose $q:=\Pr[R=2]>0$, and write
$p_1:=\Pr[R=1]$. If $\E[X]\leq U$ and
$\Pr[X+V\leq1]\geq\lambda$, then
\[
\E[X\mid R=2]\leq\frac U{1-p_1},\quad
\Pr[Y\geq1\mid R=2]\geq\frac{\lambda-p_1}{1-p_1}.
\]
\end{lemma}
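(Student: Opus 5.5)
The plan is to project the law onto the counted union $X\cup Y\cup V$ and compare the rank-two conditional law with the laws at the other full ranks, using the adjacent-rank monotone coupling of \cite[Theorem~2.10]{kko21}, exactly as in Lemmas~\ref{lem:projected_rank_mean_shift} and~\ref{lem:total_rank_deletion}. Since $R\geq1$, the supported ranks are $1$, $2$, and possibly higher ranks, and the rank support has no gaps.

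\textbf{Mean bound.} The coupling makes $j\mapsto\E[X\mid R=j]$ nondecreasing on the rank support. Write $\mathbb{E}[X]=\sum_j\Pr[R=j]\,\E[X\mid R=j]$ and drop the nonnegative rank-one term. Every remaining rank $j\geq2$ has $\E[X\mid R=j]\geq\E[X\mid R=2]$. This gives
\[
U\geq\E[X]\geq(1-p_1)\,\E[X\mid R=2],
\]
which is the first claim. The monotonicity needs only that one coordinate is added when passing from rank $j$ to rank $j+1$, so the $X$-count cannot decrease.

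\textbf{Tail bound.} Put $g_j:=\Pr[X+V\leq1\mid R=j]$. The same coupling adds coordinates as the rank increases, so $X+V$ is nondecreasing along the coupling and $g_j$ is nonincreasing in $j$. Hence
\[
\lambda\leq\Pr[X+V\leq1]\leq p_1+(1-p_1)\,g_2,
\]
using $g_1\leq1$ and $g_j\leq g_2$ for $j\geq2$. This yields $g_2\geq(\lambda-p_1)/(1-p_1)$. On the event $R=2$, the condition $X+V\leq1$ forces $Y=2-(X+V)\geq1$, so $\Pr[Y\geq1\mid R=2]\geq g_2$, which is the second claim.

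\textbf{Degenerate cases.} If $p_1=1$, then $q=0$, which is excluded, so the denominators $1-p_1$ are positive. If $\lambda<p_1$, the second claim is vacuous.

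\textbf{Main obstacle.} The only delicate step is justifying that the conditional means and tails of a subunion are monotone across all supported ranks of the projected law, not only adjacent ones. I would get this by chaining the one-coordinate couplings between consecutive ranks. This is legitimate because the support is gapless and the coupling is taken after projection onto the counted union. No outside coordinates are conditioned, consistent with the convention stated before Lemma~\ref{lem:cubic_split}.
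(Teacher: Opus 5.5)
Your proof is correct and follows the paper's argument essentially step for step. You project to the counted union, chain the adjacent-rank coupling of \cite[Theorem~2.10]{kko21} to get $\E[X\mid R=j]\geq\E[X\mid R=2]$ and $\Pr[X+V\leq1\mid R=j]\leq\Pr[X+V\leq1\mid R=2]$ for $j\geq2$, drop the nonnegative rank-one term for the mean bound, and bound the rank-one contribution to the tail by $p_1$. One small slip: rank one need not lie in the support, but $p_1=0$ changes nothing in the argument.
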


\begin{proof}
Project to the union of the three counted sets. Consecutive full-rank
laws admit the monotone coupling in \cite[Theorem~2.10]{kko21}.
For every $j\geq2$ in the rank support, it gives
\[
\E[X\mid R=j]\geq\E[X\mid R=2],\quad
\Pr[X+V\leq1\mid R=j]\leq\Pr[X+V\leq1\mid R=2].
\]
Average the first inequality and use nonnegativity at rank one.
For the second, the event is automatic at rank one, so
\[
\lambda\leq p_1+(1-p_1)\Pr[X+V\leq1\mid R=2].
\]
At rank two, $X+V\leq1$ is equivalent to $Y\geq1$.
Since $q>0$, the denominator $1-p_1$ is positive.
\end{proof}

A lower bound on the probability of total rank at most two does not by itself
isolate the rank-two atom. Cutoff-dependent Newton updates provide the more
precise atom bounds needed for the window estimate.

\begin{lemma}[Cutoff-dependent central-rank bounds]
\label{lem:cutoff_central_rank}
Suppose $R\geq1$ is Poisson-binomial,
$\E[R]\in[2.47,3+2h+2r+8d_0]\subset[2.47,3.02]$, and
$\Pr[R\leq2]\geq L_*>0$, where $L_*<0.04$.
Put $C=1-2h-2r-8d_0$ and, for $Q\in\{L_*,0.04,0.1\}$, define
\[
c_{Q,0}=0.248,\quad
c_{Q,j+1}=C-2Q-3Q^2/D(c_{Q,j})\quad(0\leq j<4),\quad
D_Q=D(c_{Q,4}).
\]
At the stated parameters these four updates increase and lie in $(0,1)$.
If $q:=\Pr[R=2]\leq Q$, then $\Pr[R=1]\leq q^2/D_Q$.
Let $s_*$ be the smallest multiple of $10^{-30}$ strictly larger than
$\sqrt{1+4L_*/D_{L_*}}$, and put $m_*=2L_* /(1+s_*)$.
Then $q>m_*$.
\end{lemma}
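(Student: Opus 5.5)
The plan is to shift to $R':=R-1$, which is legitimate because the real-rooted generating polynomial of $R$ has a compulsory factor, so $R'$ is again Poisson-binomial. Write $a_j:=\Pr[R'=j]=\Pr[R=j+1]$. Then $a_0=\Pr[R=1]$, $a_1=q=\Pr[R=2]$, $a_2=\Pr[R=3]$, and $\E[R']\in[1.47,2.02]$.

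\textbf{Step 1: a seed for the rank-three atom.} Lemma~\ref{lem:bernoulli_estimates} (row $1.47\leq\E[Z]\leq2.02$) gives $a_2>c_3=0.248$. This is the starting value $c_{Q,0}$.

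\textbf{Step 2: the Newton relation.} Whenever $a_2\geq c$, the second assertion of Lemma~\ref{lem:atom_newton} gives $a_1^2\geq k(c)a_0a_2$. Hence
\[
a_0\leq\frac{q^2}{k(c)a_2}\leq\frac{q^2}{k(c)c}=\frac{q^2}{D(c)}.
\]

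\textbf{Step 3: improving $a_2$ and the four updates.} To raise the lower bound on $a_2$, I will reuse the identity behind Lemma~\ref{lem:final24_rank_split}, applied to $R$ with $m\leq3+2h+2r+8d_0$. Assigning rank at least four to every mass beyond rank three gives
\[
a_2=p_3\geq4-m-3p_1-2p_2\geq C-2q-3p_1,
\]
with $C=1-2h-2r-8d_0$. When $q\leq Q$, inserting $p_1\leq Q^2/D(c_{Q,j})$ yields $a_2\geq C-2Q-3Q^2/D(c_{Q,j})=c_{Q,j+1}$. By induction each $c_{Q,j}$ is a valid lower bound for $a_2$. Four iterations then give $\Pr[R=1]\leq q^2/D(c_{Q,4})=q^2/D_Q$, using that $D$ is increasing (Lemma~\ref{lem:atom_newton}).

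The claims that the updates increase and stay in $(0,1)$ are finite rational checks at the parameter rows. I would do these with the directed evaluation of $k$ via $r_*$ and the downward rounding to $10^{-20}$. Monotonicity of the sequence follows inductively once $c_{Q,1}>c_{Q,0}$, since $D$ is increasing.

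\textbf{Step 4: the root bound $q>m_*$.} If $q\geq L_*$ we are done, since $m_*<L_*$ because $s_*>1$. Otherwise $q<L_*$, so apply Step 3 with $Q=L_*$:
\[
L_*\leq\Pr[R\leq2]=p_1+q\leq\frac{q^2}{D_{L_*}}+q.
\]
Solving this quadratic in $q$ gives
\[
q\geq\frac{D_{L_*}}{2}\left(\sqrt{1+4L_*/D_{L_*}}-1\right)=\frac{2L_*}{1+\sqrt{1+4L_*/D_{L_*}}}.
\]
Rounding the square root strictly upward to $s_*$ makes this strictly larger than $m_*$.

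\textbf{Main obstacle.} Logically the argument is short. The real work is verifying numerically that the iterated $c_{Q,j}$ stay at least $1/2$ where $k$ is meaningful, and in any case below $1$, and that $D$ is evaluated with correct directed rounding. The conditions $c\leq1$ and $\delta\geq0$ in Lemma~\ref{lem:atom_newton} must be checked at each step, and the case $c_{Q,j}<1/2$ must fall back to $k_0$. I would verify these by exact rational endpoint evaluation at the worst parameters $h,r$ of each row.
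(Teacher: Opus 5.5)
Your proposal is correct and follows the paper's proof step for step. The shared steps are the shift to $R-1$ with the $0.248$ seed from Lemma~\ref{lem:bernoulli_estimates}, the Newton bound $p_1\leq q^2/D(c)$ from the second assertion of Lemma~\ref{lem:atom_newton}, the update $p_3\geq C-2q-3p_1$ from the upper mean, and the root bound using only the cutoff $Q=L_*$. The one cosmetic difference is that you solve $q^2/D_{L_*}+q\geq L_*$ explicitly, whereas the paper verifies $m_*+m_*^2/D_{L_*}<L_*$ and argues by contradiction; the two are equivalent.
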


\begin{proof}
The shifted law $R-1$ is Poisson-binomial. Its mean belongs to
$[1.47,2.02]$, so $p_3:=\Pr[R=3]>0.248$ by
Lemma~\ref{lem:bernoulli_estimates}. If $p_3\geq c_{Q,j}$, then
Lemma~\ref{lem:atom_newton} gives $p_1\leq q^2/D(c_{Q,j})$.
Also $\E[R]\geq4-3p_1-2q-p_3$, so
$p_3\geq C-3p_1-2q\geq c_{Q,j+1}$. This proves the four updates.
For the root bound, $0<m_*<L_*$ and
$m_*+m_*^2/D_{L_*}<L_*$. If $q>L_*$ the conclusion is immediate.
Otherwise use only the cutoff $Q=L_*$: $q\leq m_*$ would imply
$p_1+q\leq m_*+m_*^2/D_{L_*}<L_*$, a contradiction.
The stronger coefficient for this last cutoff is not used outside
$q\leq L_*$.
\end{proof}

We can now combine the selected-state tail and conditional rank-two bounds in the
oriented window argument. The resulting criterion turns a lower-tail condition
into a sufficiently likely bundle happiness event.

\begin{lemma}[Coupled oriented GKL window estimate]
\label{lem:candidate_gkl_window}
In the setting of \cite[Lemma~A.5]{gkl24}, use the parameters in
Definition~\ref{def:final24_parameters}. Let $A,B,C$ be the degree
partition at $u$, and suppose $x_{\mathbf e}(B)\leq\omega$ for the good
half bundle $\mathbf e=(u,v)$. Write
$V:=\delta(v)-\mathbf e$. If
\[
\Pr[(A-\mathbf e)_T+V_T\leq1]+x_{\mathbf e}(B)\geq Kh,
\]
then the bundle is $2$--$1$--$1$ happy with probability greater than $p$.
\end{lemma}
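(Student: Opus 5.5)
We will follow the structure of \cite[Lemma~A.5]{gkl24}, but condition on the bundle being selected before estimating any residual tail. First condition on the endpoint atoms $u,v$ being trees. This costs a factor at least $1-3d_0$, and by the sequential conditioning budget every mean changes by at most $3d_0$. Next condition on $\mathbf e_T=1$, which has probability at least $1/2-h-3d_0$. After these conditions, the unique parallel-edge choice inside $\mathbf e$ factors from the residual weighted-tree law. The $2$--$1$--$1$ event therefore reduces to a residual event: $u$ receives exactly one further tree edge in $A\cup B\cup C$, with the correct side split, and $v$ receives exactly one further edge from $V$. The proof then splits into two cases, according to whether $x_{\mathbf e}(B)$ or the lower-tail probability $\Lambda:=\Pr[(A-\mathbf e)_T+V_T\leq1]$ supplies most of the hypothesis $\Lambda+x_{\mathbf e}(B)\geq Kh$.

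\textbf{Case $\Lambda$ large (window case).} Let $R$ be the residual union count after selection, and project onto $X:=(A-\mathbf e)_T$, $Y:=(C\setminus\text{bundle part})_T$, and $V_T$. Lemma~\ref{lem:candidate_two_tail} gives $\Pr[R\leq2]\geq L_*$, and after deleting the degree remainder it gives the same bound for the projected total rank. The mean of $R$ lies in $[2.47,3.02]$, so Lemma~\ref{lem:cutoff_central_rank} gives $q:=\Pr[R=2]>m_*$ and $p_1\leq q^2/D_Q$. Next apply Lemma~\ref{lem:rank_two_monotonicity}, with the mean bound $\E[X]\leq U$ coming from the degree-partition identities and with $\lambda=\Lambda$ corrected by the binary covering loss of Lemma~\ref{lem:binary_covering}. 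This gives conditional bounds on $\E[X\mid R=2]$ and on $\Pr[Y\geq1\mid R=2]$. Lemma~\ref{lem:mean_sensitive_split} then gives $\Pr[X=0,Y=V=1\mid R=2]\geq(1-U')b$. Multiplying by $q$ and by the outer factors gives the happiness probability. We split the range of $\Lambda$ into three windows (small, middle, large), using the stated Newton coefficients separately on each. In the large window we fall back on the quadratic tail estimate of Lemma~\ref{lem:quadratic_bounds}.

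\textbf{Case $x_{\mathbf e}(B)$ large (direct mixed event).} Here the needed $u$-edge is taken inside $B$, through the crossing part of the bundle. We keep the actual conditioning probability and the independent endpoint-selection factor $x_{\mathbf e}(B)/x_{\mathbf e}$ rather than dividing by them. The remaining requirement, namely that $V$ and the other sides contribute exactly the right counts, is bounded using the same cutoff-dependent products as in the window case. Since $\omega$ bounds $x_{\mathbf e}(B)$, the mean perturbations stay in the domain where the CDF bound at $1.512$ from Lemma~\ref{lem:bernoulli_estimates} applies.

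\textbf{Combination and main obstacle.} The resulting lower bound is linear in $\Lambda+x_{\mathbf e}(B)$ times an explicit constant. Concavity in $q$ (Lemma~\ref{lem:concave_rank_certificate} and Lemma~\ref{lem:normalized_rank_tail}) reduces its verification to endpoint checks, which we evaluate in exact rationals for each of the thirty-four rows to confirm that the bound exceeds $p_j$. We expect the main obstacle to be the middle window. There the rank-two atom is only known to exceed $m_*$, and the monotonicity transfer loses a factor $1/(1-p_1)$. The constants are then tight enough that the atom-dependent Newton coefficient $D_Q$ must be fed through all four updates. If that still fails for some rows, we would refine the window split further before changing the argument.
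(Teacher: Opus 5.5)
\textbf{Gap: the case split.} Splitting on whether $\Lambda:=\Pr[(A-\mathbf e)_T+V_T\le1]$ or $x_{\mathbf e}(B)$ ``supplies most of the hypothesis'' cannot work. Neither case reaches $p$ on its own.

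\emph{Case $x_{\mathbf e}(B)$ large.} Every row has $x_{\mathbf e}(B)\le\omega<Kh$, so this case still needs $\Lambda\ge Kh-\omega$.

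\emph{Case $\Lambda$ large.} The rows are calibrated so that the window bound at the worst configuration, where all of $Kh$ comes from $\Lambda$, clears $p\approx10^{-5}$ by only about $9\cdot10^{-12}$ (column $W$). Discarding any fixed fraction of $\Lambda$ is therefore fatal.

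\emph{The crossing outcome.} Credit it generously with $\theta r_V\ge\theta\psi(A_0-z_s(q))$, where $\theta\ge(x_{\mathbf e}(B)-2d_0)/D$ and $D=1/2+h+3r$. It matches the $\theta=0$ bound, of size roughly $A_0\lambda$ per unit of $q$, only when $x_{\mathbf e}(B)$ is close to $\omega$. At intermediate crossing mass, for example $x_{\mathbf e}(B)\approx Kh/2$, both partial bounds fall well short (row 1 gives about $0.0086$ and $0.0035$ against the needed $0.0096$). Only their sum suffices.

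\emph{The crossing outcome by itself.} Selecting the edge in $B$ requires $X=1,Y=0,V_T=1$. At rank two this has probability $r_V-c_Y$, not $r_V$. It has no usable lower bound until it is added to the non-crossing outcome $(1-\theta)c_Y$. So your final claim that the bound is ``linear in $\Lambda+x_{\mathbf e}(B)$'' is not delivered by the two cases you describe.

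\textbf{The missing idea.} Keep both outcomes of the independent parallel-edge choice in one expression and interpolate in $\theta$, the probability that the selected edge lies in $B$.
\begin{enumerate}
\item Show $x_{\mathbf e}(B)\le D\theta+2d_0$. Then the residual lower tail under $\nu$ is at least $\lambda-D\theta$, with $\lambda=Kh-2r-5d_0$.
\item At total rank two the happiness probability is exactly $\theta r_V+(1-2\theta)c_Y$.
\item The mean-sensitive split gives $c_Y\ge d_\theta b_\theta$.
\item The resulting $H_q(\theta)$ has a certified positive derivative on $[0,\omega/o_*]$. So the worst case is $\theta=0$, and only then are the endpoint checks made.
\end{enumerate}

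\textbf{Further corrections.}
\begin{itemize}
\item No binary-covering loss should be subtracted from the lower tail. Selecting the binary bundle coordinate cannot decrease it, by negative association; only the tree and $C$ conditions cost $2r+3d_0$. A loss of order $1-q$ would destroy the estimate.
\item The windows are ranges of $q=\Pr_\nu[R=2]$, namely $[m_*,0.04]$, $[0.04,0.1]$ and $[0.1,1]$, not ranges of $\Lambda$. The coefficients $D_s,D_m$ are valid only below their cutoffs.
\item Your middle count must be $Y=(B-\mathbf e)_T$, with $C$ deleted, not the $C$-count. Otherwise your target $X=0,Y=V_T=1$ is not $2$--$1$--$1$ happiness.
\item The ``direct mixed event'' is a separate estimate for a bundle whose $A$- and $B$-masses are both at least $\omega$. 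That situation is excluded here by hypothesis.
\end{itemize}
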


\begin{proof}
Condition on the endpoint trees, $C_T=0$, and $(u\cup v)_T$ being
a tree, and call the resulting law $\nu$. The full conditioning has
probability at least
\[
(1-d_0)(1/2-h-3d_0)-(2r+d_0)
\geq o_*:=1/2-h-2r-5d_0>0.49.
\]
The unique selected bundle edge is independent of the residual tree
law. Indeed, after the endpoint trees are contracted, the bundle
consists of parallel edges; conditioning their union to be a tree
selects one of them. The weighted-tree product factorization separates
this choice from the contracted residual tree. Deleting $C$ imposes
separate zero conditions on these factors and preserves independence.
The boundary-law closure in Lemma~\ref{lem:final23_electrical_remote}
extends the identity to the exact max-entropy law.

Set $X:=(A-\mathbf e)_T$, $Y:=(B-\mathbf e)_T$, and use $V_T$ for
the third count. Their union count is
$R:=X+Y+V_T=\delta(u\cup v)_T\geq1$.
Lemma~\ref{lem:candidate_two_tail} gives
\[
\Pr_\nu[R\leq2]\geq L_*>0.
\]
The unrounded conditional-marginal rows in
\cite[proofs of Lemmas~5.22 and~A.1]{kko21} give
\[
2.5-3h-3r-8d_0\leq\E_\nu[R]\leq3+2h+2r+8d_0,
\]
which is contained in $[2.47,3.02]$.
For the lower endpoint, the original residual mean is at least
$3-2h-2r-d_0$; endpoint-tree conditioning and bundle selection lose
at most $1-x(\mathbf e)+2d_0\leq1/2+h+2d_0$.
For the upper endpoint, the tree conditions and deletion cost at most
$2d_0$ and $2r+3d_0$, and selection cannot increase a disjoint count.
These estimates imply the displayed conservative interval.

Lemma~\ref{lem:cutoff_central_rank} gives $q:=\Pr_\nu[R=2]>m_*$.
Write $D_s=D_{0.04}$ and $D_m=D_{0.1}$ for its two interval products.
The same calculation will be used for the direct mixed event.

Let $\theta$ be the probability that the selected bundle edge lies
in $B$. The orientation and the outer probability imply
\[
0\leq\theta\leq\theta_B:=\omega/o_*.
\]
Put $b_{\mathbf e}:=x_{\mathbf e}(B)$ and $D:=1/2+h+3r$.
Before the union-tree condition, the endpoint-tree conditions lose
at most $2d_0$ from the $B$-bundle marginal, and the entire bundle
has marginal at most $D$. It is binary after the endpoint trees are
contracted. The parallel-edge factorization therefore gives selected
$B$ probability at least $(b_{\mathbf e}-2d_0)/D$ whenever the numerator
is positive. Deleting the bundle part of $C$ only renormalizes this
probability upwards, while deleting the residual part does not change it.
Thus in all cases
\[
b_{\mathbf e}\leq D\theta+2d_0.
\]
In the original counts, $A_T=X+1$ when this edge is in $A$, and
$B_T=Y+1$ when it is in $B$. The source conditional-marginal rows give
\[
\begin{aligned}
\E_\nu[X]&\leq1/2+h+2r+8d_0+\theta,\\
\E_\nu[Y]&\geq1/2-h-r-4d_0-\theta_B,\\
\E_\nu[V_T]&\geq1-3r-4d_0,\\
\E_\nu[X+Y],\E_\nu[V_T]&\leq3/2+h+2r+8d_0<1.512.
\end{aligned}
\]
The original individual counts also have means at most $1.512$.
The two sum means $\E_\nu[A_T+V_T],\E_\nu[B_T+V_T]$ are at least
$2-r-2d_0$. Lemma~\ref{lem:bernoulli_estimates}, the
zero-probability product bound, and \cite[Lemma~2.22]{kko21} yield
\[
\begin{aligned}
\Pr_\nu[Y\geq1]&>0.37,&\Pr_\nu[V_T\geq1]&>0.629,\\
\Pr_\nu[X+Y\leq1],\Pr_\nu[A_T\leq1]&>0.4284,&
\Pr_\nu[B_T+V_T\geq2]&>0.59.
\end{aligned}
\]
For example, the first bound follows from
$1-e^{-(1/2-h-r-4d_0-\theta_B)}>0.37$.
The assumed residual lower tail loses at most $2r+3d_0$ on imposing
the tree and $C$ conditions. Selecting the binary bundle coordinate
cannot decrease it, by negative association. Thus
\[
\Pr_\nu[X+V_T\leq1]\geq Kh-b_{\mathbf e}-2r-3d_0
\geq\lambda-D\theta,\quad \lambda:=Kh-2r-5d_0.
\]

Project to the residual union and condition its full rank to two;
denote this strongly Rayleigh law by $\nu_2$. The up/down inequality
\cite[Lemma~A.10]{gkl24}, applied to the two disjoint complementary
blocks each time, gives
\[
\Pr_{\nu_2}[Y\geq1]\geq0.37(\lambda-D\theta),\quad
\Pr_{\nu_2}[V_T\geq1]\geq c:=(0.629)(0.4284).
\]
Whenever $\Pr_{\nu_2}[X=0]>0$, delete the entire $X$ block.
Negative association increases both nonzero probabilities, and the
remaining counts satisfy $Y+V_T=2$. Write
\[
r_V:=\Pr_{\nu_2}[V_T=1],\quad
c_Y:=\Pr_{\nu_2}[X=0,Y=V_T=1].
\]
At rank two, the alternative $V_T=1$ pattern is $X=1,Y=0$.
Independence of the selected bundle edge therefore gives the exact
conditional happiness probability
\[
(1-\theta)c_Y+\theta(r_V-c_Y)
=\theta r_V+(1-2\theta)c_Y.
\]
Both outcomes are retained; in particular $1-2\theta>0$.

Put
\[
Q:=0.04,\quad Q_1:=0.1,\quad z_s(q):=q^2/D_s,\quad z_m(q):=q^2/D_m,\quad
A_0:=1/2-h-2r-8d_0.
\]
The two pooled means $\E_\nu[X+Y],\E_\nu[V_T]$ are at most $2-A_0$.
Lemma~\ref{lem:final24_rank_split} consequently gives
$r_V\geq\psi(A_0-z_s(q))$ for $q\leq Q$, and
$r_V\geq\psi(A_0-z_m(q))$ for $q\leq Q_1$.
The same source marginal rows bound both pooled means below by
$1-3r-4d_0>0.99$. For all $q$, their nonzero tails exceed $0.629$ and their
CDFs at one exceed $0.4284$. Put $a_V:=(0.629)(0.4284)$.
The up/down inequality gives both conditional tails of $V_T$ at least
$a_V$. To justify the nonzero bound on the full retained box, put
$x=0.99249992$. The hypotheses $r<s_0<0.0025$ and $d_0\leq2\cdot10^{-8}$
give $1-3r-4d_0>x$, and exact substitution gives
$1-(\sum_{j=0}^8x^j/j!)^{-1}>0.62935>0.629$.
Since $0.245<a_V$ and
$4(a_V-0.245)(1-a_V)-0.245^2>0$,
Lemma~\ref{lem:quadratic_bounds} gives $r_V>0.245$.
For $m_*\leq q\leq Q$, Lemma~\ref{lem:rank_two_monotonicity} and
Markov's inequality give
\[
\E_{\nu_2}[X]\leq1-d_\theta,\quad
d_\theta:=\frac{A_0-\theta-z_s(q)}{1-z_s(q)},\quad
\Pr_{\nu_2}[Y\geq1]\geq b_\theta:=\frac{\lambda-D\theta-z_s(q)}{1-z_s(q)}.
\]
The exact comparisons
\[
A_0-\theta_B-z_m(Q_1)>0,\quad\lambda-D\theta_B-z_s(Q)>0,
\]
\[
c>1/4,\quad0<\lambda<1/5,\quad A_0<2/3,\quad\theta_B<1/2,
\quad 0<m_*<Q<Q_1,\quad 1-z_m(Q_1)>0
\]
make every conditioning positive and give $0<b_\theta\leq\lambda<1/5$.
Also $1/3<1-d_\theta<1$.
Lemma~\ref{lem:mean_sensitive_split} applied directly to $\nu_2$ gives
$c_Y\geq d_\theta b_\theta$.
For fixed $q$, the resulting lower bound for happiness is
\[
H_q(\theta):=\theta\psi(A_0-z_s(q))
 +(1-2\theta)d_\theta b_\theta.
\]
Differentiating, and bounding each negative term separately, gives
\[
H_q'(\theta)\geq\psi(A_0-z_s(q))
-\frac{(2A_0+1)\lambda+A_0D}{(1-z_s(q))^2}
\geq\psi(A_0-z_s(Q))
-\frac{(2A_0+1)\lambda+A_0D}{(1-z_s(Q))^2}>0.
\]
The minimum is therefore at $\theta=0$. The rank-weighted factor is
\[
F(q):=\frac{q(A_0-z_s(q))(\lambda-z_s(q))}{(1-z_s(q))^2}.
\]
For $z=z_s(q)$, write $F(q)=qf(z)$, with
$f(z)=(A_0-z)(\lambda-z)/(1-z)^2$.
The two factors are positive and decreasing. Differentiation gives
\[
f'(z)\leq-\frac{(A_0-z)(1-\lambda)}{2(1-z)^3},\quad
f''(z)\leq\frac{2(1-A_0)(1-\lambda)}{(1-z)^4}.
\]
Since $F''(q)=z_s'(q)(3f'(z)+2zf''(z))$, the exact comparison
$3(A_0-z_s(Q))(1-z_s(Q))-8z_s(Q)(1-A_0)>0$
proves concavity throughout $[m_*,Q]$.
Restoring the outer condition gives
probability greater than
\[
W_0:=o_*\min\{F(m_*),F(Q)\}
>p.
\]
For $Q\leq q\leq Q_1$, define $d_\theta$ using $z_m$ and use
$b_\theta:=\eta(\lambda-D\theta)$, where $\eta:=0.37$.
The same mean-sensitive lemma gives the corresponding $H_q$, whose derivative is at least
\[
\psi(A_0-z_m(Q_1))
-\eta\frac{(2A_0+1)\lambda+A_0D}{(1-z_m(Q_1))^2}>0.
\]
Again minimize at $\theta=0$ and put $b:=\eta\lambda$.
The derivative of $q(A_0-z_m(q))/(1-z_m(q))$ has the sign of
$A_0-(3-A_0)z_m(q)+z_m(q)^2$, which is positive because $A_0-3z_m(Q_1)>0$.
The restored probability is consequently greater than
\[
W_1:=o_*Q\frac{A_0-z_m(Q)}{1-z_m(Q)}b
>p.
\]
Finally suppose $q\geq Q_1$. The original total count
$A_T+B_T+V_T=R+1$. Applying the up/down inequality to $A_T$ and
$B_T+V_T$ at full rank three gives
\[
\Pr_\nu[A_T\leq1\mid R=2]\geq\alpha:=(0.4284)(0.59).
\]
The independent bundle choice implies
\[
\Pr_\nu[A_T\leq1\mid R=2]
=(1-\theta)\Pr_{\nu_2}[X=0]
+\theta\Pr_{\nu_2}[X\leq1].
\]
Hence $\Pr_{\nu_2}[X=0]\geq\alpha-\theta>0$.
Using $r_V>0.245$ and the same deletion, the happiness probability exceeds
\[
0.245\theta+(1-2\theta)(\alpha-\theta)
 \phi(\eta(\lambda-D\theta)),\quad
\phi(v)=2\sqrt{1-v}-2(1-v).
\]
Here only the probability of $X=0$ has been bounded, not its mean;
we use the quadratic tail bound rather than
Lemma~\ref{lem:mean_sensitive_split}. Since $0\leq\phi(v)\leq v$
and $0\leq\phi'(v)\leq1$ on $[0,1/5]$, its derivative is at least
$0.245-\eta((1+2\alpha)\lambda+\alpha D)>0$.
Minimizing at $\theta=0$ and restoring the outer condition gives
\[
W_2:=o_*Q_1\alpha\phi(b)
>p.
\]
All three comparisons and derivative margins are exact rational
inequalities for every parameter row. The sole square root in $W_2$
is enclosed downward on the $10^{-30}$ grid. Every rank
condition is imposed after projection to the full counted union;
all other conditions are extremal or coordinate deletions.
\end{proof}

\subsection{Common events and the half budget}
\label{sec:common_events_half_budget}

The polygon acceptance construction requires both of two blocks to contribute one
edge even when their total mean exceeds one by only a small amount. We first give
a joint-atom estimate for this regime.

\begin{lemma}[Small-excess joint atom]\label{lem:flow-atom}
Let a homogeneous strongly Rayleigh law have disjoint counts $X,Y$, with
$X+Y\geq1$. Suppose $0<\alpha\leq A_0:=1/150$, $0\leq\varepsilon<\alpha/100$,
and
\[
\begin{gathered}
\E[X],\E[Y]\in[\alpha-\varepsilon,2+3\varepsilon-\alpha],\\
\E[X+Y]\in[1+\alpha,2+2\varepsilon].
\end{gathered}
\]
Then $\Pr[X=Y=1]\geq0.94\alpha^2$.
\end{lemma}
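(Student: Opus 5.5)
The approach follows the pattern of Lemma~\ref{lem:candidate_good_threshold}: project to the union of the two counted blocks, remove the compulsory factor coming from $X+Y\geq1$, and split on the size of the rank-two atom. Projection and polarization give a stable bivariate generating polynomial for $(X,Y)$. Put $Z:=X+Y$, $p_i:=\Pr[Z=i]$ and $q:=p_2$. Since $Z\geq1$, $Z-1$ is Poisson-binomial with mean in $[\alpha,1+2\varepsilon]$. Lemma~\ref{lem:final24_rank_split} with $\nu=\alpha$ then gives
\[
p_1\leq e^{-\alpha},\quad p_1\leq q/\alpha,\quad q^2\geq2p_1p_3.
\]
The last assertion of Lemma~\ref{lem:bernoulli_estimates} gives $\Pr[Z\geq2]\geq\alpha e^{-\alpha}$. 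As a first step I would use the Newton tail bound of Lemma~\ref{lem:newton_tails}, with $c$ a lower bound on $p_1$ obtained from Lemma~\ref{lem:bernoulli_estimates} (the mean of $Z-1$ is at most about $1.01$, so $p_1\geq\Pr[Z-1=0]$ is bounded below by a constant). This bound shows that ranks three and higher carry mass $O(q^2)$, and hence that $q\geq\alpha(1-O(\alpha))$.

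The main step is to lower-bound $\Pr[X=Y=1\mid Z=2]$ by roughly $\alpha$. The case to rule out is that the rank-two mass concentrates on $X=2$ or on $Y=2$. I would use Lemma~\ref{lem:final24_rank_split} with $b:=\alpha-\varepsilon$ against the upper mean bound. As stated, however, that lemma requires $\E[X],\E[Y]\leq2-b$, whereas here we only have $\E[X]\leq2+3\varepsilon-\alpha$. So I would rerun its coupling argument directly. The consecutive-rank coupling of \cite[Theorem~2.10]{kko21} gives
\[
\mu_X(2):=\E[X\mid Z=2]\geq\E[X]-\E[(Z-2)_+]\geq\alpha-\varepsilon-O(q^2)-\text{(rank-one correction)}.
\]
The rank-one correction is the delicate point. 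Since $\mu_X(1)\leq\mu_X(2)$, the rank-one term only helps the lower bound. The upper bound $\mu_X(2)\leq2-\E[Y]+\ldots$ is the one that needs care. Then $\mu_X(2),\mu_Y(2)\geq\alpha-\varepsilon-O(\alpha^2)$, and Lemma~\ref{lem:quadratic_bounds} (the $\psi$ bound) gives a conditional split of at least $\psi(\alpha-\varepsilon-O(\alpha^2))\approx\alpha$. Multiplying by $q\geq\alpha-O(\alpha^2)$ gives $\Pr[X=Y=1]\geq\alpha^2(1-O(\alpha))(1-\varepsilon/\alpha)$. With $\alpha\leq1/150$ and $\varepsilon<\alpha/100$, the loss factor should stay above $0.94$.

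There is an alternative if the pooled mean gets close to $2$. If $\E[Z]$ is near $2+2\varepsilon$, then $q$ is large: by the target-two bound of Lemma~\ref{lem:bernoulli_estimates} with $1.2\leq\E[Z]\leq2.5$, $q>0.1637$. In that regime I would instead take the tail route. The up/down inequality \cite[Lemma~A.10]{gkl24} gives conditional tails of $X$ at one of order $\alpha$, and the rational test of Lemma~\ref{lem:quadratic_bounds} then gives split at least $\phi(c\alpha)\gtrsim c\alpha$. The total is of order $0.16c\alpha\gg\alpha^2$ because $\alpha\leq1/150$. I would therefore split into the two cases $q\geq Q$ for a fixed moderate constant $Q$ such as $0.1$, and $q<Q$.

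I expect the main obstacle to be the small-$q$ case with the constant $0.94$. There the rank-one atom $p_1$ and the excess beyond rank two both feed into the conditional-mean loss, and the first-order losses $\varepsilon$, $\alpha/2$ (from $\psi$) and $\alpha$ (from $q\geq\alpha e^{-\alpha}$) must together cost less than $6\%$. If the crude Newton tail is too lossy, I would substitute the two-step profiles of Definition~\ref{def:tail_profiles} and the concave certificate of Lemma~\ref{lem:normalized_rank_tail}, which reduce the verification to endpoint checks in $q$.
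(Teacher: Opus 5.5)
\textbf{There is a genuine gap in your large-$q$ branch.} Your small-$q$ branch is essentially sound. It needs only the lower bound $\E[X\mid Z=2]\geq\E[X]-\E[(Z-2)_+]$ and its mirror for $Y$, because the two conditional means sum to $2$. Also, $q\geq\alpha e^{-\alpha}$ follows directly from $\Pr[W=1]\geq\mu e^{-\mu}$ applied to $W=Z-1$.

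The large-$q$ branch, which you treat as easy, actually contains the extremal case of the lemma. Take independent $X\sim\mathrm{Bern}(\alpha)$ and $Y=1+\mathrm{Bern}(1-\alpha)$, homogenized by complementary coordinates. Every hypothesis holds, with $\E[Y]=2-\alpha$ and $\E[Z]=2$. Yet $\Pr[X=Y=1]=\alpha^2$, while $q=(1-\alpha)^2+\alpha^2\approx1$ and $\Pr[X\geq1\mid Z=2]=\alpha^2/q$.

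\begin{itemize}
\item The conditional tail of $X$ at one is therefore of order $\alpha^2$, not $\alpha$, so no bound of order $\alpha$ such as your $0.16c\alpha$ can hold.
\item The up/down inequality gives only $\Pr[X\geq1]\Pr[Y\leq1]$, and $\Pr[Y\leq1]$ is itself about $\alpha$ whenever $\E[Y]$ sits near its upper endpoint and $Y$ is concentrated on $\{1,2\}$.
\item The mean route also fails here. The quantity $\E[X]-\E[(Z-2)_+]=\E[X]-(\E[Z]-2)-p_1$ equals $\alpha^2$, and certifying it positive in general would need an upper bound on $p_1$ that the hypotheses do not supply.
\end{itemize}

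With $\varepsilon$ near $\alpha/100$, $X\sim\mathrm{Bern}(\alpha-\varepsilon)$ and $Y=1+\mathrm{Bern}(1-\alpha+3\varepsilon)$ give $(\alpha-\varepsilon)(\alpha-3\varepsilon)\approx0.96\alpha^2$. So the constant $0.94$ leaves only a few percent of room exactly where decomposing through $Z=2$ loses the factor $q$ (known only to exceed $0.1637$) together with the constants in both tail factors.

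\textbf{The missing idea is to decompose through the count of smaller mean, not through $Z$.} Put $a:=\E[X]\leq\E[Y]=:b$ and $m=a+b$, and write $\Pr[X=Y=1]=\Pr[X=1]\Pr[Y=1\mid X=1]$. After projecting away the $X$ block, $\{X=1\}$ is a full-rank condition on the complement. Lemma~\ref{lem:projected_rank_mean_shift} then gives $b-p_0\leq\E[Y\mid X=1]\leq m-1+p_0$ with $p_0\leq e^{-a}$.

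\begin{itemize}
\item When $a^2\leq\alpha/2$, this confines the conditional mean to roughly $[0.75\alpha,\,2-0.96\alpha]$. The bound $\mu e^{-\mu}$, or the near-two atom bound \eqref{eq:near_two_atom}, then gives $\Pr[Y=1\mid X=1]$ of order $\alpha$ with an explicit constant, and $\Pr[X=1]\geq ae^{-a}$. In the example this recovers $\alpha\cdot\alpha$ exactly.
\item When $a^2\geq\alpha/2$, the paper instead uses the GKL capacity bound $\operatorname{Cap}_{(1,1)}\geq a\alpha$ together with the factor $e^{-1}/2$ from Lemma~\ref{lem:candidate_compulsory_capacity}.
\end{itemize}

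\textbf{Two smaller points.}
\begin{itemize}
\item Your fixed cutoff $Q=0.1$ is too large for the mean route. The Newton excess bound is of order $q^2$ and already exceeds $\alpha\leq1/150$ for $q$ of order $\sqrt{\alpha}$. Switching to tails at a moderate multiple of $\alpha$ (say $3\alpha$) does work, with $X$ the smaller-mean count. The reason is that $q\leq0.1$ forces $\E[Z]\leq1.12$, which makes $\Pr[Y\leq1]$ a constant.
\item Your constant lower bound on $p_1$ holds only in the small-$q$ range: a forced success in $Z-1$ makes $p_1=0$.
\end{itemize}
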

\begin{proof}
Write $a=\min\{\E[X],\E[Y]\}$, $b$ for the other mean, and $m=a+b$;
rename $X$ so that $\E[X]=a$. If $a<1$, then
$\Pr[X=1]\geq ae^{-a}>0$. Project away the entire $X$ block. If the original
fixed rank is $n$, conditioning $X=1$ is conditioning the full remaining rank
to $n-1$. Thus Lemma~\ref{lem:projected_rank_mean_shift} gives, with $p_0=\Pr[X=0]\leq e^{-a}$,
\begin{equation}\label{eq:flow-mean}
b-p_0\leq\E[Y\mid X=1]\leq m-1+p_0.
\end{equation}
This uses homogeneity and projection to the complement, not arbitrary
subset-rank conditioning while retaining the selected coordinates.

First suppose $a\leq1.3\alpha$. Then $a\geq0.99\alpha$ and $a<1$.
Using $e^{-a}\leq1-a+a^2/2$, the conditional lower mean is at least
$\alpha-a^2/2>0.963\alpha$. Its upper mean is at most $2-d_*\alpha$, where
\[
d_*:=0.97-0.99^2A_0/2>0,
\quad d_*\alpha<0.02,
\quad d_*(1-d_*A_0/2)>0.963.
\]
For conditional mean at most one, use the bound $\mu e^{-\mu}$;
for conditional mean at least one, use the near-two atom bound in Lemma~\ref{lem:bernoulli_estimates} with
$\delta=d_*\alpha$. Both give an atom at least
$0.963\alpha e^{-0.963\alpha}$. Hence the joint atom is at least
\[
0.99(0.963)\alpha^2e^{-(0.99+0.963)\alpha}
\geq0.99(0.963)e^{-(0.99+0.963)/150}\alpha^2>0.94\alpha^2.
\]

Next suppose $a\geq1.3\alpha$ and $a^2\leq\alpha/2$. Then $a<0.06$.
The conditional lower mean in \eqref{eq:flow-mean} is at least $0.75\alpha$.
Its deficit from two is at least
\[
a-a^2/2-2\varepsilon
\geq(0.97(1.3)-0.02)\alpha>\alpha.
\]
The same two scalar branches give conditional atom at least
$0.75\alpha e^{-0.75\alpha}$. The joint atom is at least
\[
1.3(0.75)e^{-(1.3+0.75)/150}\alpha^2>0.94\alpha^2.
\]

Finally suppose $a^2\geq\alpha/2$. Since $\alpha\leq1/150<1/128$, we have
$a>8\alpha$. All subset means are strictly between their neighboring target
integers for target $(1,1)$. The asymmetric capacity distances satisfy
\[
\delta_1=a,\quad\delta_2=m-1\geq\alpha,
\quad\epsilon_1=2-b\geq a-2\varepsilon,
\quad\epsilon_2=3-m\geq1-2\varepsilon.
\]
Each of $\delta_1\delta_2$, $\epsilon_1\delta_1$ and
$\epsilon_1\epsilon_2$ is at least $a\alpha$.
For the last product use
\[
a-2\varepsilon\geq(399/400)a,
\quad(399/400)(1-A_0/50)>A_0\geq\alpha.
\]
The capacity form of \cite[Theorem 4.1 and Corollary 5.13]{gkl24} gives
$\operatorname{Cap}_{(1,1)}\geq a\alpha$. Projection and $X+Y\geq1$ give zero constant term.
The compulsory bivariate coefficient bound
Lemma~\ref{lem:candidate_compulsory_capacity} gives joint atom at least
$(e^{-1}/2)a\alpha>4e^{-1}\alpha^2>0.94\alpha^2$.
The three closed cases cover all $a$; the first and third need not both apply,
and their separating comparison $1.3^2A_0<1/2$ is certified.
\end{proof}

Accepting selected boundary pairs must also preserve the side marginals
approximately. The next elementary bound converts coordinate domination into
control of the sum of absolute marginal changes.

\begin{lemma}[Accepted-side variation]\label{lem:variation}
If $x,v\geq0$, $0\leq q<1$, $\sum v_e=1$, $\sum x_e\in[1-\varepsilon,1+\varepsilon]$,
and $(1-q)v_e\leq x_e$, then $\sum_e|v_e-x_e|\leq2q+\varepsilon$.
\end{lemma}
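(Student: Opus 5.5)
The plan is to split each term $|v_e-x_e|$ according to the sign of $v_e-x_e$ and to bound the positive part coordinatewise using the domination hypothesis. Write $E_+:=\{e:v_e>x_e\}$ and $E_-:=\{e:v_e\leq x_e\}$. Then
\[
\sum_e|v_e-x_e|=\sum_{e\in E_+}(v_e-x_e)+\sum_{e\in E_-}(x_e-v_e).
\]
On $E_+$ the hypothesis $(1-q)v_e\leq x_e$ gives $v_e-x_e\leq qv_e$. Summing and using $v\geq0$, $\sum_ev_e=1$, I obtain
\[
P:=\sum_{e\in E_+}(v_e-x_e)\leq q\sum_{e\in E_+}v_e\leq q.
\]

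Next I convert the negative part into the positive part plus the total-mass discrepancy. The identity
\[
\sum_{e\in E_-}(x_e-v_e)=\sum_e(x_e-v_e)+P=\sum_ex_e-1+P
\]
holds, and $\sum_ex_e\leq1+\varepsilon$ bounds the negative part by $\varepsilon+P$. Adding the two parts gives $\sum_e|v_e-x_e|\leq2P+\varepsilon\leq2q+\varepsilon$, which is the claim. Only the upper half of the mass interval is used; the lower bound $1-\varepsilon$ is not needed, and nonnegativity of $x$ enters only through the hypothesis itself.

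There is no real obstacle here. The only point to check is that the domination is applied on $E_+$ alone, so that the bound $qv_e$ is summed against at most the total $v$-mass one rather than being doubled. The sets involved are those on which the sums are defined, so I would also note that if the coordinate sets are infinite the sums converge absolutely by the same bounds.
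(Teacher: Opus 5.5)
Your proof is correct and is essentially the paper's argument. The paper sets $w:=x-(1-q)v\geq0$, writes $v-x=qv-w$, and bounds $\sum_e|qv_e-w_e|\leq q+\sum_ew_e=2q+\sum_ex_e-1$; this is your sign-splitting computation packaged as a single triangle inequality, and both versions use only the upper bound $\sum_ex_e\leq1+\varepsilon$.
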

\begin{proof}
Put $w=x-(1-q)v\geq0$. Then
$\sum|v-x|=\sum|qv-w|\leq q+\sum w=2q+\sum x-1\leq2q+\varepsilon$.
\end{proof}

We combine the joint-atom estimate with a flow construction to select acceptable
polygon boundary pairs. The construction controls the marginal changes on each
side separately.

\begin{lemma}[Marginal-preserving polygon event]\label{lem:candidate_polygon_flow}
Let a homogeneous strongly Rayleigh law have disjoint counts $A_T,B_T$,
means in $[1-\varepsilon,1+\varepsilon]$, and $A_T+B_T\geq1$.
Suppose $0\leq\varepsilon$ and $330\varepsilon<\zeta<1/200$. Put
\[
q=(\zeta-\varepsilon)/2,\quad b=q-\varepsilon,\quad c=q-5\varepsilon,
\quad v=\frac{27}{4}(0.94)c^2.
\]
There is a pair-dependent accepted event on which $A_T=B_T=1$, with probability
at least $v(1-q)$, both separate marginal budgets at most $\zeta$, and accepted
marginals at most $x_e/(1-q)$.
\end{lemma}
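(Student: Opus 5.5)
Write $a_e:=\Pr[A_T=B_T=1,\ e\in T]$ for $e$ in $A$, and $b_f$ similarly for $f$ in $B$. For a pair $(e,f)\in A\times B$ let $\pi_{ef}:=\Pr[A_T=B_T=1,\ e,f\in T]$. The accepted event will be a thinning: on the event $\{A_T=B_T=1,\ e,f\in T\}$ we accept with an independent coin of probability $\theta_{ef}/\pi_{ef}$. Here $0\leq\theta\leq\pi$ is a flow on the complete bipartite graph $A\times B$, to be constructed. The accepted marginal of $e$ is then $\sum_f\theta_{ef}\leq a_e\leq x_e$, and the total accepted probability is $|\theta|:=\sum\theta_{ef}$.

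The statement asks for the normalized accepted side laws $v_e:=\sum_f\theta_{ef}/|\theta|$ to satisfy $(1-q)v_e\leq x_e$. Then Lemma~\ref{lem:variation} gives separate budgets $\sum_e|v_e-x_e|\leq2q+\varepsilon=\zeta$ on each side, with $\sum_ex_e\in[1-\varepsilon,1+\varepsilon]$. So the goal is a flow $\theta\leq\pi$ with $|\theta|\geq v(1-q)$ and row sums at most $x_e|\theta|/(1-q)$, and similarly for columns. It suffices to find one of value exactly $\Theta:=v(1-q)$ with capacities $x_e\Theta/(1-q)$ on $A$-vertices and $x_f\Theta/(1-q)$ on $B$-vertices. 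This is a max-flow problem, and I would verify it by checking every cut.

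A cut is specified by $S\subseteq A$ and $S'\subseteq B$. Its capacity is
\[
\frac{\Theta}{1-q}\bigl(x(A\setminus S)+x(B\setminus S')\bigr)+\pi(S\times S').
\]
If $x(A\setminus S)+x(B\setminus S')\geq1-q$, the cut already has capacity at least $\Theta$. Otherwise $x(S)\geq q-\varepsilon=b$ and $x(S')\geq b$ (using the means $\leq1+\varepsilon$), and we need $\pi(S\times S')\geq\Theta$. Now $\pi(S\times S')=\Pr[S_T=S'_T=1,\ (A\setminus S)_T=(B\setminus S')_T=0]$. To bound this, I would delete the complementary blocks, which costs probability at least $\Pr[(A\setminus S)_T+(B\setminus S')_T=0]$; this is controlled from below because their means total at most $2-2b+2\varepsilon$. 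I would then apply Lemma~\ref{lem:flow-atom} to the deleted law with counts $X=S_T$, $Y=S'_T$. A cleaner route, which I would try first, is to keep the law and condition on full rank of the four-block projection, as in Lemma~\ref{lem:projected_rank_mean_shift}. Deletion raises the means of $S,S'$ by at most the deleted mass, by Lemma~\ref{lem:binary_covering}, so the hypotheses of Lemma~\ref{lem:flow-atom} hold with $\alpha\approx c=q-5\varepsilon$ after absorbing the $\varepsilon$-losses. These losses are: the $\pm\varepsilon$ means, the $2\varepsilon$ from deletion, and the requirement $\varepsilon<\alpha/100$, which follows from $330\varepsilon<\zeta$.

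The factor $27/4$ must come from optimizing the deletion probability against the atom. If the complementary masses are $1-s$ and $1-s'$, the deletion probability behaves like a product whose worst case, against $0.94(\text{excess})^2$, produces $\max_t t^2(1-t)$-type constants with value $4/27$. I expect this to be the main obstacle: making the joint-atom excess $\alpha$ depend on $x(S)+x(S')-1$ correctly, splitting into the case where the cut excess is small (use Lemma~\ref{lem:flow-atom} with $\alpha\geq c$) and the case where it is large (the atom is then bounded below by a constant via the capacity branch). The first alternative in the cut check should absorb the regime where $S,S'$ are small. I would try to show that the worst cut has $x(S)+x(S')=1+\alpha$ and that the constant is exactly $\tfrac{27}{4}(0.94)c^2$ there. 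Finally, the max-flow min-cut theorem yields $\theta$, and the thinning coins are independent, which preserves the strongly Rayleigh conditional laws used later.
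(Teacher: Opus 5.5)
Your framework matches the paper's: the network with outer capacities $vx_e$, $vx_f$ and middle capacities $\pi_{ef}$ is the paper's network rescaled by $\Pr[A_T=B_T=1]$, and Lemma~\ref{lem:variation} then gives the two budgets. The gap is in the cut check, which is the whole content of the lemma.

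\textbf{The reduction you propose is false.} In the case $x(A\setminus S)+x(B\setminus S')<1-q$ you require $\pi(S\times S')\geq\Theta$, so the exterior capacity $v\,(x(A\setminus S)+x(B\setminus S'))$ is thrown away. A counterexample:
\begin{itemize}
\item Take independent Bernoulli coordinates, homogenized by complementary dummies, and set $\varepsilon=0$.
\item Let $A=\{e_1,e_2\}$ with $x_{e_1}=1-q-\delta$ and $x_{e_2}=q+\delta$, and let $B=\{f\}$ with $x_f=1$.
\item Take the cut $S=\{e_2\}$, $S'=\{f\}$. Its exterior mass is $1-q-\delta<1-q$.
\item Yet $\pi(S\times S')=(q+\delta)^2$, which is below $\Theta=\tfrac{27}{4}(0.94)q^2(1-q)$ for small $\delta$.
\end{itemize}
Structurally, near the threshold the best available bound on $\pi(S\times S')$ is of order $c^3$ (a deletion probability of order $c$ times a joint atom of order $c^2$), while $\Theta$ is of order $c^2$. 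So no case split of this form can close.

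Your deletion estimate also fails as written. The separate bounds $x(S),x(S')\geq b$ only bound the complementary mass by $2-2b+2\varepsilon>1$, so Markov's inequality gives no positive deletion probability.

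\textbf{The missing idea is to let the middle arcs pay only the exterior deficit.}
\begin{itemize}
\item Put $\alpha:=x(S)+x(S')-1$ and $\xi:=\alpha-q+2\varepsilon$. The exterior capacity is at least $v(1-\alpha-2\varepsilon)=v(1-q-\xi)$. If $\xi\leq0$ you are done; otherwise it suffices to show $\pi(S\times S')\geq v\xi$.
\item The complement of $S\cup S'$ in $A\cup B$ has mass at most $1+2\varepsilon-\alpha$. Markov's inequality then gives deletion probability at least $\alpha-2\varepsilon$.
\item On the deleted law, Lemma~\ref{lem:flow-atom} applies: means lie in $[\alpha-\varepsilon,2+3\varepsilon-\alpha]$, the sum lies in $[1+\alpha,2+2\varepsilon]$, and $330\varepsilon<\zeta$ gives $\alpha>100\varepsilon$. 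This yields, for $\alpha\leq1/150$,
\[
\pi(S\times S')\geq0.94(\alpha-2\varepsilon)\alpha^2\geq0.94(\alpha-3\varepsilon)^3=0.94(\xi+c)^3 .
\]
\item The required inequality $0.94(\xi+c)^3\geq v\xi$ is the identity $(\xi+c)^3-\tfrac{27}{4}c^2\xi=(\xi-c/2)^2(\xi+4c)\geq0$.
\end{itemize}
So $27/4=\min_{\xi>0}(\xi+c)^3/(c^2\xi)$. It comes from a linear Markov factor times a quadratic atom, balanced against a linear deficit. It does not come from a product of two complementary deletion probabilities.

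You also need a separate branch for $\alpha>1/150$. There, apply Lemma~\ref{lem:flow-atom} at $A_0=1/150$ while keeping the actual deletion factor $\alpha-2\varepsilon$, and use $\xi\leq\alpha+2\varepsilon$. The ratio $(\alpha-2\varepsilon)/(\alpha+2\varepsilon)$ increases in $\alpha$, so an exact check at $\alpha=A_0$ with $\varepsilon<1/66000$ and $c<1/400$ closes this case.
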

\begin{proof}
For $A'\subseteq A,B'\subseteq B$ of total mean at least $1+\alpha$, delete
their complement in $A\cup B$. The deletion probability is at least
$\alpha-2\varepsilon$. Homogeneous deletion domination gives individual means
in $[\alpha-\varepsilon,2+3\varepsilon-\alpha]$ and sum in
$[1+\alpha,2+2\varepsilon]$. The compulsory union survives.
Put $A_0:=1/150$. For $100\varepsilon<\alpha\leq A_0$, Lemma~\ref{lem:flow-atom} gives the
unconditioned target probability at least
\[
0.94(\alpha-2\varepsilon)\alpha^2.
\]
For $\alpha\geq A_0$, use that lemma at $A_0$, but retain the actual deletion
probability. The lower bound is then
$0.94(\alpha-2\varepsilon)A_0^2$.

Use the bipartite network of \cite[Proposition 5.6]{kko21}. Its middle
capacities are $y_{ef}=\Pr[e,f\in T\mid A_T=B_T=1]$, and its outer capacities
are $\beta x_e,\beta x_f$, where $\beta=v/p_{11}$.
The full-side case above proves $p_{11}>0$. For a network cut put
$\gamma=x(S_A)-x(S_B)$. Exterior arcs contribute at least
$\beta(1-\varepsilon-\gamma)$. If $\gamma\leq b$, this suffices.
Otherwise set $\alpha=\gamma-\varepsilon$ and $x=\gamma-b>0$.
The domain gives $q>102\varepsilon$, so $\alpha>100\varepsilon$.
For $\alpha\leq A_0$, the middle crossing probability is at least
\[
0.94(\alpha-3\varepsilon)^3
=0.94(x+c)^3\geq v x,
\]
using
$(x+c)^3-(27/4)c^2x=(x-c/2)^2(x+4c)\geq0$.

For $\alpha\geq A_0$, note $\varepsilon<1/66000$, $c<q<1/400$ and
$x=\alpha-q+2\varepsilon\leq\alpha+2\varepsilon$.
The ratio $(\alpha-2\varepsilon)/(\alpha+2\varepsilon)$ increases with $\alpha$,
and the exact inequality
\[
(A_0-2/66000)A_0^2>
\frac{27}{4}(1/400)^2(A_0+2/66000)
\]
therefore gives $0.94(\alpha-2\varepsilon)A_0^2\geq vx$.
This retains the actual deletion factor in the large-cut case.
Every source--sink cut has capacity at least $\beta(1-q)$.

Take a flow of value at least this number. Given a selected pair, accept with
probability its flow divided by $y_{ef}$; zero-capacity pairs never occur.
Flow conservation gives accepted side mass one and coordinate domination
$x_e/(1-q)$. Lemma~\ref{lem:variation}, separately on each side, gives
$2q+\varepsilon=\zeta$. Acceptance uses only the pair and one independent coin.
Independent thinning preserves its conditional law.
\end{proof}

The preceding tools now yield the probability bounds needed for the reduction
events. We collect these bounds for all bundle types while keeping the top and
polygon event rates separate.

\begin{lemma}[Common-event probabilities]
\label{lem:candidate_common_probability}
\label{lem:final24_probability}
At the parameters in Definition~\ref{def:final24_parameters}, the polygon
event has probability at least $P$, and the small-, large-, mixed-,
two-sided mixed-, and paired-bundle happiness events have probability at
least $p$. Each event can be thinned to its designated probability without
changing its conditional law.
\end{lemma}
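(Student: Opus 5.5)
The proof will go event by event, following the KKO classification of bundle types in \cite[Section~5]{kko21} and its GKL refinement \cite[Section~4.2]{gkl24}. In each case we replace the source probability bound by the corresponding sharper lemma proved above, and then compare with the row's $p$ or $P=sp$ by exact rational substitution for each of the thirty-four rows of Definition~\ref{def:final24_parameters}. Thinning comes last and uniformly. Each event $\mathcal E$ is certified with $\Pr[\mathcal E]\geq p_{\mathcal E}$, where $p_{\mathcal E}$ is the designated rate. We then accept on $\mathcal E$ with an independent coin of bias $p_{\mathcal E}'/\Pr[\mathcal E]$. Because the coin is independent of the tree, the accepted event has probability exactly $p_{\mathcal E}'$ and the same conditional tree law. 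For paired reductions both members share one coin, so the synchronization is preserved.

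\textbf{Polygon event.} Apply Lemma~\ref{lem:candidate_polygon_flow} to the two boundary counts of the polygon child after the endpoint-tree and outer conditioning of \cite[Proposition~5.6]{kko21}. The hierarchy error gives means in $[1-\varepsilon,1+\varepsilon]$ with $\varepsilon$ of order $d_0$, and we take $\zeta$ to be the row's polygon budget. It remains to check $330\varepsilon<\zeta<1/200$ and that the outer conditioning probability times $v(1-q)$ is at least $P$. Both separate side budgets come directly from that lemma.

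\textbf{Half-bundle events.} For good half bundles, the $2$--$2$ event has probability at least $\gamma_{\rm good}h$ by definition, and $\gamma_{\rm good}h\geq p$ is a direct check. Mixed and two-sided mixed events go through the trichotomy. Suppose the crossing mass $x_{\mathbf e}(B)$ exceeds $\omega$. Then the source argument charges the crossing directly, using Lemma~\ref{lem:candidate_compulsory_capacity} with its factors $e^{-2}/2$ and $e^{-1}/2$. Otherwise, Lemma~\ref{lem:candidate_same_side_tail} supplies the lower-tail premise $\geq Kh$, and Lemma~\ref{lem:candidate_gkl_window} then gives $2$--$1$--$1$ happiness with probability $>p$. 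The two opposite selected-bundle events are disjoint, so their probabilities add within each separator branch.

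\textbf{Small and large non-half events.} Condition on full projected rank. Lemma~\ref{lem:projected_rank_mean_shift} charges the mean shifts to the rank-excess budget. In the small case, the concave certificates of Lemmas~\ref{lem:concave_rank_certificate} and~\ref{lem:normalized_rank_tail} together with the tail profiles reduce the bound to two endpoint checks. The complementary range uses the up/down tails and Lemma~\ref{lem:quadratic_bounds}. In the large case, the rank-three event is handled by Lemma~\ref{lem:cubic_split} or Lemma~\ref{lem:cubic_two_tail}, fed by Newton bounds from Lemma~\ref{lem:atom_newton}. Paired bundles combine two such branches through an asymmetric separator on crossing mass, split over a finite cover of closed intervals, with a vertex calculation at each endpoint.

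The main obstacle is the paired-bundle and two-sided mixed cases. There several conditionings stack, and the separator must be chosen so that both branches clear $p$ at every row simultaneously. The first approach I would try is to keep the crossing-dependent excess moments explicitly. Each branch is then monotone or concave in the crossing mass, which reduces the verification to finitely many exact rational endpoint comparisons per row.
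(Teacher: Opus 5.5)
Your treatment of the polygon event, the one-sided mixed case and the thinning step agrees with the paper. For the polygon event this means Lemma~\ref{lem:candidate_polygon_flow} with $\zeta=\epsilon_M$, $\varepsilon=2d$, times the outer factor $1-3d/2$. For the one-sided mixed case it means Lemma~\ref{lem:candidate_same_side_tail} supplying the premise of Lemma~\ref{lem:candidate_gkl_window}.

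The genuine gap is the two-sided mixed event, where the half bundle $\mathbf e=(u,v)$ has mass at least $\omega$ in both degree parts. You propose to ``charge the crossing directly'' with Lemma~\ref{lem:candidate_compulsory_capacity} and its factors $e^{-2}/2,e^{-1}/2$. Those factors belong to the paired-bundle lemma, and a capacity argument cannot work here. Take the law $\nu$ conditioned on the endpoint and union trees and on $C_T=0$. The residual counts $X=(A-\mathbf e)\cup(B-\mathbf e)$ and $Y=\delta(v)-\mathbf e$ have target $(1,1)$. However, their total mean is only known to lie in $[5/2-3h-3r-8d_0,\,3+2h+2r+8d_0]\subset[2.47,3.02]$. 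So the subset-mean hypothesis $\E[X+Y]<3$ that any capacity bound needs cannot be certified. Nothing in your argument prevents the residual rank from concentrating at three, which would make the target event negligible.

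The missing idea is to extract a lower tail from goodness itself. The pointwise binary-covering inequality converts the $2$--$2$ probability $\geq\gamma_{\rm good}h$ into the selected-state bound $\Pr_\nu[X+Y\leq2]\geq L_*>0.00209$ (Lemma~\ref{lem:candidate_two_tail}). The four cutoff-dependent Newton updates of Lemma~\ref{lem:cutoff_central_rank} then give $\Pr_\nu[X+Y=2]>m_*$. Lemma~\ref{lem:final24_rank_split} gives the conditional split $\psi(A_D-q^2/D)$ with $A_D=1/2-h-2r-8d_0$. The function $G_D(q)=q\psi(A_D-q^2/D)$ is increasing on $[m_*,0.04]$ and $[0.04,0.1]$. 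Combined with the $0.245$ up/down quadratic split for larger $q$, this reduces everything to $q=m_*$. Because the parallel-edge choice is independent of the residual tree, you then multiply by $P_0>0.49$ and the endpoint-selection factor $P_e=(\omega-2d_0)/(1/2+h+3r)$. This gives $P_0P_em_*\psi(A_D-m_*^2/D_s)>p$.

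Two smaller points. You cannot route the mixed events through Lemma~\ref{lem:candidate_structural_trichotomy}, since that lemma is proved from this one. Also, the disjoint oriented events and the asymmetric separator are internal to Lemma~\ref{lem:candidate_reparam_527}, which this proof simply cites for the paired event. That separator acts on the binary block $\delta(u)\cap\delta(w)$, not on crossing mass.
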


\begin{proof}
For each polygon cut $S$, use the construction in
\cite[Definition~5.8]{kko21}, replacing its max-flow event by
Lemma~\ref{lem:candidate_polygon_flow} with
$\zeta=\epsilon_M$ and $\epsilon=2d$.
The unchanged polygon conditioning gives a homogeneous strongly Rayleigh law
with $\E[A_T],\E[B_T]\in[1-2d,1+2d]$, and $A_T+B_T=\delta(S)_T\geq1$. The cut is proper even for a
hierarchy-root polygon, so every premise of the lemma holds. Repeating the proof
of \cite[Corollary~5.9]{kko21}, rather than invoking its specialization
$\epsilon_M=1/4000$, multiplies this probability by the probability that
$C_T=0$ and the polygon cut is a tree. The latter probability is at least
$1-3d/2$. Define
\[
q_M:=(\epsilon_M-2d_0)/2.
\]
The unconditional probability is at least
\[
\frac{27}{4}(0.94)(q_M-10d_0)^2(1-q_M)(1-3d_0/2)
>P.
\]
The new flow lemma requires
$330(2d_0)<\epsilon_M<0.005$.  Both inequalities hold, and we also have
$d_0<\epsilon_M^2$. The enlarged polygon-payment domain is
proved directly in Lemma~\ref{lem:polygon_charge_extended}.
The lower bound decreases with $d$: the function
$(q-5\epsilon)^2(1-q)$ increases with $q$ and decreases with $\epsilon$
on $0\leq5\epsilon<q<0.0025$, while $q$ decreases as $d$ increases.
The two separate boundary marginal errors remain at most $\epsilon_M$.
Acceptance depends only on the boundary pair and an independent coin, so
the internal/external factorization in \cite[Corollary~5.9]{kko21}
is preserved. Thin independently to probability $P$ and call this event
$E_S$. Its conditional marginals and the conditional Bernoulli-sum
conclusion of that corollary are unchanged by the thinning.

{\bf Small non-half bundles.}
Use the conditioning of \cite[proof of Lemma~5.21]{kko21}.
The endpoint trees remain trees, and both the bundle and degree remainder are
deleted. Its probability is at least $o_s:=1/2+h-2r-3d_0>0$.
Put $X=A_T$, $Y=B_T$, $W=V_T-1$, and $R=X+Y+W$.
The proper-cut constraints give $V_T\geq1$, $X+Y\geq1$, and hence $R\geq1$.
Divide the joint stable polynomial by the compulsory $V$ variable and polarize.
The target event is now $X=Y=W=1$.

Write $q_{\bf e}:=x(\mathbf e)\leq1/2-h$. Before the conditioning,
$x(A),x(B)\in[1-r,1+d]$, $x(\delta(v))\in[2,2+d]$, and
$x_{\mathbf e}(A)+x_{\mathbf e}(B)\leq q_{\bf e}$.
The conditional-marginal calculation in the cited proof decreases any
displayed union by at most $2d$ and increases it by at most
$q_{\bf e}+2r+3d$. For example,
\[
\E[B_T+V_T]\geq x(B)+x(\delta(v))-q_{\bf e}
-x_{\mathbf e}(B)-2d\geq2+2h-r-2d.
\]
Substitution gives the complete shifted table for $0\leq d\leq d_0$:
\[
\begin{array}{c|c|c}
\text{union}&\text{lower mean}&\text{upper mean}\\ \hline
X,Y&1/2+h-r-2d&3/2-h+2r+4d\\
W&1/2+h-3d&1+2r+4d\\
X+Y&3/2+h-2r-2d&2+4r+6d\\
X+W,Y+W&1+2h-r-2d&2+2r+5d\\
R&2+2h-2r-2d&3+2r+6d.
\end{array}
\]
The double entries denote both distinct subsets, not omitted constraints.
At the final parameters every coordinate mean is at most $1.5$, every pair
mean is at most $2.01$, and all individual nonzero probabilities exceed $0.39$.
Put
\[
\delta_2=2h-r-2d_0,\quad \delta_3=2h-2r-2d_0,\quad
L_0=1/2+h-r-3d_0,\quad q_s=\delta_3c(\delta_3),\quad Q_s=0.04.
\]
Every pair mean is at least $1+\delta_2$ and
$\E[R]\in[2+\delta_3,3.01]$.
The shifted Poisson-binomial law $R-1$ and Lemma~2.21 of \cite{kko21} give
$q:=\Pr[R=3]\geq q_s>0$.  For shifted means above $1.2$, the retained
$0.1637$ target-two bound is stronger than $q_s$; below $1.2$ the correction
exponent is zero and the one-forced branch gives $\delta_3e^{-\delta_3}$.

{\bf Small rank-three probability.}
Suppose $q\leq Q_s<0.1637$. Then $\E[R-1]<1.2$ and
$\Pr[R=2]>0.36$. Apply Lemma~\ref{lem:tail_profiles} to $R-1$,
using its mean lower bound $1+\delta_3$. Put
\[
E(q)=E_{\delta_3,0}(q),\quad e=\delta_2,\quad
\ell=L_0-E(q),\quad \delta=e-E(q)>0.
\]
Here $\delta_3>0$ is fixed within the row. The checks give
$2q_s>\delta_3$; at $Q_s$, both updated atoms exceed $1/2$,
both used $J$ coefficients exceed $3/2$, and every factorial ratio
parameter is below four. These are the endpoint conditions in
Lemma~\ref{lem:tail_profiles}, so $E(q)$ is valid, increasing, and
convex on $[q_s,Q_s]$. In particular, it bounds $\E[(R-3)_+]$.
Lemma~\ref{lem:projected_rank_mean_shift} gives individual means at
least $\ell$ and pair means at least $1+\delta$ under the full
rank-three law. Choose a coordinate of largest mean, and call the
other two $Y,Z$. Both means are at most $1.5$, and their sum is at most
two. Their total is Poisson-binomial of degree at most three. Pad its
Bernoulli representation to three variables; the complementary sum has
mean in $[1,2-\delta]$. Equation~\eqref{eq:near_two_atom} therefore gives
\[
q':=\Pr[Y+Z=2]\geq v:=\delta(1-\delta/2).
\]
This uses complementing the three Bernoulli parameters, not independent
graph edges.

If $q'\geq0.1$, the individual nonzero tails $0.38$ and CDF tails
$7/16$ give quadratic split greater than $0.15$. Hence
$\Pr[Y=Z=1]>0.015$. If $q'<0.1$, the target-two exclusion forces the
pair mean below $1.2$, so its rank-one atom exceeds $0.36$.
Use $T_3$ and the exact coupling $1/3$ in
Lemma~\ref{lem:normalized_rank_tail}, writing the resulting function
with individual lower mean $\ell$ as $F_\ell$.
At the worst outer endpoint $Q_s$, rational checks give
\[
\ell-0.2-T_3(0.1)>0,\quad
1-T_3(0.1)/3>0,\quad F_\ell(0.1)>0.015,
\]
and $1-e^{-\ell}>0.38$. They remain valid at smaller outer $q$.
The numerator at the lower inner endpoint is positive because
$r+4d_0-e^2/2>0$. Its monotonicity then covers the whole inner
interval. Concavity gives the lower bound
$\min\{F_\ell(v),F_\ell(0.1)\}$.

Only two outer endpoint comparisons are needed. Define fixed row constants
\[
\bar v=e(1-e/2),\quad \bar t\geq T_3(\bar v),\quad
\varphi=1-\bar t/3,\quad
A=r+4d_0+5\bar t/3-e^2/2,\quad D=e/2-e^2/4,
\]
where $\bar t$ is the upward rational enclosure from
Lemma~\ref{lem:tail_profiles}. The checks give
$\varphi>0.979$, $A>0$, and $D<0.015$.
For $0<\delta\leq e<0.02$, put
$g(\delta)=(5/3)T_3(\delta(1-\delta/2))-\delta^2/2$.
Since $T_3'(q)\geq3q/2$,
\[
g'(\delta)\geq\delta((5/2)(1-\delta/2)(1-\delta)-1)>0.
\]
The inner numerator and denominator at $v$ therefore satisfy
\[
0<W(v)=r+4d_0+E(q)+g(\delta)\leq A+E(q),\quad
\Phi(v)\geq\varphi.
\]
Also $v/2=D-(1-e)E(q)/2-E(q)^2/4$. Thus the full inner split, including
the easy branch and the other endpoint, is at least $G(q)/q$, where
\[
G(q)=q(D-\frac{1-e}{2}E(q)-\frac{E(q)^2}{4}
-\frac{(A+E(q))^2}{8\varphi}).
\]
Indeed the bracket is at most $D<0.015$.
To check the whole outer interval, denote the bracket by $f(E(q))$.
On $E\geq0$, $f'\leq0$ and $f''<0$, while $E',E''\geq0$ by
Lemma~\ref{lem:tail_profiles}. Consequently
\[
G''=2f'(E)E'+q(f''(E)(E')^2+f'(E)E'')\leq0.
\]
It follows that
\[
\Pr[\text{small-bundle happiness}]
\geq o_s\min\{G(q_s),G(Q_s)\}>p.
\]
The two final evaluations use upward enclosures for $E$ inside the
decreasing function $f$. They certify the endpoints of the unrounded
concave function, not a rounded staircase.

{\bf Large rank-three probability.}
Suppose $q\geq Q_s$.  The up/down inequality, applied to each coordinate and
its complement at rank three, gives for each coordinate $D$
\[
\Pr[D\geq1\mid R=3]\geq\alpha:=0.39(0.66),\quad
\Pr[D\leq1\mid R=3]\geq\beta:=(7/16)\delta_2c(\delta_2).
\]
The first uses the CDF at two, and the second uses the pair upper tail
from Lemma~\ref{lem:bernoulli_estimates}. Here $0<\beta<0.01$.
Put $c_0:=0.4911$ and $b_*:=\beta/(1+\beta/c_0)$.
Lemma~\ref{lem:cubic_two_tail}, under this full rank-three conditioning,
gives the unconditional lower bound
\[
o_sQ_sb_*\psi(\alpha-b_*^2/c_0)
>p.
\]
These two ranges exhaust $q$.  The small-event support used here is only
$R\geq1$, not the unproved stronger condition $R\geq2$.

{\bf Large non-half bundles.}
Use the endpoint and union tree conditions and remainder deletion of
\cite[proof of Lemma~5.22]{kko21}.
The conditioning probability is at least
$o_l=1/2+h-2r-3d_0>0$.
For the unshifted counts $X=A_T,Y=B_T,Z=V_T$, the support constraints are
$X+Y\geq1$ and $R:=X+Y+Z\geq2$.
Exactly one bundle edge is selected, and
$V=\delta(v)\setminus\delta(u)$. Thus
\[
A_T+B_T+V_T=1+\delta(u\cup v)_T\geq2.
\]
The union is proper even if it equals the hierarchy root, since it omits
the distinguished vertices. The unrounded conditional-marginal estimates
of the cited proof give the complete table
\[
\begin{array}{c|c|c}
\text{union}&\text{lower mean}&\text{upper mean}\\ \hline
X,Y&1/2+h-r-2d&3/2-h+r+3d\\
Z&1-3d&3/2-h+2r+4d\\
X+Y&2-3r-3d&5/2-h+3d\\
X+Z,Y+Z&2-3r-4d&3-(2h-r-6d)\\
R&3-3r-5d&4-(2h-6d),
\end{array}
\]
Put $\Delta_p:=2h-r-6d_0$, $\Delta_t:=2h-6d_0$, and
$U_0=3/2-h+2r+4d_0$.
To justify the upper rows, under only endpoint trees put
$q=y(e)$, $a_A=y(e(A))$, $c_e=y(e(C))$, and $D=q-c_e$.
Then $q\geq1/2+h-2d$, $D>0$, $0\leq a_A\leq D\leq1$, and
$y(A)+y(C)\leq1+r+d$. Selecting the bundle and deleting its $C$ part makes its
independent $A$ probability $a_A/D$. Residual $C$ deletion costs at most
$y(C)-c_e$. Using $a_A(1/D-1)\leq1-D$ gives respectively
$2+r+d-q$, $1+y(\delta(u))-q$,
$4+r+2d-2q$, and $1+y(\delta(u))+y(\delta(v))-2q$
for the relevant upper means. Substitution proves the upper rows. The lower rows follow from the
original endpoint-tree and deletion estimates.

These lower endpoints imply $U_0<1.5$, pair means at most $3-\Delta_p$,
$\E[X+Y]\leq2.5$, and pair upper tails at two greater than $0.59$.
For the latter, thin the Bernoulli parameters to mean $2-3r-4d_0\in(1,2)$
and use the two branches of \cite[Lemma~2.22]{kko21}.

The shifted rank $R-2$ has mean in $[0.99,2-\Delta_t]$.
Lemma~\ref{lem:bernoulli_estimates} gives
\[
q:=\Pr[R=3]\geq q_l:=\Delta_t(1-\Delta_t/2)>0.
\]
If $q\leq0.12$, its mean cannot lie in $[0.99,1.49]$, where the rank-one atom
exceeds $0.33$.  Hence $\E[R-2]>1.49$ and Lemma~\ref{lem:bernoulli_estimates} gives
$\Pr[R=4]>1/4$.  Newton yields $\Pr[R=2]\leq2q^2$.
Lemma~\ref{lem:projected_rank_mean_shift} at rank three therefore gives
\[
\E[D\mid R=3]\leq U_0+2q^2,
\quad \E[D\mid R=3]\geq\Delta_p-2q^2
\]
for every coordinate $D$, using its complementary pair for the lower bound.

{\bf The range $q_l\leq q\leq Q_l=0.025$.}
The conditional means lie below $1.504$ and above
$\ell(q)=\Delta_p-2q^2>0$, with $\ell(q)<0.01$.
Lemma~\ref{lem:cubic_split} gives full probability at least
\[
o_l(0.3728)q\ell(q)e^{-\ell(q)}.
\]
The derivative of $q\ell(q)e^{-\ell(q)}$ has the sign of
\[
\Delta_p+(4\Delta_p-6)q^2-8q^4.
\]
This expression strictly decreases as a function of $q^2$, so the original
function has no interior minimum.  Checking its two endpoints gives
probability greater than $p$.

{\bf The range $Q_l\leq q\leq0.12$.}
Every rank-three coordinate mean is below $1.53$.
The original nonzero tails for $X,Y$ exceed $0.39$; their complementary pairs
have CDF at two at least $\Delta_p(1-\Delta_p/3)$ by Lemma~\ref{lem:bernoulli_estimates}.
The original $Z$ nonzero tail exceeds $0.63$, and $\Pr[X+Y\leq2]>0.42$.
Thus every conditional nonzero probability, and hence every conditional mean,
is at least
\[
\alpha_l:=0.39\Delta_p(1-\Delta_p/3)\in(0,0.01).
\]
Lemma~\ref{lem:cubic_split} gives full probability at least
\[
o_lQ_l(0.3594)\alpha_l e^{-\alpha_l}
>p.
\]

{\bf The range $q\geq0.12$.}
The support $X+Y\geq1$ gives $Z\leq2$ under $R=3$.
Each original CDF at one is at least $7/16$, and each complementary pair upper
tail at two exceeds $0.59$.  Hence the conditional CDF of $Z$ at one is at least
$\beta_l=(7/16)(0.59)$.  Its conditional nonzero tail exceeds
$0.63(0.42)>\beta_l$.  The stable quadratic discriminant therefore gives
$\Pr[Z=1\mid R=3]>0.238$.
The rank-three nonzero and CDF tails of $X,Y$ are at least
$\alpha_l,\beta_l$.  Project to $X,Y$ and condition the full rank to two;
the conditioning is positive by the preceding bound.  Its two nonzero tails
are at least $\alpha_l\beta_l$, so Lemma~\ref{lem:quadratic_bounds} gives split greater than
$\alpha_l\beta_l(1-\alpha_l\beta_l)$.
The full probability exceeds
\[
o_l(0.12)(0.238)\alpha_l\beta_l(1-\alpha_l\beta_l)
>p.
\]
The three closed ranges cover every possible $q$.  All projections, support
exclusions, and positive conditioning margins have been specified.

For the mixed-bundle event,
Lemma~\ref{lem:candidate_same_side_tail} supplies the premise of
Lemma~\ref{lem:candidate_gkl_window}: after orienting $\mathbf e$ from $v$
to $u$, its residual set $V$ is the set $U$ used there, and its probability exceeds $Kh$.

For the two-sided mixed event in \cite[Lemmas~A.7 and~A.12]{gkl24},
use their law $\nu$ conditioned on $C_T=0$ and $u,v,u\cup v$ being trees.
Its probability is at least
\[
P_0=(1-d_0)(1/2-h-3d_0)-(2r+d_0)>0.49.
\]
Let $X:=(A-\mathbf e)\mathbin{\dot\cup}(B-\mathbf e)$ and
$Y:=\delta(v)-\mathbf e$, which are disjoint.
The same conditional-marginal calculation as in the window proof gives
\[
\E_\nu[X_T],\E_\nu[Y_T]\in[1-3r,3/2+h+2r+8d_0]
\subset[0.99,1.512],
\]
\[
\E_\nu[X_T+Y_T]\in[5/2-3r-3h-8d_0,3+2h+2r+8d_0]
\subset[2.47,3.02].
\]
For the individual upper bounds, the original residual cut has mean at
most $3/2+h+d_0$; the endpoint trees and deletion contribute at most
$2d_0$ and $2r+3d_0$, and selection cannot increase a disjoint residual
count. The lower bounds and total means follow as in the window proof.
The nonzero-tail certificate there applies to the smaller mean
$1-3r-4d_0$, and hence gives
\[
\Pr_\nu[X_T\geq1],\Pr_\nu[Y_T\geq1]>0.629,\quad
\Pr_\nu[X_T\leq1],\Pr_\nu[Y_T\leq1]>0.4284.
\]
Project to the full union and condition its rank to two. The up/down
inequality and the quadratic test in the window proof imply
\[
\Pr_\nu[X_T=Y_T=1\mid X_T+Y_T=2]>0.245.
\]
The proper-cut identity gives $X_T+Y_T\geq1$, and
Lemma~\ref{lem:candidate_two_tail} supplies lower tail $L_*$.
Thus Lemma~\ref{lem:cutoff_central_rank} gives
$q:=\Pr_\nu[X_T+Y_T=2]>m_*$, with products $D_s,D_m$.
Put $A_D=1/2-h-2r-8d_0$ and
\[
G_D(q)=q\psi(A_D-q^2/D).
\]
Lemma~\ref{lem:final24_rank_split} bounds the joint central probability
below by $G_{D_s}(q)$ on $[m_*,0.04]$ and by $G_{D_m}(q)$ on
$[0.04,0.1]$. For either corresponding endpoint $Q$, write $Z=Q^2/D$.
The derivative satisfies
\[
G_D'(q)\geq\psi(A_D-Z)-2Z(1-A_D+Z)>0.
\]
Indeed the derivative expression decreases with $z=q^2/D$ on these
positive-argument intervals. The exact comparisons
\[
G_{D_m}(0.04)>G_{D_s}(m_*),\quad
0.0245>G_{D_s}(m_*)
\]
cover the middle and large ranges, respectively. Therefore the joint
central probability always exceeds $G_{D_s}(m_*)$.

The unique parallel-edge choice is independent of the residual tree law.
When both degree parts have bundle mass at least $\omega$, its required
endpoint probability is at least
\[
P_e=\frac{\omega-2d_0}{1/2+h+3r}>0.
\]
Keeping both factors, the direct-mixed event has probability greater than
\[
P_0P_e m_*\psi(A_D-m_*^2/D_s)>p.
\]
The conditioning lower bound decreases with the error, since the derivative
of $(1-d)(1/2-h-3d)-(2r+d)$ is $-4.5+h+6d<0$.
The same is true of $P_e$. Thus these bounds hold for every $d\leq d_0$.
The oriented window estimate handles the complementary crossing cases.

The sixth event is the paired event in
Lemma~\ref{lem:candidate_reparam_527}, which uses
Lemma~\ref{lem:candidate_gkl_window} with the fixed $K$.
Thus the top-bundle events have probability at least $p$, and the polygon
event has probability at least $P$. Independent thinning gives their
designated probabilities while preserving each conditional law.
The remaining numerical premise in \cite[Theorem~5.28]{kko21} holds as well:
$p>0.005h^2$.
\end{proof}

We next study descendant cuts under the accepted polygon event. Keeping separate
budgets for the two boundary sides gives the parity and unhappiness estimates
needed for their payments.

\begin{lemma}[Half-budget and small-remainder polygon bounds]
\label{lem:polygon_half_budget}
Let $E_{\mathcal P}$ be the accepted event of a polygon ancestor $\mathcal P$, including
its independent thinning to probability $P$. For every proper descendant
$u$ in the hierarchy,
\[
\Pr[\delta(u)_T\text{ odd}\mid E_{\mathcal P}]<q_0.
\]
If $u$ is a polygon cut, its conditional left and right unhappiness
probabilities are at most $q_{\rm unh}$.
\end{lemma}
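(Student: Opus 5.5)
Conditioning on the accepted polygon event $E_{\mathcal P}$ does not move the law of the internal tree far from the exact max-entropy law, and I would reduce both claims to the corresponding source estimates in \cite[Corollary~5.9 and Lemma~5.10]{kko21}, with the two sharper constants $q_0$ and $q_{\rm unh}$ coming from the present parameters. The starting point is the factorization preserved in Lemma~\ref{lem:candidate_common_probability}. On $E_{\mathcal P}$ the polygon cut is a tree and $C_T=0$, and acceptance depends only on the boundary pair and an independent coin. Independent thinning to probability $P$ leaves the conditional law unchanged. Hence, conditional on $E_{\mathcal P}$, the restriction of $T$ to $E(\mathcal P)$ is the max-entropy law conditioned on being a tree inside $\mathcal P$, independent of the external part. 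The only information the boundary passes inward is a pair of endpoints whose side marginals are dominated by $x_e/(1-q_M)$ and whose separate variations are at most $\epsilon_M$, by Lemma~\ref{lem:candidate_polygon_flow}.

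\textbf{Parity.} For a proper descendant $u$, write $\delta(u)_T$ as the sum of an internal count and the number of accepted boundary edges incident to $u$. By the source corollary, this sum is a Bernoulli sum under the conditional law.
\begin{itemize}
\item The internal expectations lose at most $3d_0$, from the polygon-tree and deletion conditioning.
\item The boundary contribution has expectation within $\bar q=\epsilon_M/2$ of $x$ on each side, since the two side budgets are kept separate.
\end{itemize}
So $\E[\delta(u)_T\mid E_{\mathcal P}]\in[2-2(\epsilon_M/2+3d_0),\,2+\cdots]$. For a Bernoulli sum of mean $\mu\le 2$, the probability of odd parity is at most $(1+e^{-2\mu})/2$ when $\mu\le 1$, and otherwise is controlled by the truncated Poisson comparison. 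I would use the bound $\Pr[\text{odd}]\le\frac12(1+1/\sum_{i\le 18}\mu^i/i!)$, obtained from the generating-function identity $\Pr[\text{even}]-\Pr[\text{odd}]=\prod_i(1-2p_i)$ together with the extremal reduction of Lemma~\ref{lem:bernoulli_estimates}. This is exactly what $T_{18}$ encodes at $\mu=2(1-\epsilon_M/2-3d_0)$, and the extra $10^{-12}$ absorbs truncation and rounding. The step requiring care is the case where $\mu$ exceeds $2$. There I would argue via the upper mean bound and symmetry of the extremal binomial, or via the compulsory-success removal, that the odd probability is still below the value at the lower mean endpoint.

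\textbf{Unhappiness.} If $u$ is itself a polygon cut, left (or right) unhappiness is the event that its side count differs from one. Under the unconditioned law, the source bounds this probability by $q_0$ through the same parity/Bernoulli argument. The accepted boundary changes the relevant side marginal by at most a factor $1/(1-\bar q)$ on an excess set of mass at most $4d_0$. This contributes an additive $4d_0/(1-\bar q)$, which gives $q_{\rm unh}$.

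I expect the main obstacle to be verifying that the conditional Bernoulli-sum structure survives the new pair-dependent acceptance for every descendant $u$ at once, rather than only for the polygon's own sides. In particular, descendants meet the boundary in only one side, and that side's marginal changes are controlled only in total variation. I would first try to show that, for a fixed descendant, the boundary contribution is a single Bernoulli variable whose mean is bounded using coordinate domination, so the parity estimate absorbs it as one extra factor $|1-2p|\le1$.
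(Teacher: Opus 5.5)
Your proposal has a genuine gap in the parity step. It rests on an inequality that is false, and it lands on $T_{18}$ only because that inequality is paired with a mean bound that is too weak.

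\textbf{The parity inequality.} For a Bernoulli sum $X$ of mean $\mu$, the bound $\Pr[X\text{ odd}]\le\frac12(1+e^{-\mu})$ does not hold. Take $X=1+Y$ with $Y$ a sum of many small Bernoulli variables of total mean $\mu-1<1$. Then $\Pr[X\text{ odd}]=\Pr[Y\text{ even}]\approx\frac12(1+e^{-2(\mu-1)})$, which exceeds $\frac12(1+e^{-\mu})$ for every $\mu<2$; at $\mu=1.5$ it is $0.684$ versus $0.612$. Your clause ``at most $(1+e^{-2\mu})/2$ when $\mu\le1$'' already fails for $X\equiv1$.

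The paper instead removes the compulsory success in every case, not only when the mean exceeds two. Since $u$ is a proper subset of the tree $\mathcal P$, some internal tree edge crosses $\delta(u)$, so $\delta(u)_T=1+Y$. Here $Y$ is Poisson-binomial, because the internal count is Poisson-binomial and independent of the exterior count, which is zero or one. Then $\Pr[\delta(u)_T\text{ odd}]=\Pr[Y\text{ even}]\le\frac12(1+e^{-2\E[Y]})$ by \cite[Corollary~2.17]{kko21}, with the upper bound $\E[Y]<1.2$ supplying its premise. In the paper, $T_{18}$ is the truncated exponential series at $2(1-\epsilon_M/2-3d_0)$. It lower-bounds $e^{2\E[Y]}$ at the residual mean; it is not $e^{\mu}$ at a total mean.

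\textbf{The mean bound.} You charge both side budgets and twice the internal loss, obtaining $\E[\delta(u)_T\mid E_{\mathcal P}]\ge2-\epsilon_M-6d_0$. With the correct parity bound this gives only $\frac12(1+e^{-2(1-\epsilon_M-6d_0)})$. That exceeds $q_0$ by about $e^{-2}\epsilon_M/2\approx2\cdot10^{-4}$, far beyond the roughly $10^{-12}$ slack built into $q_0$.

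The missing ingredient is \cite[Observation~4.32]{kko21}: the exterior part of every descendant cut lies in a single parent side or in the deleted remainder. So the accepted pair contributes at most one Bernoulli variable, the whole count shifts by at most $\epsilon_M/2+3d_0$, and $\E[Y]\ge1-\epsilon_M/2-3d_0$. You remark that descendants meet only one side but never use it in the mean calculation. Your fallback of absorbing the boundary variable as a factor $|1-2p|\le1$ would also discard its contribution to the exponent.

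\textbf{The unhappiness bound.} This part is also not justified as written. Left unhappiness is ``$A_T$ even or $C_T\neq0$'', not ``side count differs from one''. The paper bounds the two pieces separately and adds them to get $q_{\rm unh}$:
\begin{itemize}
\item $\Pr[A_T\text{ even}\mid E_{\mathcal P}]\le q_0+d_0$. The side count is Poisson-binomial with mean at least $1-\epsilon_M/2-4d_0$, and $(1+e^{-2x})/2$ is $1$-Lipschitz.
\item $\Pr[C_T\neq0\mid E_{\mathcal P}]\le\E[C_T\mid E_{\mathcal P}]\le3d_0/(1-\bar q)$. This uses the flow domination $v_e\le x'_e/(1-\bar q)$ and Markov's inequality.
\end{itemize}
Your ``excess set of mass $4d_0$'' with the source's $q_0$ bound under the unconditioned law does not reproduce this decomposition.
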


\begin{proof}
Write $x'$ for the marginals after making the parent a tree and deleting
its remainder, before accepting the boundary pair. On either accepted
boundary side, the accepted marginal vector $v$ satisfies
\[
\sum_e v_e=1,\quad \sum_e|v_e-x'_e|\leq\epsilon_M,
\quad |\sum_e x'_e-1|\leq2d.
\]
Positive and negative variation therefore each have mass at most
$\epsilon_M/2+d$. For every subset $F$ of one side,
\[
|v(F)-x'(F)|\leq\epsilon_M/2+d.
\]
The internal parent tree and the contracted tree are independent.
Making the parent a tree changes an internal mean by an amount in
$[0,d/2]$. On surviving exterior coordinates, the tree and remainder
conditions change an edge-set mean by an amount in $[-d/2,d]$;
a deleted exterior part has original mass at most $d$.
Consequently every count contained in the parent and its boundary,
and meeting at most one surviving boundary side, changes in mean by
at most $\epsilon_M/2+3d$ under $E_{\mathcal P}$.

The internal count is Poisson-binomial and independent of the exterior
count, which is zero or one. Their sum is therefore Poisson-binomial;
this does not assert that the entire accepted pair law is strongly Rayleigh.
By \cite[Observation~4.32]{kko21}, the exterior part of every descendant
cut is contained in one parent side or in its deleted remainder.
The cut count has a compulsory success. After removing it, the mean lies
between $1-\epsilon_M/2-3d_0$ and $1+\epsilon_M/2+4d_0<1.2$.
The parity bound in \cite[Corollary~2.17]{kko21}, followed by the
Taylor bound defining $q_0$, proves the first assertion.

Now let $A,B,C$ be the descendant polygon partition. There is stronger
control on the small remainder $F:=C$, whose original mass is at most $d$.
Flow conservation in Lemma~\ref{lem:candidate_polygon_flow} gives
$v_e\leq x'_e/(1-q(d))$, where
\[
q(d):=(\epsilon_M-2d)/2\leq q(0)=\bar q.
\]
The internal part gains at most $d/2$, and the surviving exterior part
gains at most $d$ before acceptance. Hence
\[
\begin{aligned}
\E[F_T\mid E_{\mathcal P}]
&\leq x(F\cap E(\mathcal P))+d/2
  +\frac{x(F\cap\delta(\mathcal P))+d}{1-\bar q}\\
&\leq d/2+\frac{2d}{1-\bar q}
\leq\frac{3d}{1-\bar q}.
\end{aligned}
\]
Deleted exterior coordinates contribute zero. The uniform denominator
uses $q(0)=\bar q$, not $q(d_0)$.
The side count $A_T\mid E_{\mathcal P}$ is also Poisson-binomial, with mean at
least $1-\epsilon_M/2-4d_0$ and at most $1+\epsilon_M/2+4d_0<1.2$.
For $f(x):=(1+e^{-2x})/2$, $|f'(x)|<1$ for $x>0$.
Its even probability is thus at most $q_0+d_0$.
The union bound and Markov's inequality give
\[
\Pr[u\text{ not left happy}\mid E_{\mathcal P}]
\leq q_0+d_0+\frac{3d_0}{1-\bar q}
\leq q_0+\frac{4d_0}{1-\bar q}=q_{\rm unh}.
\]
The right side is symmetric. Independent thinning preserves every
conditional distribution used in the proof.
\end{proof}

\subsection{Paired bundles and localized matching}
\label{sec:paired_matching}

The paired-bundle analysis compares several conditioned laws on overlapping
counted sets. We begin with a mean-shift bound that uses monotonicity and bounds
on a larger containing set.

\begin{lemma}[Monotone marginal-shift bound]
\label{lem:candidate_marginal_shift}
Let $\nu_-$ be a strongly Rayleigh law on $E$, let $F\subseteq E$, and let
$\nu_+$ be the law obtained by a positive-probability conditioning on $F_T=0$.
Let $D\subseteq L\subseteq E\setminus F$, and suppose that for some $j$
and nonnegative $\ell_-,\ell_+,u_-,u_+$,
\[
\E_{\nu_-}[L_T]\in[j-\ell_-,j+u_-],\quad
\E_{\nu_+}[L_T]\in[j-\ell_+,j+u_+].
\]
Then
\[
0\leq\E_{\nu_+}[D_T]-\E_{\nu_-}[D_T]\leq u_++\ell_-.
\]
If instead $F_T\leq1$ almost surely under $\nu_-$ and $\nu_+$ is the
law obtained by a positive-probability conditioning on $F_T=1$, then
\[
0\leq\E_{\nu_-}[D_T]-\E_{\nu_+}[D_T]\leq u_-+\ell_+.
\]
For the zero conditioning, the upper bound $u_++\ell_-$ remains valid when
$D\subseteq L\subseteq E$ and $F\cap L\subseteq D$; in this extension no
lower sign is asserted for the change of $D_T$.
\end{lemma}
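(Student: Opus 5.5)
The plan is to combine negative association with the homogeneity (fixed total rank) of the strongly Rayleigh law restricted to $L$. I treat the zero conditioning first.

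\textbf{Zero conditioning, lower bound.} Conditioning on $F_T=0$ is a decreasing event. By negative association every coordinate outside $F$ has conditional marginal at least its unconditional marginal. Since $D\cap F=\varnothing$, summing gives $\E_{\nu_+}[D_T]\geq\E_{\nu_-}[D_T]$.

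\textbf{Zero conditioning, upper bound.} Monotonicity still holds coordinatewise on $L\setminus D$, because $L\cap F=\varnothing$. Hence
\[
\E_{\nu_+}[D_T]-\E_{\nu_-}[D_T]\leq \E_{\nu_+}[L_T]-\E_{\nu_-}[L_T]\leq (j+u_+)-(j-\ell_-)=u_++\ell_-.
\]
The decrease needed here is that the increase on $D$ is at most the increase on $L$, since the change on $L\setminus D$ is nonnegative. No homogeneity or rank coupling is required.

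\textbf{Extension.} Now allow $D\subseteq L\subseteq E$ with $F\cap L\subseteq D$. Coordinates in $F\cap L$ have conditional marginal zero, so they lie in $D$. Coordinates in $L\setminus D$ avoid $F$, so their change is still nonnegative by negative association. The same telescoping gives
\[
\E_{\nu_+}[D_T]-\E_{\nu_-}[D_T]=\bigl(\E_{\nu_+}[L_T]-\E_{\nu_-}[L_T]\bigr)-\bigl(\E_{\nu_+}[(L\setminus D)_T]-\E_{\nu_-}[(L\setminus D)_T]\bigr)\leq u_++\ell_-.
\]
The sign of the change on $D$ is lost, because $D$ now contains coordinates that drop to zero. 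This matches the statement.

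\textbf{Conditioning on $F_T=1$.} Since $F_T\leq1$, the event $F_T=1$ is the complement of $F_T=0$. Write $\nu_0$ for the law conditioned on $F_T=0$; then $\nu_-$ is a mixture of $\nu_+$ and $\nu_0$. The main obstacle is the sign: I need $\E_{\nu_+}[G_T]\leq\E_{\nu_-}[G_T]$ for every $G$ disjoint from $F$.

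\begin{itemize}
\item \emph{First attempt.} Project away everything but $F$. The conditionings $F_T=1$ and $F_T=0$ are full-rank conditions on the projection onto $F$. The coupling of \cite[Theorem~2.10]{kko21}, equivalently Lemma~\ref{lem:binary_covering} applied with the binary coordinate $E:=F$ (after aggregating $F$ into one binary coordinate, valid since $F_T\leq1$), gives that outside counts under $F_T=1$ are stochastically below those under $F_T=0$. Averaging gives $\E_{\nu_+}[G_T]\leq\E_{\nu_-}[G_T]$.
\item \emph{Fallback if aggregation is questionable.} Use the direct statement that conditioning on an increasing event decreases the marginals of disjoint coordinates under negative association.
\end{itemize}

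Applying the sign to $G=D$ gives the lower bound $0$. Applying it to $G=L\setminus D$ gives
\[
\E_{\nu_-}[D_T]-\E_{\nu_+}[D_T]\leq \E_{\nu_-}[L_T]-\E_{\nu_+}[L_T]\leq (j+u_-)-(j-\ell_+)=u_-+\ell_+.
\]

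Positivity of the conditioning probabilities is assumed in the statement, so no division by zero arises.
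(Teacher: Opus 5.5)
Your proof is correct and is essentially the paper's argument. Negative association (the paper cites \cite[Fact~2.11]{kko21}) makes the marginal change on $L\setminus D$ nonnegative under deletion and nonpositive under selection, and subtracting that change from the change on $L$ gives both bounds, including the overlap extension. In the selection case, keep your fallback rather than the first bullet: since $F_T\leq1$, you condition on the increasing event $\{F_T\geq1\}=\{F_T=1\}$, which is exactly what the paper does. The first bullet goes through Lemma~\ref{lem:binary_covering}, which is stated for homogeneous laws, and the statement here does not assume homogeneity.
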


\begin{proof}
By \cite[Fact~2.11]{kko21}, every coordinate outside $F$ has nonnegative
marginal change under deletion. Since $L\setminus D$ is disjoint from $F$,
including in the overlap extension,
\[
\E_{\nu_+}[D_T]-\E_{\nu_-}[D_T]
\leq\E_{\nu_+}[L_T]-\E_{\nu_-}[L_T]
\leq u_++\ell_-.
\]
The change on $D$ is nonnegative when $D$ is disjoint from $F$.
The upward-conditioning part of the same fact reverses the signs and gives
the selection bound by the identical subtraction.
Both conditionings preserve stability. Deletion is specialization at zero.
For the binary selection, write the generating polynomial as $p=p_0+p_1$
according to its degree in the $F$ variables. Positive rescaling and the limit
\[
t^{-1}p(tz_F,z_{E\setminus F})\to p_1(z_F,z_{E\setminus F})
\]
as $t\to\infty$ give stability of the nonzero conditioned polynomial.
No arbitrary subset-rank conditioning is used.
\end{proof}

We will often know the expected excess above and deficit below an integer rather
than a precise conditional mean. The following elementary estimate converts that
information into conditional mean bounds.

\begin{lemma}[Conditioning excess moments]
\label{lem:conditioning_excess}
Let $X$ be integer-valued, let $j$ be an integer, and let $l,u\geq0$. Suppose
$\E[(j-X)_+]\leq l$ and $\E[(X-j)_+]\leq u$.
For any event $A$ with $\Pr[A]\geq P>0$,
\[
j-l/P\leq\E[X\mid A]\leq j+u/P.
\]
The assertion does not require independence or a central-atom bound.
\end{lemma}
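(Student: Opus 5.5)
The plan is to write $X$ as $j$ plus its positive excess minus its positive deficit, and to control each part under the conditioning on $A$ by a Markov-type bound that divides the unconditional moment by $\Pr[A]$. Concretely, every integer $X$ satisfies the pointwise identity
\[
X=j+(X-j)_+-(j-X)_+ .
\]
Taking conditional expectations given $A$, which is well defined because $\Pr[A]\geq P>0$, gives
\[
\E[X\mid A]=j+\E[(X-j)_+\mid A]-\E[(j-X)_+\mid A].
\]
It remains to bound the two conditional moments. Both random variables are nonnegative.

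For any nonnegative random variable $Y$, we have $Y\mathbf 1_A\leq Y$ pointwise. Hence
\[
\E[Y\mid A]=\frac{\E[Y\mathbf 1_A]}{\Pr[A]}\leq\frac{\E[Y]}{P}.
\]
Applying this with $Y=(X-j)_+$ gives $\E[(X-j)_+\mid A]\leq u/P$. Applying it with $Y=(j-X)_+$ gives $\E[(j-X)_+\mid A]\leq l/P$. Each conditional moment is also at least zero.

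Substituting into the identity, we drop the nonpositive term $-\E[(j-X)_+\mid A]$ to get the upper bound $j+u/P$. We drop the nonnegative term $\E[(X-j)_+\mid A]$ to get the lower bound $j-l/P$. No structural property of the law enters, which matches the remark that neither independence nor an atom bound is needed.

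There is no substantial obstacle. The only point needing care is integrability. If the unconditional excess moments are finite, as the hypotheses $l,u<\infty$ ensure, then both conditional moments are finite by the same domination. The decomposition is then legitimate even when $\E[X]$ is not assumed a priori.
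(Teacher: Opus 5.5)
Your proposal is correct and is essentially the paper's argument: write $X=j+(X-j)_+-(j-X)_+$, bound each nonnegative excess on $A$ by its unconditional expectation divided by $\Pr[A]\geq P$, and combine the two one-sided bounds. Your integrability remark and your observation that no structural property of the law is used match the paper's remark.
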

\begin{proof}
Intersect each nonnegative excess variable with $A$, bound its expectation
by its unconditional expectation, and divide by $\Pr[A]$. Subtract the
resulting lower and upper excess bounds.
\end{proof}

When the conditioning events are nested, bounding their means separately discards
useful cancellation. We retain the nesting to obtain a sharper comparison between
the two conditional means.

\begin{lemma}[Nested conditioning of excess moments]
\label{lem:nested_excess}
Let $A\subseteq H$ be positive-probability events, with actual probabilities
$a=\Pr[A]\leq b=\Pr[H]$. Suppose
\[
\E[(j-X)_+]\leq l, \E[(X-j)_+]\leq u,\quad 0\leq l\leq u.
\]
Then
\begin{align*}
\E[X\mid A]-\E[X\mid H]&\leq u/a+(l-u)/b,\\
\E[X\mid H]-\E[X\mid A]&\leq l/a+(u-l)/b.
\end{align*}
If $a\geq A_*>0$ and $b\geq H_*>0$, the right sides may be replaced by
$u/A_*-(u-l)$ and $l/A_*+(u-l)/H_*$, respectively.
\end{lemma}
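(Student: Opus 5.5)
Write $Y:=X-j$, so that $Y=Y_+-Y_-$ with $\E[Y_+]\leq u$ and $\E[Y_-]\leq l$. The plan is to write both conditional means as $\E[Y\mathbf 1_A]/a$ and $\E[Y\mathbf 1_H]/b$. We then split $\mathbf 1_H=\mathbf 1_A+\mathbf 1_{H\setminus A}$ and expand the difference in the four nonnegative quantities
\[
P_A:=\E[Y_+\mathbf 1_A],\quad P_1:=\E[Y_+\mathbf 1_{H\setminus A}],\quad
N_A:=\E[Y_-\mathbf 1_A],\quad N_1:=\E[Y_-\mathbf 1_{H\setminus A}].
\]
Unlike Lemma~\ref{lem:conditioning_excess}, this keeps the cancellation coming from the common part $A$, where the two conditionings overlap.

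For the first inequality, expand
\[
\E[X\mid A]-\E[X\mid H]
=P_A\Bigl(\frac1a-\frac1b\Bigr)-\frac{P_1}{b}
-N_A\Bigl(\frac1a-\frac1b\Bigr)+\frac{N_1}{b}.
\]
Since $a\leq b$, the coefficient $1/a-1/b$ is nonnegative. We discard the two negative terms. The key step is to bound $P_A\leq u$ and $N_1\leq l$ separately, which gives $u(1/a-1/b)+l/b=u/a+(l-u)/b$.

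The second inequality is the mirror image. The difference becomes
\[
N_A\Bigl(\frac1a-\frac1b\Bigr)+\frac{P_1}{b}-P_A\Bigl(\frac1a-\frac1b\Bigr)-\frac{N_1}{b}.
\]
Dropping the negative terms and using $N_A\leq l$ and $P_1\leq u$ gives $l/a+(u-l)/b$. No independence or atom information enters; only nonnegativity and the bounds on $\E[Y_\pm]$ are used.

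For the replacement with $A_*,H_*$, monotonicity in $a$ and $b$ gives the result. In the first bound, $u/a\leq u/A_*$. Since $l\leq u$ and $b\leq1$, we also have $(l-u)/b\leq l-u=-(u-l)$. In the second bound, $l/a\leq l/A_*$, and $(u-l)/b\leq(u-l)/H_*$ because $u-l\geq0$. This is the only place where the hypothesis $l\leq u$ is used.

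I expect no step to be a real obstacle. The only delicate point is choosing the decomposition so that the terms with coefficient $1/a-1/b$ pair against the excess on $A$, while the excess on $H\setminus A$ enters only with weight $1/b$.
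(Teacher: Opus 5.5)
Your proposal is correct and is essentially the paper's proof. Both center at $j$ and split $\mathbf 1_H=\mathbf 1_A+\mathbf 1_{H\setminus A}$ to get the nonnegative coefficients $1/a-1/b$ and $1/b$, bound the positive part on $A$ by $u$ and the negative part on $H\setminus A$ by $l$ (and the reverse for the second inequality), and then relax using $b\leq1$ with $l-u\leq0$, and $b\geq H_*$ with $u-l\geq0$.
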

\begin{proof}
Writing $Z=X-j$ gives
\[
\E[X\mid A]-\E[X\mid H]
=(1/a-1/b)\E[Z1_A]-\E[Z1_{H\setminus A}]/b.
\]
The two coefficients are nonnegative. Bound the first positive part by $u$ and
the second negative part by $l$ to obtain the first inequality. Reverse their
roles for the second inequality. For the relaxed first bound use $b\leq1$ and
$l-u\leq0$; for the second use $b\geq H_*$ and $u-l\geq0$.
\end{proof}

The paired-bundle proof separates into two cases according to the sizes of
opposite excess moments. A binary covering coupling ensures that at least one of
these moments is sufficiently large.

\begin{lemma}[Asymmetric binary separator]
\label{lem:asymmetric_separator}
Let a homogeneous strongly Rayleigh law have a binary counted block $E$ and a
disjoint count $D$. Put $q=\Pr[E=1]$, $S=D+2E$. If
\[
\E[(k-S)_+]\leq L, \E[(S-k)_+]\leq U,
\]
then
\[
qL+(1-q)U\geq q(1-q).
\]
\end{lemma}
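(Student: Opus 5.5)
We plan to use the binary covering coupling already invoked in Lemma~\ref{lem:binary_covering}. Project the homogeneous strongly Rayleigh law onto $E\cup D$ and condition on the binary block. The laws of $D$ given $E=1$ and given $E=0$ come from adjacent total ranks of the projected law. By \cite[Theorem~2.10]{kko21} they admit a coupling $(D_1,D_0)$ with $D_0-D_1\in\{0,1\}$ almost surely. The degenerate cases $q\in\{0,1\}$ make the right side zero, so we assume $0<q<1$. Writing $S_1:=D_1+2$ and $S_0:=D_0$, we get $S_1-S_0=2-(D_0-D_1)\in\{1,2\}$. Thus $S$ is larger by one or two on the $E=1$ branch of the coupled pair.

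The key step is a pointwise inequality for a pair of integers $s_1> s_0$ with $s_1-s_0\geq1$:
\[
(k-s_1)_+ + (s_0-k)_+\geq0,\qquad (k-s_0)_+ + (s_1-k)_+\geq s_1-s_0\geq1 .
\]
The second inequality holds because $(k-s_0)_++(s_1-k)_+\geq (k-s_0)+(s_1-k)$. We then weight the deficit on the $E=0$ branch and the excess on the $E=1$ branch. Precisely,
\[
L\geq\E[(k-S)_+]=q\E[(k-S_1)_+]+(1-q)\E[(k-S_0)_+],
\]
and similarly for $U$. Multiplying the $L$ bound by $q$ and the $U$ bound by $1-q$, we aim for
\[
qL+(1-q)U\geq q(1-q)\E\bigl[(k-S_0)_++(S_1-k)_+\bigr]\geq q(1-q).
\]
This follows by discarding the nonnegative terms $q^2\E[(k-S_1)_+]$ and $(1-q)^2\E[(S_0-k)_+]$ and applying the pointwise bound to the coupled pair.

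The main point to check is that the coupling of \cite[Theorem~2.10]{kko21} is available here. It needs the two conditional laws to be full-rank conditionings of a strongly Rayleigh law, which holds after projecting to $E\cup D$ and using homogeneity, as in Lemma~\ref{lem:binary_covering}. The coupling is only a device for computing expectations of each marginal separately, so no joint law of the actual branches is required. No higher-rank or arbitrary subset conditioning enters the argument.
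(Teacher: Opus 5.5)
Your proof is correct and is essentially the paper's argument. It uses the same adjacent-rank coupling from \cite[Theorem~2.10]{kko21} giving $D_0-D_1\in\{0,1\}$, the same pointwise bound $(k-S_0)_++(S_1-k)_+\geq S_1-S_0\geq1$, and the same weighting; the paper phrases the last step as $\E[(k-S_0)_+]\leq L/(1-q)$ and $\E[(S_1-k)_+]\leq U/q$, then multiplies by $q(1-q)$.

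One fix to the justification: you should project \emph{away} $E$, keeping its complement (which contains $D$), not project onto $E\cup D$. Then homogeneity makes $E=1$ and $E=0$ exactly the full complement ranks $n-1$ and $n$, where $n$ is the fixed rank. Projecting onto $E\cup D$, as you wrote, destroys homogeneity, and $E=1$ versus $E=0$ is no longer a total-rank conditioning of that projected law.
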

\begin{proof}
For $q=0,1$ the conclusion is immediate. Otherwise collapse the binary block
to one coordinate and project it away. Homogeneity makes the conditional
complement ranks consecutive. The full-rank coupling in
\cite[Theorem 2.10]{kko21} gives $D_0-D_1\in\{0,1\}$.
Thus $S_0=D_0$, $S_1=D_1+2$ satisfy $S_1-S_0\geq1$. Pointwise,
\[
(k-S_0)_++(S_1-k)_+\geq S_1-S_0\geq1.
\]
Their expectations are at most $L/(1-q)$ and $U/q$. Multiply by $q(1-q)$.
\end{proof}

When neither bundle has a sufficiently likely individual $2$--$1$--$1$ event, the
paired configuration provides an alternative. We estimate the probability of
making both bundles happy simultaneously.

\begin{lemma}[Paired-bundle estimate]
\label{lem:candidate_reparam_527}
In the setting of \cite[Lemma~5.27]{kko21}, use the parameters in
Definition~\ref{def:final24_parameters} and threshold $p$ for a good
$2$--$1$--$1$ event. If $x_{\mathbf e}(B),x_{\mathbf f}(A)\leq\omega$
and neither bundle is $2$--$1$--$1$ good, their joint
$2$--$2$--$2$ happiness event has probability greater than $p$.
\end{lemma}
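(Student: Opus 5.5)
We will follow the structure of \cite[proof of Lemma~5.27]{kko21} and \cite[Lemma~A.5]{gkl24}, with two replacements. The source constant $K$ becomes the row's $K_j$, and the source's uniform mean estimates become the sharper excess-moment bookkeeping of Lemmas~\ref{lem:conditioning_excess}--\ref{lem:asymmetric_separator}.

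\textbf{Setup.} Let $\mathbf e=(u,v)$ and $\mathbf f=(v,w)$ share the atom $v$, with degree partitions $A,B,C$ at $v$ and $x_{\mathbf e}(B),x_{\mathbf f}(A)\leq\omega$. Neither bundle is $2$--$1$--$1$ good, so the contrapositive of Lemma~\ref{lem:candidate_gkl_window} applies to each bundle. For $\mathbf e$ it gives
\[
\Pr[(A-\mathbf e)_T+U_T\leq1]<Kh-x_{\mathbf e}(B),
\]
and symmetrically for $\mathbf f$, where $U=\delta(u)-\mathbf e$ and $W=\delta(w)-\mathbf f$. These lower-tail probabilities are small, of order $Kh$. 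By Markov-type reasoning, their complements force the relevant residual counts to concentrate at rank at least two. Lemma~\ref{lem:conditioning_excess} turns this into two-sided excess moments $\E[(k-S)_+]\leq L$ and $\E[(S-k)_+]\leq U$ for the counts $S$ that enter the $2$--$2$--$2$ event.

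\textbf{Conditioning and case split.} We condition on the three endpoint trees, delete the degree remainder $C$, and select one of the two bundles; the conditioning probability is about $(1/2-h)$ up to $O(r+d_0)$. We then apply Lemma~\ref{lem:asymmetric_separator} to the binary block of the selected bundle together with the complementary count. It guarantees $qL+(1-q)U\geq q(1-q)$, so either the lower excess (low branch) or the upper excess (high branch) is at least of order $q(1-q)$.

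In each branch, Lemma~\ref{lem:nested_excess} transfers the retained mean bounds to the nested conditioned laws while keeping the cancellation between $1/a$ and $1/b$. Lemma~\ref{lem:candidate_marginal_shift} bounds the subset-mean shifts caused by deletion and selection, with the residual block deleted before the final binary selection. This yields, under the full-rank-three (resp.\ rank-two) projected law, coordinate means in an interval $[\ell,U]$ with $U\leq1.53$ and $\ell$ of order $h$.

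The joint event then follows in each branch:
\begin{itemize}
\item In the low branch, we use the compulsory-cut structure and Lemma~\ref{lem:candidate_compulsory_capacity} with factor $e^{-2}/2$, fed by GKL capacity bounds from these means.
\item In the high branch, we use the bivariate factor $e^{-1}/2$.
\item Alternatively, Lemma~\ref{lem:cubic_split} gives $c_U\ell e^{-\ell}$ directly where the means fit its hypotheses.
\end{itemize}
The two opposite selected-bundle events are disjoint, so within one separator branch we add their probabilities instead of taking a minimum.

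\textbf{Main obstacle.} The hard step is making the low-branch estimate large enough. The separator only gives an excess of order $q(1-q)$ split between $L$ and $U$, and after dividing by conditioning probabilities near $1/2$ the resulting mean distance $\ell$ from the boundary integers is only a small multiple of $h$. We must therefore track crossing-dependent lower and upper excess moments separately, using $\omega$ rather than $h$ for the $B$- and $A$-crossing masses. We must also keep the exact additive relations between subset means, via a vertex calculation over the closed crossing-mass cover, so that the capacity distances $\delta_i,\epsilon_i$ in \cite[Theorem~4.1]{gkl24} stay of order $h$ rather than collapsing.

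\textbf{Final check.} We close with a finite rational check, row by row over the crossing-mass cover endpoints, using concavity or monotonicity in the crossing mass to reduce to endpoints. The check verifies that the restored probability, including the outer factor and the independent endpoint-selection factor $(\omega-2d_0)/(1/2+h+3r)$ as in the direct-mixed case, exceeds $p$.
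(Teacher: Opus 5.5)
\textbf{There is a genuine gap at the separator step, which is where the paired argument gets its structure.}

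The two non-goodness tails from Lemma~\ref{lem:candidate_gkl_window} concern the raw counts $U_T+A'_T$ and $W_T+B'_T$, where $A'=A-\mathbf e$ and $B'=B-\mathbf f$. Neither count contains $\mathbf e$ or $\mathbf f$. Both do contain the block $Z=\delta(u)\cap\delta(w)$ of edges joining the two outer endpoints, so their sum is $D_T+2Z_T$ with $D$ disjoint from $Z$.

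The paper applies Lemma~\ref{lem:asymmetric_separator} to this $Z$, which is binary once the endpoint trees are imposed, at center four. The budgets $L_s=(L_x+L_y)/\tau$ and $U_s=(U_x+U_y)/\tau$ are \emph{small upper bounds} inherited from the two raw counts. The inequality $q(1-q)\le qL_s+(1-q)U_s$ is then used to force $\Pr[Z_T=1\mid u,v,w\text{ trees}]\le z_L$ or $\ge 1-z_H$, after three explicit updates. These two alternatives are the branches.

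You instead apply the lemma to the selected bundle. That bundle is not counted twice in any count whose excess moments you control. You also read the conclusion as ``one of the excess moments is at least of order $q(1-q)$.'' With $q$ near $1/2$, this is either vacuous or contradicts the smallness of the budgets, and it gives no case split that controls the remaining counts.

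\textbf{Consequence: the joint event is never reduced to a fixed-target coefficient.} Because $Z$ never enters your plan, the targets of the final capacity bound are undetermined. Impose $\mathbf e(A)_T=1$ and $\mathbf e(B)_T=\mathbf f_T=C_T=0$, and delete $B'$. The $2$--$2$--$2$ condition then reads
\[
A'_T=1,\quad (U\setminus Z)_T+Z_T=1,\quad (W\setminus Z)_T+Z_T=2.
\]
Only after the branch fixes $Z_T=i$ does this become the target $(1,1-i,2-i)$ for the blocks $(A',U\setminus Z,W\setminus Z)$. To that target the paper applies the seven-row subset table, the capacity bound (four-product or Lemma~\ref{lem:coupled_subset_capacity}), and the compulsory factor $e^{-(2-i)}/2$. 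Without pinning $Z$, the raw-count budgets do not control the conditional means of $U\setminus Z$ and $W\setminus Z$. Lemma~\ref{lem:cubic_split} concerns the all-ones target at total rank three and fits neither target.

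\textbf{Two smaller points.}
\begin{itemize}
\item Lemma~\ref{lem:conditioning_excess} converts excess moments into conditional means, not the reverse. For the upper excess you need the raw mean bound $\E[U_T+A'_T]\le 2+2h+r+3d_0+x_{\mathbf e}(B)$ from the degree identity. The paper also sharpens the lower excess to $Kh-b+R(b)$ using ordinary and atom-dependent Newton caps on the zero atom.
\item The factor $(\omega-2d_0)/(1/2+h+3r)$ belongs to the direct mixed event, where both degree parts carry bundle mass at least $\omega$. Here $x_{\mathbf e}(B)\le\omega$, and the paper selects $\mathbf e$ into $A$ with probability at least about $1/2-h-2r-\omega$ (with a comparable bound in the high branch). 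An extra factor of order $\omega$ would far exceed the relative margin of order $10^{-5}$ recorded in column $C$.
\end{itemize}
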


\begin{proof}
Write $\mathbf e=(v,u)$, $\mathbf f=(v,w)$,
$A'=A-\mathbf e$, $B'=B-\mathbf f$, $U=\delta(u)-\mathbf e$,
$W=\delta(w)-\mathbf f$, and $Z=\delta(u)\cap\delta(w)$.
Put $b_e=x_{\mathbf e}(B)$, $b_f=x_{\mathbf f}(A)$,
$a=Kh$, and $v_*=2h+r+3d_0$.
The coupled window and non-goodness give
\[
\Pr[U_T+A'_T\leq1]<a-b_e,\quad
\Pr[W_T+B'_T\leq1]<a-b_f.
\]
The two raw counts are Poisson-binomial because their blocks are disjoint.
For the first, the unrounded identity preceding \cite[Eq.~(56)]{kko21}
and $x(A)+x(C)\leq1+r+d$ give
\[
\E[U_T+A'_T]
=x(\delta(u))+x(A)-2x(\mathbf e)+b_e+x_{\mathbf e}(C)
\leq2+v_*+b_e.
\]
The same bound holds for the second count with $b_f$.

{\bf Crossing-dependent excess budgets.}
For either raw count let $p_j$ be its rank probabilities and let $b$ be
its crossing mass. Put $A_b=a-b$, $B_b=1-2a-v_*+b$.
The mean bound gives $p_2>B_b-p_0>1-3a-v_*+2b$, since
$p_0+p_1<A_b$. Ordinary Newton gives the initial zero-atom cap
\[
R_0(b)=\frac{(a-b)^2}{2(1-3a-v_*+2b)}.
\]
For $j=0,1$, define
\[
R_{j+1}(b)=\frac{(a-b)^2}
{2(a-b)-R_j(b)+D(1-2a-v_*+b-R_j(b))},\quad R(b)=R_2(b).
\]
Indeed, if $p_0\leq R_j$, apply Lemma~\ref{lem:atom_newton} with
$c=B_b-R_j$ to obtain
$(A_b-p_0)^2\geq k(c)p_0(B_b-p_0)$.
Replacing $p_0$ in the resulting bracket by $R_j$ proves the next cap.
Every atom argument lies in $(0,1)$, every denominator is positive,
and both updates decrease. No limiting iteration is used. Therefore
\[
\E[(2-X)_+]<L(b):=a-b+R(b),\quad
\E[(X-2)_+]<U(b):=a+v_*+R(b).
\]
In particular $U(b)-L(b)=v_*+b$; the crossing term in the raw mean
has not been discarded. The initial $R_0$ decreases with $b$.
By the increment bound for $D$, increasing $b$ and decreasing the
previous cap cannot decrease the next denominator. Thus $R,L,U$
decrease throughout $0\leq b\leq\omega<a$.

Order the two crossing masses and cover their triangle by closed
rectangles from the following knots, multiplied by $\omega/1024$:
\[
0,1,2,4,8,16,32,64,128,192,256,320,384,448,512,
640,768,896,960,992,1008,1024.
\]
There are $231$ rectangles with first interval index at least the second.
In either orientation write their intervals as
$b_e\in[l_x,c_x]$, $b_f\in[l_y,c_y]$, and put
\[
L_x=L(l_x),\quad U_x=U(l_x),\quad L_y=L(l_y),\quad U_y=U(l_y).
\]
These are upper budgets on the entire closed rectangle, not values
assigned to the unknown crossing masses.

{\bf The common separator and conditioning probabilities.}
The endpoint-tree event has probability at least $\tau=1-3d_0$.
Under it $Z_T$ is binary, and the sum of the two raw identities is
$D_T+2Z_T$ for a block $D$ disjoint from $Z$.
Lemma~\ref{lem:asymmetric_separator} applies with center four and budgets
\[
L_s=(L_x+L_y)/\tau,\quad U_s=(U_x+U_y)/\tau.
\]
Starting $z_L^{(0)}=z_H^{(0)}=1/2$, take three updates
\[
z_L^{(j+1)}=\frac{U_s}{1-L_s/(1-z_L^{(j)})},\quad
z_H^{(j+1)}=\frac{L_s}{1-U_s/(1-z_H^{(j)})}\quad(0\leq j<3).
\]
All denominators are positive and the endpoints decrease.
The separator inequality and induction imply
$\Pr[Z_T=1\mid u,v,w\text{ trees}]\leq z_L:=z_L^{(3)}$
or at least $1-z_H:=1-z_H^{(3)}$.
Both orientations use this same branch.

For branch $i\in\{0,1\}$ impose the endpoint trees, $Z_T=i$,
$C_T=0$, $\mathbf e(B)_T=0$, $\mathbf f_T=0$, and
$\mathbf e(A)_T=1$. Put $a_0=1/2-h-c_x-2r-4d_0$ and
\begin{align*}
P_{0b}&=\tau(1-z_L-2r-c_x-4d_0),\\
P_{0p}&=\tau(1/2-z_L-2r-d_0-h-c_x),&P_0&=P_{0p}a_0,\\
P_{1b}&=\tau(1-z_H)(1-2r-c_x-4d_0),\\
P_{1p}&=\tau(1-z_H)(1/2-2r-h-c_x-4d_0),&
P_1&=P_{1p}(a_0-z_H)/(1-z_H).
\end{align*}
They bound probabilities before deleting $\mathbf f$, after that deletion,
and after the final selection. Each satisfies $0<P_i<P_{ip}<P_{ib}<1$.
For the low branch use union bounds; in particular
$x(C)\leq2r+d_0$ and $x(\mathbf f)\leq1/2+h$ give the stated
$P_{0p}$. For the high branch selecting binary $Z$ decreases disjoint
deleted means. The selected $\mathbf e(A)$ probability is at least
$(a_0+q_Z-1)/q_Z\geq(a_0-z_H)/(1-z_H)$.

{\bf Residual means.}
The raw residual satisfies $x(A')\geq1/2-h-r+b_e$.
Conditioning the three endpoint trees loses at most $3d_0$ from any
exterior mean: condition their internal union to maximum rank, and use
homogeneity and its total rank deficit. Selecting $Z$ in the high branch
loses at most $z_H$ by binary rank coupling; disjoint deletions cannot
decrease the mean. The only overlap with the $\mathbf f$ deletion is
$\mathbf f(A)$, whose preceding conditional mean is at most $c_y/P_{ib}$.
Thus a lower mean before final selection is
\[
a_{ip}=1/2-h-r-3d_0+l_x-c_y/P_{ib}-iz_H.
\]
For upper means, homogeneous deleted-mass domination gives, for any
blocks $S,D$, the inequality
$\E[S_T\mid D_T=0]\leq\E[S_T]+\E[(D\setminus S)_T]$.
This follows by applying the disjoint version to $S\setminus D$ and
subtracting the deleted overlap. Before conditioning,
\[
x(A')\leq1/2+h+r+d_0+b_e-x(C\setminus\mathbf e),\quad
x(B')\leq1/2+h+r+d_0+b_f-x(C\setminus\mathbf f).
\]
Group the deletions $Z,C,\mathbf e(B)$ in the low branch, and
$C,\mathbf e(B)$ after selecting $Z$ in the high branch. The overlap
with $B'$ contains $\mathbf e(B)$, whereas that with $A'$ is empty.
Cancel the $C$ terms and use $x(C)\leq2r+d_0$; the resulting upper
means are $1/2+h+2c_x+3r+2d_0+(1-i)z_L$ and
$1/2+h+c_y+3r+2d_0+(1-i)z_L$.
Lemma~\ref{lem:nested_excess}, with the separate raw budgets, now gives
\begin{align*}
\ell_i&=a_{ip}-(U_x-L_x)/P_{ip}-L_x/P_i,\\
\alpha_i&=1/2+h+2c_x+3r+2d_0+(1-i)z_L+U_x/P_{ip}-(U_x-L_x),\\
\beta_i&=1/2+h+c_y+3r+2d_0+(1-i)z_L+U_y/P_{ip}-(U_y-L_y).
\end{align*}
The final $A'$ mean is at least $\ell_i$; both before and after selection
the $A',B'$ means are at most $\alpha_i,\beta_i<1$.
Using endpoint budgets in these differences is justified by rewriting
the lower correction as $-U_x/P_{ip}-L_x(1/P_i-1/P_{ip})$ and the upper
correction as $L_x+U_x(1/P_{ip}-1)$, and likewise for $y$.
All coefficients then have the required signs. There is no bound on a
difference of unknown expectations inferred from two upper bounds.

Delete $B'$ and set
\begin{align*}
Q_i&=P_i(1-\beta_i),& \psi_x&=L_x/Q_i,&\psi_y&=L_y/Q_i,\\
g&=L_x/P_i,& w_x&=U_x/P_{ip},& w_y&=U_y/P_{ip},\\
v_x&=w_x/(1-\beta_i),& j&=2-i.
\end{align*}
The counted blocks are $(A',U\setminus Z,W\setminus Z)$, with target
$(1,1-i,j)$. Their complete subset table is
\begin{center}
\begin{tabular}{c c c}
\toprule
Union&lower mean&upper mean\\
\midrule
$A'$&$\ell_i$&$\alpha_i+\beta_i$\\
$U\setminus Z$&$(1-i)(2-g-\alpha_i)$&$j+v_x-a_{ip}$\\
$W\setminus Z$&$j-\psi_y$&$j+w_y$\\
$A'\cup(U\setminus Z)$&$j-\psi_x$&$j+v_x$\\
$A'\cup(W\setminus Z)$&$\ell_i+j-\psi_y$&$j+w_y+\alpha_i$\\
$(U\cup W)\setminus Z$&$L_{UW}$&$2j+w_x+w_y-a_{ip}$\\
$A'\cup((U\cup W)\setminus Z)$&$2j-\psi_x-\psi_y$&$2j+w_x+w_y$\\
\bottomrule
\end{tabular}
\end{center}
Here $L_{UW}=4-\psi_x-\psi_y-\alpha_i-\beta_i$ for $i=0$, and
$L_{UW}=1-\psi_y$ for $i=1$.
For the lower columns, use the raw lower-excess estimates on the final
event, and the pre-deletion estimate for $U$ in the low branch.
For the upper columns, commute the last two conditions: delete $B'$
first, then select $\mathbf e(A)$. Deletion increases $A'$ by at most
$\beta_i$ and leaves its lower mean at least $a_{ip}$; the containing
$X$ mean is at most $j+v_x$. After deleting $B'$, its containing
$Y$ becomes $W\setminus Z$, of mean at most $j+w_y$.
The other unions are contained in $A'+Y$, $X+Y-A'$, and $X+Y$ before
deletion, giving the remaining entries. The final disjoint selection
only decreases upper means. Its positive probability was already proved;
commuting the two conditions defines the same law, not independent events.

{\bf Capacity and the two disjoint orientations.}
For every rectangle and branch, all table entries satisfy
$\kappa_S-1<L_S\leq U_S<\kappa_S+1$ for their target sums.
Use the four-product capacity bound, or the stronger vertex bound of
Lemma~\ref{lem:coupled_subset_capacity}, and denote the resulting lower
bound by $\mathcal C_i$. The low branch has the compulsory proper-cut
successes $W\setminus Z\geq1$ and $A'+U\geq1$. Removing the former
gives coefficient factor $e^{-2}/2$. In the high branch, the selected
$\mathbf e,Z$ edges connect the proper three-atom union. At the zero
target in $U\setminus Z$, specialize that coordinate to zero in capacity.
Positive capacity excludes the zero polynomial; the remaining bivariate
polynomial has zero constant term, giving factor $e^{-1}/2$.
Thus the oriented event has probability at least
\[
B_e^i=Q_i(e^{-j}/2)\mathcal C_i.
\]
Exchange $\mathbf e,\mathbf f$, $u,w$, and $A,B$ in the entire
construction, including both interval quadruples, to obtain $B_f^i$.
The two events have $(\mathbf e_T,\mathbf f_T)=(1,0)$ and $(0,1)$,
so they are disjoint. Each gives all three boundary degrees two.
The common separator branch is unchanged by the exchange. Hence in
that branch the happiness probability is at least $B_e^i+B_f^i>p$.
The last strict comparison and every preceding premise are checked in
both branches on every closed rectangle. Shared endpoints cover the
whole triangle. Only the same-branch probabilities are added.
The resulting happiness event retains one shared thinning coin; the
two oriented subevents are not separately charged in the payment proof.
\end{proof}

The preceding capacity estimate can be sharpened by retaining the
additive relations between the seven subset means.

\begin{lemma}[Capacity bound with coupled subset means]
\label{lem:coupled_subset_capacity}
Let $f$ be a real stable polynomial in three variables with nonnegative
coefficients and $f(\mathbf1)=1$, and let
$\boldsymbol\kappa\in\mathbb Z_{\geq0}^3$.
For each nonempty $S\subseteq\{1,2,3\}$, suppose that
\[
L_S\leq\sum_{i\in S}\partial_i f(\mathbf1)\leq U_S,
\quad \kappa_S-1<L_S\leq U_S<\kappa_S+1,
\]
where $\kappa_S:=\sum_{i\in S}\kappa_i$ and the singleton lower
bounds are nonnegative. Define
\[
\mathcal P:=\{\boldsymbol\mu\in\mathbb R^3:
L_S\leq\sum_{i\in S}\mu_i\leq U_S
\text{ for every nonempty }S\subseteq\{1,2,3\}\}.
\]
For $k\in\{1,2,3\}$ and $\boldsymbol\mu\in\mathcal P$, put
\begin{align*}
d_k(\boldsymbol\mu)&:=\min_{|S|=k}\{1+\sum_{i\in S}(\mu_i-\kappa_i)\},\\
e_k(\boldsymbol\mu)&:=\min_{|S|=k}\{1-\sum_{i\in S}(\mu_i-\kappa_i)\},\\
C(\boldsymbol\mu)&:=\min\{d_1d_2d_3,e_1d_1d_2,e_1e_2d_1,e_1e_2e_3\},
\end{align*}
where every factor in the last line is evaluated at $\boldsymbol\mu$.
Then
\[
\operatorname{Cap}_{\boldsymbol\kappa}(f)
\geq\min_{\boldsymbol v\in\operatorname{vert}(\mathcal P)}\{C(\boldsymbol v)\}.
\]
This bound is at least the four-product bound obtained by minimizing
each subset mean separately over its interval.
\end{lemma}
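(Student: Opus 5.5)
The plan is to reduce to the GKL capacity inequality evaluated at the actual mean vector, and then minimize over the polytope. Let $\boldsymbol\mu^\ast:=\nabla f(\mathbf 1)$. By hypothesis every subset sum of $\boldsymbol\mu^\ast$ lies in $[L_S,U_S]$, so $\boldsymbol\mu^\ast\in\mathcal P$. The capacity form of \cite[Theorem~4.1 and Corollary~5.13]{gkl24}, used in Lemma~\ref{lem:flow-atom} and in the four-product bound of Lemma~\ref{lem:candidate_reparam_527}, states the following. If all subset means are strictly between their neighboring integers $\kappa_S\pm1$, then $\operatorname{Cap}_{\boldsymbol\kappa}(f)$ is at least the minimum of the four products of asymmetric distances. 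Here $\delta_k$ is the smallest distance above $\kappa_S-1$ over $|S|=k$, and $\epsilon_k$ is the smallest distance below $\kappa_S+1$, taken in the orders $\delta_1\delta_2\delta_3$, $\epsilon_1\delta_1\delta_2$, $\epsilon_1\epsilon_2\delta_1$, $\epsilon_1\epsilon_2\epsilon_3$. At $\boldsymbol\mu^\ast$ these distances are exactly $d_k(\boldsymbol\mu^\ast)$ and $e_k(\boldsymbol\mu^\ast)$, all positive by the strict bounds $\kappa_S-1<L_S\leq U_S<\kappa_S+1$. I would first check this identification with the cited statement's indexing, including the role of $\boldsymbol\kappa$ in three variables. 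That gives $\operatorname{Cap}_{\boldsymbol\kappa}(f)\geq C(\boldsymbol\mu^\ast)\geq\inf_{\mathcal P}C$.

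The second step is to show that $\inf_{\mathcal P}C$ is attained at a vertex. Each $d_k$ and $e_k$ is a minimum of affine functions, hence concave, and it is positive on $\mathcal P$ by the strict interval assumption. A product of positive concave functions need not be concave in general, so I would handle this through a single affine choice. Write $C=\min$ over the four product types and, inside each, over the choices of subsets $S$ realizing each factor. This makes $C$ a minimum of finitely many products $g_1g_2g_3$ of positive affine functions on $\mathcal P$. For each such product, $\log(g_1g_2g_3)=\sum\log g_i$ is concave on $\mathcal P$, since each $g_i$ is affine and positive there. A concave function on a compact polytope attains its minimum at a vertex, and $\exp$ is monotone, so each product is minimized at a vertex. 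Swapping the two minima gives $\min_{\mathcal P}C=\min_{\operatorname{vert}(\mathcal P)}C$. The polytope is bounded because the singleton intervals are bounded, and it is nonempty because it contains $\boldsymbol\mu^\ast$.

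For the final comparison, the four-product bound minimizes each subset sum separately over $[L_S,U_S]$. It replaces every $d_k$ by $\min_{|S|=k}(1+L_S-\kappa_S)\leq d_k(\boldsymbol v)$ and every $e_k$ by $\min_{|S|=k}(1-U_S+\kappa_S)\leq e_k(\boldsymbol v)$, for every $\boldsymbol v\in\mathcal P$. All factors are positive, so each product, and hence the minimum, can only be smaller. The step I expect to be the obstacle is the first one. It requires confirming that the GKL capacity theorem applies to an arbitrary real stable $f$ with nonnegative coefficients normalized by $f(\mathbf 1)=1$, possibly after polarization or homogenization, and that its distance products match $d_k,e_k$ exactly, including the nonnegativity of the singleton lower bounds that the theorem requires. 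If the cited version needs multiaffinity, I would polarize, since polarization preserves capacity and gradient sums up to the standard reduction.
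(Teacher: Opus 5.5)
Your proposal is correct and follows essentially the same route as the paper: apply the GKL capacity bound at the actual mean vector $\nabla f(\mathbf 1)\in\mathcal P$, use log-concavity to push the minimum of $C$ over $\mathcal P$ to a vertex, and compare factor by factor for the final assertion. The one difference is cosmetic: you first split $C$ into a minimum of products of positive affine functions. The paper skips this step, noting that $\log d_k$ and $\log e_k$ are already concave because $d_k,e_k$ are positive and concave, so $\log C$ is concave directly.
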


\begin{proof}
The actual mean vector belongs to $\mathcal P$, so this polytope is
nonempty. The singleton constraints make it compact. Every $d_k,e_k$
is a positive concave function on $\mathcal P$, being the minimum of
finitely many affine functions. Hence $\log d_k$ and $\log e_k$ are
concave. The logarithm of each of the four products is concave, and
their pointwise minimum $\log C$ is concave as well.
Writing any $\boldsymbol\mu\in\mathcal P$ as a convex combination
$\sum_v\lambda_v\boldsymbol v$ of vertices gives
\[
\log C(\boldsymbol\mu)
\geq\sum_v\lambda_v\log C(\boldsymbol v)
\geq\min_v\{\log C(\boldsymbol v)\}.
\]
Apply \cite[Theorem~5.11 and Corollary~5.13]{gkl24} at the actual
mean vector to obtain the capacity bound. Finally, every factor at
every vertex is at least its separately minimized interval bound.
Multiplying positive factors and taking the four-product minimum
proves the last assertion.

For rational endpoints, the bound is a finite rational calculation.
Every vertex has three linearly independent active constraint normals;
otherwise a nonzero direction orthogonal to all active normals gives
a short feasible segment through that point. Conversely, a feasible
intersection of three independent active planes is a vertex.
Thus one enumerates triples of the fourteen boundary planes, retains
their feasible intersections, and evaluates $C$ there. This remains
valid when $\mathcal P$ has dimension less than three.
\end{proof}

Apply the lemma separately to the low- and high-branch tables above,
with targets $(1,1,2)$ and $(1,0,1)$. In each application all seven
intervals must refer to the same conditional law. The compulsory-cut
coefficient factors and the conditioning probabilities are unchanged.
The final parameter choice uses this stronger bound in the paired-event
comparisons when the original four-product estimate does not suffice.

With the probability estimates in place, we construct the fractional matching used
to distribute payment increases. The matching discount is imposed only at eligible
atoms incident to a bad bundle.

\begin{lemma}[Localized matching]
\label{lem:candidate_matching}
For a degree cut $S$ and its child $u$, put
$U_u:=x(\delta^\uparrow(u))$, and let $b(u)$ be the indicator that $u$
is incident to an internal bad bundle of $S$. Define
\[
\begin{aligned}
F_u&:=1-\epsilon_B\mathbf1_{\{\epsilon_F\leq U_u\leq1-\epsilon_F\}}b(u),\\
Z_u&:=1+\mathbf1_{\{|\mathcal A(S)|\geq4,\ U_u\leq\epsilon_F\}},
\quad\alpha:=2d_0.
\end{aligned}
\]
For every $0\leq d\leq d_0$, there are nonnegative matching values
$m_{\mathbf e,u}$ such that, for every good internal bundle
$\mathbf e=(u,v)$,
\[
m_{\mathbf e,u}F_u+m_{\mathbf e,v}F_v\leq(1+\alpha)x_{\mathbf e},
\quad
\sum_{\substack{\mathbf e\text{ good}\\\mathbf e\ni u}}
m_{\mathbf e,u}=U_uZ_u.
\]
Thus the fractional discount is required only at atoms incident to a bad
bundle.
\end{lemma}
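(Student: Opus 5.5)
The plan is to reduce the statement to the max-flow matching construction of \cite[Lemma~5.19 and Theorem~5.28]{kko21} and rerun its feasibility argument with the localized demand $F_u$. Fix the degree cut $S$ and form the bipartite network in which every child $u$ is a demand node of demand $U_uZ_u$, and every good internal bundle $\mathbf e=(u,v)$ is a supply node. The supply node of $\mathbf e$ routes flow to $u$ and to $v$. We weight the arc into $u$ by $F_u$ and give the node capacity $(1+\alpha)x_{\mathbf e}$. Since $F_u\geq1-\epsilon_B>0$, the substitution $f_{\mathbf e,u}:=m_{\mathbf e,u}F_u$ makes this a generalized flow. Equivalently, it is an ordinary flow in which child $u$ demands $U_uZ_uF_u$, and the capacity constraint reads $f_{\mathbf e,u}+f_{\mathbf e,v}\leq(1+\alpha)x_{\mathbf e}$. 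By the max-flow min-cut theorem, it suffices to show for every set $\mathcal U$ of children that
\[
\sum_{u\in\mathcal U}U_uZ_uF_u\leq(1+\alpha)\,x(\text{good bundles with an endpoint in }\mathcal U).
\]
Once a flow saturating all demands exists, we set $m_{\mathbf e,u}=f_{\mathbf e,u}/F_u$, which gives the displayed conservation identity exactly.

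To verify the cut condition, we follow the source proof. The key identity there is that each child has $x(\delta(u))\in[2,2+d]$, so that
\[
x(\delta^\rightarrow(u))\geq 2-U_u.
\]
Summing over $\mathcal U$ bounds the total horizontal mass incident to $\mathcal U$ from below by $\sum_{u\in\mathcal U}(2-U_u)$. We then subtract the bad bundles. By Lemma~\ref{lem:final24_competing} the bad half bundles form a matching, and the non-half bad bundles are handled as in the source. The remaining good mass must dominate $\sum_{u\in\mathcal U}U_uZ_u$ up to the slack $\alpha$.

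Where $F_u=1$ there is nothing new. The case $Z_u=2$ arises only when $U_u\leq\epsilon_F$ and the cut has at least four atoms, and this is exactly the source's doubled-demand case. The new slack comes from the discount. At a child $u$ incident to a bad bundle $\mathbf e_u$, we must show that the lost capacity $x_{\mathbf e_u}$ is compensated by the reduced demand $U_u(1-\epsilon_B)$ when $\epsilon_F\leq U_u\leq1-\epsilon_F$. Here Lemma~\ref{lem:candidate_good_threshold} enters. Badness of a half bundle forces the upward mass at both endpoints to be below $1/2+k_{\rm good}h$. Combining this with $x_{\mathbf e_u}\leq1/2+h$ and the degree equation bounds the mass missing from the demand side. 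We will show that this missing mass is at most
\[
U_u\bigl((k_{\rm good}+1)h+2d_0h\bigr)\big/\bigl(\tfrac12+k_{\rm good}h\bigr),
\]
up to the $10^{-11}$ rounding. This is precisely the quantity $\epsilon_B U_u$ in Definition~\ref{def:final24_parameters}, which explains the form of $\epsilon_B$. Outside the window $[\epsilon_F,1-\epsilon_F]$, and at atoms not incident to a bad bundle, the source argument applies verbatim with the factor $1+\alpha=1+2d_0$ absorbing the hierarchy error $d\leq d_0$.

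The main obstacle is the set-level bookkeeping when $\mathcal U$ contains both endpoints of a bad bundle, or only one of them. In the first case the bad bundle's mass is removed only once. In the second case the partner's discount is unavailable, so the charge must fall entirely on $u$'s own $F_u$. I would first try a local charging scheme that assigns each bad bundle to its endpoint(s) in $\mathcal U$ and uses the matching structure to keep the charges disjoint. Failing that, I would fall back on the source's case split on $|\mathcal A(S)|$ (triangle, three-atom degree configuration, and at least four atoms), checking each with the windowed discount.
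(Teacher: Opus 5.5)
You have the same setup as the paper: the network, the substitution $f_{\mathbf e,u}=m_{\mathbf e,u}F_u$, and the reduction to a Hall-type condition with demands $U_uZ_uF_u$ and supplies $(1+\alpha)x_{\mathbf e}$. That setup is the easy part. The proposal never verifies the cut condition, and the steps you sketch toward it do not work.

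\textbf{The supply bound is wrong.} Summing $x(\delta^\rightarrow(u))\geq2-U_u$ over $\mathcal U$ counts every bundle with both endpoints in $\mathcal U$ twice, whereas your Hall condition counts it once. For $\mathcal U=\mathcal A(S)$ the claimed bound is about $2|\mathcal A(S)|-2$, while the internal mass is only about $|\mathcal A(S)|-1$. The usable bound comes from the complementary subtour constraint on $S\setminus W$ together with $x(\delta(S))\leq2+d$. A proper subset $W$ of $w$ atoms then has internal-bundle incidence (mass of internal bundles with an endpoint in $W$) at least $w-d/2$ \cite[Lemma~6.3]{kko21}. The full atom set must be treated separately.

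\textbf{The discount accounting is not a valid inequality.} A bad bundle removes supply about $(1+\alpha)(1/2+h)$, far more than any $\epsilon_BU_u$. The expression you give for the ``missing mass'' is the definition of $\epsilon_B$ read backwards, not a derived bound. What actually compensates the lost bundle is that its endpoint has demand at most $(1/2+k_{\rm good}h)(1-\epsilon_B)$. This replaces the generic per-atom bound $U_uF_uZ_u\leq1+d$. The discount itself pays only the residual of order $(k_{\rm good}+1)h$.

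\textbf{The bookkeeping you leave open.} It is settled by charging each bad bundle adjacent to $W$ to one distinct endpoint in $W$. This is possible because bad bundles form a matching, whether or not the partner lies in $W$. You must also handle a chosen bad endpoint with $U_u\leq\epsilon_F$, where $F_u=1$ and no discount is available. The paper uses $U_uZ_u\leq2\epsilon_F<(1/2+k_{\rm good}h)(1-\epsilon_B)$ there. Your claim that the source applies verbatim outside the window misses this case, since that bad bundle's supply is still lost. With $k$ bad bundles one obtains
\[
w(\alpha-d)\geq kB(d)+\tfrac d2(1+\alpha),\quad
B(d)=\tfrac\alpha2+(k_{\rm good}+1)h+\alpha h-\tfrac{\epsilon_B}2-k_{\rm good}\epsilon_Bh-d.
\]
Its worst case is $d=d_0$, where $B(d_0)=-(1/2+k_{\rm good}h)10^{-11}<0$.

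\textbf{The cut containing all atoms needs its own case analysis.} Your fallback only names this, and a triangle is not a degree cut, so it is not one of the cases.
\begin{itemize}
\item Three atoms admit no bad bundle, since one would force the third atom's degree to be at least $3-(4k_{\rm good}+2)h>2+d_0$.
\item Five or more atoms, or four atoms with at most one bad bundle, go through using only $F_u\leq1$ and $U_uZ_u\leq U_u+\epsilon_F$.
\item Four atoms with two bad bundles leave good supply only $(1+\alpha)(2-2h-d_0/2)<2+d$. So all four demands must be discounted, which requires every upward mass to be $\epsilon_F$-fractional. That follows from $(2-\epsilon_F)/3>1/2+k_{\rm good}h$, after which the margin $(1+\alpha)(2-2h-d_0/2)-(2+d_0)(1-\epsilon_B)$ is positive.
\end{itemize}
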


\begin{proof}
Lemma~\ref{lem:candidate_good_threshold} replaces the published $k=9$
endpoint threshold by $k_{\rm good}$, while
Lemma~\ref{lem:final24_competing} supplies the source fact that an
atom is incident to at most one bad bundle.  We check every cut family in
the max-flow proof of \cite[Lemma~6.2]{kko21} with the localized $F_u$.
The source network gives each good bundle supply $(1+\alpha)x_{\mathbf e}$
and each child sink demand $U_uF_uZ_u$. Saturating these demands and
dividing the incoming flow at $u$ by $F_u>0$ gives the stated identities.
The extra scalar comparisons needed below are
\[
0<\epsilon_B<1,\quad 1/2+k_{\rm good}h<1-\epsilon_F,\quad
2\epsilon_F<(1/2+k_{\rm good}h)(1-\epsilon_B).
\]

For a three-atom degree cut, a hypothetical bad bundle forces the third
atom to have degree at least $3-(4k_{\rm good}+2)h$.  Thus the contradiction
in \cite[Lemma~6.4]{kko21} remains valid because
\[
3-(4k_{\rm good}+2)h>2+d_0.
\]
There are therefore no bad bundles in this case, and $F_u=Z_u=1$.
The corresponding source-only max-flow cut has capacity margin
\[
(1+\alpha)(2-d_0/2)-(2+d_0)>0.
\]
For the source cuts with at least five atoms, Eqs.~(28)--(29) of the source
give, with $n:=|\mathcal A(S)|$,
\[
M_{\geq5}(n):=(1+\alpha)
(n-1-\frac{d_0}{2}-\frac n2(\frac12+h))
-(2+d_0+\epsilon_F n).
\]
Its coefficient in $n$ is positive, so the worst case is $n=5$, where
$M_{\geq5}(5)>0$. These bounds use only $F_u\leq1$ and
$U_uZ_u\leq U_u+\epsilon_F$, both unchanged by localization.
For a four-atom cut with at most one bad bundle,
the corresponding margin is
\[
M_{4,\leq1}:=(1+\alpha)(\frac52-\frac{d_0}{2}-h)
-(2+d_0+4\epsilon_F)>0.
\]
These are the two places where the source used only $h\leq0.01$,
$\epsilon_F\leq0.1$, and $\alpha\geq2d$.  In the four-atom case with two bad bundles, the
argument forcing every upward mass to be $\epsilon_F$-fractional remains
valid because
\[
\frac{2-\epsilon_F}{3}>1/2+k_{\rm good}h.
\]
The two bad bundles cover all four atoms. Thus every upward mass is
fractional, every $b(u)=1$, and all four sink factors are $1-\epsilon_B$.
The remaining capacity comparison is exactly
\[
M_{4,2}:=(1+\alpha)(2-2h-d_0/2)
-(2+d_0)(1-\epsilon_B)>0.
\]

It remains to check the nontrivial max-flow cut in
\cite[Eq.~(30)]{kko21}. Let $W$ be a nonempty proper subset of the atoms,
put $w:=|W|$, and let $k\leq w$ be the number
of distinct bad bundles adjacent to $W$. Since bad bundles form a matching,
choose one distinct endpoint in $W$ for each such bundle. For a chosen
endpoint with $U_u\leq\epsilon_F$, its demand is at most $2\epsilon_F$,
and hence at most $(1/2+k_{\rm good}h)(1-\epsilon_B)$.
Otherwise the bad-endpoint bound gives
$\epsilon_F<U_u\leq1/2+k_{\rm good}h<1-\epsilon_F$;
its factor is $F_u=1-\epsilon_B$ and the same demand bound holds.
For every unchosen atom use $U_uF_uZ_u\leq1+d$.
The total sink demand in $W$ is therefore at most
\[
k(1/2+k_{\rm good}h)(1-\epsilon_B)+(w-k)(1+d).
\]
By \cite[Lemma~6.3]{kko21}, the internal-bundle incidence at $W$ is at
least $w-d/2$. Removing the $k$ bad bundles leaves good supply at least
\[
(1+\alpha)(w-d/2-k(1/2+h)).
\]
Subtracting the preceding demand gives the same scalar inequality as
Eqs.~(30)--(31) of the source:
\[
w(\alpha-d)\geq kB(d)+\frac d2(1+\alpha),
\quad
B(d):=\frac\alpha2+(k_{\rm good}+1)h+\alpha h
-\frac{\epsilon_B}{2}-k_{\rm good}\epsilon_Bh-d.
\tag{*}
\]
For $k=0$, the slack in (*) is minimized at $w=1$ and $d=d_0$, where
\[
\alpha-d_0-\frac{d_0}{2}(1+\alpha)
=\frac{d_0}{2}(1-\alpha)>0.
\]
Now suppose $k\geq1$ and define
\[
S(w,k,d):=w(\alpha-d)-kB(d)-\frac d2(1+\alpha).
\]
Since
\[
\frac{\partial S}{\partial d}
=-w+k-\frac{1+\alpha}{2}<0,
\]
the worst value of $d$ is $d_0$. At that value,
\[
B(d_0)=-\frac{1+2k_{\rm good}h}{2}\,10^{-11}<-5\cdot10^{-12},
\quad S(1,1,d_0)>0.
\]
Increasing $w$ by one increases $S$ by
$\alpha-d_0=d_0>0$, while increasing $k$ by one increases $S$ by
$-B(d_0)>0$. Hence $S(w,k,d)>0$ for every $1\leq k\leq w$.
Thus every source--sink cut has the required capacity, and the matching
exists with the required capacity inequality and conservation identity.
Only the selected bad endpoints used the fractional discount; no atom
with $b(u)=0$ was discounted in the cut bounds.
This complete cut verification replaces the source's sufficient condition
$\epsilon_B\geq21h$; that condition is not an additional premise here.
\end{proof}

The payment analysis needs a structural alternative at each non-root atom. The
following trichotomy supplies either substantial bad mass, substantial
individually good mass, or a pair of bundles with a sufficiently likely joint
happiness event.

\begin{lemma}[Structural trichotomy]
\label{lem:candidate_structural_trichotomy}
Let $v$ be a non-root hierarchy atom with parent $S$, let $A,B,C$ be the
degree partition of $\delta(v)$, and suppose $0\leq d\leq d_0$. At least one
of the following holds:
\begin{enumerate}[label=(\alph*)]
\item \label{item:candidate_structural_bad}
the $x$-mass of bad top edges in $\delta^\rightarrow(v)$ is at least
$1/2-h$;
\item \label{item:candidate_structural_good}
the $x$-mass of top edges in $\delta^\rightarrow(v)$ that are
$2$--$1$--$1$ good with respect to $v$ is at least $1/2-h-d$;
\item \label{item:candidate_structural_paired}
there are two top half bundles $\mathbf e,\mathbf f$ in
$\delta^\rightarrow(v)$ such that
$x_{\mathbf e}(B)\leq\omega$, $x_{\mathbf f}(A)\leq\omega$, and their joint
$2$--$2$--$2$ happiness event has probability at least $p$.
\end{enumerate}
\end{lemma}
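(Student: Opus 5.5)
Assume \ref{item:candidate_structural_bad} and \ref{item:candidate_structural_good} both fail, and produce the pair required by \ref{item:candidate_structural_paired}. The argument follows the source trichotomy underlying \cite[Lemmas~5.23--5.27 and Theorem~5.28]{kko21}. Three of its inputs are replaced by the results proved above: the bad half bundles form a matching (Lemma~\ref{lem:final24_competing}), a good half bundle satisfies the window criterion (Lemma~\ref{lem:candidate_gkl_window}), and a pair of non-good bundles has a likely joint event (Lemma~\ref{lem:candidate_reparam_527}).

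\textbf{Step 1: only two half bundles matter.} Since $x(\delta^\rightarrow(v))\geq1-d$ up to the hierarchy error, failure of \ref{item:candidate_structural_good} leaves mass more than $1/2$ in top edges of $\delta^\rightarrow(v)$ that are not $2$--$1$--$1$ good. Failure of \ref{item:candidate_structural_bad} then forces good-but-not-$2$--$1$--$1$-good edges to carry positive mass. Bundles with $x$-value above $1/2+h$ are outside the top class, and non-half good bundles are $2$--$1$--$1$ good by the small and large non-half events of Lemma~\ref{lem:final24_probability}. So the offending mass sits on half bundles, each of mass at most $1/2+h$. Hence there are at least two half bundles $\mathbf e=(v,u)$ and $\mathbf f=(v,w)$ that are good (in the $\gamma_{\rm good}h$ sense) but not $2$--$1$--$1$ good. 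If only one bundle were available, the source counting would place mass at least $1/2-h$ in bad edges or in $2$--$1$--$1$ good edges. Here Lemma~\ref{lem:final24_competing} ensures at most one bad half bundle at $v$, and this is what gives the $-d$ loss in \ref{item:candidate_structural_good}.

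\textbf{Step 2: crossing masses.} I next locate the degree sides of the two bundles via the partition $A,B,C$ of $\delta(v)$.

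1. If some bundle has $x_{\mathbf e}(B)\leq\omega$ and the window premise $\Pr[(A-\mathbf e)_T+V_T\leq1]+x_{\mathbf e}(B)\geq Kh$ holds, then Lemma~\ref{lem:candidate_gkl_window} makes it $2$--$1$--$1$ happy with probability greater than $p$. That contradicts the choice of $\mathbf e$. The same holds with $A,B$ swapped.
2. If both bundles have crossing mass at least $\omega$ on the relevant sides, the two-sided mixed event of Lemma~\ref{lem:final24_probability} (the $P_0P_em_*\cdots>p$ bound) makes one of them $2$--$1$--$1$ good, again a contradiction.
3. If both bundles lie on the same side with crossing masses at most $\omega$, Lemma~\ref{lem:candidate_same_side_tail} supplies the window premise for one of them, and case 1 gives the contradiction.

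The surviving configuration is opposite orientation, $x_{\mathbf e}(B)\leq\omega$ and $x_{\mathbf f}(A)\leq\omega$, with neither bundle $2$--$1$--$1$ good. Lemma~\ref{lem:candidate_reparam_527} then gives a joint $2$--$2$--$2$ probability greater than $p$, which is \ref{item:candidate_structural_paired}.

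\textbf{Main obstacle.} I expect the hard part to be the case split in Step~2, namely showing that the cases really cover every configuration under the separated cutoff $\omega$, which differs from $h$. The delicate situations are a bundle whose $C$-part is large, and the possibility that the one bad half bundle at $v$ is exactly the one needed. For these I would redo the degree-partition identities used in Lemma~\ref{lem:candidate_same_side_tail}: $x(A)+x(C)\leq1+r+d$, and $x_{\mathbf e}(A)+x_{\mathbf f}(A)\geq1-2h-2\omega-c_{\mathbf e}-c_{\mathbf f}$. These should show that a large $C$-part forces enough bad or $2$--$1$--$1$ good mass to give \ref{item:candidate_structural_bad} or \ref{item:candidate_structural_good}. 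I would also check that each scalar premise ($\omega<Kh$, $h\leq\omega<0.02$, $r<s_0$) holds row by row for the tuples of Definition~\ref{def:final24_parameters}.
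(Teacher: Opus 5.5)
Your strategy is the paper's: rerun the proof of \cite[Theorem~5.28]{kko21} with the non-half events, the direct mixed test, Lemmas~\ref{lem:candidate_same_side_tail} and~\ref{lem:candidate_gkl_window}, and Lemma~\ref{lem:candidate_reparam_527} replacing the source lemmas. As written, though, Step~2 does not close, and several justifications in Step~1 are wrong.

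\begin{itemize}
\item Failure of~\ref{item:candidate_structural_bad} excludes every bad half bundle outright, because each half bundle has mass at least $1/2-h$. So Lemma~\ref{lem:final24_competing} plays no role, and there is no bad bundle that could be ``the one needed''.
\item The $-d$ in~\ref{item:candidate_structural_good} does not come from that lemma. It comes from $x(\delta^\rightarrow(v))\geq1-d$: with at most one half bundle, the non-half mass, all of it $2$--$1$--$1$ good, is at least $1-d-(1/2+h)$.
\item With no bad mass, the good but not $2$--$1$--$1$-good mass exceeds $1/2+h$. That bound, not mere positivity (nor ``more than $1/2$''), is what forces two such half bundles.
\item Your large-$C$ worry is void. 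The degree partition has $x(C)\leq2r+d$, and $1/2-h-2\omega-4r-2d_0>0$ then gives every half bundle $A\cup B$ mass above $2\omega$. No mechanism of the form ``a large $C$-part forces~\ref{item:candidate_structural_bad} or~\ref{item:candidate_structural_good}'' is needed.
\item Bundles of mass above $1/2+h$ are large non-half top bundles, not outside the top class. This is harmless, since the large event makes them $2$--$1$--$1$ good.
\end{itemize}

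The real hole is your case~2. The direct mixed estimate in Lemma~\ref{lem:final24_probability} concerns a single bundle: if $x_{\mathbf e}(A)\geq\omega$ and $x_{\mathbf e}(B)\geq\omega$, then $\mathbf e$ itself is $2$--$1$--$1$ good. You phrased it as a joint condition on both bundles (``makes one of them $2$--$1$--$1$ good''). That version misses the configuration where only one bundle has both degree-side masses large, which is exactly why you could not confirm exhaustiveness. Apply the test to $\mathbf e$ and $\mathbf f$ separately, as the paper does. Neither is $2$--$1$--$1$ good, so each has a side of mass at most $\omega$. If both small sides lie in the same degree part, Lemma~\ref{lem:candidate_same_side_tail} together with Lemma~\ref{lem:candidate_gkl_window} rules this out. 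Otherwise they lie in opposite parts, and after exchanging $A,B$ or $\mathbf e,\mathbf f$ one has $x_{\mathbf e}(B),x_{\mathbf f}(A)\leq\omega$. Lemma~\ref{lem:candidate_reparam_527} then gives~\ref{item:candidate_structural_paired}. With these repairs your argument coincides with the paper's proof.
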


\begin{proof}
The small- and large-bundle estimates in
Lemma~\ref{lem:candidate_common_probability} make every non-half top bundle
$2$--$1$--$1$ good. Inspecting the proof of \cite[Lemma~5.25]{kko21}, its
only probabilistic inputs are the four individual happiness bounds supplied
by the same lemma; the rest of its argument only partitions incident bundle
mass. For completeness, at most four half bundles are incident to an atom,
since $5(1/2-h)>2+d_0$. If good but not $2$--$1$--$1$-good mass on
one degree side exceeds $1/2+4\omega$, two half bundles each contribute at
least $\omega$ there: otherwise one contributes at most $1/2+h$ and the other
three contribute less than $3\omega$, totaling at most
$1/2+h+3\omega\leq1/2+4\omega$. For each of these two bundles, the
direct mixed test forces opposite-side mass below $\omega$. The same-side
test then makes one $2$--$1$--$1$ good, a contradiction. Thus the
$1/2+4\omega$ bound holds for the replacement good class.
An ordinary good half bundle has its endpoint-tree $2$--$2$ event with
probability at least $(1-d_0)\gamma_{\rm good}h>p$.

We now repeat the proof of \cite[Theorem~5.28]{kko21}. If the first
alternative fails, there is no bad half bundle, because every half bundle has
mass at least $1/2-h$. If there is at most one half bundle, all remaining
mass is non-half; since $x(\delta^\rightarrow(v))\geq1-d$, that mass is at
least $1-d-(1/2+h)=1/2-h-d$, and the second alternative follows. Otherwise choose two
good half bundles $\mathbf e,\mathbf f$. If either is $2$--$1$--$1$ good,
its mass alone proves the second alternative. If neither is, the two-sided
mixed-bundle estimate included in
Lemma~\ref{lem:candidate_common_probability} implies that each bundle has at
least one of its $A$- and $B$-masses at most $\omega$, while each bundle has mass exceeding $2\omega$ after removing
the remainder, since $1/2-h-2\omega-4r-2d_0>0$.
Lemmas~\ref{lem:candidate_same_side_tail} and
\ref{lem:candidate_gkl_window} rule out the two small masses lying in the same
degree part. After exchanging $A,B$ or $\mathbf e,\mathbf f$ if necessary, this
leaves $x_{\mathbf e}(B),x_{\mathbf f}(A)\leq\omega$.
Lemma~\ref{lem:candidate_reparam_527} then gives the third alternative.
These are exactly the uses of \cite[Lemmas~5.21--5.24 and~5.27]{kko21} in
the source proof; after their replacement, no step uses the old inequality
$12r\leq h$.
\end{proof}

\subsection{Parity estimates and synchronized reductions}
\label{sec:reparam_payment_estimates}

At an endpoint incident to a bad bundle, we need a sharper parity estimate than
the universal polygon bound. We specify the finite, outward-rounded bootstrap used
to obtain it.

\begin{definition}[Parity bootstrap at bad endpoints]
\label{def:final24_parity}
Put
\[
\rho:=20d_0,\quad q_-:=1/2-h-\rho,\quad q_+:=1/2+h+\rho,
\quad \nu:=3/2-h-3\rho,
\]
\[
b_0:=1/2-h-2\rho,\quad
\overline b:=\gamma_{\rm good}h/(1-2d_0),\quad M_0:=2h+4\rho.
\]

For a real number $x$, let $\lceil x\rceil_{20}$ and
$\lfloor x\rfloor_{20}$ denote upward and downward rounding to
the $10^{-20}$ grid. For fixed inputs
$a_*,b_*>0$, $D_*\geq0$, $s_*\in\mathbb R$, and a starting bound
$B_0\geq0$, define eight stages,
indexed by $j=0,\ldots,7$:
\[
\begin{aligned}
Q_j&:=\lceil b_*/(a_*\psi(s_*-B_j))\rceil_{20},\\
C_j&:=\lceil\min\{B_j,Q_j/\nu\}\rceil_{20},\\
\ell_j&:=\lfloor1+D_*-3C_j-2Q_j\rfloor_{20}.
\end{aligned}
\]
For $j<7$, set
\[
B_{j+1}:=\lceil\min\{B_j,Q_j/\nu,
Q_j^2/(\ell_j k(\min\{\ell_j,1\}))\}\rceil_{20}.
\]
Each use below verifies $0<s_*-B_j<1$ and $\ell_j>0$.
The initial instance uses
\[
(a_*,s_*,b_*,D_*,B_0)=(q_-,b_0,\overline b,0,0.226).
\]
Put
\[
\Delta_0:=\lceil\min\{1,(M_0+2B_7+Q_7)/(1-q_+)\}\rceil_{20}.
\]
\[
q_{\rm bad}:=q_++(1-q_+)\Delta_*+2d_0<q_0.
\]
\end{definition}

The bootstrap controls how much the residual count changes between the two bundle
states. We use this control to bound the endpoint's odd-degree probability under a
polygon event.

\begin{lemma}[Polygon-conditioned parity at bad endpoints]
\label{lem:candidate_bad_bundle_parity}
\label{lem:final24_bad_parity}
Use Definitions~\ref{def:final24_parameters} and
\ref{def:final24_parity}. Let $\mathbf e=E(u,v)$ be a bad
half bundle internal to a degree parent $S$, and let $\mathcal P$ be the
polygon parent of an upward bottom edge out of $u$. For the polygon
event $E_{\mathcal P}$ from Lemma~\ref{lem:candidate_polygon_flow},
including its independent thinning to probability $P$, one has
\[
\Pr[\delta(u)_T\text{ odd}\mid E_{\mathcal P}]\leq q_{\rm bad}.
\]
The assertion is uniform for all hierarchy errors $0\leq d\leq d_0$.
\end{lemma}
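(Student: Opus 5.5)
The plan is to write $\delta(u)_T=\mathbf e_T+R_T$, where $R:=\delta(u)\setminus\mathbf e$, and to decompose the parity event according to the bundle state. Under $E_{\mathcal P}$ the endpoint $u$ lies inside the polygon parent. By the internal/external factorization preserved in Lemma~\ref{lem:candidate_polygon_flow}, and as in the proof of Lemma~\ref{lem:polygon_half_budget}, the internal part is then a strongly Rayleigh law whose marginals move by at most $\epsilon_M/2+3d$ from $x$. The exterior part is a single accepted boundary edge. Independent thinning does not change conditional laws, so it can be ignored throughout.

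Condition on the internal endpoint trees and project onto the bundle coordinate $E:=\mathbf e_T$. Lemma~\ref{lem:binary_covering} then gives a coupling of the residual counts $R_1,R_0$ under $E=1,0$ with $R_0-R_1\in\{0,1\}$. We can write
\[
\Pr[\text{odd}]=q\Pr[R_1\text{ even}]+(1-q)\Pr[R_0\text{ odd}],\qquad q:=\Pr[E=1]\in[q_-,q_+],
\]
and on the event $R_0=R_1+1$ both states have parity exactly opposite to that of $R_1$.

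This yields the bound
\[
\Pr[\text{odd}]\leq q_++(1-q_+)\Pr[R_0\neq R_1]+2d_0,
\]
where the constant $q_+$ covers the worst parity outcome for the selected state, and the additive $2d_0$ absorbs the hierarchy-error loss from endpoint-tree conditioning and thinning. It therefore suffices to show $\E[R_0-R_1]\leq\Delta_*$, with $\Delta_*$ the row's certified covering-mean difference. I would split this mean difference as $M_0+2B_7+Q_7$, normalized by $1-q_+$ as in $\Delta_0$. Here $M_0=2h+4\rho$ is the direct marginal slack of the half bundle and of the polygon shift. The terms $B_7$ and $Q_7$ bound, respectively, the mean mass carried by residual rank one and by rank two under the badness constraint.

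The bad-bundle hypothesis supplies the inputs: the conditional $2$--$2$ probability is below $\gamma_{\rm good}h$, which gives $\overline b$ after dividing by endpoint-tree probability at least $1-2d_0$. Lemma~\ref{lem:final24_rank_split} turns this into the stage bound $Q_j$: if the rank-two mass were $Q$, then the $2$--$2$ event would have probability at least $q_-Q\psi(b_0-B_j)$. Lemma~\ref{lem:atom_newton}, together with the rank relation $p_3\geq1-3p_1-2p_2$, gives the update $B_{j+1}\leq Q_j^2/(\ell_jk(\cdot))$, and $Q_j/\nu$ arises from $p_1\leq p_2/\nu$. Induction over the eight outward-rounded stages, with each premise $0<s_*-B_j<1$, $\ell_j>0$ checked at the row, certifies $\Delta_0$, and hence $\Delta_*\geq\Delta_0$.

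Uniformity in $d$ follows because every estimate above is monotone in $d$ and has already been evaluated at $d_0$.

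The main obstacle is transferring these unconditioned rank estimates to the law conditioned on $E_{\mathcal P}$ without dividing by $P$. I would argue that the accepted boundary edge only affects the exterior coordinate of $\delta(u)$. This coordinate is a Bernoulli variable of mean at most $x_e/(1-q)$, so its contribution can be absorbed into the $4\rho$ part of $M_0$. The internal law remains strongly Rayleigh, so the coupling and Newton steps apply to it directly. Finally, the inequality $q_{\rm bad}<q_0$ is the row-wise rational check recorded in Definition~\ref{def:final24_parity}.
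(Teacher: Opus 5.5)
There is a genuine gap: the proposal proves only $\Delta\le\Delta_0$, and $\Delta_0$ is far larger than $\Delta_*$.

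Your eight-stage recurrence is exactly the initial instance of Definition~\ref{def:final24_parity}, and it certifies only $\Delta\le\Delta_0$. The step ``hence $\Delta_*\geq\Delta_0$'' is false. Since $\psi$ is increasing, $Q_7\geq\overline b/(q_-\psi(b_0))\approx0.069$, so $\Delta_0\geq Q_7/(1-q_+)\approx0.14$. By contrast, $\Delta_{*,j}\approx0.062$ in every row; the paper only records $\Delta_0<0.17$. With $\Delta_0$ in place of $\Delta_*$, your parity bound is at least about $0.57$, which exceeds $q_{\rm bad}$ and even $q_0\approx0.568$.

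The missing idea is to charge both bundle states to the same badness budget:
\[
\overline b\geq q\Pr[X=Y=1\mid E=1]+(1-q)\Pr[X=Y=2\mid E=0].
\]
Your first stage keeps only the $E=1$ term. The paper assumes $\Delta\geq\Delta_*$ and fixes one of $128$ closed cells $[a_i,b_i]$ for $q$. Under $E=0$, the shifted counts $X-1,Y-1$ then have total mean in $[1+\delta_i,1.2)$. The tail-profile and normalized adjacent-rank argument gives $\Pr[X=Y=2\mid E=0]\geq z_i$. This lowers the present-edge budget to $\widetilde b_i=\overline b-(1-b_i)z_i$. Rerunning the recurrence from $B_7$ with $D_*=D_i$ yields $2\widetilde B_7+\widetilde Q_7<D_i\leq3-m\leq2p_1+p_2$, a contradiction. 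Only this step gives $\Delta<\Delta_*$.

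The transfer to $E_{\mathcal P}$ also fails as written. The exterior count $|\delta(u)\cap\delta(\mathcal P)\cap T|$ has mean comparable to the upward bottom mass of $u$, which is of order one. It cannot be absorbed into $4\rho=80d_0$. Working under $E_{\mathcal P}$ with marginal shifts $\epsilon_M/2+3d$ also would not keep $q\in[q_-,q_+]$, and badness is not defined there.

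The paper instead works under $\Omega=\{u,v,\mathcal P\text{ trees}\}$. There badness is at most $\overline b$, because adding the $\mathcal P$-tree condition only shrinks the numerator. It restricts the coupling to the internal subcount $J$, for which $\Pr[J_0\neq J_1]\leq\Delta_J\leq\Delta$. It then uses the factorization given that $\mathcal P$ is a tree. Every requirement of $E_{\mathcal P}$, including acceptance and thinning coins, lives on the contracted factor. So $I=\mathbf e_T+J$ has the same law under $E_{\mathcal P}\cap\{u,v\text{ trees}\}$ as under $\Omega$, and it is independent of $K_{\rm ext}$.

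Both parity atoms of $I$ are bounded by $q_++(1-q_+)\Delta_*$, so convolving with any independent $K_{\rm ext}$ preserves the bound. The term $2d_0$ is $\Pr[u\text{ or }v\text{ not a tree}\mid E_{\mathcal P}]$, computed through the same factorization without dividing by $P$.

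Your parity explanation is also reversed. The cancellation to $\max\{q,1-q\}$ occurs on $R_0=R_1$, where the two states have opposite parity. On $R_0=R_1+1$ both states give the same value, so one must use the bound one. Your displayed inequality is nonetheless the correct one.
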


\begin{proof}
Laminarity gives $u,v\subseteq S\subsetneq\mathcal P$: a bottom parent
above $u$ cannot be its degree parent, and therefore contains all the
children of $S$. All these cuts avoid the distinguished vertices
$u_0,v_0$. Write $\Omega:=\{u,v,\mathcal P\text{ are trees}\}$.

We first control means under this conditioning. Conditioning an internal
edge set to its maximum rank increases its internal marginals and
decreases the external marginals, by
\cite[Fact~2.12 and Lemma~2.23]{kko21}. If the current rank deficit is
$\xi$, the total increase is $\xi$; homogeneity makes the total decrease
$\xi$ as well. Thus the mean of any edge-set count changes by at most
$\xi$, even when that set meets both sides. Initially the rank deficits
of $u,v,\mathcal P$ are at most $d/2$. Conditioning successively on
$u$, then $v$, then $\mathcal P$ being trees has deficits at most
$d/2,d,2d$, respectively. Every edge-set mean therefore changes by at
most $7d/2$. These maximum-rank conditionings preserve homogeneous
real stability. In particular, conditional on $\Omega$,
\[
\E[\delta(u)_T],\E[\delta(v)_T]\in[2-\rho,2+\rho],
\quad q:=\Pr[\mathbf e_T=1]\in[q_-,q_+].
\]
Here and until the transfer to $E_{\mathcal P}$, probabilities and
expectations are conditional on $\Omega$ unless specified otherwise.

Since $u$ and $v$ are trees, a spanning tree contains at most one edge
of $\mathbf e$. Identifying the variables of this parallel bundle
therefore gives a binary coordinate $E:=\mathbf e_T$ in a homogeneous
multiaffine stable polynomial. Project this law off the coordinate $E$.
Its two conditional laws have adjacent total ranks, and
\cite[Theorem~2.10]{kko21} gives a monotone coupling in which changing
$E=1$ to $E=0$ adds exactly one remaining edge. In particular it adds
at most one edge to any prescribed subset. This is the same projection
argument used in \cite[proof of Lemma~2.27]{kko21}.

Each of the three tree events fails with probability at most $d/2$.
Consequently
\[
\Pr[\mathcal P\text{ is not a tree}\mid u,v\text{ are trees}]
\leq\frac{d}{2(1-d)}\leq2d_0.
\]
Badness is defined conditional on $u,v$ being trees. Adding the
$\mathcal P$-tree requirement can only decrease its numerator, so
\[
b_\Omega:=\Pr[\delta(u)_T=\delta(v)_T=2]\leq\overline b.
\]
This step does not assert independence of badness and the additional
tree event.

Let $X:=|(\delta(u)\setminus\mathbf e)\cap T|$ and
$Y:=|(\delta(v)\setminus\mathbf e)\cap T|$. They count disjoint edge
sets, and $X+Y=|\delta(u\cup v)\cap T|$. Define
\[
\Delta:=\E[X+Y\mid E=0]-\E[X+Y\mid E=1].
\]
The preceding coupling gives $0\leq\Delta\leq1$. For any subcount
$J$ of $X$, the same coupling satisfies
\[
0\leq\Delta_J:=\E[J\mid E=0]-\E[J\mid E=1]\leq\Delta,
\quad \Pr[J_0\ne J_1]\leq\Delta_J.
\]
Write $\mu_X:=\E[X\mid E=1]$, $\mu_Y:=\E[Y\mid E=1]$, and
$m:=\mu_X+\mu_Y$. The cut-mean bounds and the coupling imply
\[
1-\rho\leq\mu_X,\mu_Y\leq2+\rho-q\leq3/2+h+2\rho,
\]
\[
3-2\rho-q\leq m\leq4+2\rho-2q-(1-q)\Delta.
\]
For example, if $\Delta_X:=\E[X\mid E=0]-\E[X\mid E=1]\leq1$, then
\[
\mu_X=\E[X]-(1-q)\Delta_X
\geq2-\rho-q-(1-q)=1-\rho.
\]
The total-mean identity subtracts exactly $(1-q)\Delta$ and gives the
second display.

Conditional on $E=1$, the union $u\cup v$ is a connected proper set,
so $Z:=X+Y\geq1$. Put
$D:=\max\{(1-q)\Delta-M_0,0\}$.
The displayed identities give $m\geq1+\nu>2$ and, if $D>0$,
$m\leq3-D<3$. If $D=0$, then
$\Delta\leq M_0/(1-q)\leq M_0/(1-q_+)<\Delta_*$ directly.
Suppose $D>0$ and write $p_i:=\Pr[Z=i\mid E=1,\Omega]$.
Here $m>2$, so gapless rank support and $m<3$ imply $p_2>0$.
Since $\nu>1.49$,
\[
p_1\leq e^{-\nu}<0.226=B_0;
\]
the last inequality follows by taking the exponential Taylor sum
through degree $16$ at $1.49$.
The invariant for the recurrence in Definition~\ref{def:final24_parity}
is the following. Suppose $p_1\leq B_0$,
$m\in[1+\nu,3-D_*]$, and for every applicable bound $p_1\leq B$,
\[
b_*\geq a_*p_2\psi(s_*-B).
\]
Then $p_1\leq B_j$ implies $p_2\leq Q_j$.
The odds bound gives $p_1\leq C_j$; the rank-three estimate gives
$p_3\geq\ell_j$; and Newton gives
$p_1\leq Q_j^2/(\ell_j k(\min\{\ell_j,1\}))$ by
Lemma~\ref{lem:atom_newton} applied to $Z-1$.
The other estimates are those in
Lemma~\ref{lem:final24_rank_split}, and outward rounding preserves
each one. Induction therefore proves $p_1\leq B_j$ and
$p_2\leq Q_j$ at all eight stages.

For the initial instance, $m<3$ and
$\overline b\geq qp_2\psi(b_0-B_j)
\geq q_-p_2\psi(b_0-B_j)$ by the same rank-split lemma.
All split arguments belong to $(0,1)$ and all $\ell_j$ are positive.
Its last step gives
\[
B_7>0,\quad Q_7>0,
\]
\[
D\leq3-m\leq2p_1+p_2\leq2B_7+Q_7,\quad
\Delta\leq\Delta_0<0.17.
\]
No fixed-point convergence is used.

We now retain the actual bundle probability in both edge states and
charge their two contributions to the same badness budget:
\[
\overline b\geq q\Pr[X=Y=1\mid E=1]
+(1-q)\Pr[X=Y=2\mid E=0].
\]
Suppose for contradiction that $\Delta\geq\Delta_*$.
Cover $[q_-,q_+]$ by the $128$ closed cells
\[
a_i:=q_-+(q_+-q_-)i/128,\quad
b_i:=q_-+(q_+-q_-)(i+1)/128,\quad 0\leq i<128.
\]
Fix a cell $[a_i,b_i]$ containing $q$. Under $E=1$, the exact mean
identity above gives
\[
m\leq4+2\rho-2q-(1-q)\Delta\leq3-D_i,
\quad D_i:=2a_i-1-2\rho+(1-b_i)\Delta_*.
\]
Under $E=0$, the two proper cuts have compulsory successes. Thus
$U:=X-1$, $V:=Y-1$ have a stable joint generating polynomial.
Writing $\Delta_X:=\E[X\mid E=0]-\E[X\mid E=1]\in[0,1]$ gives
\[
\E[X\mid E=0]=\E[X]+q\Delta_X\in[2-\rho-q,2+\rho].
\]
The symmetric identity and the sum identity imply
\[
\E[U],\E[V]\in[L_i,1+\rho],\quad L_i:=1-\rho-b_i,
\]
\[
1+\delta_i\leq\E[U+V]\leq2-2a_i+2\rho+b_i\Delta_0<1.2,
\quad\delta_i:=1-2b_i-2\rho+a_i\Delta_*.
\]
Every $D_i,\delta_i$ is positive. Lemma~\ref{lem:bernoulli_estimates}
gives a rank-one atom greater than $0.36$. By \cite[Lemma~2.21]{kko21},
the rank-two atom $q':=\Pr[U+V=2\mid E=0]$ is at least
\[
q_i:=\delta_i(1-\delta_i+\delta_i^2/2-\delta_i^3/6).
\]
Indeed, the one-forced branch is at least $\delta_i e^{-\delta_i}$,
and the zero-forced branch is at least $e^{-1.2}/2>\delta_i$.
The individual nonzero tails exceed $0.39$ and their lower tails at
one exceed $0.70$. If $q'\geq0.1$, the up/down inequality and
Lemma~\ref{lem:quadratic_bounds} give
$\Pr[U=V=1]>0.025$.
Otherwise use $T_{0,0}$ from Definition~\ref{def:tail_profiles} on
the two intervals $[\min\{q_i,0.04\},0.04]$ and $[0.04,0.1]$.
For each right endpoint $u$, choose its fixed coupling $\lambda_u$
from the final downward atom enclosure, as in the goodness proof, and put
\[
F_{i,u}(q)=\frac q2-
\frac{(1+q-2L_i+(2-\lambda_u)T_{0,0}(q))^2}
{8(1-\lambda_u T_{0,0}(q))}.
\]
Every degree-one coefficient and denominator is positive on these
intervals, and $W\geq0$ follows from the lower tail bound $3q^2/2$.
Both intervals are therefore covered by concavity in
Lemma~\ref{lem:normalized_rank_tail}. Let $z_i$ be the minimum of
$0.025$ and the four endpoint values of $F_{i,u}$. Then
\[
\Pr[X=Y=2\mid E=0]=\Pr[U=V=1\mid E=0]\geq z_i.
\]
The possible extra interval when $q_i>0.04$ only weakens this lower bound.

The present-edge contribution to badness is therefore at most
\[
\widetilde b_i:=\overline b-(1-b_i)z_i>0.
\]
Apply the same eight-stage recurrence with inputs
\[
(a_*,s_*,b_*,D_*,B_0)
=(a_i,a_i-\rho,\widetilde b_i,D_i,B_7),
\]
and denote its outputs by
$\widetilde B_j,\widetilde Q_j,\widetilde C_j,\widetilde\ell_j$.
The rank-split argument now uses the sharper individual bound
$\mu_X,\mu_Y\leq2+\rho-q$ and the lower bound $q\geq a_i$.
Thus its split parameter is at least $a_i-\rho-\widetilde B_j$.
The rank-three lower bound uses $m\leq3-D_i$, while the odds bound
still uses $m\geq1+\nu$. The recurrence invariant therefore gives
$p_1\leq\widetilde B_j$ and $p_2\leq\widetilde Q_j$. Every displayed premise, including positivity
of $\widetilde\ell_j$ and membership of the split parameter in $(0,1)$,
is verified by exact rational substitution in every cell.
The finite certificate gives
\[
\min_{0\leq i<128}\{D_i-2\widetilde B_7-\widetilde Q_7\}
>1.53878\cdot10^{-8}>0.
\]
This contradicts $D_i\leq3-m\leq2p_1+p_2$.
The cells share their endpoints exactly and cover every possible $q$;
this is an outward interval certificate, not point sampling.
We conclude $\Delta<\Delta_*$.

To control parity inside $\mathcal P$, take
\[
J:=|((\delta(u)\cap E(\mathcal P))\setminus\mathbf e)\cap T|,
\quad I:=E+J.
\]
Couple $J_0,J_1$ as above. On $J_0=J_1$, the two possible values of
$E+J$ have opposite parity, so either parity atom of their $q$-mixture
is at most $\max\{q,1-q\}$. On a mismatch use the bound one. Hence,
for either parity $\varsigma$,
\[
\begin{aligned}
\Pr[I\text{ has parity }\varsigma]
&\leq\max\{q,1-q\}+\min\{q,1-q\}\Pr[J_0\ne J_1]\\
&\leq q_++(1-q_+)\Delta_*.
\end{aligned}
\]
For the last inequality, write $w:=\max\{q,1-q\}\leq q_+$.
The bound $w+(1-w)\Delta_*$ increases with $w$, since $\Delta_*<1$.
Bounding both parity atoms is necessary because the external boundary
count can flip parity.

Finally, conditional on $\mathcal P$ being a tree, its internal tree
and the contracted tree in $G/\mathcal P$ are independent by
\cite[Fact~2.8]{kko21}. The two child-tree events are internal.
All further requirements in $E_{\mathcal P}$, including the deleted
$C$ boundary, the accepted $A/B$ pair, and its acceptance and thinning
coins, belong to the contracted factor. Therefore
\[
\mathcal L(T\cap E(\mathcal P)\mid E_{\mathcal P},u,v\text{ trees})
=\mathcal L(T\cap E(\mathcal P)\mid\Omega).
\]
Under the left conditioning, $I$ remains independent of
$K_{\rm ext}:=|\delta(u)\cap\delta(\mathcal P)\cap T|$.
The equality $\delta(u)_T=I+K_{\rm ext}$ and arbitrary convolution
with $K_{\rm ext}$ preserve the bound on the larger internal parity
atom. The same product factorization gives
\[
\Pr[u\text{ or }v\text{ not a tree}\mid E_{\mathcal P}]
=\Pr[u\text{ or }v\text{ not a tree}\mid\mathcal P\text{ tree}]
\leq\frac{d}{1-d/2}\leq2d_0.
\]
Restoring these exceptional outcomes gives
\[
q_++(1-q_+)\Delta_*+2d_0=q_{\rm bad},
\]
without dividing by the rare polygon-event probability.
Lemma~\ref{lem:polygon_half_budget} gives the universal envelope $q_0$,
and exact substitution gives $q_{\rm bad}<q_0$.
\end{proof}

The payment formulas require events with prescribed probabilities and paired
reductions that occur together. Independent thinning achieves both requirements
without changing the relevant conditional tree laws.

\begin{lemma}[Synchronized reduction events]
\label{lem:payment_matching_setup}
An event of probability at least $\pi>0$ can be thinned to
probability $\pi$ without changing its conditional tree law.
Thus top events may be thinned to $p$ and polygon events to $P$.
The two reductions arising from one paired $2$--$2$--$2$ event
may moreover be chosen pointwise identical.
\end{lemma}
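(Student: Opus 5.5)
The plan is to realize thinning with an auxiliary coin that is independent of the tree. Let $\mathcal E$ be an event depending on the sampled tree $T$ and possibly on earlier auxiliary coins. Suppose $\Pr[\mathcal E]\geq\pi>0$. Draw a fresh uniform variable $\xi\in[0,1]$, independent of $T$ and of all other randomness, and define the thinned event
\[
\mathcal E':=\mathcal E\cap\{\xi\leq\pi/\Pr[\mathcal E]\}.
\]
Then $\Pr[\mathcal E']=\pi$ exactly. For any tree event $F$, independence gives
\[
\Pr[F\cap\mathcal E']=\Pr[F\cap\mathcal E]\,\pi/\Pr[\mathcal E],
\]
so $\Pr[F\mid\mathcal E']=\Pr[F\mid\mathcal E]$. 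This proves that the conditional tree law is unchanged. The only point to check is that $\Pr[\mathcal E]$ is a fixed number for the given instance. It is, because the thinning ratio depends only on the law and not on the realization. That ratio is an analysis quantity, consistent with the remark that the auxiliary layers are witnesses and not computations. Applying this with $\pi=p$ to the top events, using Lemma~\ref{lem:final24_probability}, and with $\pi=P$ to the polygon events, using the same lemma and Lemma~\ref{lem:candidate_polygon_flow}, gives the second sentence. The statement also says independent thinning preserves the additional conditional facts. In particular, the Bernoulli-sum factorization from \cite[Corollary~5.9]{kko21} survives, since the coin is independent of both the internal and the contracted tree factors.

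For the paired $2$--$2$--$2$ event of Lemma~\ref{lem:candidate_reparam_527}, the plan is to use one shared coin. In the source payment scheme of \cite[Theorem~5.28]{kko21}, that event triggers reductions on both bundles $\mathbf e$ and $\mathbf f$. Let $\mathcal E_{\rm pair}$ be the joint happiness event, which is a single event with probability greater than $p$. Thin it once, using one coin $\xi$, to an event $\mathcal E'_{\rm pair}$ of probability exactly $p$. Then define the reduction event of each bundle in the pair to be this same $\mathcal E'_{\rm pair}$. The two reductions are then pointwise identical by construction, and each still has probability $p$ with the conditional law of the unthinned joint event.

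The main obstacle is consistency with the rest of the construction, not the probability computation. First, a given bundle must not receive two different reduction events that are charged separately. I would handle this by assigning each top bundle exactly one designated reduction event: its own thinned $2$--$1$--$1$ event, or else the shared paired event, as selected by the structural trichotomy in Lemma~\ref{lem:candidate_structural_trichotomy}. Second, the two disjoint oriented subevents of the paired event must not be thinned separately, because that would break synchronization. So the thinning is applied only to their union, matching the remark that the paired happiness event retains one shared thinning coin. Third, coins attached to different events should be mutually independent. This is harmless, because every later estimate conditions on one event at a time and uses only its conditional tree law, which we have shown is invariant.
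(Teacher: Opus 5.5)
Your proposal is correct and follows the paper's proof. Thinning by a tree-independent coin of success probability $\pi/\Pr[\mathcal E]$ (your uniform threshold is the same device) preserves the conditional tree law, and a single shared coin on the joint $2$--$2$--$2$ event makes the two reductions pointwise identical.

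One small correction to your consistency remark: KKO reduction events are indexed by bundle--endpoint pairs, not by bundles. The synchronized choice is therefore $\mathcal R_{\mathbf e,u}=\mathcal R_{\mathbf f,u}$, both equal to the thinned joint event at the shared atom $u$, while the far endpoints keep their own reduction events; no bundle receives a single designated reduction event.
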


\begin{proof}
For an event $\mathcal H$ of probability $q_{\mathcal H}\geq\pi$,
use a tree-independent Bernoulli coin $C_{\mathcal H}$ with
probability $\pi/q_{\mathcal H}$. Then
$\mathcal H\cap C_{\mathcal H}$ has probability $\pi$ and the same
conditional tree law as $\mathcal H$, as in \cite[Section~7]{kko21}.

In Case~3 of \cite[Theorem~5.28]{kko21}, let $\mathcal J_{\mathbf e,\mathbf f,u}$ be the joint $2$--$2$--$2$ happiness event for the fixed paired bundles $\mathbf e,\mathbf f$ at $u$. Define $q_{\mathcal J}:=\Pr[\mathcal J_{\mathbf e,\mathbf f,u}]\geq p$, use one Bernoulli event $C_{\mathcal J}$ of probability $p/q_{\mathcal J}$ independent of the tree, and set
\[
\mathcal R_{\mathbf e,u}
=
\mathcal R_{\mathbf f,u}
:=
\mathcal J_{\mathbf e,\mathbf f,u}\cap C_{\mathcal J}
\]
pointwise. Thus the simultaneous-reduction identity required in the proof of \cite[Lemma~7.8]{kko21} is retained without biasing the conditional tree law. All reduction events below use this synchronized convention.

These auxiliary coins define only the payment vector used in the analysis.
The passage from $\mu$ to $\mu_\lambda$ is made directly at the level of
$T\mapsto c(T)+\mathsf J(T)$ by
Lemma~\ref{lem:prelim_slack_to_tour}; it does not require the reduction
events or layered slack vectors to be reconstructed under $\mu_\lambda$.

\end{proof}

Bottom-edge payments can be reduced further when a descendant shares enough
boundary mass with a polygon child. A low-degree representation of the child's
boundary count yields the needed conditional parity bound.

\begin{lemma}[Finite-rank polygon-child parity]
\label{lem:polygon_child_parity}
Let $P$ be a polygon cut, $S$ a child of $P$, and $u$ a proper hierarchy
descendant of $S$. Put
\[
t_u:=x(\delta(u)\cap\delta(S)),\quad
 e_{\rm ch}:=\epsilon_M/2+40d_0.
\]
For $y\in\{0.0025,0.10,0.15,0.20,0.25\}$ and $t_u\geq y$, the accepted and thinned event
$E_P$ satisfies
\[
\Pr[\delta(u)_T\text{ odd}\mid E_P]
\leq Q_2(y):=
\frac{1+(1-y+e_{\rm ch})^2e^{-2(1-y)}}2+14d_0.
\]
The assertion is uniform for $0\leq d\leq d_0$.
\end{lemma}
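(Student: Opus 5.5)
We would follow the template of Lemma~\ref{lem:polygon_half_budget}, but split $\delta(u)_T$ into a part through the child boundary and the rest. Write $K:=|\delta(u)\cap\delta(S)\cap T|$ and $I:=|(\delta(u)\setminus\delta(S))\cap T|$, so $\delta(u)_T=I+K$.

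\textbf{Step 1: freeze $I$.} Condition first on $P$ being a tree, its remainder deleted, and $S$ being a tree. The internal tree of $S$ is then independent of the contracted tree (\cite[Fact~2.8]{kko21}). Since $u\subseteq S$, the count $I$ is a function of the internal factor, while $K$ and all acceptance coins of $E_P$ live in the contracted factor. Because a parity atom of an independent sum is a mixture of parity atoms of $K$ shifted by values of $I$, it suffices to bound both parity atoms of $K$ under $E_P$ by $(1+(1-y+e_{\rm ch})^2e^{-2(1-y)})/2$. The exceptional events (that $S$ or $P$ fails to be a tree, or that the mean changes caused by the tree conditionings) should cost at most $14d_0$ in total. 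As in the proof of Lemma~\ref{lem:final24_bad_parity}, we restore these additively and never divide by $P$.

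\textbf{Step 2: structure of $K$.} In the contracted graph $G/S$, the edges $\delta(u)\cap\delta(S)$ form a subset of $\delta(S)$. By \cite[Observation~4.32]{kko21}, the exterior part of $\delta(S)$ splits into the parent boundary sides $A,B$ together with the internal edges of $P$ incident to $S$. Under $E_P$ the sides contribute exactly one accepted edge each, and the contracted internal law of $P$ is still strongly Rayleigh. We would show that, outside an exceptional event of probability $O(d_0)$, $K$ is a sum of two Bernoulli variables: $S$ is a child of a polygon and so has near-degree two in the polygon structure. Concretely, $\delta(S)_T$ has mean $2\pm O(d_0)$ under the conditioning, and the event $\delta(S)_T\geq3$ is charged by Markov to the excess mean. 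On $\delta(S)_T=2$, the restriction to the subset $\delta(u)\cap\delta(S)$ has degree at most two. The accepted side marginals differ from the originals by at most $\epsilon_M/2+O(d_0)$ per subset, as in the variation argument of Lemma~\ref{lem:polygon_half_budget}. The resulting mean of $K$ lies within $e_{\rm ch}$ of $t_u$, and the mean of the complementary part $\delta(S)\setminus\delta(u)$ lies within $e_{\rm ch}$ of $2-t_u$.

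\textbf{Step 3: parity of a degree-two count.} For a sum of at most two Bernoulli variables with parameters $a,b$, the even probability is $(1+(1-2a)(1-2b))/2$. We must bound $|(1-2a)(1-2b)|$ using the constraints that the mean is at least $y-e_{\rm ch}$ and that the complementary boundary mass is at most $2-y+e_{\rm ch}$. We expect this step to be the main obstacle: the cleanest route is probably to bound the larger parity atom by $(1+\Pr[\text{complement}=0]\cdot(\cdots))/2$, where the complement has degree at most two. The expression $(1-y+e_{\rm ch})^2e^{-2(1-y)}$ suggests a product of two factors of the form $(1-s)e^{-s}$-type Poisson estimates with $s=1-y$. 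The exponential presumably comes from an estimate of $1-2a$-type factors through $\Pr[Z=0]\leq e^{-\mu}$-style bounds applied to the internal Poisson-binomial parts that cross the boundary. We would try the factorization into two Bernoulli factors first, then verify monotonicity in $y$ so that only the five listed thresholds need an exact rational check, with $d\le d_0$ handled by monotonicity of every error term.
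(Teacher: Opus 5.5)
Your Step~1 reduction discards exactly the part of $\delta(u)_T$ that produces the factor $e^{-2(1-y)}$, and the sufficient condition it leaves is false. You are right that $I$ and $K$ are independent after the $P$- and $S$-tree conditionings. However, the mixture argument only yields $\max_\varsigma\Pr[I+K\equiv\varsigma]\le\max_\varsigma\Pr[K\equiv\varsigma]$. If $K$ is a sum of independent Bernoulli variables with parameters $b_1,b_2$ and mean $\theta$, then $\Pr[K\text{ even}]=1-\theta+2b_1b_2\ge1-\theta$. At $t_u=y=0.25$ this is about $0.75$, whereas $Q_2(0.25)\approx0.563$; at $y=0.0025$ the failure is much worse.

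The exponential comes from the internal count, not the boundary. Since $u$ is a proper nonempty subset of the tree $T[S]$, $Y:=I-1$ is Poisson-binomial with mean $\mu\in[1-t_u,\,1+3d/2-t_u]$. Independence then gives the identity $\Pr[\delta(u)_T\text{ odd}]=(1+M_YM_K)/2$, where $M_Y:=\E[(-1)^Y]$ and $M_K:=(1-2b_1)(1-2b_2)$. Both factors must be kept:
\begin{itemize}
\item a nonnegative $M_Y$ is at most $e^{-2\mu}\le e^{-2(1-t_u)}$, since $\mu\le1.2$;
\item a nonnegative $M_K$ is at most $(1-\theta)^2\le(1-t_u+e_{\rm ch})^2$ by AM--GM.
\end{itemize}
Their product is $(z+e_{\rm ch})^2e^{-2z}$ with $z=1-t_u$, which is increasing on $[0,1-y]$ because $y>e_{\rm ch}$. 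So the worst case $t_u=y$ comes from monotonicity of the product in $t_u$, not from monotonicity in $y$. Neither factor works alone, since $e^{-2(1-t_u)}$ gets worse as $t_u$ grows.

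The five rational checks handle the remaining sign cases. For $t_u>1$ the product is at most $(e_{\rm ch}+d_0)^2$. If both moments are negative, then $|M_Y|\le2\mu-1$ and $|M_K|\le2\theta-1$ force $1/2-e_{\rm ch}<t_u<1/2+3d_0/2$, so $|M_YM_K|\le(2e_{\rm ch}+3d_0)^2$. Each of these must lie below $(1-y+e_{\rm ch})^2e^{-2(1-y)}$.

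Step~2 also needs more than ``degree at most two''. A $\{0,1,2\}$-valued count with $\Pr[K=1]=0$ has $M_K=1$ at any mean, so you need a genuine two-Bernoulli (real-rooted) representation, and the accepted-pair law is not strongly Rayleigh. Moreover, by \cite[Observation~4.32]{kko21} the exterior part of $\delta(S)$ meets at most one parent side, not both $A$ and $B$. The paper obtains the representation by splitting cases, with $H_S:=|\delta(S)\cap E(P)\cap T|$:
\begin{itemize}
\item \emph{Boundary child.} Condition on $H_S=1$ (failure probability at most $5d/2$). Then $K$ is one indicator from the internal factor of $P$ plus one from the accepted side in the contracted factor, independent by the $P$-tree factorization.
\item \emph{Interior child.} Its exterior part lies in the deleted remainder. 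Condition on $H_S=2$ (failure at most $13d$ after the $S$-tree condition, via forest bounds on the two neighboring links), then condition the full projected rank of $\delta(S)$ to two. This preserves stability, so the subcount on $\delta(u)$ is Poisson-binomial of degree at most two.
\end{itemize}
These exceptional events, together with the $S$-tree failure, stay within the $14d_0$ allowance and keep the mean error within $e_{\rm ch}$.
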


\begin{proof}
The cut identity and $x(\delta(S\setminus u))\geq2$ give $t_u\leq1+d$. Conditional on $P$ being a tree, its
internal tree factors from the contracted tree
by \cite[Fact~2.8]{kko21}. Acceptance and thinning
use only the contracted boundary pair and independent coins. Therefore
these conditions do not change the internal tree law. Also
\[
\Pr[S\text{ not a tree}\mid E_P]
\leq \frac{d}{2(1-d/2)}\leq d.
\]
Condition on $S$ being a tree. Its internal tree factors again and is
independent of every remaining condition imposed below. The count
\[
Y:=|(\delta(u)\cap E(S))\cap T|-1
\]
is Poisson-binomial, since $u$ is a proper nonempty subset of the connected
set $S$ and its internal count has a compulsory success. The $S$-tree condition may be imposed first. Its internal factor is then
independent of the remaining condition that $P$ is a tree, and of polygon
acceptance. Thus the internal law is exactly the original $S$-tree internal
law. Maximum-rank conditioning increases that mean by at most $d/2$, so
\[
1-t_u\leq\mu:=\E[Y]\leq1+3d/2-t_u.
\]
In particular, $0\leq\mu\leq1+3d_0/2<1.2$.

We construct, outside an event of conditional probability at most $14d$,
a boundary count $B$ that is a sum of at most two independent Bernoulli
variables, is independent of $Y$, and has mean $\theta$ satisfying
$|\theta-t_u|\leq e_{\rm ch}$.

Suppose first that $S$ is a boundary child, including either child of a
triangle. Its exterior boundary is one accepted parent side, with the
parent remainder deleted, and contains one selected edge. The subset
version of Lemma~\ref{lem:polygon_half_budget} bounds the change in its contribution to $B$ by
$\epsilon_M/2+3d$.
Let $H_S:=|\delta(S)\cap E(P)\cap T|$. Connectivity gives $H_S\geq1$.
The hierarchy gives $x(\delta(S)\cap E(P))\leq1+2d$; conditioning $P$ to be
a tree adds at most $d/2$, and conditioning $S$ to be a tree cannot
increase this exterior count. Hence $\E[H_S]\leq1+5d/2$.
Condition further on $H_S=1$. Its failure probability is at most $5d/2$,
and
\[
\E[H_S1_{\{H_S\geq2\}}]\leq2(\E[H_S]-1)\leq5d.
\]
For a subset $F$ of this boundary, conditioning changes its mean by at
most $5d/(1-5d/2)\leq6d$. Indeed, the numerator of the mean difference is
$\E[F_T]\Pr[H_S\ne1]-\E[F_T1_{\{H_S\ne1\}}]$, and each nonnegative
term is at most $5d$. The preceding $P$- and $S$-tree changes cost at most
$d$. Thus the internal boundary mean changes by at most $7d$.
The two boundary contributions are independent Bernoulli variables,
because one belongs to the internal parent factor and the other to the
contracted factor. They are also independent of the internal tree of $S$.
Their mean error is at most $\epsilon_M/2+10d\leq e_{\rm ch}$,
and the total exceptional probability is at most $14d$.

Now suppose that $S$ is an interior child. Its exterior part lies in the
parent remainder and is deleted; its original mass is at most $d$.
Let $S_-,S_+$ be its neighboring non-root children and put
$F_-=E(S_-,S)$, $F_+=E(S,S_+)$. Each link has mass at least $1-d$.
The adjacent union sides have cut value at most $2+4d$, by the cut identity
and \cite[Definition~4.31]{kko21}.
Applying the forest bound \cite[Lemma~2.23]{kko21}
to the two endpoint children and their union
gives $\Pr[(F_\pm)_T\ne1]\leq3d$ under the original law. Conditional on
$P$ being a tree, each failure probability is at most $4d$.
The remaining internal boundary has original mass at most $3d$ and gains
at most $d/2$. Therefore
\[
\Pr[H_S\ne2\mid E_P]\leq 4d+4d+7d/2\leq12d.
\]
After the $S$-tree condition this failure is at most $12d/(1-d)\leq13d$.
The mean of $H_S$ is at most $2+2d$, and hence
\[
\E[H_S1_{\{H_S\ne2\}}]
=\E[H_S]-2+2\Pr[H_S\ne2]\leq28d.
\]
For every boundary subset, the same mean-difference identity now bounds
its change upon conditioning $H_S=2$ by $28d/(1-13d)\leq29d$.
The preceding tree conditions cost at most $d$, and the deleted exterior
part has original mass at most $d$. The total error is at most
$31d\leq e_{\rm ch}$.

After the $P$- and $S$-tree conditions, project the contracted internal law
to the full boundary of $S$, then condition this full projected rank to
two. This preserves stability. Any subcount of that rank-two law is
Poisson-binomial of degree at most two. Take $B$ to be its subcount on
$\delta(u)$. Its Bernoulli factorization is independent of $Y$, by the
$S$-internal factorization; no independence of the two actual boundary
edges is asserted. The union of the exceptional events has probability
at most $14d$. All adjacent sets used in the forest bounds are proper
and avoid the distinguished vertices; no forest bound is applied to a
root atom.

On either good conditioning, write $B=B_1+B_2$ with independent Bernoulli
parameters $b_1,b_2$, padding by a zero parameter if needed. The degree
parity is that of $1+Y+B$. Put
\[
M_Y:=\E[(-1)^Y],\quad M_B:=(1-2b_1)(1-2b_2),\quad \theta=b_1+b_2.
\]
Then its odd probability is $(1+M_YM_B)/2$.
For a Poisson-binomial variable of mean at most $1.2$, a nonnegative
parity moment is at most $e^{-2\mu}$. If no parameter exceeds $1/2$,
this follows factor by factor. If two exceed $1/2$, their product is at
most $(\mu-1)^2\leq0.04<e^{-2.4}\leq e^{-2\mu}$; a larger even number is
impossible. A zero parity factor is immediate. A negative parity moment
has absolute value at most $2\mu-1$, because exactly one parameter exceeds
$1/2$.

If $M_YM_B\leq0$, there is nothing to prove. If both factors are nonnegative,
then $M_B\leq(1-\theta)^2$. For $t_u\leq1$ this gives
\[
M_YM_B\leq(1-t_u+e_{\rm ch})^2e^{-2(1-t_u)}
\leq(1-y+e_{\rm ch})^2e^{-2(1-y)}.
\]
The final inequality holds because $(z+e_{\rm ch})^2e^{-2z}$
increases for $0\leq z\leq1-y$ when $y\geq e_{\rm ch}$.
For $t_u>1$, use $M_YM_B\leq(e_{\rm ch}+d_0)^2$ instead.
If both factors are negative, their means imply
$1/2-e_{\rm ch}<t_u<1/2+3d_0/2$, and
\[
|M_YM_B|\leq(2e_{\rm ch}+3d_0)^2.
\]
For each of the five stated values of $y$, the exact exponential
certificate verifies
\[
e_{\rm ch}<y<1,\quad
\max\{(e_{\rm ch}+d_0)^2,(2e_{\rm ch}+3d_0)^2\}
<(1-y+e_{\rm ch})^2e^{-2(1-y)}.
\]
Restoring the exceptional probability at most $14d_0$ proves the lemma.
Every conditioning probability above is bounded explicitly away from zero,
and the internal law is never divided by the rare probability of $E_P$.
\end{proof}

\subsection{Ancestor credit and polygon charges}
\label{sec:ancestor_polygon_charges}

We now sum the available savings over the successive ancestors of an atom. The
argument keeps the top and bottom event rates distinct and applies the matching
discount only where it is required.

\begin{lemma}[Ancestor estimate]
\label{lem:candidate_ancestor}
Let $h,r,d_0,p,K,\omega$ be as in
Definition~\ref{def:final24_parameters}, and use the parameters in
Definition~\ref{def:final24_parity}.
Fix a KKO hierarchy with error $0\leq d\leq d_0$ and $\beta>0$, and define
$\tau:=t\beta$. Use the synchronized top reduction events with probability $p$,
the polygon events with probability $P=sp$, and the matching from
Lemma~\ref{lem:candidate_matching}.
For a non-root hierarchy atom $u$, define $U:=x(\delta^\uparrow(u))$ and partition
its upward edges into good top, bad, and bottom sets
$\mathcal G_u,\mathcal D_u,\mathcal L_u$. For a good top edge $g\in f=(u',v')$, define
\[
h_u(g):=\frac{
\Pr[\delta(u)_T\text{ odd}\mid\mathcal R_{f,u'}]
+\Pr[\delta(u)_T\text{ odd}\mid\mathcal R_{f,v'}]}2.
\]
For a bottom edge $g$ with polygon parent $Q$, put
$q_u(g):=\Pr[\delta(u)_T\text{ odd}\mid E_Q]$. The bad set includes the edges incident to
$u_0$ or $v_0$, which are never reduced. Set
\[
B_u:=
\tau\sum_{g\in\mathcal G_u}x_gh_u(g)
+s\beta\sum_{g\in\mathcal L_u}x_gq_u(g).
\]
For $U\geq\sigma$,
\[
B_u\leq\tau(1-\chi)F_uU,
\]
where $F_u$ is the localized matching factor of
Lemma~\ref{lem:candidate_matching}: it equals $1-\epsilon_B$ only
when $U$ is $\epsilon_F$-fractional and $u$ is incident to an internal
bad bundle of its degree parent, and equals one otherwise.
For an atom without a degree parent, set $b(u)=0$ and $F_u=1$.
\end{lemma}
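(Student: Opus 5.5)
The plan is to reproduce the structure of the ancestor argument in \cite[Lemma~5.30 and Section~7]{kko21}, tracking separately the top rate $p$ (entering through $\tau$) and the polygon rate $P=sp$ (entering through $s\beta$). Everything is bounded relative to $\tau F_u U=\tau F_u x(\mathcal G_u\cup\mathcal D_u\cup\mathcal L_u)$. The inequality to prove is equivalent to saying that the average odd-degree probability at $u$, weighted over the reduced upward edges, is at most $(1-\chi)F_u$ times the full upward mass. We split $U=x(\mathcal G_u)+x(\mathcal D_u)+x(\mathcal L_u)$ and bound each term's contribution.

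\textbf{Step 1: good top edges.} For a good top edge $g\in f=(u',v')$, the reduction event $\mathcal R_{f,\cdot}$ lives inside an ancestor degree cut of $u$. We would follow the KKO conditioning argument: on the reduction event, the boundary of $u$ either equals a happy endpoint cut (odd probability zero) or is a proper descendant. In the latter case the conditional parity bound of \cite[Corollary~2.17]{kko21} applies, with mean near $2$, and gives $h_u(g)\leq(1+e^{-2(1-O(h+d_0))})/2$. This is the same Taylor envelope as $q_0$.

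\textbf{Step 2: bottom edges.} For bottom edges we use Lemma~\ref{lem:polygon_half_budget}, which gives $q_u(g)<q_0$. We sharpen this in two cases:
\begin{itemize}
\item if $u$ is an endpoint of an internal bad bundle, Lemma~\ref{lem:final24_bad_parity} gives $q_{\rm bad}$;
\item if the shared polygon-child boundary $t_u$ is at least one of the thresholds $y$, Lemma~\ref{lem:polygon_child_parity} gives $Q_2(y)$.
\end{itemize}
The factor $s$ multiplies these bottom terms. Bad edges contribute nothing to $B_u$, and that lost mass is what produces the saving.

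\textbf{Step 3: case analysis on $U$.} We then split into three regimes.
\begin{itemize}
\item \emph{Large mass}, $U\geq1-\epsilon_F$. Here $F_u=1$, and the required inequality is a weighted average: at most $x$-mass $U$ is charged at rates bounded by $\max\{h_u,sq_u/t\}$. It therefore suffices to check that each rate, or the rates combined with the bad-mass saving, is at most $1-\chi$. We expect to need the structural trichotomy (Lemma~\ref{lem:candidate_structural_trichotomy}) applied to the parent to guarantee enough bad or good mass.
\item \emph{Fractional mass}, $\epsilon_F\leq U\leq1-\epsilon_F$. If $u$ is incident to a bad bundle, the matching discount $F_u=1-\epsilon_B$ is required. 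We pay for it with the $q_{\rm bad}$ improvement on bottom edges and with the absence of payment on the bad bundle itself. Otherwise $F_u=1$, and the estimate from the previous case applies.
\item \emph{Small mass}, $\sigma\leq U\leq\epsilon_F$. We use that $\delta(u)$ is nearly a subset of the parent's boundary, so the parity at $u$ is governed by few edges. The small-remainder bounds in Lemma~\ref{lem:polygon_half_budget} then bring the odd probability below $1-\chi$. The choice $\sigma=\chi+2\chi^2+d_0$ is what makes the resulting scalar inequality close.
\end{itemize}
Terminal root-bad edges incident to $u_0,v_0$ are simply dropped, since they only decrease $B_u$.

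\textbf{Expected obstacle.} The main difficulty is the fractional bad-endpoint case, where both the discount $\epsilon_B$ and the polygon rate $s$ enter. There we would write $B_u$ as a linear function of the masses $x(\mathcal G_u)$ and $x(\mathcal L_u)$ with $x(\mathcal D_u)\geq0$. We then maximize over the polytope of admissible masses, so the worst case sits at a vertex. Finally we verify $\max\{1,\,s\,q_{\rm bad}/t\}\cdot(\text{mass})\leq(1-\chi)(1-\epsilon_B)U$ by exact rational substitution for each of the thirty-four rows. I expect this vertex check, together with keeping the $d\leq d_0$ uniformity monotone, to be where the margins are tightest.
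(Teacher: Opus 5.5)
Your proposal has a genuine gap, and it starts in Step~1. Let $g\in\delta^\uparrow(u)$ lie in a bundle of the ancestor $S_{i+1}$, so that $u\subseteq S_i$.

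\textbf{Step~1 is false.} On either reduction event the ancestor $S_i$ is a tree, but the event also constrains boundary counts. So $\delta(u)_T=I+K$, where:
\begin{itemize}
\item $K=(\delta(u)\cap\delta(S_i))_T$ is distorted by the event, so any bound must be uniform in $K$;
\item $I-1$ is Poisson-binomial with mean roughly $1-t_i$, where $t_i:=x(\delta(u)\cap\delta(S_i))$.
\end{itemize}
Corollary~2.17 of \cite{kko21} therefore does not give a $q_0$-type envelope. It works for the polygon event only because there the external part meets a single accepted side, with controlled marginals, independently of the internal tree.

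The estimate actually available is \cite[Claim~7.5]{kko21}: odd probability at most $1-\upsilon+\upsilon^2$ when $\upsilon+2d\le t_i\le1-\upsilon$. This bound tends to one at both ends, so a unit of good top mass may save almost nothing. As a sanity check, had Step~1 been true, every unit of top mass would save more than $0.4$. Since $c=1-sq_0/t>\chi$, every case with $F_u=1$ would then be immediate, with no case analysis at all.

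\textbf{What is missing in the large-mass regime.} It cannot be a weighted average of per-edge rates. You need the KKO level decomposition:
\begin{itemize}
\item Split $\delta^\uparrow(u)$ along $S_1=\mathsf p(u),\dots,S_m=R$ into levels with tail masses $t_i$. Each non-terminal level is entirely top or entirely bottom, and the terminal level is root-bad.
\item Collect saving only from levels of intermediate tail mass.
\item Treat $U\ge1-r$ through $z=t_{j+1}$, where $j$ is the last level with $t_j\ge1-r$:
 \begin{itemize}
 \item for $z\ge1-s_0$, use $F_d(\upsilon)\ge F_d(r)$ with $\upsilon=1-z$;
 \item otherwise, at the last top level, combine the $1/2+4\omega$ bound from Lemma~\ref{lem:candidate_structural_trichotomy} with the one-endpoint bound $r+2d$ of \cite[Claim~7.4]{kko21};
 \item for bottom levels, use Lemma~\ref{lem:polygon_child_parity} at the starting tail $t_i$, aggregated by $\sum_l(\kappa_l-\kappa_{l-1})(z-y_l-2d)_+$.
 \end{itemize}
\item Treat $1-\epsilon_F<U<1-r$ separately, via the retained-level bound $w_1(1-(r+2d)/0.9)>\chi$.
\end{itemize}
You mention the trichotomy and the child-parity lemma, but without this level structure there is nothing to attach them to.

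\textbf{The fractional and small regimes need the same missing estimate.}
\begin{itemize}
\item In the fractional case with $b(u)=1$, the bad bundle at $u$ is horizontal, internal to the degree parent. It is not in $\delta^\uparrow(u)$ and gives no saving in $B_u$, and $x(\mathcal D_u)=0$ is possible. If top edges are charged at rate one, as your $\max\{1,sq_{\rm bad}/t\}$ suggests, the vertex check fails because $1>(1-\chi)(1-\epsilon_B)$. If they are charged via Step~1, the check rests on a false bound. The paper instead conditions on $\mathsf p(u)$ being a tree and applies Claim~7.5 with intersection mass exactly $U\in[\epsilon_F,1-\epsilon_F]$. Then every good top edge saves at least $\epsilon_F-2\epsilon_F^2=0.08>\chi+\epsilon_B-\chi\epsilon_B$, and bottom edges save $c_{\rm bad}$.
\item In the small regime your mechanism is inverted. $U<\epsilon_F$ means $\delta(u)$ is almost entirely horizontal, not nearly contained in the parent's boundary. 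The small-remainder bound in Lemma~\ref{lem:polygon_half_budget} concerns the polygon remainder $C$ and is irrelevant here. The correct step is again Claim~7.5, now with parameter $U$. It gives top saving at least $U-U^2-d_0/2\ge\sigma-\sigma^2-d_0/2>\chi$, while bottom edges save $c>\chi$.
\end{itemize}
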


\begin{proof}
We use the conditional parity estimates underlying
\cite[Lemma~7.3]{kko21} with the present parameters.
By Lemma~\ref{lem:payment_matching_setup}, thinning to the separate
rates $p$ and $P$ preserves every conditional tree law.
The event $E_S$ from the proof of
Lemma~\ref{lem:candidate_common_probability}, Lemma~\ref{lem:polygon_half_budget}, and
\cite[Corollary~2.17]{kko21} give, for every bottom edge,
\[
q_u(g)
\leq\frac{1+\exp(-2(1-\epsilon_M/2-3d_0))}{2}
<q_0.
\]
The last inequality follows from the exact rational comparison
\[
\sum_{j=0}^{18}\frac{(2(1-\epsilon_M/2-3d_0))^j}{j!}
-\frac1{2q_0-1}>0.
\]
Define $c:=1-sq_0/t$ and $c_{\rm bad}:=1-sq_{\rm bad}/t$.
Let $b(u)$ be the bad-incidence indicator in
Lemma~\ref{lem:candidate_matching}, and put $c_u=c_{\rm bad}$ if
$b(u)=1$, and $c_u=c$ otherwise. Every bottom term is bounded by the
corresponding parity estimate: universally by $q_0$, and at a bad
endpoint by Lemma~\ref{lem:candidate_bad_bundle_parity}.
The normalization in $B_u$ divides the actual
expected upward burden by $p$: its top terms have factor $p\tau$, while
its bottom terms have factor $P\beta=sp\beta$. Subtracting $B_u$ from
$\tau U$ shows that it
is sufficient to prove
\[
x(\mathcal D_u)
+\sum_{g\in\mathcal G_u}x_g(1-h_u(g))
+\sum_{g\in\mathcal L_u}x_g(1-sq_u(g)/t)
\geq(1-(1-\chi)F_u)U.
\tag{16}
\label{eq:candidate_ancestor_master}
\]
Call the left side the available saving. It is additive over disjoint
sets of upward edges, and every contribution is nonnegative.
Each bottom coefficient is at least $c_u\geq c$, so the universal coefficient $c$ remains available
in every case below. We use $c_{\rm bad}$ only in the discounted
fractional case.
Also $c>0$ gives the crude bound $B_u\leq\tau U$ for every $U$, without
the lower restriction $U\geq\sigma$.

We first record the level estimates used in the two large-mass cases.
Let $S_1=\mathsf p(u),S_2,\ldots,S_m=R$ be its successive ancestors,
where $R=V\setminus\{u_0,v_0\}$ is the hierarchy root. Use the level sets
from \cite[proof of Lemma~7.3]{kko21}, with an explicit terminal convention:
\[
\delta^{\geq i}:=\delta(u)\cap\delta(S_i),
\quad
\delta^{\geq m+1}:=\varnothing,
\]
\[
\delta^i:=\delta^{\geq i}\setminus\delta^{\geq i+1},
\quad
t_i:=x(\delta^{\geq i}).
\]
Thus $t_1=U\leq1+d$, $t_{m+1}=0$, and
$x(\delta^i)=t_i-t_{i+1}$. For $i<m$, each level consists entirely of top
edges or entirely of bottom edges. The terminal level
$\delta^m=\delta(u)\cap\delta(R)$ consists of bad edges incident to
$u_0$ or $v_0$ and saves one full unit per unit mass.

Recall $s_0,w_0$ from Definition~\ref{def:final24_parameters}, and put
\[
w_1:=r(1-r),
\]
\[
b_d:=\frac12+4\omega+2r+d,\quad k_d:=\frac{1-r-2d}{2}.
\]
For any $\upsilon\in[r,s_0]$, retain a level only when
$\upsilon+2d\leq t_i\leq1-\upsilon$.
The conditioning in \cite[Claim~7.5]{kko21} applies to its good top
edges and gives
\[
h_u(g)\leq1-\upsilon+\max\{2d,\upsilon^2\}=1-\upsilon+\upsilon^2.
\]
Indeed, each endpoint reduction event makes the relevant ancestor a tree;
the internal and contracted trees factor by \cite[Fact~2.8]{kko21}.
The claim is uniform in the boundary count revealed by the event,
and its intersection-mass interval includes both endpoints.
The retained levels satisfy that interval, and
$\upsilon^2\geq r^2>2d_0\geq2d$.
A bad edge saves one unit, while a bottom edge saves $c$.
The exact parameter comparisons give
\[
r<s_0,\quad c>w_0,\quad k_d>w_0.
\]
Since $\upsilon(1-\upsilon)$ is increasing for $\upsilon\in[r,s_0]$, every unit of retained
mass, of any of the three edge types, saves at least $\upsilon(1-\upsilon)$.

An atom incident to a bad bundle has
$U\leq1/2+k_{\rm good}h<1-\epsilon_F$. Thus neither of the following
large-mass cases is discounted.
Suppose first that $U\geq1-r$, so $F_u=1$.
Let $j$ be the last level with $t_j\geq1-r$, and put
\[
z:=t_{j+1}<1-r,\quad v:=x(\delta^j)=t_j-z\geq1-r-z.
\]
For a real number $a$, write $(a)_+:=\max\{a,0\}$.
If $z\geq1-s_0$, choose $\upsilon:=1-z$, so $r<\upsilon\leq s_0$.
Retain all levels after $j$ whose tail mass is at least $\upsilon+2d$.
Their total mass is at least
\[
z-\upsilon-2d=1-2\upsilon-2d.
\]
Their tail masses are at most $z=1-\upsilon$, so the preceding level estimate
applies. Hence the available saving is at least
\[
F_d(\upsilon):=\upsilon(1-\upsilon)(1-2\upsilon-2d).
\]
Its derivative satisfies
\[
F_d'(\upsilon)=1-2d-(6-4d)\upsilon+6\upsilon^2
\geq1-2d_0-6s_0>0.
\]
It follows that $F_d(\upsilon)\geq F_d(r)$. Since $U\leq1+d$, the exact margin
\[
r(1-r)(1-2r-2d_0)-\chi(1+d_0)>0
\]
proves Eq.~\eqref{eq:candidate_ancestor_master} in this subcase.

For a bottom level $i<m$, its polygon parent is $S_{i+1}$ and
$S_i$ is a child of that polygon. Lemma~\ref{lem:polygon_child_parity}
applies with the starting tail $t_i$, not $t_{i+1}$. For
$y\in\{0.0025,0.10,0.15,0.20,0.25\}$, let $\widehat Q_2(y)$ be the rational upper
enclosure obtained from the degree-$40$ exponential bounds, and put
$c_y:=1-s\widehat Q_2(y)$; recall $t=1$. Exact comparisons give
\[
c_{0.0025}>w_0,\quad
c_{0.25}>0.0099.
\]
For the last top level, the degree partition gives $x(C)\leq2r+d$,
$B\cap\delta(u)=\varnothing$, and $x(A\cap\delta^j)\geq v-2r-d$.
At most $1/2+4\omega$ of this mass is good but not $2$--$1$--$1$ good,
by Lemma~\ref{lem:candidate_structural_trichotomy}. The remaining mass,
at least $(v-b_d)_+$, is bad or $2$--$1$--$1$ good. For each such good
edge, \cite[Claim~7.4]{kko21} bounds the odd probability under one
endpoint event by $r+2d$; the other probability is at most one.
Its average saving is at least $k_d$, and a bad edge saves one.

For $z<1-s_0$, use the following three closed ranges.

If $0\leq z\leq0.35$, a top last level has saving at least
\[
\frac{1-r-2d_0}{2}(1/2-4\omega-3r-0.35-d_0)>\chi(1+d_0).
\]
This is the same $1/2+4\omega$ mass bound and one-endpoint happiness estimate
from Lemma~\ref{lem:candidate_structural_trichotomy}. If the last level is bottom, its starting tail is at least
$1-r>0.25$, so it saves at least
$c_{0.25}(1-r-0.35)>\chi(1+d_0)$.
If it is terminal, its root-bad mass saves $1-r-0.35$, which also suffices.

If $0.35\leq z\leq0.99$, use the five thresholds
\[
(y_1,y_2,y_3,y_4,y_5):=(0.0025,0.10,0.15,0.20,0.25).
\]
Set $\kappa_0:=0$ and
\[
\varepsilon_l:=\min\{y_l,0.01\},\quad
\kappa_l:=\min\{\varepsilon_l(1-\varepsilon_l),c_{y_l}\},\quad 1\leq l\leq5.
\]
These coefficients satisfy $0<\kappa_1\leq\cdots\leq\kappa_5$.
A later level with $t_i\geq y_l+2d$ saves at least $\kappa_l$ per unit:
top edges satisfy Claim~7.5 with $\varepsilon_l$, since
$\varepsilon_l+2d\leq t_i\leq0.99\leq1-\varepsilon_l$ and
$\varepsilon_l^2>2d_0$; bottom edges save at least $c_{y_l}$;
bad edges save one.

For each $l$, these levels form a prefix of the later levels.
Summing $t_i-t_{i+1}$, including the entire level that crosses the
threshold, gives total mass at least $(z-y_l-2d)_+$.
At each level, sum the increments $\kappa_l-\kappa_{l-1}$ for which
$t_i\geq y_l+2d$; their sum is at most that level's saving coefficient.
Interchanging the two finite sums proves the bound.
The available saving is therefore at least
\[
\sum_{l=1}^5(\kappa_l-\kappa_{l-1})(z-y_l-2d)_+.
\]
This counts each level with one coefficient, not with overlapping full
credits. The bound increases with $z$ and decreases with $d$.
The exact endpoint check gives
\[
\sum_{l=1}^5(\kappa_l-\kappa_{l-1})(0.35-y_l-2d_0)
>\chi(1+d_0).
\]

If $0.99\leq z\leq1-s_0$, retain later levels with starting tail at least
$0.0025+2d$. Their mass is at least $0.9875-2d$.
Good top edges satisfy the source claim with parameter $s_0<0.0025$,
and save at least $w_0$. Bottom edges save at least $c_{0.0025}>w_0$;
bad edges save one. The exact final comparison is
\[
w_0(0.9875-2d_0)>\chi(1+d_0).
\]
These ranges share their endpoints and exhaust $z<1-s_0$.

Suppose next that $1-\epsilon_F<U<1-r$, again with $F_u=1$.
Retain all levels with $t_i\geq r+2d$. Their total mass is at least
$U-r-2d$, and every unit saves at least $w_1$.
Since $U>1-\epsilon_F=0.9$, division by $U$ gives saving per unit at least
\[
w_1(1-(r+2d)/0.9).
\]
The exact inequality
\[
w_1(1-(r+2d_0)/0.9)-\chi>0
\]
proves Eq.~\eqref{eq:candidate_ancestor_master} in this case.

For $\epsilon_F\leq U\leq1-\epsilon_F$, first bound the good top
terms without using bad incidence. Let $u':=\mathsf p(u)$, fix a good top edge
$g\in f=(u'',v'')$, and condition on either endpoint event
$\mathcal R_{f,z}$, $z\in\{u'',v''\}$.
The auxiliary thinning coin is independent of the tree. After the
endpoint $u''$ is made a tree, \cite[Fact~2.8]{kko21} factors its internal
tree from the contracted tree. Relative to $E(u')$, the other tree and
boundary-count conditions in the happiness event depend only on the
contracted tree and on $(\delta(u)\cap\delta(u'))_T$; they reveal no other
internal edges of $u'$. Conditioning further on $u'$ being a tree and
factoring once more leaves the tree-conditioned law inside $u'$ used in
\cite[Claim~7.5]{kko21}, possibly conditioned on that boundary count.
The claim is uniform in its value. Thus this is a factorization after
contraction, not independence of overlapping tree events. Together with
\cite[Lemma~2.23]{kko21}, it gives
\[
\Pr[u'\text{ is not a tree}\mid\mathcal R_{f,z}]\leq d/2.
\]
Since $x(\delta(u)\cap\delta(u'))=U\in[\epsilon_F,1-\epsilon_F]$,
\cite[Claim~7.5]{kko21}, uniformly in the boundary count, gives
\[
\Pr[\delta(u)_T\text{ odd}\mid\mathcal R_{f,z}]
\leq1-\epsilon_F+\epsilon_F^2+d/2
\leq1-\epsilon_F+2\epsilon_F^2.
\]
Thus each good top edge saves at least
$\epsilon_F-2\epsilon_F^2=0.08$ per unit, and every bad edge saves
one. If $b(u)=0$, then $F_u=1$ and the universal bottom coefficient
suffices, because $0<c<0.08$ and
\[
c-\chi>0.
\]
If $b(u)=1$, use Lemma~\ref{lem:candidate_bad_bundle_parity} for every
upward bottom edge. Its polygon parent contains both endpoints of the
internal bad bundle, as required by that lemma. The exact comparisons are
\[
c_{\rm bad}-(\chi+\epsilon_B-\chi\epsilon_B)>0,
\]
\[
0.08-(\chi+\epsilon_B-\chi\epsilon_B)>0.
\]
Thus every unit supplies at least $\chi+\epsilon_B-\chi\epsilon_B$,
which proves the discounted bound. This also suffices at either boundary
of the fractional interval. Only this case uses the improved parity
estimate; all retained-level and small-mass arguments use $c$.

Finally, let $\sigma\leq U<\epsilon_F$, so $F_u=1$.
Condition on the parent of $u$ as above and apply
\cite[Claim~7.5]{kko21} with its actual parameter $U$.
The conditional failure of the parent-tree event contributes at most
$d_0/2$, so a good top edge saves at least $U-U^2-d_0/2$.
This expression is increasing on $[\sigma,\epsilon_F]$, and
\[
\sigma^2-2d_0>0,
\quad
\sigma-\sigma^2-d_0/2-\chi>0.
\]
A bad edge saves one unit and a bottom edge saves $c>\chi$.
Every unit therefore supplies the required saving $\chi$, completing all
cases of Eq.~\eqref{eq:candidate_ancestor_master}.
\end{proof}

To complete the bottom-edge accounting, we also need bounds on horizontal charges
within a polygon parent. Boundary and interior atoms have different conditional
structures, so we estimate them separately.

\begin{lemma}[Polygon charges on the enlarged domain]
\label{lem:polygon_charge_extended}
Use Definition~\ref{def:final24_parameters} and $0\leq d\leq d_0$.
For a polygon cut $S$ with polygon parent $\widehat S$, the expected
horizontal charge is at most $0.31P\beta$ if $S$ is a boundary atom,
and at most $0.85P\beta$ if it is an interior atom.
\end{lemma}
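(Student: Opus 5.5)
The plan is to repeat the horizontal-charge argument of \cite[Lemmas~7.8--7.10]{kko21}. The source proves $0.31$ and $0.85$ only at its specialized constants. Here we re-run it with the row parameters and the enlarged error domain, and we check that every numerical input used there still has slack.

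\textbf{The charge as a sum.} A horizontal charge on $S$ comes from polygon events $E_{\widehat S}$ accepted at the polygon parent. Each such event is thinned to probability $P$ by Lemma~\ref{lem:payment_matching_setup}. The expected charge is therefore $P\beta$ times a sum, over the relevant boundary edges $g$ of $S$, of $x_g$ times the conditional probability that $\delta(S)_T$ is odd (or that $S$ is unhappy) given $E_{\widehat S}$. Because thinning preserves the conditional law, the factor $P$ comes out exactly. It then suffices to bound the conditional sum by $0.31$ or $0.85$ uniformly in $0\le d\le d_0$. No step divides by the rare probability of $E_{\widehat S}$.

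\textbf{Boundary atoms.} If $S$ is a boundary atom, its exterior boundary lies in one accepted side of $\widehat S$. Lemma~\ref{lem:candidate_polygon_flow} gives two properties of that side: it contains exactly one selected edge, and its accepted marginals are at most $x_e/(1-q_M)$ with variation at most $\epsilon_M/2+3d$. The internal tree of $\widehat S$ factors from the contracted tree by \cite[Fact~2.8]{kko21}. So the degree of $S$ is a compulsory success plus a Poisson-binomial count of mean in $[1-\epsilon_M/2-3d_0,\,1+\epsilon_M/2+4d_0]$.

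Two estimates follow. Lemma~\ref{lem:polygon_half_budget} caps the odd probability by $q_0$. The small remainder $C$ of $S$ has conditional mean at most $3d/(1-\bar q)$, so the left and right unhappiness probabilities are each at most $q_{\rm unh}$.

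Summing these over the charged boundary mass, the plan is to reproduce the source's weighted combination. That combination is of the form (edge mass at most $1/2+O(d)$) times $q_{\rm unh}$, plus the mass of $C$ times one. It should give roughly $0.5\cdot0.56+O(\epsilon_M+d_0)$. The closing step is an exact rational check, for every row, that this is below $0.31$.

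\textbf{Interior atoms.} If $S$ is an interior atom, its exterior boundary lies in the deleted remainder of $\widehat S$, whose mass is at most $d$. Its two link bundles $F_\pm$ to the neighbouring atoms each fail to contain exactly one tree edge with probability at most $4d$, by the forest bound \cite[Lemma~2.23]{kko21}, exactly as in the interior case of Lemma~\ref{lem:polygon_child_parity}. The charge then splits into three parts: the two links, each charged with the half-budget parity bound $q_0$, together with the exceptional terms. This gives about $2\cdot\tfrac12\,q_0\cdot(1+O(d))$ plus a remainder of order $d$. With $q_0\approx0.5677$ this should land below $0.85$ with visible room.

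\textbf{Main obstacle.} The hardest step is the boundary case. The $0.31$ budget is tight, and it requires tracking which boundary edges are charged twice: once through $A$ and once through $B$ in the triangle or degenerate-polygon convention of \cite[Definition~4.31]{kko21}. I would first write the source bound as an explicit affine function of $q_{\rm unh}$, $\epsilon_M$, and $d_0$. I would then verify it row by row using the Taylor enclosures of Section~\ref{sec:arithmetic_details}. If the margin fails for some row, the fallback is to replace $q_0$ by a sharper parity bound obtained from the $1+\epsilon_M/2$ mean bound via \cite[Corollary~2.17]{kko21}.
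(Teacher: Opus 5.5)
There is a genuine gap, and it sits in the boundary case.

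\textbf{Why your boundary estimate fails.} The horizontal charge at $S$ is at most $(1+d)P\beta\bigl(a\Pr[S\text{ not happy}\mid E_{\widehat S}]+x(C^\rightarrow)\bigr)$. Here $A,B,C$ is the polygon partition of $S$ itself, and $a:=\max\{x(A^\rightarrow),x(B^\rightarrow)\}$.

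\begin{itemize}
\item For a boundary atom the horizontal mass is about one. So $a\geq1/2-O(d)$, and $a$ can be close to one. Your weight ``edge mass at most $1/2+O(d)$'' is not available.
\item The single-side bound $q_{\rm unh}\approx0.568$ from Lemma~\ref{lem:polygon_half_budget} does not depend on $a$. Even if you grant $\Pr[S\text{ not happy}]\leq q_{\rm unh}+O(d)$, your estimate is about $0.38$ at $a=2/3$ and about $0.57$ as $a\to1$. Both exceed $0.31$.
\item Your fallback cannot close this. The constant $q_0$ is already the parity bound $(1+e^{-2\mu})/2$ at $\mu\approx1$, which is essentially sharp for a general Poisson-binomial count.
\end{itemize}

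\textbf{The interior case.} Your sketch is also unsupported as written. There $a\approx1$, so two links each charged at rate $q_0$ would give about $2q_0>1$, and the factor $\tfrac12$ in $2\cdot\tfrac12q_0$ has no source. This part is repairable: on $\{H_S=2,\ C_T=0\}$ an odd $A_T$ already forces $A_T=B_T=1$, so a single parity bound would suffice. But you do not give that argument.

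\textbf{The missing idea.} You need a joint lower bound on the happiness probability that uses the degree-two structure of $\delta(S)_T$. For a boundary atom it goes as follows.
\begin{itemize}
\item The internal tree of $\widehat S$ factors from the accepted exterior pair, so the internal and external contributions are independent.
\item The exterior part of $\delta(S)$ meets one accepted side and contains exactly one selected edge. By the separate side budgets of Lemma~\ref{lem:candidate_polygon_flow}, it lies in $A^\uparrow$ or $B^\uparrow$ with probabilities within $\epsilon_M+2d$ of the original masses $a_\uparrow,b_\uparrow$.
\item The internal count $H_S$ of parent-tree edges leaving $S$ satisfies $\E[H_S\mathbf 1_{\{H_S\geq2\}}]\leq5d$. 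Hence its unique edge lies in $A^\rightarrow$ or $B^\rightarrow$ with probabilities at least $(x(A^\rightarrow)-5d)_+$ and $(x(B^\rightarrow)-5d)_+$.
\item Opposite sides make $S$ happy. Together with $x(A),x(B)\geq1-d$, this gives
\[
\Pr[S\text{ happy}]\geq a^2+(1-a)^2-2\epsilon_M-20d,
\]
and therefore
\[
a\Pr[S\text{ not happy}]\leq 2a^2(1-a)+(1+2d)(2\epsilon_M+20d)\leq 8/27+(1+2d_0)(2\epsilon_M+20d_0).
\]
After the factor $1+d$ and the remainder $x(C^\rightarrow)\leq d$, this is below $0.31$.
\end{itemize}
So the constant $0.31$ comes from $\max_a2a^2(1-a)=8/27$, not from any parity constant.

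For interior atoms, the paper conditions on $C_T=0$, which puts both side means in $[1-2d,1+3d]$. It then applies the up/down inequality at total rank two to get central mass at least $0.155$. This gives happiness above $0.152$ after the link and deletion losses, and a charge at most $(1+d)(1+2d)(0.848)P\beta<0.85P\beta$.
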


\begin{proof}
First suppose $S$ is a boundary atom. Work conditional on the accepted
parent event, including its independent thinning. Let $A,B,C$ be the
polygon partition of $\delta(S)$; the arrows refer to $\widehat S$. Put
\[
a:=x(A^\rightarrow),\quad b:=x(B^\rightarrow),\quad
a_\uparrow:=x(A^\uparrow),\quad b_\uparrow:=x(B^\uparrow).
\]
The internal parent tree is independent of the accepted external
boundary pair. The external part of $\delta(S)$ meets exactly one
parent boundary side, so it contains one selected edge. The separate
boundary marginal budget in Lemma~\ref{lem:candidate_polygon_flow},
the parent-tree change, and remainder deletion show that its
$A^\uparrow$ and $B^\uparrow$ probabilities differ from the original
masses by at most $e_M:=\epsilon_M+2d$.
This is the internal/external factorization used in
\cite[Lemma~7.10]{kko21}; no strong-Rayleigh property is asserted
for the entire accepted law.

Let $H_S$ count the internal parent-tree edges leaving $S$.
Connectivity gives $H_S\geq1$. The hierarchy bounds and the
parent-tree rank deficit give
\[
x(\delta^\rightarrow(S))\leq1+2d,\quad
\E[H_S]\leq1+5d/2.
\]
Consequently
\[
\E[H_S\mathbf1_{\{H_S\geq2\}}]
\leq2(\E[H_S]-1)\leq5d.
\]
Internal marginals increase on conditioning the parent to be a tree.
Thus the probabilities that the unique internal edge belongs to
$A^\rightarrow$ and $B^\rightarrow$ are at least
$(a-5d)_+$ and $(b-5d)_+$, respectively, where $z_+:=\max\{z,0\}$.
Opposite internal and external single edges make $S$ happy.
Independence and $(u-e)_+(v-f)_+\geq uv-ev-fu$ for nonnegative
$u,v,e,f$ therefore give
\[
\Pr[S\text{ happy}]
\geq a_\uparrow b+b_\uparrow a
-e_M(a+b)-5d(a_\uparrow+b_\uparrow).
\]
The polygon-side lower bounds imply
\[
a_\uparrow b+b_\uparrow a\geq(1-d)(a+b)-2ab.
\]
Also $1-2d\leq a+b\leq1+2d$, $0\leq a\leq1+2d$, and
$a_\uparrow+b_\uparrow\leq1+d$. Writing $a+b=1+t$, $|t|\leq2d$,
gives
\[
(1-d)(a+b)-2ab
\geq a^2+(1-a)^2-3d-10d^2.
\]
Substitution yields
\[
\begin{aligned}
\Pr[S\text{ happy}]
&\geq a^2+(1-a)^2-\epsilon_M-10d-2d\epsilon_M-19d^2\\
&\geq a^2+(1-a)^2-2\epsilon_M-20d.
\end{aligned}
\]
The last step uses
$\epsilon_M(1-2d)+d(10-19d)\geq0$, valid throughout
$0\leq d\leq d_0=2\cdot10^{-8}$ and $\epsilon_M<0.005$.
Assume $a=\max\{a,b\}$. Taking expectations in the pointwise payment formula bounds the
horizontal charge by
\[
(1+d)P\beta(a\Pr[S\text{ not happy}]+x(C^\rightarrow)).
\]
Since $x(C^\rightarrow)\leq d$ and $2a^2(1-a)\leq8/27$ for $a\geq0$,
this is at most $J_\partial P\beta$, where
\[
J_\partial:=(1+d_0)(8/27+(2\epsilon_M+20d_0)(1+2d_0)+d_0)
<0.31.
\]
This proves the boundary estimate directly on the new domain, without
invoking the parameter-restricted conclusion of \cite[Lemma~7.9]{kko21}.

For an interior atom, its upward boundary lies in the parent remainder
and is deleted by the accepted parent event. Thus only the internal
parent-tree factor remains; the size of $\epsilon_M$ is irrelevant.
Condition also on $C_T=0$, whose failure probability is at most $2d$.
The unrounded calculation in \cite[Lemma~7.11]{kko21} gives both
remaining side means in $[1-2d,1+3d]$. Their nonzero tails are at least
$1-e^{-1+2d}$ and their tails at most one exceed $0.495$ by Markov.
The up/down split therefore gives central probability at least
$0.155$ at total rank two. The three link failures cost at most
$12d$, so the internal happiness probability is at least
$0.155(1-12d)>0.153$. Restoring deletion gives
$\Pr[S\text{ happy}]>0.153(1-2d)>0.152$.
The charge is consequently at most
\[
(1+d)P\beta(1-0.152)(1+2d)<0.85P\beta.
\]
All scalar comparisons hold at $d=d_0$ and hence throughout the interval.
\end{proof}

\subsection{The payment theorem}
\label{sec:payment_theorem}

The preceding probability, matching, and charge estimates can now be assembled
into the payment vector. We retain separate expected decreases for good top edges
and bottom edges because the repair step uses their different amounts of available
credit.

\begin{lemma}[Payment theorem]
\label{lem:candidate_payment_replay}
\label{lem:final24_payment}
Use Definitions~\ref{def:final24_parameters} and~\ref{def:final24_parity}.
Fix a hierarchy with error $0\leq d\leq d_0$ and let $\beta>0$.
Set $\tau:=t\beta$, take the common events from
Lemma~\ref{lem:candidate_common_probability}, and use the matching from
Lemma~\ref{lem:candidate_matching}.
Use the KKO reduction and increase formulas in \cite[Section~7]{kko21}. Their
expected-decrease coefficient is
\[
a=\zeta rpt>1.37147\cdot10^{-8}.
\]
The bottom coefficient is
\[
a_{\rm bot}>3.05\cdot10^{-8}.
\]
More precisely, the construction produces a partition
$E=E_{\rm g}\mathbin{\dot\cup}E_{\rm b}$ and a payment vector
$s^{\rm pay}$ satisfying the following:
\begin{enumerate}[label=(\alph*)]
\item \label{item:candidate_payment_bottom} every bottom edge is in
$E_{\rm g}$;
\item \label{item:candidate_payment_coordinate}
$s^{\rm pay}_e\geq-\beta x_e$ on $E_{\rm g}$ and
$s^{\rm pay}_e=0$ on $E_{\rm b}$;
\item \label{item:candidate_payment_expectation}
$\E[s^{\rm pay}_e]\leq-a\beta x_e$ on $E_{\rm g}$, while every bottom edge
satisfies $\E[s^{\rm pay}_e]\leq-a_{\rm bot}\beta x_e$;
\item \label{item:candidate_payment_cut} the deterministic cut inequalities
in Lemma~\ref{lem:payment_deterministic} hold.
\end{enumerate}
\end{lemma}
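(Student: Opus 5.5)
The proof replays the KKO payment construction of \cite[Section~7]{kko21}, substituting the present lemmas for each source input. First we fix the partition. $E_{\rm b}$ consists of the bad top edges, including those incident to $u_0,v_0$. $E_{\rm g}$ consists of the good top edges and all bottom edges, so item~\ref{item:candidate_payment_bottom} holds by definition. Next we define $s^{\rm pay}$ by the KKO reduction/increase formulas:
\begin{itemize}[label={}]
\item each good top edge $e\in f=(u',v')$ is reduced by $\tau x_e$ on the synchronized events $\mathcal R_{f,u'},\mathcal R_{f,v'}$, which are thinned to probability $p$ by Lemma~\ref{lem:payment_matching_setup};
\item each bottom edge with polygon parent $Q$ is reduced by $\beta x_e$ on $E_Q$, thinned to $P=sp$;
\item increases are routed along the matching values $m_{\mathbf e,u}$ of Lemma~\ref{lem:candidate_matching} whenever an upward boundary is odd.
\end{itemize}
Since $t=1$ and each reduction is at most $\beta x_e$, the coordinate bound in item~\ref{item:candidate_payment_coordinate} is immediate. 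Bad edges receive neither reductions nor increases, so $s^{\rm pay}=0$ on $E_{\rm b}$. The deterministic cut inequalities in item~\ref{item:candidate_payment_cut} come from the source's pointwise bookkeeping. That bookkeeping uses only the synchronized reductions (so paired $2$--$2$--$2$ events reduce both bundles simultaneously) and the matching conservation identity $\sum m_{\mathbf e,u}=U_uZ_u$, so we cite it verbatim.

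The main work is the expectation bound in item~\ref{item:candidate_payment_expectation}. For a good top edge in bundle $\mathbf e=(u,v)$, the expected decrease is $p\tau x_e$. The expected increase is at most $(1+\alpha)x_{\mathbf e}$ times the per-unit burden from each endpoint's ancestors, normalized by $F_u$. By Lemma~\ref{lem:candidate_ancestor}, each atom with $U\ge\sigma$ passes up burden at most $\tau(1-\chi)F_uU$. Combining this with the matching capacity inequality $m_{\mathbf e,u}F_u+m_{\mathbf e,v}F_v\le(1+\alpha)x_{\mathbf e}$ bounds the increase on $e$ by $p\tau(1-\chi)(1+\alpha)x_e$, up to the small-$U$ atoms. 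For atoms with $U<\sigma$ we use the crude bound $B_u\le\tau U$ instead. Their total contribution is absorbed through the $\rho_{\rm sm}$ correction, exactly as the definition of $\zeta=\vartheta\rho_{\rm sm}-2d_0/r-10^{-9}$ anticipates. The $d$-dependent losses (tree conditioning, the factor $1+d_0$, and the $2d_0/r$ term) are accounted for in $\vartheta$ and $\zeta$. The net expected decrease is therefore at least $\zeta r p t\beta x_e=a\beta x_e$, and evaluating at each row gives $a>1.37147\cdot10^{-8}$.

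For bottom edges, the expected decrease is $P\beta x_e=sp\beta x_e$. The charges against it are of three kinds. Horizontal charges are at most $0.31P\beta$ or $0.85P\beta$ by Lemma~\ref{lem:polygon_charge_extended}. The source normalizes these per unit mass, which yields the $(1+d_0)(7/4+h/2+3d_0/2)p$ term after averaging boundary and interior contributions. Upward parity charges are bounded through Lemmas~\ref{lem:polygon_half_budget} and~\ref{lem:candidate_ancestor}. Finally there is a $2d_0s$ loss from deletions. Subtracting these gives exactly $a_{\rm bot}$, and the row check gives $a_{\rm bot}>3.05\cdot10^{-8}$.

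I expect the main obstacle to be verifying that the source's top-edge accounting survives the localized matching factor. The discount $F_u=1-\epsilon_B$ is applied only at fractional atoms incident to bad bundles, and there the ancestor estimate uses $c_{\rm bad}$ instead of $c$. One must check that the matching value $m_{\mathbf e,u}$ at such an atom, divided by $F_u$, is paid for precisely by the discounted ancestor bound, with no leftover factor $1/(1-\epsilon_B)$. I would track that case first, treating every other case as the source's undiscounted computation.
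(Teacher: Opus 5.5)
Your treatment of items (a), (b), (d), and of good top bundles whose endpoints both have upward mass at least $\sigma$, matches the paper. There are two genuine gaps, however: the bottom coefficient is derived incorrectly, and the small-endpoint top case is only asserted.

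\textbf{The bottom coefficient.} The term $(1+d_0)(7/4+h/2+3d_0/2)$ in $a_{\rm bot}$ does not come from averaging the charges of Lemma~\ref{lem:polygon_charge_extended}. Those charges are multiples of $P\beta$, so after division by $p\beta$ they give only $J_2=(1+d)(t(1+d)+0.31s)+2ds$ and $J_3=(1+d)(td+0.85s)+2ds$. A bottom edge is charged according to the type of its polygon's parent, so you need the maximum over parent types, not an average.

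The binding case, which you never treat, is a polygon cut $S$ whose parent is a degree cut. Bound every edge of $\delta(S)$ by the envelope $p\tau$. Upward bottom edges in $A\cup B$ qualify because $sq_{\rm unh}<t$ by Lemma~\ref{lem:polygon_half_budget}, not via the ancestor lemma. The upward $C$-edges cost the extra $2dP\beta$. This gives the trivial charge $(1+d)p\tau(2+d)$, which must be lowered by $\Delta_{\rm bot}(d)=1/4-h/2-d/2$ using Lemma~\ref{lem:candidate_structural_trichotomy}:
\begin{itemize}
\item bad mass of at least $1/2-h$ is never reduced;
\item $2$--$1$--$1$-good mass of at least $1/2-h-d$ makes $S$ happy on its own reduction event, halving its charge;
\item in the paired alternative, the common indicator $\mathcal R_{\mathbf e,S}=\mathcal R_{\mathbf f,S}$, subadditivity of $N(\mathbf z)=\max\{z_A,z_B\}+z_C$, and the inequalities $e_A>e_B$, $f_B>f_A$ give saving at least $1/4-h/2-d/2$.
\end{itemize}
So the synchronization you invoke only for the pointwise bookkeeping is exactly what powers this expectation bound. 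Only at this point does $2+d_0-\Delta_{\rm bot}(d_0)=7/4+h/2+3d_0/2$ appear, and you must still check $J_1>J_2,J_3$.

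\textbf{Small endpoints on the top side.} Saying the contribution is ``absorbed through the $\rho_{\rm sm}$ correction'' is not an argument. Lemma~\ref{lem:candidate_ancestor} needs $U\geq\sigma$, so at a small endpoint $u$ you only have $B_u\leq\tau U$.

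If the degree parent has at least four atoms, $Z_u=2$ halves this burden, and $1/2<1-\chi$ restores the ordinary bound.

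In a three-atom parent, however, $Z_u=1$ and the crude burden can consume all of $m_{\mathbf e,u}$. The net decrease is then only $p\tau(\chi m_{\mathbf e,v}-2d_0x_{\mathbf e})$, so you need $m_{\mathbf e,v}\geq\rho_{\rm sm}x_{\mathbf e}$. The paper obtains this from three ingredients:
\begin{itemize}
\item the near-equalities between each bundle mass and the upward mass of the opposite atom, for example $|x(E(u,w))-x(\delta^\uparrow(v))|\leq d$;
\item conservation at $v$ and $w$;
\item capacity on $E(u,w)$ and $E(v,w)$.
\end{itemize}
That is where $\rho_{\rm sm}$ and the margin $\vartheta\rho_{\rm sm}-2d_0/r-\zeta>0$ enter.

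By contrast, the localized factor $F_u$ that you single out as the main obstacle is harmless. The bound $B_u\leq\tau(1-\chi)F_uU$ and the weighted capacity $m_{\mathbf e,u}F_u+m_{\mathbf e,v}F_v\leq(1+\alpha)x_{\mathbf e}$ combine directly, with no leftover factor $1/(1-\epsilon_B)$.
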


\begin{proof}
Use Lemma~\ref{lem:payment_matching_setup} to thin the top events to
$p$ and the polygon events to $P$, preserving the synchronized pairs.
The raw probability bounds come from
Lemma~\ref{lem:candidate_common_probability}.
Use the matching from Lemma~\ref{lem:candidate_matching}.

The reductions themselves retain their pointwise formulas: a bottom
edge has reduction $\beta x_e$ times its polygon indicator, and a top
edge in bundle $\mathbf e=(u,v)$ has reduction
$\tau x_e(\mathcal R_{\mathbf e,u}+\mathcal R_{\mathbf e,v})/2$.
Their expected reductions are therefore $P\beta x_e$ and $p\tau x_e$,
respectively. The payment increases depend on these realized reductions
and the matching, not on the indicators' common frequency. Thus the
change from $p$ to $P$ for polygon events changes neither the support
nor the coordinate bound, and leaves the pointwise proof of
Lemma~\ref{lem:payment_deterministic} applicable.
Since $0\leq d\leq d_0<1$, the deterministic polygon calculation also has
$(1+d)(1-d/2)\geq1$. Thus
Items~\ref{item:candidate_payment_bottom},
\ref{item:candidate_payment_coordinate}, and
\ref{item:candidate_payment_cut} follow from the pointwise construction
and Lemma~\ref{lem:payment_deterministic}.
We verify the expectation inequalities below.

The present parameters satisfy
\[
h<0.0055,\quad d_0<h^2,
\]
\[
\epsilon_M<0.005,\quad d_0<\epsilon_M^2,
\]
as well as $q_0<t=1$ and $\tau=\beta$.
Lemma~\ref{lem:polygon_half_budget} gives the conditional polygon
estimate
\[
\Pr[u\text{ not left happy}\mid E_S]\leq q_{\rm unh}<t/s,
\]
with the same bound on the right. Its proof uses the half marginal
budget for the side count and coordinate domination for the small
remainder. The exact strict margin is $t-sq_{\rm unh}>0$.
An upward bottom edge therefore has expected unhappy reduction at most
$P\beta q_{\rm unh}x_e<p\tau x_e$.

For completeness, the dependencies in the payment argument are as follows.
The separate polygon and top probabilities and the synchronized top-event
identities come from
Lemmas~\ref{lem:candidate_common_probability}
and~\ref{lem:payment_matching_setup}. The weighted matching capacity and
conservation identities come from Lemma~\ref{lem:candidate_matching},
with fixed $\alpha=2d_0$ and the present $\epsilon_B$.
The large-endpoint burden is supplied by
Lemma~\ref{lem:candidate_ancestor}; the smaller endpoints are treated
explicitly below. The three bottom-parent configurations use
Lemma~\ref{lem:candidate_structural_trichotomy} and the conditional
polygon estimates just proved. The final matching combination and
deterministic cut inequalities use only these identities and the
pointwise support and coordinate bounds. In particular, no old
parameter-dependent payment conclusion or hypothesis $12r\leq h$ is
invoked.

Suppose first that both endpoints of a good top bundle have upward mass
at least $\sigma$. The ancestor bound, matching conservation, and
weighted capacity give expected total burden at most
$(1-\chi)(1+2d_0)p\tau$ times the bundle mass.
Subtracting from its expected reduction, the decrease normalized by
$rp\tau$ times its mass is at least
\[
\frac{1-(1-\chi)(1+2d_0)}r
>\zeta.
\]
If an endpoint has upward mass below $\sigma$ and its degree parent has
at least four atoms, the matching factor $Z_u=2$ halves its crude
burden. Since $1/2<1-\chi$, the same bound applies.
This also covers the case when both endpoints have the factor two.
Zero upward mass causes zero endpoint burden, according to the
zero-mass convention in the construction.

It remains to check a three-atom degree parent, with atoms $u,v,w$,
where $U:=x(\delta^\uparrow(u))<\sigma$.
All three internal bundles are good, as proved in
Lemma~\ref{lem:candidate_matching}. Write
\[
e:=x(E(u,v)),\quad f:=x(E(u,w)),\quad g:=x(E(v,w)),
\]
\[
V:=x(\delta^\uparrow(v)),\quad
W:=x(\delta^\uparrow(w)).
\]
The degree and parent-cut equations give
\begin{equation}
|f-V|\leq d,\quad |e-W|\leq d,\quad |g-U|\leq d.
\label{eq:triangle_identities}
\end{equation}
The nested-cut bound gives $V,W\leq1+d$, so
$e,f\leq1+2d$ and $g\leq\sigma+d$.
Since $e+f+U=x(\delta(u))\geq2$, we also have
\[
e,f\geq1-\sigma-2d,\quad
V,W\geq1-\sigma-3d>1-\epsilon_F.
\]
Thus $F_u=F_v=F_w=1$ and $Z_v=Z_w=1$.
Use the fixed matching constant $D:=1+2d_0$.
Conservation at $v,w$ followed by capacity for the $f$ and $g$ bundles
gives
\[
V+W
=m_{e,v}+m_{g,v}+m_{f,w}+m_{g,w}
\leq m_{e,v}+D(f+g).
\]
Also $V+W\geq e+f-2d$. Therefore
\[
\begin{aligned}
m_{e,v}
&\geq e-(\sigma+3d+2d_0+2d_0\sigma+6d_0d)\\
&\geq e-(\sigma+5d_0+2d_0\sigma+6d_0^2).
\end{aligned}
\]
Recall
\[
\rho_{\rm sm}:=
1-\frac{\sigma+5d_0+2d_0\sigma+6d_0^2}
{1-\sigma-2d_0}.
\]
Then $m_{e,v}/e\geq\rho_{\rm sm}$.
Use the crude burden at $u$ and
Lemma~\ref{lem:candidate_ancestor} at $v$. Weighted matching capacity
bounds the sum of the crude endpoint burdens by $Dp\tau e$,
and the ancestor saving at $v$ is at least
$\chi p\tau m_{e,v}$. The expected net decrease is consequently at least
\[
p\tau(\chi m_{e,v}-2d_0e)
\geq rp\tau e(\vartheta\rho_{\rm sm}-2d_0/r).
\]
The exact margin
\[
\vartheta\rho_{\rm sm}-2d_0/r-\zeta=10^{-9}>0
\]
proves the required bound. Interchanging $v$ and $w$ proves it for the
other bundle incident to the small endpoint. Both endpoints of the
remaining bundle exceed $\sigma$, so the ordinary case already applies.
Thus every good top edge has expected decrease at least
$\zeta rp\tau x_e=a\beta x_e$.

We now bound bottom-edge payments, starting with a polygon cut $S$
whose parent is a degree cut. Let $A,B,C$ be its polygon partition and
$A',B',C'$ its degree partition. The partition bounds give
\[
x(\delta(S))\leq2+d,\quad x(A),x(B)\geq1-d,\quad
x(A'),x(B')\geq1-r,
\]
and $A'\subseteq A$, $B'\subseteq B$ by
\cite[Remark~5.20]{kko21}. The polygon-side lower bounds also hold for a
triangle: if its two non-root atoms are $a,b$, the cut identity gives
\[
x(E(a,\overline S))
=\frac{x(\delta(a))+x(\delta(S))-x(\delta(b))}{2}
\geq1-d/2\geq1-d,
\]
and the other side is symmetric.
Here $\overline S$ denotes the complement of $S$ in the vertex set.

The polygon partition also satisfies $x(C)\leq d$, and the degree
partition satisfies $x(C')\leq2r+d$.
In the paired alternative of
Lemma~\ref{lem:candidate_structural_trichotomy},
Item~\ref{item:candidate_structural_paired}, write
$\mathbf e=(S,v)$ and $\mathbf f=(S,w)$.
They have masses at least $1/2-h$ and cross bounds
$x_{\mathbf e}(B')\leq\omega$, $x_{\mathbf f}(A')\leq\omega$.
For $D\in\{A,B,C\}$, put
\[
e_D:=x_{\mathbf e}(D),\quad f_D:=x_{\mathbf f}(D).
\]
Since $A'\subseteq A$, $B'\subseteq B$, and $A',B',C'$ partition the
boundary, we have $B\setminus B'\subseteq C'$ and
$A\setminus A'\subseteq C'$. Thus
\[
e_A\geq\frac12-h-\omega-2r-d,\quad e_B\leq\omega+2r+d.
\]
The first inequality follows already by retaining the part of
$\mathbf e$ in $A'$. Since
\[
\frac12-h-2\omega-4r-2d_0>0,
\]
we have $e_A>e_B$. Symmetrically, $f_B>f_A$.

Let $R_0$ be the common synchronized indicator
$\mathcal R_{\mathbf e,S}=\mathcal R_{\mathbf f,S}$, and let
$R_1:=\mathcal R_{\mathbf e,v}$ and
$R_2:=\mathcal R_{\mathbf f,w}$. Each has expectation $p$;
no independence is required.
Use the entire paired set
$D:=\mathbf e\mathbin{\dot\cup}\mathbf f$, and write
$\boldsymbol e:=(e_A,e_B,e_C)$ and
$\boldsymbol f:=(f_A,f_B,f_C)$.
For a nonnegative vector $\boldsymbol z=(z_A,z_B,z_C)$, define
\[
N(\boldsymbol z):=\max\{z_A,z_B\}+z_C.
\]
This function is positively homogeneous and subadditive.
The boundary totals of the reduction vector on $D$ are
\[
\boldsymbol r_D
=\frac{\tau}{2}((\boldsymbol e+\boldsymbol f)R_0
+\boldsymbol eR_1+\boldsymbol fR_2).
\]
Here $\boldsymbol r_D$ denotes reductions, not the scalar partition
parameter $r$.
Let $I_S(D)$ be the polygon charge attributable to reductions in $D$,
as defined in \cite[Eq.~(48)]{kko21}. This charge is subadditive over
$D$ and its complement. On $D$, discarding the unhappiness
indicators only increases that charge. Subadditivity of $N$ therefore
gives
\[
\E[I_S(D)]
\leq\frac{(1+d)p\tau}{2}
(N(\boldsymbol e+\boldsymbol f)+N(\boldsymbol e)+N(\boldsymbol f)).
\]
The complementary top edges keep their trivial bound. Compare this display
with the trivial contribution
$(1+d)p\tau(x(\mathbf e)+x(\mathbf f))$.
The opposite-side inequalities imply
\[
N(\boldsymbol e)=x(\mathbf e)-e_B,\quad
N(\boldsymbol f)=x(\mathbf f)-f_A,
\]
\[
N(\boldsymbol e+\boldsymbol f)
=x(\mathbf e)+x(\mathbf f)-\min\{e_A+f_A,e_B+f_B\}.
\]
Consequently the normalized saving on the entire paired set is at least
\[
\begin{aligned}
\frac{\min\{e_A+f_A,e_B+f_B\}+e_B+f_A}{2}
&=\frac{\min\{e_A+e_B+2f_A,f_A+f_B+2e_B\}}2\\
&\geq\frac{\min\{x(\mathbf e)-e_C,x(\mathbf f)-f_C\}}2\\
&\geq\frac14-\frac h2-\frac d2.
\end{aligned}
\]
The last line uses $e_C,f_C\leq x(C)\leq d$.

Define
\[
\Delta_{\rm bot}(d):=\frac14-\frac h2-\frac d2.
\]
In the first alternative of \cite[Lemma~7.8]{kko21}, the bad mass is at
least $1/2-h$ and is never reduced. Its saving is therefore at least
$1/2-h>\Delta_{\rm bot}(d)$.
In the second alternative, the $2$--$1$--$1$-good mass is at least
$1/2-h-d$. Its reduction event at $S$ makes $S$ happy, so at most
half the trivial charge remains. Its saving is at least
$(1/2-h-d)/2=\Delta_{\rm bot}(d)$.
Thus the same coefficient applies in all three alternatives.

For the upward boundary, a top edge has expected reduction $p\tau x_e$.
An upward bottom edge in $A\cup B$ has expected charge at most
$(1+d)P\beta q_{\rm unh}x_e<(1+d)p\tau x_e$
by the conditional polygon estimate above.
For the remaining upward edges in $C$, use the crude bound
$(1+d)P\beta x_e$. This also dominates a top edge's crude reduction,
since $P\geq p$ and $\tau\leq\beta$.
Since $x(C)\leq d$ and $d\leq1$, the excess over the $p\tau$ envelope
is at most $2dP\beta=2dsp\beta$. The synchronized saving just proved
involves only top events and retains the factor $p\tau$.
Hence the degree-parent case
satisfies
\[
\E[I_S]\leq(1+d)p\tau(2+d-\Delta_{\rm bot}(d))+2dP\beta.
\]
The three normalized increases for a degree parent, a boundary atom
with polygon parent, and an interior atom are therefore bounded by
\[
J_1(d):=(1+d)t(2+d-\Delta_{\rm bot}(d))+2ds,
\]
\[
J_2(d):=(1+d)(t(1+d)+0.31s)+2ds,
\quad
J_3(d):=(1+d)(td+0.85s)+2ds.
\]

For the last two bounds, Lemma~\ref{lem:polygon_charge_extended}
supplies the horizontal bottom contributions $0.31P\beta$ and
$0.85P\beta$. The upward contributions retain their pointwise
bounds $p\tau(1+d)$ for a boundary atom and $p\tau d$ for an
interior atom. The factor $1+d$ and remainder $2dP\beta$ give
exactly $J_2(d)$ and $J_3(d)$ after division by $p\beta$.
Every horizontal bottom term and $C$-side remainder is therefore
scaled by $s$; no equality $P=p$ is used.
Each $J_i(d)$ is increasing on $[0,d_0]$. Define
\[
\Delta_{\rm bot}:=\Delta_{\rm bot}(d_0)=1/4-h/2-d_0/2,
\quad
J_i:=J_i(d_0)\quad(1\leq i\leq3).
\]
Direct rational substitution gives
\[
J_1-J_2>0,\quad
J_1-J_3>0.
\]
Every normalized increase is therefore at most $J_1$.
Since each bottom edge has expected reduction $P\beta x_e=sp\beta x_e$, define
\[
a_{\rm bot}:=p(s-J_1)
=p(s-(1+d_0)t(2+d_0-\Delta_{\rm bot})-2d_0s).
\]
The exact rational evaluation gives
\[
a_{\rm bot}>3.05\cdot10^{-8},
\]
which also shows $a_{\rm bot}>a$.
This proves Item~\ref{item:candidate_payment_expectation} on every edge.
All numerical comparisons in the proof use rational quantities with
positive denominators; in particular
$1-\sigma-2d_0>0$, $1-r>0$, and $2q_0-1>0$.
They therefore follow by clearing denominators and comparing integers.
\end{proof}

The expectation bounds are only one part of the payment guarantee. We separately
record the pointwise cut inequalities used when adding the repair vectors.

\begin{lemma}[Deterministic payment inequalities]
\label{lem:payment_deterministic}
Fix a KKO hierarchy with uniform error envelope $0\leq d\leq1$ and a parameter
$\beta>0$, and let $s^{\mathrm{pay}}$ be the payment vector from the
synchronized construction for these data.
For a polygon cut $P$, including a triangle cut viewed as a degenerate
polygon with $C=\varnothing$, let $A,B,C$ be its polygon partition and call a set $F$ admissible
for $P$ if $p(e)=P$ for every $e\in F$ and $x(F)\geq1-d/2$.
\begin{enumerate}[label=(\alph*)]
\item \label{item:deterministic_payment_polygon} if $P$ is not left happy, then for every admissible edge set $F$ for $P$,
$s^{\mathrm{pay}}(A)+s^{\mathrm{pay}}(F)+(s^{\mathrm{pay}})^-(C)\geq0$;
the symmetric inequality holds when $P$ is not right happy;
\item \label{item:deterministic_payment_degree} if $S$ has a degree-cut parent and $\delta(S)_T$ is odd, then
$s^{\mathrm{pay}}(\delta(S))\geq0$.
\end{enumerate}
\end{lemma}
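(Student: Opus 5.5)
This lemma is the pointwise part of the KKO payment construction, and the only changes from \cite[Section~7]{kko21} are three. Polygon events are thinned to a rate $P$ different from the top rate $p$. The matching uses the localized factors $F_u$ from Lemma~\ref{lem:candidate_matching}. The two reductions of a paired $2$--$2$--$2$ event are synchronized as in Lemma~\ref{lem:payment_matching_setup}. The plan is to re-derive the two cut inequalities of \cite[Lemma~7.8 and the polygon lemma preceding it]{kko21} and check that these changes affect none of their pointwise steps. Concretely, I recall the pointwise definition of $s^{\mathrm{pay}}$. Each reduced edge $e$ receives $-\beta x_e$ (bottom) or $-\tau x_e(\mathcal R_{\mathbf e,u}+\mathcal R_{\mathbf e,v})/2$ (top). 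Each atom receives increases distributed on its upward edges: for a degree-parent child, proportional to the matching values $m_{\mathbf e,u}$ divided by the realized upward mass; for polygon cuts, by \cite[Eq.~(48)]{kko21}. The coefficient $1+d$ appears in the polygon increases. All of these are determined by the realized indicators, not by their frequencies $p,P$.

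For item~\ref{item:deterministic_payment_degree}, fix $S$ with degree parent and $\delta(S)_T$ odd. I will show that the total increase charged to $\delta(S)$ dominates the total reduction on $\delta(S)$. Reductions on $\delta^\rightarrow(S)$ come from internal bundles of the parent. If $\delta(S)_T$ is odd, then $S$ is not $2$--$1$--$1$ or $2$--$2$ happy, so no reduction event with endpoint $S$ fired. The reductions on $\delta^\rightarrow(S)$ therefore have total at most $\tau\sum_{\mathbf e\ni S}x_{\mathbf e}\mathcal R_{\mathbf e,v}/2$, with $v$ the other endpoint. The increase placed on $\delta^\uparrow(S)$ is defined, exactly as in KKO, to be the sum over good bundles of $m_{\mathbf e,S}F_S$ times the realized reduction rate. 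The capacity inequality $m_{\mathbf e,u}F_u+m_{\mathbf e,v}F_v\le(1+\alpha)x_{\mathbf e}$ is then used only through the nonnegativity and conservation identity at $S$. Bad bundles and edges at $u_0,v_0$ carry zero payment, so they contribute nothing. For paired events, the synchronized indicator $R_0$ gives the same pointwise value to both reductions, and KKO's ``simultaneous reduction'' identity holds verbatim. The zero-upward-mass convention $I_{e,S}=m_{e,S}=0$ removes any quotient by zero.

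For item~\ref{item:deterministic_payment_polygon}, let $P$ be a polygon cut that is not left happy, and let $F$ be admissible. The polygon increase rule adds, on the $A$ side whenever $P$ is not left happy, $(1+d)$ times the maximum of the side reductions plus the $C$-reductions. This replays \cite[Lemma~7.7]{kko21}. Every edge of $F$ has polygon parent $P$, so it is reduced only through the event $E_P$, by $\beta x_e\mathbf 1_{E_P}$. Its total reduction is at most $\beta x(F)$, and this is compared with the increase $(1+d)\beta(1-d/2)\mathbf 1_{E_P}$ placed on $A$. The inequality $(1+d)(1-d/2)\ge1$ for $0\le d\le1$, together with $x(F)\ge1-d/2$, closes this step, while $(s^{\mathrm{pay}})^-(C)$ absorbs the charged $C$ remainder. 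For a triangle, $C=\varnothing$ and the same computation applies.

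The main obstacle is bookkeeping rather than analysis. I must verify that replacing $p$ by $P$ for polygon events, and using the localized factors $F_u<1$, leaves every inequality in KKO's pointwise proof intact. My first check is the degree-parent case with a discounted endpoint: there the increase scales by $F_u$. I will confirm that the KKO proof used only conservation, $\sum m_{\mathbf e,u}=U_uZ_u$ paired with the factor $F_u$ in the definition of the increase, so that the product identity reproduces KKO's formula exactly.
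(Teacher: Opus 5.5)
Your general plan is right: show pointwise, using only the KKO formulas, that the increases charged to the cut dominate its reductions. You also have the right numerical ingredients. But in both items you put the increases on the wrong edges and compensate the wrong reductions, and the observation that makes each inequality work is missing.

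\textbf{Item (a).} The polygon increase $I_P$ is not placed on the side $A$. It is placed on the edges $e$ with $p(e)=P$, hence on every admissible $F$, at rate at least $(1+d)(r(A)+r(C))$ per unit of $x$-mass whenever $P$ is not left happy. What it pays for is the boundary reduction $r(A)+r(C)$. That reduction comes from top events or ancestor polygon events, since an edge of $\delta(P)$ never has $p(e)=P$. The needed step is that the reduction event $E_P$, on which $A_T=B_T=1$ and $C_T=0$, can occur only when $P$ is left happy. So if $P$ is not left happy, $E_P$ fails and $F$ carries no reduction at all. Then $s^{\mathrm{pay}}(A)\geq-r(A)$, $(s^{\mathrm{pay}})^-(C)\geq-r(C)$, and $s^{\mathrm{pay}}(F)\geq(1+d)(1-d/2)(r(A)+r(C))\geq r(A)+r(C)$ give the claim. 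Your comparison of $\beta x(F)\mathbf 1_{E_P}$ with an increase $(1+d)\beta(1-d/2)\mathbf 1_{E_P}$ ``placed on $A$'' does not correspond to the construction. It also never accounts for $-r(A)$, which is the only term that actually needs paying.

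\textbf{Item (b).} The same reversal occurs. The matching increases sit on the horizontal bundles $\delta^\rightarrow(S)$, not on $\delta^\uparrow(S)$, and they pay for the upward reductions $r(\delta^\uparrow(S))$. You leave the horizontal terms $\tau x_{\mathbf e}\mathcal R_{\mathbf e,v}/2$ uncompensated, and nothing in the construction pays for them. In fact they vanish: every reduction event of a bundle $(S,v)$ requires $\delta(S)_T=2$. This covers the $2$--$1$--$1$ event with respect to either endpoint, the $2$--$2$ event, and the synchronized $2$--$2$--$2$ event. Finally, $F_S$ must not enter the pointwise increase. The increase on $\mathbf e$ is $r(\delta^\uparrow(S))\,m_{\mathbf e,S}/\sum_f m_{f,S}$. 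Conservation $\sum_f m_{f,S}=Z_Sx(\delta^\uparrow(S))>0$ makes these sum to exactly $r(\delta^\uparrow(S))$. If, as you describe, the increase carries the factor $F_S$, the increases total only $F_S\,r(\delta^\uparrow(S))$. At a discounted endpoint that is a shortfall of $\epsilon_B\,r(\delta^\uparrow(S))$, and $s^{\mathrm{pay}}(\delta(S))\geq0$ fails. The localized factor belongs only to the capacity constraint, that is, to the expectation bounds. With these corrections and the zero-upward-mass convention, you recover the paper's short argument.
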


\begin{proof}
The proof uses no numerical payment estimate.  Use the zero-mass convention
fixed after Definition~\ref{def:final24_parameters}. If a
polygon cut $P$ (possibly a triangle) with partition $A,B,C$ is not left happy, then its reduction
event does not occur and the reduction is zero on every admissible set $F$.
Writing $r$ and $I_P$ for the unchanged KKO reduction and increase vectors,
the definition of $I_P$ gives
\[
s^{\mathrm{pay}}(A)+s^{\mathrm{pay}}(F)+(s^{\mathrm{pay}})^-(C)
\geq-r(A)+(1+d)(r(A)+r(C))(1-d/2)-r(C)\geq0.
\]
The right-happy inequality is symmetric. If $S$ has a degree-cut parent and
$\delta(S)_T$ is odd, the horizontal reductions vanish.  If
$x(\delta^\uparrow(S))=0$, there is no upward reduction to compensate and
all remaining increases are nonnegative, so the conclusion is immediate.
Otherwise, matching conservation gives
$\sum_{f\in\delta^\rightarrow(S)}m_{f,S}
=Z_Sx(\delta^\uparrow(S))>0$, and hence
\[
s^{\mathrm{pay}}(\delta(S))
\geq
-\sum_{g\in\delta^\uparrow(S)}r_g
+
\sum_{e\in\delta^\rightarrow(S)}
\sum_{g\in\delta^\uparrow(S)}
r_g
\frac{m_{e,S}}{\sum_{f\in\delta^\rightarrow(S)}m_{f,S}}
=0.
\]
These are precisely \cite[Theorem~4.33(iii)--(iv)]{kko21}. This proves both conclusions.
\end{proof}

\subsection{Hierarchy and separated repairs}
\label{sec:repair_combined_estimates}

We next check that the hierarchy and repair constructions apply throughout the
threshold range used in the final layering. The statement below summarizes the
common domain and the two repair cost coefficients.

\begin{lemma}[Hierarchy and repair on the required domain]
\label{lem:final24_hierarchy}
For $0<\eta\leq1.4\cdot10^{-9}$, the hierarchy construction in the
preliminaries remains valid, with added cuts $7\eta$-near-minimum and
the same exhaustive one-sided and two-sided classification.
Under the deterministic payment hypotheses of
Lemma~\ref{lem:prelim_hat_pair}, its separated repair conclusions
remain valid on this domain, with
\[
R_1(\eta):=12\frac{2+\eta}{1-7\eta},\quad
R_2(\eta):=5\frac{2+\eta}{1-\eta}.
\]
\end{lemma}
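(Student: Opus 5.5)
The plan is to re-run the KKO constructions with the smaller error rather than extend them abstractly, checking at each step where the old threshold $\eta\leq10^{-12}$ entered. First I would reopen the proof of \cite[Theorem~B.3]{kko22} together with \cite[Theorem~6.2 and Fact~B.4]{kko22} and record every place where a numerical smallness hypothesis on $\eta$ is used. The structural parts should only need $\eta$ below some fixed constant: the polygon representation of crossing families, the laminar extraction of the remaining near-minimum cuts, and the exhaustiveness of the one-sided/two-sided classification. Examples of such constraints are that a $7\eta$-near-minimum cut cannot be crossed in two incompatible ways, and inequalities like $2+7\eta<3-O(\eta)$. I would list these unrounded geometric inequalities (the ``unrounded geometric bounds'' referred to in the preliminaries) and verify each at $\eta=1.4\cdot10^{-9}$. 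Each is linear or low-degree polynomial in $\eta$ with constant-size coefficients, so monotonicity reduces the check to the single endpoint. This gives the hierarchy assertion: root $V\setminus\{u_0,v_0\}$, added cuts $7\eta$-near-minimum, and exhaustive classification.

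Second, I would treat the two repairs separately. For the two-sided class (cuts crossed on both sides), I would follow the KKO repair vector for crossed cuts. Its cost analysis charges each edge a bounded number of times with weight proportional to $\eta$ and normalizes by the smallest cut value, giving a bound of the form $c\,\eta\,(2+\eta)/(1-\eta)\,\beta x_e$. I would identify the congestion constant and certify it as $5$, the ratio $(2+\eta)/(1-\eta)$ coming from dividing a cut value at most $2+\eta$ by a coverage lower bound $1-\eta$.

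For the one-sided repair, I would use the sharper $7\eta$ envelope for the added hierarchy cuts, which produces the denominator $1-7\eta$. The coefficient $12$ I would obtain by exact cut-excess accounting: each one-sided polygon cut's deficit is bounded by its excess $x(\delta(S))-2$. That excess is split fractionally among the endpoint atoms of bottom edges, and laminarity bounds the number of levels charging a given bottom edge. Supporting the repair on bottom edges keeps it compatible with the bottom credit $a_{\rm bot}$. The deterministic payment hypotheses of Lemma~\ref{lem:prelim_hat_pair} (from Lemma~\ref{lem:payment_deterministic}) are exactly what the repair feasibility argument consumes, so they enter only as a black box.

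I expect the main obstacle to be the one-sided coefficient $12$. The original KKO bound is stated with rounded constants valid only for tiny $\eta$, so obtaining the explicit $12(2+\eta)/(1-7\eta)$ requires re-deriving the fractional endpoint allocation and checking that no step silently used $\eta\leq10^{-12}$, for instance to absorb an $O(\eta^2)$ term. My first attempt would be to write the per-edge charge as an exact sum over ancestors and then bound the geometric factors by their values at $\eta=1.4\cdot10^{-9}$ using monotonicity in $\eta$.
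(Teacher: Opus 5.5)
There is a genuine gap: the proposal asserts the two repair constants rather than deriving them, and the mechanism it offers for $12$ does not work.

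\textbf{What matches the paper.} Your treatment of the hierarchy domain and of the amplitude factors agrees with the paper. The geometric inputs \cite[Theorem~6.2 and Fact~B.4]{kko22} need only $\eta\le1/10$. The factors $(2+\eta)/(1-\eta)$ and $(2+\eta)/(1-7\eta)$ are the amplitudes $\alpha_2,\alpha_1$. They are chosen so that the guaranteed boundary repair, $\alpha_2(1-\eta)$ resp.\ $\alpha_1(1-7\eta)$, equals $(2+\eta)\beta$, which exceeds $-s^{\rm pay}(\delta(S))$ by the coordinate bound.

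One caution for your audit. The smallness condition that actually fails on this domain is the source payment theorem's error cutoff $10^{-10}$, since $7\eta$ can reach $9.8\cdot10^{-9}$. That condition cannot be verified at the endpoint; it must be dropped. Payment then enters only through the deterministic hypotheses, supplied in the paper by its own estimates at error $14\eta\le1.96\cdot10^{-8}<d_0$.

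\textbf{The two-sided constant $5$.} The constants $5$ and $12$ are the substance of the repair claim, and they do not come out of re-running the KKO charging with sharper bookkeeping. ``Identify the congestion constant and certify it as $5$'' is not an argument. You need two facts.
\begin{enumerate}
\item Each directional bad event $\mathcal B^{\rightarrow}(L)$ has probability at most $\frac52\eta$. This uses the forest bound on $E(A,B)$ and Markov on the middle set $E^\circ(L)$; the exact cut identity for $x(E^\circ(L))$ cancels $\epsilon_A$ and $\epsilon_B$.
\item Every edge lies in at most one right-event and at most one left-event increase set. The paper proves this (Lemma~\ref{lem:repair_directional_congestion}) by transferring the four membership patterns to outside atoms and comparing crossing orientations in a single boundary order.
\end{enumerate}
Together these give $5=2\cdot\frac52$.

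\textbf{The one-sided constant $12$.} Your proposed mechanism fails for two reasons.
\begin{enumerate}
\item Laminarity does not bound how many relevant cuts have a given internal bundle $E(a_{i-1},a_i)$ on their boundary. A laminar family can contain an arbitrarily long chain of intervals sharing an endpoint; the paper only bounds the total number of selected cuts linearly in $|W|$, not per endpoint.
\item A cut's trigger probability is not bounded by its own excess.
\end{enumerate}
What is needed instead:
\begin{enumerate}
\item Trigger bounds $c_S\eta$ obtained by two-link accounting over auxiliary cuts: $c_S=4$ for component cuts, $1/2+r_i+\ell_{i+1}$ for interior atoms, and $5$ for boundary atoms.
\item A fractional split of each triggered set's repair over its allowed endpoints $D(S)$.
\item The weighted Hall inequality $\sum_{S:D(S)\subseteq W}c_S\le12|W|$, proved by an Euler-formula edge count with a marked-edge deficit on the noncrossing chord drawings of the selected $\mathcal L$- and $\mathcal R$-cuts.
\end{enumerate}
Max-flow then gives a load of at most $12$ per bundle.

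In the paper these arguments are Lemmas~\ref{lem:sharpened_two_sided_repair} and~\ref{lem:sharpened_one_sided_repair}, both valid for $\eta\le1/10$. Lemma~\ref{lem:prelim_hat_pair} combines them with the amplitudes above, and the proof of the present lemma simply cites that combination.
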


\begin{proof}
We verify the hierarchy domain explicitly.
The construction in \cite[Appendix~B]{kko22} retains singleton
one-sided components and, for each non-singleton component, inserts
its non-root atoms and their union. The one-sided component has no
inside atoms. Its polygon structure gives adjacent-bundle mass at least
$1-7\eta$, atom-boundary mass at most $2+7\eta$, and remainder
mass at most $7\eta$ \cite[Theorem~6.2]{kko22}.
The inserted union is the complement of the root atom and has the
same boundary. The geometric facts giving laminarity and the
valid-hierarchy conclusion of \cite[Fact~B.4]{kko22} do not use
$\eta\leq10^{-12}$: their polygon domain is $\eta\leq1/10$.
The two-child convention gives the same triangle parents. Thus the
construction and its cut classification apply on the asserted domain.

Separate this geometric construction from the source's payment theorem.
At the present larger threshold, $7\eta$ can exceed its printed
payment-error cutoff $10^{-10}$, so that theorem is not invoked.
Instead use the probability, matching, ancestor, and payment estimates
proved above, with conservative hierarchy error $14\eta$. Their domain
holds because
\[
14(1.4\cdot10^{-9})=1.96\cdot10^{-8}<d_0.
\]
The geometric hierarchy and deterministic cut classification are
unchanged. In particular, no extension of a parameter-specialized
source payment conclusion is assumed.

The component repair lemmas allow $\eta\leq1/10$.
Lemma~\ref{lem:prelim_hat_pair} combines them on the required domain;
their amplitudes give exactly the displayed coefficients $R_1,R_2$.
\end{proof}

The polygon representation may contain inside atoms, whereas the congestion
argument needs a boundary order. We first show how to represent the memberships of
finitely many cuts by outside atoms.

\begin{lemma}[Outside witnesses for finitely many cuts]
\label{lem:repair_outside_witness}
Let $P$ represent a non-singleton connected component of
$\eta$-near-minimum cuts, where $0<\eta\leq2/5$. If
$C_1,\ldots,C_k$ are cuts in this component and $1\leq k<1/\eta$,
then every atom $a$ has an outside atom $a'$ such that
\[
a\subseteq C_i\quad\Longleftrightarrow\quad a'\subseteq C_i
\quad(1\leq i\leq k).
\]
This includes the root atom, which need not be an outside atom.
\end{lemma}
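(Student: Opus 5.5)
The plan is to use the polygon representation of the component, as in \cite[Section~4]{kko21} and \cite[Theorem~6.2]{kko22}. The atoms $a_0,a_1,\ldots,a_m$ are arranged so that $a_0$ is the root atom, $a_1,\ldots,a_m$ are the outside atoms in cyclic order along the polygon boundary, and the remaining atoms are inside atoms. Every cut $C_i$ of the component is represented by a diagonal. Such a cut is, up to complementation relative to the root, the union of a contiguous interval of outside atoms. Moreover, each inside atom either lies in every cut of the component or in none, once it is compared with the interval structure. Concretely, I would first recall the fact that inside atoms have total mass at most $O(\eta)$ and lie on the side of every representing cut that avoids the root \cite[Lemma~4.8]{kko21}.

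The first case is an inside atom $a$. Each $C_i$ either contains all inside atoms or none. The first step is therefore to find one outside atom whose membership agrees with $a$ for all $k$ cuts simultaneously.

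\begin{itemize}
\item Each cut $C_i$, viewed from the root side, is an interval $I_i$ of consecutive outside atoms.
\item The inside atom lies in $C_i$ exactly when $C_i$ is the complement-of-root-side form containing the inside region. Equivalently, $a$ lies on a fixed side of each diagonal.
\item The outside atoms on the same side of diagonal $i$ as $a$ form an arc $J_i$ of the cycle. This arc is long: the boundary mass argument gives each side of a near-minimum cut at least $x$-mass $1-O(\eta)$ among the outside links. The complementary arc $J_i^c$ is an interval that misses at least one outside atom.
\item I need $\bigcap_i J_i\neq\varnothing$. It suffices to show that the $k$ complementary arcs do not cover the cycle. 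Each complementary arc has adjacent-bundle mass bounded, so the cycle's total link mass $\approx m$ is compared against $k$ arcs.
\end{itemize}

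This last covering step is where I expect the main obstacle. My first attempt would be a mass count: the inside region sees each diagonal, and the cut value identity $x(\delta(C_i))<2+\eta$ forces the portion of the boundary cycle separated from $a$ to lose at least about $1-\eta$ relative to the total. Summing over $k<1/\eta$ cuts then leaves a positive deficit. The hypothesis $k<1/\eta$ strongly suggests exactly this $k\eta<1$ accounting.

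The second case is the root atom $a_0$. It lies in no $C_i$ under the convention that cuts avoid the root; alternatively, membership is complementary. The same argument applies with the arcs $I_i$ themselves: I need an outside atom outside all $I_i$. Here the same $k\eta<1$ count shows the $k$ intervals cannot cover all outside atoms, since each cut's interval omits boundary mass near the root of size $\geq1-O(\eta)$ in the relevant comparison. Finally, an outside atom is its own witness. The condition $\eta\leq2/5$ is used only to guarantee that the polygon representation exists and that $m\geq3$ outside atoms are available.
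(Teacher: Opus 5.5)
There is a genuine gap, and it is at the step you flag as the main obstacle.

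Two preliminary claims should be dropped first. An inside atom need not lie in all cuts of the component or in none. Nor need it lie on the root-avoiding side of every diagonal. Its membership in $C_i$ is decided by which side of the $i$th diagonal its cell lies on, which is exactly what your arcs $J_i$ record.

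More seriously, $\bigcap_i J_i\neq\varnothing$ cannot be obtained by comparing link mass along the boundary cycle with $k$. A complementary arc $J_i^c$ may contain arbitrarily many outside atoms. Only the two links at the ends of the $i$th diagonal are controlled by $x(\delta(C_i))<2+\eta$. For example, three pairwise crossing diagonals in general position bound an interior triangular cell. If $a$ sits in that cell, the intersection of the three half-polygons containing it is that triangle. So the three complementary arcs cover every outside atom, and no statement about boundary masses rules this picture out. The same objection applies to your root case: place the root in that central cell.

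The missing idea is a cut-value argument on the atoms \emph{inside} the region; the paper imports it from \cite[Lemma~4.30]{kko22}. Let $R$ be the intersection of the $k$ half-polygons containing the cell of $a$; it contains that whole cell. Suppose $R$ contained no outside atom. Then the union $X$ of the atoms in $R$ is a nonempty proper vertex set made only of inside atoms, possibly together with an inside root atom. Every edge of $\delta(X)$ crosses some $C_i$. It never does so through the links between the adjacent outside atoms at the ends of that diagonal, and those links carry all but $O(\eta)$ of $x(\delta(C_i))$. The cited lemma's hypothesis $k<1/\eta$ is what makes the resulting total fall below $2$. This contradicts fractional $2$-edge-connectivity.

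This is also why the paper applies the representation to $x^0$, including $e_0$. The set $X$ may contain the root atom, and $e_0$ lies inside that atom. Hence $x(\delta(X))=x^0(\delta(X))\geq2$ by the subtour constraints, and the component cut values are unchanged.

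Once $R$ is known to contain an outside atom $a'$, that atom lies on the prescribed side of every diagonal, so it has exactly the membership pattern of $a$. An outside $a$ is its own witness, and the root atom is treated by the same argument.
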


\begin{proof}
Apply the polygon representation to $x^0$, including $e_0$.
This vector is fractionally $2$-edge-connected. Every component cut
avoids the root atom containing both endpoints of $e_0$, so its
capacity under $x^0$ equals its capacity under $x$.

For each representing diagonal choose the half-polygon containing the
cell of $a$. Their intersection contains the whole cell and therefore
has positive area. If $a$ is outside, take $a'=a$. Otherwise, if the
intersection contained no outside atom, \cite[Lemma~4.30]{kko22} would
imply that it contained no inside atom either, a contradiction.
An outside atom in the intersection has exactly the chosen membership
pattern. Atom cells have positive area and lie wholly on the prescribed
sides of the representing diagonals; no witness is placed at a diagonal
intersection. The same argument applies to the root atom.
\end{proof}

We use these outside witnesses to bound how many directional repair events can
charge a single edge. The key comparison puts the relevant crossing cuts in one
boundary order.

\begin{lemma}[Two directional increase sets per edge]
\label{lem:repair_directional_congestion}
For $0<\eta\leq1/10$, every edge belongs to at most one right-event
increase set and at most one left-event increase set in the KKO
two-sided repair construction.
\end{lemma}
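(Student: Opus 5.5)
We will prove the statement by contradiction, using the outside witnesses of Lemma~\ref{lem:repair_outside_witness} to move all relevant cuts onto a single cyclic boundary order of the polygon representation.

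\textbf{Setup.} Recall the structure of the KKO two-sided repair. For a cut $C$ crossed on both sides in a polygon component, the right (resp.\ left) event increases the slack on an edge set determined by the polygon representation. This set consists of the edges between the two boundary atoms at the right (resp.\ left) endpoint of the representing diagonal of $C$, together with the crossing cuts realizing it. Fix an edge $e=\{a,b\}$ and suppose, for contradiction, that it lies in two distinct right-event increase sets, coming from cuts $C_1\neq C_2$ in the same component. (Cuts in different components cannot both charge $e$ in the same direction, since the hierarchy separates components.)

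\textbf{Reduction to the boundary order.} Apply Lemma~\ref{lem:repair_outside_witness} with $k$ equal to the number of cuts involved: $C_1$, $C_2$, and the at most two crossing witnesses of each. This gives $k\leq6<1/\eta$ because $\eta\leq1/10$. We obtain outside atoms $a',b'$ with exactly the memberships of $a,b$ in all these cuts. Membership of $e$ in an increase set depends only on these memberships, so we may assume $a,b$ are outside atoms. Each $C_i$ is then an interval of the cyclic boundary order. The edge $e$ lies in the right-increase set of $C_i$ exactly when $a,b$ straddle the right endpoint of $C_i$'s interval in the prescribed way: one endpoint is inside $C_i$, the other lies in the adjacent crossing cut on the right, and both sit in the designated positions.

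\textbf{Contradiction, and the main obstacle.} With two such intervals $C_1,C_2$ sharing the straddled position, the right endpoints of $C_1$ and $C_2$ must coincide in the boundary order. Two-sidedness forces each to be crossed on its left as well. We then compare their left endpoints: if these differ, one interval contains the other with the same right endpoint. In that case the right crossing cut of the larger interval witnesses a crossing incompatible with the defining choice of the repair edge set for the smaller one, namely the leftmost or rightmost choice in the KKO definition. This is where the near-minimum polygon inequalities enter, via \cite[Theorem~6.2]{kko22}: adjacent-bundle mass at least $1-7\eta$ and remainder at most $7\eta$ exclude degenerate ties when $\eta\leq1/10$. Hence $C_1=C_2$. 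The left events are handled symmetrically.

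The step I expect to be hardest is the last: pinning down the precise KKO definition of the increase sets well enough that "same right endpoint" really forces equality of the charging cuts. I would first try to show that the increase set is a function of the right endpoint alone, which would make the conclusion immediate.
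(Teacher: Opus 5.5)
There is a genuine gap, and it sits where you expected. Your reduction relies on an exact description of the right-event increase set as ``edges between the two boundary atoms at the right endpoint.'' In the KKO construction, the increase set at a polygon point $p$ is only known to be contained in $E(L_1\cap R_1,R_1\setminus L_1)$, where $L_1=L(p)$ and $R_1=L(p)_R$. Both sides of that bundle are unions of many atoms, so $e$ may join non-adjacent atoms, and this is exactly why the lemma is not trivial.

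After linearizing, membership of $e$ gives only $a'<r_1\le b'$, where $r_1$ is the right end of $L_1$, and likewise $a'<r_2\le b'$. Nothing forces $r_1=r_2$: several interval ends can lie between two non-consecutive outside atoms. Your next step, combining nested intervals, the ``defining leftmost/rightmost choice,'' and the $1-7\eta$ bundle bounds of \cite[Theorem~6.2]{kko22}, is not an argument. No mass inequality rules out two distinct points whose bundles $E(L\cap L_R,L_R\setminus L)$ share an edge. Also, ``we may assume $a,b$ are outside atoms'' is not legitimate as stated, since there need be no edge between the witnesses. Only membership patterns transfer.

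The missing idea is a case split on orientation.
\begin{itemize}
\item If one endpoint of $e$ lies in both $L_1$ and $L_2$, which is the configuration you are trying to handle, the contradiction is \cite[Lemma~5.4]{kko22}. This is a structural fact about the KKO choice of $L(p)$, and it cannot be recovered from polygon mass bounds.
\item Otherwise the memberships are opposite: $a\subseteq L_1\cap R_1$, $b\subseteq R_1\setminus L_1$, $b\subseteq L_2\cap R_2$, $a\subseteq R_2\setminus L_2$. Apply Lemma~\ref{lem:repair_outside_witness} to these four cuts to get witnesses $a',b'$, together with a witness $r'$ for the root atom that lies in none of the four cuts. Cutting the cyclic order just after $r'$ makes all four cuts nonwrapping intervals without changing crossing orientations. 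Right crossing of $(L_1,R_1)$ then gives $a'<b'$, while $(L_2,R_2)$ gives $b'<a'$.
\end{itemize}

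Your proposal never uses a root witness. Without it, your intervals live only in a cyclic order, where ``precedes'' is undefined. You also never isolate the opposite-orientation case, which is where the outside-witness argument actually does its work.

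Finally, reducing to a single polygon needs \cite[Fact~4.9]{kko22}; ``the hierarchy separates components'' is not a justification.
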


\begin{proof}
By \cite[Fact~4.9]{kko22}, an edge can receive a two-sided increase
in at most one polygon. Fix that polygon, and let $a,b$ be the atoms
containing the endpoints of an edge $e$. Suppose $e$ belongs to the
right-event increase sets at two distinct polygon points $p,q$.
Using the notation of \cite[Section~5]{kko22}, put
\[
L_1=L(p),\quad R_1=L(p)_R,\quad
L_2=L(q),\quad R_2=L(q)_R.
\]
The increase set at $p$ is contained in
$E(L_1\cap R_1,R_1\setminus L_1)$, and the analogous containment
holds at $q$. If an endpoint belongs to both $L_1,L_2$, then
\cite[Lemma~5.4]{kko22} gives a contradiction. Otherwise, after
interchanging $a,b$, the memberships are
\[
\begin{aligned}
a&\subseteq L_1\cap R_1,& b&\subseteq R_1\setminus L_1,\\
b&\subseteq L_2\cap R_2,& a&\subseteq R_2\setminus L_2.
\end{aligned}
\]
All four sets are component cuts, and $4<1/\eta$.
Lemma~\ref{lem:repair_outside_witness} supplies outside atoms
$a',b',r'$ matching, on all four cuts, the respective membership
patterns of $a,b$ and the actual root. In particular, $r'$ belongs
to none of the four cuts. The three witnesses are distinct because
their membership patterns differ.

Start the counterclockwise order of outside atoms immediately after
$r'$. All four cuts are nonwrapping intervals in this single linear
order. This change of origin does not change a crossing orientation:
the union of a crossing pair avoids $r'$, and rotating the origin
within its complement preserves the order of its ends. Since $R_1$
crosses $L_1$ on the right, every outside atom of $L_1\cap R_1$
precedes every outside atom of $R_1\setminus L_1$, giving $a'<b'$.
The right crossing of $R_2$ and $L_2$ gives $b'<a'$, a contradiction.
Thus at most one right-event increase set contains $e$. Reversing the
polygon orientation proves the left-event assertion. Only these four
membership patterns have been transferred to outside atoms; the polygon
may still have inside atoms.
\end{proof}

We retain the KKO repair vector and its pointwise guarantee. Exact
cut-excess accounting bounds each directional bad event, and the
preceding lemma bounds how many such events can increase an edge.

\begin{lemma}[Sharpened two-sided KKO repair]
\label{lem:sharpened_two_sided_repair}
Let $x^0$ be a feasible subtour-LP solution with support
$E\cup\{e_0\}$, let $x$ be its restriction to $E$, and let $\mu$ be any
spanning-tree distribution with marginals $x$. Let $0<\eta\leq1/10$.
For every $\alpha>0$, there is a random vector
$r^{(2)}:E\to\mathbb R_{\geq0}$ such that:
\begin{enumerate}[label=(\alph*)]
\item \label{item:sharpened_two_sided_cut} if an $\eta$-near-minimum cut $S$
is crossed on both sides and $\delta(S)_T$ is odd, then
$r^{(2)}(\delta(S))\geq\alpha(1-\eta)$;
\item \label{item:sharpened_two_sided_expectation} for every edge $e$,
$\E[r^{(2)}_e]\leq5\alpha\eta x_e$.
\end{enumerate}
\end{lemma}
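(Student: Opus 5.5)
We would follow the KKO22 construction of the two-sided repair vector and change only its accounting. For each polygon representing a non-singleton component of $\eta$-near-minimum cuts, and each polygon point $p$ with its left and right crossing cuts $L(p)$, $L(p)_R$ (and the mirror-image left data), we take the KKO increase set $F_p^{R}\subseteq E(L(p)\cap L(p)_R,\,L(p)_R\setminus L(p))$. On the right bad event $\mathcal B_p^R$ (the event in \cite[Section~5]{kko22} under which some cut crossed on both sides at $p$ may be odd), we add $\alpha x_e/x(F_p^R)$ to every $e\in F_p^R$, and similarly on the left. Call the resulting vector $r^{(2)}$.

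Item~\ref{item:sharpened_two_sided_cut} would then be the pointwise guarantee of KKO22 repeated unchanged. If $S$ is crossed on both sides and $\delta(S)_T$ is odd, some directional bad event attached to $S$ occurs. The corresponding increase set lies in $\delta(S)$ up to the near-minimum slack, and it has $x$-mass at least $1-\eta$ relative to its normalization. This gives $r^{(2)}(\delta(S))\geq\alpha(1-\eta)$. We only need to check that the KKO inequalities use nothing beyond $\eta\leq1/10$, which is the domain of their polygon lemmas.

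For Item~\ref{item:sharpened_two_sided_expectation}, two bounds are needed.

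\textbf{Congestion.} By Lemma~\ref{lem:repair_directional_congestion}, an edge $e$ lies in at most one right increase set and at most one left increase set. Hence $\E[r^{(2)}_e]\leq\alpha x_e(\Pr[\mathcal B^R_p]/x(F^R_p)+\Pr[\mathcal B^L_q]/x(F^L_q))$.

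\textbf{Bad-event probability.} We bound each directional bad event by exact cut-excess accounting instead of the KKO union bound. The event forces a tree-count deviation on one of the crossing cuts $L(p)$, $L(p)_R$, $L(p)\cap L(p)_R$, or on the two differences. These have $x$-values in $[2,2+\eta]$, or near $1$ for the pieces, with total excess at most about $2\eta$ plus a remainder. Using $\Pr[\text{count}\neq\text{target}]\leq$ the expected excess, justified by compulsory successes and homogeneity, we aim for $\Pr[\mathcal B_p^R]\leq\tfrac52\eta\,x(F_p^R)$. We expect that $x(F^R_p)\geq1-O(\eta)$ will be absorbed into the constant. Summing the two directions then gives $5\alpha\eta x_e$.

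The main obstacle is the second bound: obtaining the sharp constant $5/2$ per direction. The excess of each cut has to be charged exactly once across the cuts that define the bad event. We would first try writing the bad event as the failure of the submodular identity $x(\delta(L))+x(\delta(R))=x(\delta(L\cap R))+x(\delta(L\cup R))+2x(E(L\setminus R,R\setminus L))$ at the tree level. Each unit of tree deviation is then paid by a unit of expected excess, and the total expected excess is at most $5\eta/2$ because only the $x$-excesses of the near-minimum cuts appear. If $x(F)$ falls short of one, we would instead normalize by $x(F)$ in the construction and carry the factor $(1-\eta)$ into Item~\ref{item:sharpened_two_sided_cut}, as the statement allows.
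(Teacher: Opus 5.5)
Your outer structure matches the paper's. You keep the KKO increase sets, use Lemma~\ref{lem:repair_directional_congestion} so that each edge lies in at most one right and one left increase set, and union-bound with a per-direction bound of $\frac52\eta$. But that per-direction bound is the whole content of the lemma, and the proposal does not establish it: you leave it as ``the main obstacle'' with only a tentative route. The primitive you plan to use is also false as stated. It is not true that a cut count misses its target with probability at most the cut's excess; a vertex with $x(\delta(v))=2$ has tree degree $1$ or $3$ with constant probability. What excess actually controls is the internal deficit $|X|-1-|T\cap E(X)|\geq0$. Its mean is $\epsilon_X/2$, where $\epsilon_X:=x(\delta(X))-2$, so $\Pr[T[X]\text{ is not a tree}]\leq\epsilon_X/2$. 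The submodular identity for $L$ and $R$ alone does not produce the needed charging either.

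Here is the missing computation. Let $Q,R$ be the left and right crossing cuts of $L$, and put $A=L\cap R$, $B=R\setminus L$, $C=L\cap Q$, $D=Q\setminus L$, and $E^\circ(L):=\delta(L)\setminus(E(A,B)\cup E(C,D))$. The right bad event is $\{|E(A,B)_T|\neq1\}\cup\{E^\circ(L)_T\neq0\}$.
\begin{itemize}
\item If $T[A],T[B],T[R]$ are all trees, exactly one tree edge joins $A$ and $B$. So the first part has probability at most $(\epsilon_A+\epsilon_B+\epsilon_R)/2$.
\item Markov's inequality bounds the second part by $x(E^\circ(L))$.
\item For disjoint $X,Y$, the identity $x(E(X,Y))=1+(\epsilon_X+\epsilon_Y-\epsilon_{X\cup Y})/2$ gives exactly $x(E^\circ(L))=\epsilon_L+(\epsilon_Q+\epsilon_R-\epsilon_A-\epsilon_B-\epsilon_C-\epsilon_D)/2$.
\item Adding the two bounds, $\epsilon_A$ and $\epsilon_B$ cancel and $\epsilon_C,\epsilon_D\geq0$ can be dropped. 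This leaves $\epsilon_L+\epsilon_R+\epsilon_Q/2<\frac52\eta$.
\end{itemize}
The left crossing cut $Q$ enters the right event through $E^\circ(L)$; your list of relevant cuts omits it.

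Your normalization $\alpha x_e/x(F)$ also costs you the constant. This accounting gives only $\frac52\eta$, not $\frac52\eta\,x(F)$, and $x(F)\leq x(E(A,B))=1-\epsilon_R/2$ when $\epsilon_A=\epsilon_B=0$. You would therefore get $5\alpha\eta x_e/(1-\eta)$. Your closing remark points to the fix, which is not to normalize. Set $r^{(2)}_e=\alpha x_e$ on each triggered increase set. Then (a) follows from $x(F)\geq1-\eta$, and (b) follows from $2\cdot\frac52\eta$.
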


\begin{proof}
We use the KKO construction from \cite[Section~5]{kko22}. Fix a cut $L$
crossed on both sides, and let $Q=L^L$ and $R=L^R$ be the left and right
crossing cuts chosen by that construction. Define
\[
A:=L\cap R,
\quad
B:=R\setminus L,
\quad
C:=L\cap Q,
\quad
D:=Q\setminus L,
\]
and, for every proper cut side $X$, define
\[
\epsilon_X:=x(\delta(X))-2.
\]
The cuts $L,Q,R$ are $\eta$-near-minimum cuts. The four cut sides
$A,B,C,D$ are proper and avoid the endpoints of the added edge; in
particular,
\[
\epsilon_A,\epsilon_B,\epsilon_C,\epsilon_D\geq0.
\]
They are also $2\eta$-near-minimum cuts by
\cite[Lemma~2.7]{kko22}.

The KKO edge sets partition $\delta(L)$ as
\[
\delta(L)=E^{\leftarrow}(L)\mathbin{\dot\cup}
E^{\rightarrow}(L)\mathbin{\dot\cup}E^{\circ}(L),
\]
where
\[
E^{\rightarrow}(L)=E(A,B),
\quad
E^{\leftarrow}(L)=E(C,D).
\]
The disjointness is also recorded in
\cite[Corollary~5.8]{kko22}. For disjoint cut sides $X,Y$, the cut identity
gives
\[
x(E(X,Y))
=1+\frac{\epsilon_X+\epsilon_Y-\epsilon_{X\cup Y}}{2}.
\]
Consequently,
\[
x(E^{\rightarrow}(L))
=1+\frac{\epsilon_A+\epsilon_B-\epsilon_R}{2},
\quad
x(E^{\leftarrow}(L))
=1+\frac{\epsilon_C+\epsilon_D-\epsilon_Q}{2},
\]
and hence
\[
x(E^{\circ}(L))
=\epsilon_L+
\frac{\epsilon_Q+\epsilon_R-\epsilon_A-\epsilon_B-\epsilon_C-\epsilon_D}{2}.
\]

The right bad event is
\[
\mathcal B^{\rightarrow}(L)
:=
\{|E^{\rightarrow}(L)\cap T|\neq1
\text{ or }|E^{\circ}(L)\cap T|\neq0\}.
\]
By \cite[Corollary~2.12]{kko22},
\[
\Pr[|E^{\rightarrow}(L)\cap T|\neq1]
\leq\frac{\epsilon_A+\epsilon_B+\epsilon_R}{2}.
\]
By \cite[Fact~2.13]{kko22},
\[
\Pr[|E^{\circ}(L)\cap T|\neq0]
\leq x(E^{\circ}(L)).
\]
Adding these bounds and substituting the exact expression for
$x(E^{\circ}(L))$ gives
\[
\Pr[\mathcal B^{\rightarrow}(L)]
\leq
\epsilon_L+\epsilon_R+
\frac{\epsilon_Q-\epsilon_C-\epsilon_D}{2}
\leq\frac52\eta.
\]
Here we used
$\epsilon_L,\epsilon_Q,\epsilon_R<\eta$ and
$\epsilon_C,\epsilon_D\geq0$. The symmetric calculation gives
\[
\Pr[\mathcal B^{\leftarrow}(L)]\leq\frac52\eta.
\]

Keep the increase sets of \cite[Theorem~5.2]{kko22}, and set
$r^{(2)}_e=\alpha x_e$ if any directional bad event whose increase set
contains $e$ occurs, and $r^{(2)}_e=0$ otherwise.
Lemma~5.3 there triggers the relevant event, Lemma~5.1 places its
increase set inside the cut boundary, and Lemma~5.6 supplies mass at
least $1-\eta$. These facts give \ref{item:sharpened_two_sided_cut}.

By Lemma~\ref{lem:repair_directional_congestion}, at most two
directional bad events can increase any fixed edge. Thus, without
any independence assumption, the union bound gives
\[
\E[r^{(2)}_e]
\leq2\frac52\eta\alpha x_e
=5\alpha\eta x_e.
\]
This proves \ref{item:sharpened_two_sided_expectation}.
\end{proof}

For one-sided cuts, exact cut-excess accounting and a fractional allocation
to the two boundary bundles give a stronger repair estimate.

\begin{lemma}[Fractional endpoint repair for one-sided cuts]
\label{lem:sharpened_one_sided_repair}
\label{lem:fractional_endpoint_one_sided_repair}
In the setting of \cite[Theorem~A.12]{kko22}, for
$0<\eta\leq1/10$ and $\alpha>0$, there is a random repair vector
$s^*:E\to\mathbb R_{\geq0}$ that has all
the pointwise cut guarantees of that theorem and satisfies
\[
\E[s^*_e]\leq12\alpha\eta x_e
\]
for every edge $e$. This also covers the triangle construction of
\cite[Lemma~A.13]{kko22}. Both constructions remain
supported only on internal bundles between consecutive non-root atoms,
and hence only on bottom edges.
\end{lemma}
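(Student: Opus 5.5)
The plan follows the proof of Lemma~\ref{lem:sharpened_two_sided_repair}: keep the KKO one-sided construction of \cite[Theorem~A.12]{kko22} and its increase sets unchanged, and replace only the probability accounting. First I will recall the source structure. For a one-sided polygon configuration, each cut crossed on at most one side is repaired by increasing the slack on an internal bundle between two consecutive non-root atoms. These are exactly the bundles that the hierarchy classifies as bottom edges, and this gives the support assertion and the triangle case of \cite[Lemma~A.13]{kko22} at once. The repair is triggered by a bad event. That event says that the relevant adjacent bundle fails to contain exactly one tree edge, or that some small remainder set contains a tree edge. On it one adds $\alpha x_e$ to the edges of the chosen bundle. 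The pointwise cut guarantees use only the triggering and containment lemmas of the source, and these are valid for $\eta\leq1/10$.

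The quantitative step is to write each bad-event probability through exact cut excesses $\epsilon_X:=x(\delta(X))-2\geq0$, as in the two-sided proof. The source crude constants would go away. For disjoint sides $X,Y$ I would use the identity $x(E(X,Y))=1+(\epsilon_X+\epsilon_Y-\epsilon_{X\cup Y})/2$. I would bound $\Pr[|E(X,Y)\cap T|\neq1]$ by $(\epsilon_X+\epsilon_Y+\epsilon_{X\cup Y})/2$ using \cite[Corollary~2.12]{kko22}. For the remainder I would use \cite[Fact~2.13]{kko22}, with its mass given by the exact expression, so that cancellations between positive and negative excess terms are kept. The atoms and their unions here are $7\eta$-near-minimum (Lemma~\ref{lem:final24_hierarchy}), or $2\eta$ for intersections of crossing cuts. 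So each bad event costs a small integer combination of $\eta$, with the nonnegative excesses discarded only when they enter with a negative sign.

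The \emph{fractional endpoint allocation} is the new ingredient. The source assigns an endpoint atom's repair wholly to one adjacent bundle, and an edge can then be charged by bad events at both of its atoms. Instead I will split each atom's repair between its two boundary bundles with weights proportional to their masses. The increase on a cut is still at least $\alpha(1-O(\eta))$, because the cut boundary contains both bundles or the guarantee passes through the side-mass lower bound $1-7\eta$. An edge in the bundle between consecutive atoms $a_i,a_{i+1}$ is then charged only by the events at those two atoms, and only fractionally. Summing the exact-excess bounds at the two atoms and adding the remainder contributions should give a total coefficient at most $12\eta$.

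I expect the main obstacle to be this last bookkeeping: showing that the combined excess terms really total at most $12\eta$ rather than the cruder source constant. This needs the telescoping of the excess terms along consecutive atoms and care at the root-adjacent ends and in the triangle case. I would first check the triangle and a four-atom polygon by hand to calibrate the weights, and then argue that the general sum is dominated by two adjacent configurations.
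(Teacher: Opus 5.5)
Your proposal has a genuine gap at the step you call ``bookkeeping''. In a one-sided polygon the relevant sets are not only the atoms. Every component cut also needs repair: such a cut is an interval $a_\ell\cup\cdots\cup a_{r-1}$ in one of the laminar families $\mathcal L,\mathcal R$. The only internal bundles in its boundary are the two endpoint arcs $E(a_{\ell-1},a_\ell)$ and $E(a_{r-1},a_r)$.

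Many component cuts can share an endpoint arc. For example, nested $\mathcal L$-cuts can have a common right end, all crossed on the right by one $\mathcal R$-cut. So your claim that an edge of $E(a_i,a_{i+1})$ is charged ``only by the events at those two atoms'' is false. Splitting an atom's repair in proportion to bundle masses does not control this multiplicity. Telescoping excess terms along consecutive atoms cannot either, because the obstruction is combinatorial: it is how many cuts load one arc, not how excess terms add. Your excess estimates are also too crude for the constant $12$. Using $7\eta$ for atoms and their unions already exceeds the budget. The paper needs component cuts at excess $\eta$, and intermediate intervals of excess $r_i\eta,\ell_i\eta$ with $r_i+\ell_i\le4$, constructed by uncrossing.

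The missing idea is a global allocation of every relevant set to its internal endpoint arcs $D(S)$, with bounded load per arc. The paper assigns demands $c_S$:
\begin{itemize}
\item $4$ for each component cut, from a two-link bound using the strict laminar parent and a crossing cut;
\item $1/2+r_i+\ell_{i+1}$ for a relevant interior atom;
\item $5$ for a relevant boundary atom.
\end{itemize}
It then proves the weighted Hall inequality $\sum_{S:D(S)\subseteq W}c_S\le12|W|$. To do so it draws the selected $\mathcal L$- and $\mathcal R$-cuts as two noncrossing chord families. An outerplanar edge-deficit count gives $N_L+N_R\le3n-1-q-b-p-t$, which also absorbs the atom demands; so $12$ is essentially $4\times3$. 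Max-flow then yields allocations $f_{S,i}$ with load at most $12$ per arc. Setting $s^*_e=\alpha x_e\max_i w_{S,i}\mathbf 1_{\mathcal H_S}$ gives both the cut guarantee $\alpha(1-7\eta)$ and $\E[s^*_e]\le12\alpha\eta x_e$. The triangle case is handled separately, with cost $(3/2)(7\eta)\alpha x_e\le12\alpha\eta x_e$ from \cite[Lemma~A.13]{kko22}.
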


\begin{proof}
For every nonempty $X\subseteq V\setminus\{u_0,v_0\}$, put
$\epsilon_X:=x(\delta(X))-2\geq0$. The degree equations and the
forest identity give
\[
\Pr[T[X]\text{ is not a tree}]
\leq |X|-1-\E[|T\cap E(X)|]
=\epsilon_X/2.
\]
Let $S$ be such a cut side and let $E_j=E(X_j,Y_j)\subseteq\delta(S)$,
$j\in\{1,2\}$, be disjoint edge sets. Suppose that $X_j,Y_j$ are
disjoint nonempty subsets of $V\setminus\{u_0,v_0\}$, and put
$P_j=X_j\cup Y_j$. If $T[X_j],T[Y_j],T[P_j]$ are all trees, then
$|(E_j)_T|=1$. Consequently,
\[
\Pr[|(E_j)_T|\ne1]
\leq(\epsilon_{X_j}+\epsilon_{Y_j}+\epsilon_{P_j})/2,
\]
while the cut identity gives
\[
x(E_j)=1+(\epsilon_{X_j}+\epsilon_{Y_j}-\epsilon_{P_j})/2.
\]
Writing $D=\delta(S)\setminus(E_1\cup E_2)$, the event
$|(E_1)_T|=|(E_2)_T|=1$ and $D_T=0$ implies $\delta(S)_T=2$.
The union bound and Markov's inequality therefore give the cancellation
\begin{align}
\Pr[\delta(S)_T\ne2]
&\leq\sum_{j=1}^2
(\epsilon_{X_j}+\epsilon_{Y_j}+\epsilon_{P_j})/2+x(D)\notag\\*
&=\epsilon_S+\epsilon_{P_1}+\epsilon_{P_2}.
\label{eq:one_sided_two_link_accounting}
\end{align}
No independence assumption is used.

Fix a polygon from a non-singleton connected component of one-sided
crossing cuts, with atoms $a_0,\ldots,a_{m-1}$ and root atom $a_0$.
Such a polygon has no inside atoms, as in
\cite[Appendix~A]{kko22}. Write an interval as $[\ell,r)$ when it
contains precisely $a_\ell,\ldots,a_{r-1}$. The component cuts
partition into the laminar families $\mathcal L,\mathcal R$ of
\cite[Definition~4.10 and Fact~4.11]{kko21}: members of $\mathcal L$
are crossed only on the right, and members of $\mathcal R$ only on
the left. Hence an $\mathcal L$-cut has $r<m$ and an
$\mathcal R$-cut has $\ell>1$. Every arc is an endpoint of a
component cut, by the coarsest-atom definition. No component cut is
an atom or the union of all non-root atoms, since it crosses another
member of the component. In particular, $m\geq4$.
The relevant sets are its component cuts
and those non-root atoms that are themselves $\eta$-near-minimum cuts.
An interior relevant set triggers when its tree boundary is odd;
a boundary relevant set triggers when it is not happy. We first bound
these trigger probabilities.

Consider an interior non-atom component cut $A$. After exchanging left
and right if necessary, the geometry in
\cite[proof of Lemma~4.26]{kko21} supplies its strict parent $B$ in
the right laminar family and a component cut $C$ crossing both $A,B$
on the left. The two boundary links are
\[
E_1=E(A\cap C,C\setminus A),
\quad
E_2=E(A\setminus C,B\setminus(A\cup C)).
\]
Their sides are nonempty and avoid the distinguished endpoints, and
the links are disjoint subsets of $\delta(A)$. Their union sides
are $P_1=C$ and $P_2=B\setminus C$. Since $A,B,C$ are
$\eta$-near-minimum cuts, uncrossing $B,C$ gives
$\epsilon_{B\setminus C}\leq\epsilon_B+\epsilon_C\leq2\eta$.
Thus Eq.~\eqref{eq:one_sided_two_link_accounting} yields
\[
\Pr[\delta(A)_T\ne2]\leq4\eta.
\]

We need the intermediate intervals in the proof of
\cite[Lemma~4.18]{kko21}, rather than their intersection. Here the
input cut excess is $\eta$, half the normalization in that source.
For adjacent non-root atoms $a_j,a_{j+1}$, there is an interval
ending at $a_{j+1}$ and containing $a_j$, with excess at most
$3\eta$. Reflection gives an interval starting at $a_j$ and
containing $a_{j+1}$ with the same bound. We record the construction
and its sharper endpoint-dependent costs.

Choose a component cut $A$ with endpoint $j+2$. If it ends there,
it already contains the pair and has excess at most $\eta$.
Otherwise it starts there. If $A\in\mathcal L$, let $L$ be its
strict parent. Such parents exist away from the boundary, and a
cut in the opposite family crosses both child and strict parent,
by \cite[Lemmas~4.12 and~4.14]{kko21}. If $\ell(L)\leq j$, choose
$R\in\mathcal R$ crossing both $A,L$; then $L\setminus(A\cup R)$
is the desired interval. If $\ell(L)=j+1$, take the strict parent
$L'$ of $L$, so $\ell(L')\leq j$. Let $R$ cross $A,L$, and let
$R'$ cross $L,L'$, on the right. The intervals $R,R'$ overlap and
are nested. Their larger member $Q$ crosses $L'$ on the right,
intersects $A$, and starts at or after $j+2$. Thus
$L'\setminus(A\cup Q)$ is the desired interval. In both cases,
uncrossing first bounds the union excess by $2\eta$, and then
the difference excess by $3\eta$.

If $A\in\mathcal R$, choose $L\in\mathcal L$ crossing it on the
left. When $\ell(L)\leq j$, use $L\setminus A$. Otherwise
$\ell(L)=j+1$; let $L'$ be its strict parent. If $L'$ crosses
$A$, use $L'\setminus A$. If $L'$ contains $A$, choose
$R\in\mathcal R$ crossing $L,L'$ on the right. Its start is at
least $j+2$ and cannot be greater, since then it would cross $A$
on the right. Thus $L'\setminus R$ is the desired interval.
These three differences have excess at most $2\eta$. All sets
used avoid the root, and every uncrossing has the required nonempty
regions. This also proves the reflected construction.

Put $I:=\{2,\ldots,m-1\}$. For $i\in I$, define $r_i,\ell_i$
by the following exhaustive table of component-cut endpoints.
\begin{center}
\begin{tabular}{l c c}
\toprule
Endpoints at $i$ & $r_i$ & $\ell_i$\\
\midrule
Both starting and ending cuts & 1 & 1\\
Only ending cuts, including an $\mathcal L$-cut & 2 & 1\\
Only ending cuts, all in $\mathcal R$ & 3 & 1\\
Only starting cuts, including an $\mathcal R$-cut & 1 & 2\\
Only starting cuts, all in $\mathcal L$ & 1 & 3\\
\bottomrule
\end{tabular}
\end{center}
The endpoint property makes these definitions exhaustive and gives
$r_i+\ell_i\leq4$. Also $r_2=\ell_{m-1}=1$, since a cut ending
at $2$ or starting at $m-1$ would be an atom. The constructions
above give an interval starting at $a_i$ and containing $a_{i+1}$
of excess at most $r_i\eta$ for $2\leq i\leq m-2$, and an
interval ending at $a_{i-1}$ and containing $a_{i-2}$ of excess
at most $\ell_i\eta$ for $3\leq i\leq m-1$.
The values $r_{m-1},\ell_2$ are only bookkeeping values.

For a relevant interior atom $S=a_i$, choose these intervals $L,R$
ending and starting at $S$. They intersect exactly in $S$; put
$X=L\setminus S$ and $Y=R\setminus S$. These are nonempty and
disjoint. The two links $E(S,X),E(S,Y)$ contain one tree edge each
whenever the five induced subgraphs on $S,X,Y,L,R$ are trees.
Count the event for $S$ only once. With
$D:=\delta(S)\setminus(E(S,X)\cup E(S,Y))$, the forest bound and
the cut identities yield
\begin{align*}
\Pr[\delta(S)_T\ne2]
&\leq(\epsilon_S+\epsilon_X+\epsilon_Y+\epsilon_L+\epsilon_R)/2+x(D)\\
&=\epsilon_S/2+\epsilon_L+\epsilon_R
\leq(1/2+r_i+\ell_{i+1})\eta.
\end{align*}
In particular, the difference-set excesses cancel; no bound on them
or independence of the forest events is needed.

For a leftmost cut or atom $S$, define
$F=E(S,V\setminus(S\cup a_0))$ and $G=E(S,a_0)$.
These sets partition $\delta(S)$, and $S$ is happy exactly when
$F_T=1$. For a link $E(X,Y)\subseteq F$ with disjoint nonempty sides
avoiding $u_0,v_0$, put $P=X\cup Y$. The same forest bound and cut
identity give
\begin{align*}
\Pr[F_T\ne1]
&\leq(\epsilon_X+\epsilon_Y+\epsilon_P)/2+x(F)-x(E(X,Y))\\
&=1+\epsilon_S+\epsilon_P-x(G).
\end{align*}
For a leftmost component cut $L$, let $L_R$ be the crossing cut used
in \cite[proof of Lemma~4.30]{kko21} and use the link
$E(L_R\cap L,L_R\setminus L)$. Here
$\epsilon_L,\epsilon_{L_R}\leq\eta$, while
\[
x(G)\geq x(E(a_1,a_0))\geq1-\eta
\]
by \cite[Lemma~4.17]{kko21}, with the same parameter halving.
Hence $\Pr[L\text{ is not happy}]\leq3\eta$.
For a relevant boundary atom $a_1$, use $X=a_1$, $Y=R\setminus a_1$,
where $R$ starts at $a_1$, contains $a_2$, and has excess at most
$3\eta$. The bounds $\epsilon_{a_1}\leq\eta$ and the same
root-neighbor estimate give
$\Pr[a_1\text{ is not happy}]\leq5\eta$.
The rightmost cases are symmetric. No forest bound is applied to
the root atom $a_0$.

Let $\mathcal H_S$ denote the trigger for a relevant set $S$.
Assign demand $c_S=4$ to every component cut,
$c_{a_i}=1/2+r_i+\ell_{i+1}$ to every relevant interior atom,
and $c_S=5$ to every relevant boundary atom.
The preceding estimates give $\Pr[\mathcal H_S]\leq c_S\eta$;
using demand $4$ only weakens the boundary component-cut bound $3$.

The internal arc $i\in I$ joins $a_{i-1}$ to $a_i$.
For a relevant interval $S=a_{\ell(S)}\cup\cdots\cup a_{r(S)-1}$,
define its allowed internal endpoints by
\[
D(S):=\{\ell(S),r(S)\}\cap I.
\]
Every $D(S)$ is nonempty. Indeed, a component cut crosses another
member of its connected component, so it is neither an atom nor the
union of all non-root atoms. The absence of inside atoms also makes
each component cut uniquely determined by its endpoints.
We claim the weighted Hall inequality
\begin{equation}
\sum_{S:D(S)\subseteq W}c_S\leq12|W|
\label{eq:fractional_endpoint_hall}
\end{equation}
for every $W\subseteq I$. Select the component cuts with
$D(S)\subseteq W$. First remove the set $J$ of vertices of $W$
incident to no selected component cut. This leaves the selected cuts
unchanged. At an interior vertex $j$, the sum of the two possible
incident atom demands is at most
\[
1+r_{j-1}+\ell_j+r_j+\ell_{j+1}\leq11.
\]
At $j=2$ it is at most $5+(1/2+r_2+\ell_3)\leq19/2$,
and the other boundary is symmetric. These two boundary vertices
are distinct because $m\geq4$. Thus the removed atom demand is
at most $11|J|$, even when an atom is counted twice. If the remaining
set is empty, the claim follows. Otherwise rename it $W$, put
$n=|W|$, let $q$ count pairs $i,i+1\in W$, and set
\[
b_L=\mathbf1_{\{2\in W\}},\quad
b_R=\mathbf1_{\{m-1\in W\}},\quad b=b_L+b_R.
\]
Every remaining vertex is incident to a selected component cut.
Call a vertex $B$-type if $\ell_i=3$, so only $\mathcal L$-cuts
start there and no cut ends there. Call it $C$-type if $r_i=3$,
so only $\mathcal R$-cuts end there and no cut starts there.
These types are disjoint. Let $p_B,p_C$ be their counts in $W$,
and put $p=p_B+p_C$.

Draw the selected $\mathcal L$-cuts as noncrossing chords on
$W\cup\{1\}$ in convex position, and the selected
$\mathcal R$-cuts on $W\cup\{m\}$. Delete the $C$-type vertices
from the first drawing and the $B$-type vertices from the second.
No component edge is deleted. Both remaining internal vertex sets
are nonempty: if all vertices were $B$-type, no selected
$\mathcal R$-cut could be incident to them, and every selected
$\mathcal L$-cut would need an internal ending endpoint, which a
$B$-type vertex cannot supply. The all-$C$ case is symmetric.

Let $t_L,t_R$ count selected component edges joining consecutive
points in the original compressed order of $W$. These counts do
not change after the respective deletions. To see this for the
$\mathcal R$ drawing, suppose a new adjacency edge $(i,k)$
appears across a nonempty block of deleted $B$-type vertices.
For such a vertex $j$, any component cut starting there belongs
to $\mathcal L$. It must end before $k$: an end beyond $k$
would cross $(i,k)$ on the right, and an end at $k$ would force
its right-crossing $\mathcal R$-cut also to cross $(i,k)$,
contradicting laminarity. Its ending endpoint cannot be a deleted
$B$-type vertex. Therefore no such cut is selected, and no
$\mathcal R$-cut is incident to $j$, contrary to the remaining
vertex property. Reflection proves the other case. Deleting an
initial or final block creates no new internal pair on its missing
side. Since endpoints determine a cut and no component cut is an
atom, the two families together use at most $n-1-q$ original
compressed-adjacency pairs. Thus
\[
t_L+t_R\leq n-1-q.
\]
Also $2$ is not $C$-type and $m-1$ is not $B$-type, so the
boundary indicators survive the respective deletions.

We use the following edge deficit. Let a simple noncrossing graph
on $v\geq3$ circularly ordered vertices, one designated as root,
contain the full boundary path through its internal vertices.
If $t$ marked path edges form a matching and none belongs to a
triangle, then
\[
|E|\leq2v-3-t/2.
\]
Indeed, add the $z\leq2$ missing boundary edges incident to the
root. The resulting graph contains the full outer cycle. If its
bounded face lengths are $k_F$, Euler's formula gives
$2v-3-|E|=z+\sum_F(k_F-3)$. A triangular face containing a
marked edge must also contain an added edge. Each added edge is
on only one bounded face, and a triangle contains at most one
matching edge. Thus at most $z$ marked edges lie on triangular
faces. A face of length $k\geq4$ contains at most
$\lfloor k/2\rfloor\leq2(k-3)$ marked edges. Each marked path
edge is on one bounded face, proving
$t\leq z+2\sum_F(k_F-3)\leq2(2v-3-|E|)$.
For $v=2$, there is no internal path edge to mark and the ordinary
single-edge bound applies.

Mark each pair $(i,i+1)$ in $W$ with $r_i=\ell_{i+1}=2$, and
let $t$ count them. These pairs form a matching, since a shared
vertex would have neither a starting nor an ending component cut.
Their endpoints are neither $B$- nor $C$-type, so they survive
both deletions. In each reduced drawing add every missing edge
of its compressed internal path. Also add $(1,2)$ to the first
drawing when $b_L=1$, and $(m-1,m)$ to the second when $b_R=1$.
All additions are boundary edges and preserve outerplanarity.

No marked edge $(i,j)$ lies in a triangle of either augmented
drawing. A third vertex $k<i$ would require a component edge
ending at $j$, or a path edge skipping $i$; both are impossible.
A third vertex $k>j$ similarly requires a component edge starting
at $i$, or a path edge skipping $j$. The special atomic boundary
edge cannot provide such a diagonal, since $r_i=2$ excludes
$i=2$, and $\ell_j=2$ excludes $j=m-1$. There is no vertex
strictly between $i,j=i+1$. The edge deficit therefore gives,
with $N_L,N_R$ the selected component counts,
\begin{align*}
N_L+(n-p_C-1-t_L)+b_L&\leq2(n-p_C)-1-t/2,\\
N_R+(n-p_B-1-t_R)+b_R&\leq2(n-p_B)-1-t/2.
\end{align*}
The one-internal-vertex cases use the $v=2$ clause. Adding yields
\[
N_L+N_R\leq3n-1-q-b-p-t.
\]

For an interior atom, compare $1/2+r_i+\ell_{i+1}$ with $7/2$.
If neither endpoint cost is $3$, the only positive excess is $1$
for a marked $(2,2)$ pair. Otherwise the excess is at most twice
the number of costs equal to $3$. Each $C$-type vertex appears
in at most one $r_i$ position and each $B$-type vertex in at most
one $\ell_{i+1}$ position. The total interior-atom demand is
therefore at most $(7/2)q+2p+t$. Including irrelevant atoms only
increases this upper bound. Since $b\leq2$, the core demand is
at most
\begin{align*}
\sum_{S:D(S)\subseteq W}c_S
&\leq4(3n-1-q-b-p-t)+(7/2)q+2p+t+5b\\
&=12n-4-q/2+b-2p-3t\leq12n-2.
\end{align*}
Adding back $J$ costs at most $11|J|\leq12|J|$.
The empty set has demand zero, proving
Eq.~\eqref{eq:fractional_endpoint_hall} for the original $W$.

Use source-to-cut capacities $c_S$, cut-to-endpoint edges for
$i\in D(S)$ of capacity $\sum_Sc_S$, and endpoint-to-sink capacity
$12$. For any set of cut nodes, let $W$ be its endpoint neighborhood.
Its demand is at most the left side of
Eq.~\eqref{eq:fractional_endpoint_hall}, so max-flow/min-cut supplies
nonnegative allocations $f_{S,i}$ with
\[
\sum_i f_{S,i}=c_S,\quad
\sum_S f_{S,i}\leq12,\quad
f_{S,i}=0\quad\text{if }i\notin D(S).
\]
Set $w_{S,i}=f_{S,i}/c_S$. For
$e\in E_i:=E(a_{i-1},a_i)$, define
\[
s^*_e:=\alpha x_e\max\{
w_{S,i}\mathbf1_{\mathcal H_S}:i\in D(S)
\},
\]
where an empty maximum is zero. Apply this definition to every one-sided
polygon and set $s^*_e=0$ on edges outside their internal bundles, except
for the separate triangle construction below. For a triggered set $S$, the bundles
$E_i$, $i\in D(S)$, are disjoint subsets of $\delta(S)$, and each
has mass at least $1-7\eta$ by the near-cycle estimates used in
\cite[Theorem~A.12]{kko22}. Since $\sum_iw_{S,i}=1$,
\[
s^*(\delta(S))\geq
\alpha\sum_{i\in D(S)}w_{S,i}x(E_i)
\geq\alpha(1-7\eta).
\]
This gives the interior odd-cut guarantee. If a boundary cut is odd
and the corresponding polygon side is happy, that cut is not happy
and therefore triggers, giving the boundary happy-case guarantee.
Also $0\leq s^*_e\leq\alpha x_e$ on these internal bundles.
Bounding the maximum by the sum gives, without independence,
\[
\E[s^*_e]
\leq\alpha x_e\sum_Sw_{S,i}\Pr[\mathcal H_S]
\leq\alpha\eta x_e\sum_S f_{S,i}
\leq12\alpha\eta x_e.
\]
The separate triangle construction has expectation at most
$(3/2)(7\eta)\alpha x_e=(21/2)\alpha\eta x_e$ by
\cite[Lemma~A.13]{kko22}, so it satisfies the same bound.
An edge belongs to the internal bundles of at most one one-sided
polygon, and these bundles join consecutive non-root atoms.
Thus there is no further multiplicity, and the bottom-edge support
is unchanged.
\end{proof}

We keep the two repair components separate so that the one-sided vector
remains supported on bottom edges and can be absorbed by their stronger
payment decrease.

\begin{lemma}[Separated KKO repair vectors]
\label{lem:prelim_hat_pair}
Fix the exact max-entropy tree law with marginals $x$ and the KKO hierarchy
for the $\eta$-near-minimum cuts, where $0<\eta\leq1.4\cdot10^{-9}$, and let
$\beta>0$. Let
$E=E_{\mathrm g}\mathbin{\dot\cup}E_{\mathrm b}$ be a partition in which
every bottom edge belongs to $E_{\mathrm g}$, and let $s^{\mathrm{pay}}$
satisfy the deterministic inequalities in
Lemma~\ref{lem:payment_deterministic}. Assume in addition that
\[
s^{\mathrm{pay}}_e\geq-\beta x_e\quad(e\in E_{\mathrm g}),
\quad
s^{\mathrm{pay}}_e=0\quad(e\in E_{\mathrm b}).
\]
Then there are nonnegative random
vectors $r^{(2)},r^{(1)}:E\to\mathbb R_{\geq0}$ such that
\begin{enumerate}[label=(\alph*)]
\item \label{item:separated_repair_two_sided} if an $\eta$-near-minimum cut $S$ is crossed on both sides and $\delta(S)_T$ is odd, then
\[
r^{(2)}(\delta(S))\geq(2+\eta)\beta;
\]
\item \label{item:separated_repair_one_sided} if an $\eta$-near-minimum cut $S$ in the KKO one-sided classification is not the root and $\delta(S)_T$ is odd, then
\[
s^{\mathrm{pay}}(\delta(S))+r^{(1)}(\delta(S))\geq0;
\]
\item \label{item:separated_repair_expectation} for every edge,
\[
\E[r^{(2)}_e]
\leq5\frac{2+\eta}{1-\eta}\eta\beta x_e,
\quad
\E[r^{(1)}_e]
\leq12\frac{2+\eta}{1-7\eta}\eta\beta x_e;
\]
\item \label{item:separated_repair_support} pointwise, $r^{(1)}$ is supported only on bottom edges, and hence only on $E_{\mathrm g}$.
\end{enumerate}
\end{lemma}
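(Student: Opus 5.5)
The construction instantiates the two sharpened repair lemmas with one common amplitude. Set
\[
\alpha_2:=\frac{2+\eta}{1-\eta}\beta,\qquad \alpha_1:=\frac{2+\eta}{1-7\eta}\beta .
\]
Let $r^{(2)}$ be the vector produced by Lemma~\ref{lem:sharpened_two_sided_repair} with $\alpha=\alpha_2$. Let $r^{(1)}:=s^*$ be the vector produced by Lemma~\ref{lem:fractional_endpoint_one_sided_repair} with $\alpha=\alpha_1$. Both lemmas require only $0<\eta\leq1/10$, which covers $\eta\leq1.4\cdot10^{-9}$. Lemma~\ref{lem:final24_hierarchy} supplies the hierarchy and the exhaustive one-sided/two-sided classification on this domain.

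\textbf{Items (a), (c), (d).} Item~\ref{item:sharpened_two_sided_cut} gives $r^{(2)}(\delta(S))\geq\alpha_2(1-\eta)=(2+\eta)\beta$, which is (a). The expectation bounds $5\alpha_2\eta x_e$ and $12\alpha_1\eta x_e$ are exactly the displayed coefficients in (c). Item (d) is the support clause of Lemma~\ref{lem:fractional_endpoint_one_sided_repair}: internal bundles between consecutive non-root atoms are bottom edges, and all bottom edges lie in $E_{\rm g}$ by hypothesis.

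\textbf{Item (b).} This is the real work. We replay the proof of \cite[Theorem~A.12]{kko22}, with $s^*$ in place of its repair vector. That proof needs only the pointwise cut guarantees of $s^*$, which Lemma~\ref{lem:fractional_endpoint_one_sided_repair} asserts are retained, together with the deterministic payment inequalities. The case split is as follows.

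\emph{Degree-parent cuts.} If $S$ has a degree-cut parent and $\delta(S)_T$ is odd, Lemma~\ref{lem:payment_deterministic}\ref{item:deterministic_payment_degree} gives $s^{\rm pay}(\delta(S))\geq0$. Nonnegativity of $r^{(1)}$ then finishes this case.

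\emph{Polygon cuts, coordinate lower bound.} For cuts inside a one-sided polygon, including atoms and triangle children, we use the coordinate bound $s^{\rm pay}_e\geq-\beta x_e$ on $E_{\rm g}$ and $s^{\rm pay}_e=0$ on $E_{\rm b}$. This gives $s^{\rm pay}(\delta(S))\geq-\beta x(\delta(S))\geq-(2+7\eta)\beta$ for a relevant cut.

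\emph{Triggered cuts.} If $S$ triggers, the repair supplies $s^*(\delta(S))\geq\alpha_1(1-7\eta)=(2+\eta)\beta$. This covers every interior odd cut.

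\emph{Boundary cuts that are not happy.} For a leftmost or rightmost cut that is odd but not triggered, the cut is happy, so its polygon side is not happy. In that case we invoke Lemma~\ref{lem:payment_deterministic}\ref{item:deterministic_payment_polygon} with an admissible set $F$, as in the source, and sum the polygon inequality with the repair on the complementary side.

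\textbf{Main obstacle.} The hard step is this boundary bookkeeping. It means checking that the source's admissible sets have mass at least $1-d/2$ under the hierarchy error $d=14\eta$, and that the $(1+d)(1-d/2)\geq1$ factor persists. We would first verify that the source's combination uses only items (a)–(b) of Lemma~\ref{lem:payment_deterministic} plus $r^{(1)}\geq0$. We would then check the near-cycle mass $1-7\eta$ against the amplitude normalization $(2+\eta)/(1-7\eta)$, which was chosen so that the two match exactly.
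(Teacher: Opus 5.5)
Your construction and case split are the paper's. You use the same amplitudes $\alpha_2,\alpha_1$ and read (a), (c), (d) off the two sharpened repair lemmas. You split (b) the same way: degree-parent cuts, triggered cuts, and odd boundary cuts whose polygon side is not happy.

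\textbf{The step that fails as written.} In the triggered case you bound $s^{\rm pay}(\delta(S))\geq-(2+7\eta)\beta$, but the repair supplies only $s^*(\delta(S))\geq\alpha_1(1-7\eta)=(2+\eta)\beta$. The sum is then only $\geq-6\eta\beta$, so (b) does not follow. The fix is to use the hypothesis of (b) itself. $S$ is an $\eta$-near-minimum cut, so $x(\delta(S))<2+\eta$ and $s^{\rm pay}(\delta(S))\geq-\beta x(\delta(S))>-(2+\eta)\beta$. The numerator $2+\eta$ of $\alpha_1$ is matched against this cut value, not against the $7\eta$ hierarchy envelope. The denominator $1-7\eta$ is matched against the bundle mass.

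\textbf{The non-happy boundary case.} No repair needs to be summed with the polygon inequality here. The paper's argument runs as follows.
\begin{itemize}
\item Take $F:=\delta(S)\setminus\delta(S')$, where $S'$ is the union of the non-root atoms of $P$.
\item \cite[Lemma~2.10]{kko22} gives $x(F)\geq1-d/2$.
\item Every $e\in F$ joins distinct children of $P$, so $p(e)=P$ and $F$ is admissible.
\item Since $A\cup F\subseteq\delta(S)$ and the remaining edges lie in $C$, you get $s^{\rm pay}(\delta(S))+r^{(1)}(\delta(S))\geq s^{\rm pay}(A)+s^{\rm pay}(F)+(s^{\rm pay})^-(C)\geq0$. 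Here $r^{(1)}\geq0$ is simply discarded.
\item For a triangle parent, take $F=E(a_1,a_2)$. The identity $x(\delta(a_1))+x(\delta(a_2))=2x(F)+x(\delta(P))$, with $x(\delta(P))\leq2+d$, gives the same admissibility.
\end{itemize}
The factor $(1+d)(1-d/2)\geq1$ you list as an obstacle is internal to the proof of Lemma~\ref{lem:payment_deterministic}. That lemma is a hypothesis here, so it need not be rechecked.
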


\begin{proof}
Apply Lemma~\ref{lem:sharpened_two_sided_repair} with parameter
\[
\alpha_2:=\frac{2+\eta}{1-\eta}\beta
\]
and apply Lemma~\ref{lem:sharpened_one_sided_repair}, with its fractional
endpoint allocation and separate triangle construction, with parameter
\[
\alpha_1:=\frac{2+\eta}{1-7\eta}\beta
\]
to obtain $r^{(2)}$ and $r^{(1)}$.
Lemma~\ref{lem:sharpened_two_sided_repair} gives
\ref{item:separated_repair_two_sided} and the first expectation bound in
\ref{item:separated_repair_expectation}.
Lemma~\ref{lem:sharpened_one_sided_repair} retains the pointwise
one-sided guarantees and gives
\[
\E[r^{(1)}_e]\leq12\alpha_1\eta x_e,
\]
which is the second expectation bound in \ref{item:separated_repair_expectation}.

We verify \ref{item:separated_repair_one_sided} without invoking \cite[Theorem~B.3]{kko22} as a black box. If $S$ is a hierarchy cut whose parent is a degree cut, Lemma~\ref{lem:payment_deterministic}-\ref{item:deterministic_payment_degree} gives
\[
s^{\mathrm{pay}}(\delta(S))\geq0
\]
whenever $\delta(S)_T$ is odd.

For an interior cut or atom of a one-sided near-cycle component,
Lemma~\ref{lem:sharpened_one_sided_repair} gives
\[
r^{(1)}(\delta(S))
\geq
\alpha_1(1-7\eta)
=(2+\eta)\beta
\geq-s^{\mathrm{pay}}(\delta(S)).
\]
The last inequality is exactly where the coordinate hypothesis is used:
\[
s^{\mathrm{pay}}(\delta(S))
\geq-\beta x(\delta(S))
>-(2+\eta)\beta.
\]
The same conclusion holds for a leftmost or rightmost cut when the corresponding side is happy. Suppose instead, without loss of generality, that $S$ is leftmost and its polygon $P$ is not left happy. Let $S'$ be the union of the non-root atoms of $P$ and define
\[
F:=\delta(S)\setminus\delta(S').
\]
By \cite[Lemma~2.10]{kko22}, $x(F)\geq1-d/2$. Every edge $e\in F$ joins distinct children of $P$, and hence satisfies $p(e)=P$. Thus Lemma~\ref{lem:payment_deterministic}-\ref{item:deterministic_payment_polygon} applies. Since $A\cup F\subseteq\delta(S)$ and the worst possible contribution from the remaining $C$-edges is $(s^{\mathrm{pay}})^-(C)$,
\[
s^{\mathrm{pay}}(\delta(S))+r^{(1)}(\delta(S))
\geq
s^{\mathrm{pay}}(A)+s^{\mathrm{pay}}(F)
+(s^{\mathrm{pay}})^-(C)\geq0.
\]
The rightmost case is symmetric. If the parent is a triangle cut $P$ with
children $a_1,a_2$, use $F=E(a_1,a_2)$. Lemma~A.13 of \cite{kko22} gives the
happy-case repair. For the non-happy case, the identity
\[
x(\delta(a_1))+x(\delta(a_2))
=2x(F)+x(\delta(P))
\]
and the subtour constraints for $a_1,a_2$, together with
$x(\delta(P))\leq2+d$, give $x(F)\geq1-d/2$. Therefore
Lemma~\ref{lem:payment_deterministic}-\ref{item:deterministic_payment_polygon} applies by the same displayed
calculation. These are Types~4 and~5 in the proof of
\cite[Theorem~B.3]{kko22}, and together with the degree-parent and interior
cases they exhaust the one-sided classification.

For \ref{item:separated_repair_support}, the fractional allocation in
Lemma~\ref{lem:sharpened_one_sided_repair} assigns repair only to sets
$E(a_{i-1},a_i)$ between consecutive non-root atoms; the triangle
construction assigns repair only to $E(a_1,a_2)$. In each case the atoms
are distinct children of the same near-cycle or triangle cut. Their
smallest common hierarchy ancestor is therefore their parent, so these
edges are bottom edges. The hypothesis of the lemma makes every such
edge good.
\end{proof}

{\bf Root parity.}
The root
$R:=V\setminus\{u_0,v_0\}$ is handled by parity.  Indeed,
$x^0(\delta(u_0))=x^0(\delta(v_0))=2$ and $x^0_{e_0}=1$, so
$x(\delta(u_0))=x(\delta(v_0))=1$.  Every spanning tree has positive degree
at both vertices, so the two degree random variables, each having expectation
one, are equal to one almost surely. Connectivity prevents their two unique
tree edges from joining $u_0$ directly to $v_0$, since that would isolate
$\{u_0,v_0\}$ from the remaining vertices. Hence the two unique incident
edges both cross $\delta(R)$, and $\delta(R)_T=2$ almost surely. Thus the root
never creates an odd-cut constraint.

\subsection{Cutwise payment credit and centering}
\label{sec:final_cutwise_credit}

After adding the one-sided repair, we retain the expected decrease available on
each cut rather than immediately replacing it by a uniform edgewise bound. This
yields separate credit estimates for the two relevant cut classes.

\begin{lemma}[Aggregate expected credit on one-sided cuts]
\label{lem:final24_credit}
Fix one row in Definition~\ref{def:final24_parameters}, a threshold
$0<u\leq H$, its hierarchy with error $d\leq d_0$, and amplitude
$\beta>0$. Put
\[
\begin{aligned}
\alpha&:=2d_0,&c&:=1-\chi,&
a^{(4)}&:=p(1-c(1+\alpha)),\\
b(u)&:=a_{\rm bot}-R_1(u)u,&m(u)&:=\min\{a,b(u)\},&
V&:=1/2+k_{\rm good}h,
\end{aligned}
\]
and suppose $b(u)>0$. Define
\[
\begin{aligned}
A^{(0)}(u)&:=\frac{m(u)(2-\alpha(2+d_0))}{1+m(u)/p},\\
A^{(1)}(u)&:=a^{(4)}(3/2-h-V)+m(u)V,\\
A(u)&:=\min\{A^{(0)}(u),A^{(1)}(u)\},\quad
B(u):=(1-d_0)b(u).
\end{aligned}
\]
Let $W:=s^{\rm pay}+r^{(1)}$. On good edges,
$m(u)\beta x_e\leq-\E[W_e]\leq\beta x_e$, while $W_e=0$ on bad
edges. Every degree-parent cut has expected credit at least $\beta A(u)$;
every other relevant non-root one-sided cut has expected credit at least
$\beta B(u)$.
\end{lemma}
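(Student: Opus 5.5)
The proof splits into an edgewise step and a cutwise step.

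\textbf{Edgewise bounds on $W$.} On bad edges $s^{\rm pay}_e=0$ by Lemma~\ref{lem:candidate_payment_replay}\ref{item:candidate_payment_coordinate}. By Lemma~\ref{lem:prelim_hat_pair}\ref{item:separated_repair_support}, $r^{(1)}$ vanishes off bottom edges, which are good, so $W_e=0$ on bad edges.

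For a good top edge, $r^{(1)}_e=0$, so $-\E[W_e]=-\E[s^{\rm pay}_e]\geq a\beta x_e\geq m(u)\beta x_e$.

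For a bottom edge, use the bottom coefficient together with the repair expectation bound of Lemma~\ref{lem:prelim_hat_pair}\ref{item:separated_repair_expectation} at $\eta=u$:
\[
-\E[W_e]\geq(a_{\rm bot}-R_1(u)u)\beta x_e=b(u)\beta x_e\geq m(u)\beta x_e .
\]
The upper bound $-\E[W_e]\leq\beta x_e$ follows from $s^{\rm pay}_e\geq-\beta x_e$ and $r^{(1)}\geq0$.

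\textbf{Non-degree-parent one-sided cuts.} Each such cut is a polygon or triangle child. Its boundary consists of bottom edges of mass at least $x(\delta(S))-x(C)\geq2-d\geq2(1-d_0)$, together with possibly a few others. Each bottom edge contributes at least $b(u)\beta x_e$, and every other edge contributes a nonnegative amount. Summing gives $\beta(1-d_0)b(u)$ once the mass bound is halved conservatively. I would verify which boundary edges are bottom edges using the hierarchy structure of Section~\ref{sec:repair_combined_estimates}.

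\textbf{Degree-parent cuts.} This case is the main obstacle, because a degree-parent cut may have little good mass: a child $S$ can be incident to a bad bundle. I would split according to $b(S)$.

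If $S$ is incident to no bad bundle, essentially all horizontal mass $\geq2-\alpha(2+d_0)$ is good, counting the matching slack. This gives credit of order $m(u)\cdot2$. The factor $1/(1+m(u)/p)$ in $A^{(0)}$ should come from the fact that the increase charged into $\delta(S)$ has to be re-balanced against the reduction rate $p$. I would attempt to derive it from the matching conservation identity in Lemma~\ref{lem:candidate_matching}.

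If $S$ is incident to a bad bundle, then the upward mass is at most $V$ by Lemma~\ref{lem:candidate_good_threshold}. The horizontal good mass is at least $3/2-h-V$ plus the upward contribution. Each horizontal good edge keeps the baseline saving $a^{(4)}=p(1-c(1+\alpha))$, obtained by comparing the reduction $p\tau$ with the matched burden $(1-\chi)(1+\alpha)p\tau$ from Lemma~\ref{lem:candidate_ancestor}. Each upward edge provides $m(u)$, and adding these gives $A^{(1)}$.

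Taking the minimum of the two cases gives $A(u)$. I expect the bookkeeping of the $1+m/p$ denominator to be the delicate point. To handle it, I would write the cut credit as a convex combination of the reduction and increase totals and optimize over the unknown upward fraction.
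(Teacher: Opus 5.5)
Your edgewise step is correct and matches the paper's. The degree-parent case, however, has a genuine gap. You never track the bad upward mass $D_v:=x(\delta^\uparrow(v)\cap E_{\rm b})$ of the child $v$ (your $S$), and accounting for $D_v$ is exactly what $A^{(0)}$ and $A^{(1)}$ are built to do.

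Upward edges can be bad (those at $u_0,v_0$ always are), and bad edges earn zero credit. So in the bad-bundle case your step ``each upward edge provides $m(u)$'' is false. Without compensation you only get $a^{(4)}(3/2-h-U_v)+m(U_v-D_v)$, which can fall below $A^{(1)}$. The paper recovers the missing $mD_v$ on the horizontal side, in three steps:
\begin{enumerate}
\item Because bad upward edges carry no burden, it proves $\ell_v:=B_v/(\beta U_vZ_v)\le cF_v(1-D_v/U_v)$. This uses the $0.92$ top-parity bound, $q_{\rm bad}$ from Lemma~\ref{lem:final24_bad_parity} on bottom edges, and the row checks $0.92,\,sq_{\rm bad}\le c(1-\epsilon_B)$; when $U_v<\epsilon_F$ it uses $Z_v=2$ instead.
\item The exact identity $-\E[s^{\rm pay}(\mathbf e)]/\beta=p(x_{\mathbf e}-m_{\mathbf e,v}\ell_v-m_{\mathbf e,w}\ell_w)$, together with capacity and conservation, then gives horizontal credit $a^{(4)}I_v+pcF_vZ_vD_v\ge a^{(4)}I_v+mD_v$, via $pc(1-\epsilon_B)\ge a\ge m$.
\item The baseline $a^{(4)}$ also needs $\ell_w\le cF_w$ at the other endpoint, which uses that a bad bundle forces at least four children.
\end{enumerate}

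In the no-bad-bundle case, your claim that ``essentially all horizontal mass $\ge2-\alpha(2+d_0)$ is good'' is wrong. Horizontal mass is only about $2-U_v$, so edgewise you get $T(v)\ge m(2-D_v)$, which may be close to $m$. The factor $1/(1+m/p)$ comes from a second bound that increases in $D_v$. The same payment identity, with $\ell_w\le F_w$, $F_vZ_v\ge1$, and $B_v/\beta\le U_v-D_v$, gives
\[
T(v)\ge p(F_vU_vZ_v-B_v/\beta-\alpha I_v)\ge p(D_v-\alpha(2+d_0)).
\]
Here $\alpha(2+d_0)$ is matching-capacity slack, not a good-mass bound. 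Averaging the two bounds with weights $p/(p+m)$ and $m/(p+m)$ cancels $D_v$ and gives exactly $A^{(0)}$. Your closing ``convex combination'' remark points this way, but it supplies neither the variable to eliminate nor the second bound.

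For the remaining one-sided cuts, your bottom-mass bound $2-d$ is false for leftmost and rightmost polygon cuts, for boundary atoms, and for triangle children. About one unit of their boundary goes to the root atom, i.e.\ leaves the polygon or triangle parent, and those edges are not bottom edges. ``Halving conservatively'' reaches the right constant from a false premise. The correct input is $x(\delta(S)\cap\delta(S'))\le1+d$ from \cite[Lemma~2.10]{kko22}, where $S'$ is the union of the component's non-root atoms. Hence the bottom part $\delta(S)\setminus\delta(S')$ has mass at least $1-d_0$; for a triangle parent, the cut identity gives $x(E(a_1,a_2))\ge1-d/2$. Since every other edge has nonnegative expected credit, this yields $B(u)$.
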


\begin{proof}
Write $m=m(u)$. The payment bound and bottom support of $r^{(1)}$
give credit at least $a$ on good top edges and $b(u)$ on bottom edges,
in units of $\beta x_e$. The coordinate bound gives the upper bound one.
Bad edges have neither payment nor one-sided repair. Zero-mass edges
may be omitted.

For a child $v$ of a degree parent, write
\[
U_v:=x(\delta^\uparrow(v)),\quad
D_v:=x(\delta^\uparrow(v)\cap E_{\rm b}),\quad
I_v:=x(\delta^\rightarrow(v)\cap E_{\rm g}),\quad
\ell_v:=\frac{B_v}{\beta U_vZ_v}.
\]
Here $B_v$ is the normalized upward burden in
Lemma~\ref{lem:candidate_ancestor}; its actual expectation is $pB_v$.
When $U_v=0$, set $\ell_v=0$. Matching conservation then makes every
corresponding $m_{\mathbf e,v}$ zero. Bad upward mass includes
distinguished-endpoint edges, which are never reduced.

For every child, $\ell_v\leq F_v$. If $U_v\geq\sigma$, this follows
from the ancestor estimate. Otherwise $F_v=1$ and
$B_v\leq\beta U_v$, since $sq_0<1$.
If the parent has at least four children, the stronger bound
$\ell_v\leq cF_v$ holds: the ancestor estimate handles
$U_v\geq\sigma$, and $U_v<\sigma<\epsilon_F$ gives $Z_v=2$ and
$\ell_v\leq1/2<c$.

Suppose that $v$ is incident to an internal bad bundle. Its parent
has at least four children by Lemma~\ref{lem:candidate_matching}, and
$U_v\leq V<1-\epsilon_F$. We claim, for $U_v>0$, that
\[
\ell_v\leq cF_v(1-D_v/U_v).
\]
If $U_v<\epsilon_F$, then $F_v=1$, $Z_v=2$, and
$B_v\leq\beta(U_v-D_v)$. Otherwise $F_v=1-\epsilon_B$.
The fractional case in the ancestor proof bounds each top conditional
odd probability by $0.92$, and Lemma~\ref{lem:candidate_bad_bundle_parity}
bounds every upward bottom probability by $q_{\rm bad}$.
The exact row comparisons
\[
0.92\leq c(1-\epsilon_B),\quad
sq_{\rm bad}\leq c(1-\epsilon_B),\quad
pc(1-\epsilon_B)\geq a,\quad a^{(4)}\geq a
\]
give $B_v\leq\beta cF_v(U_v-D_v)$. Division by $U_vZ_v$, with
$Z_v\geq1$, proves the claim. This includes $U_v=\epsilon_F$.

For a good horizontal bundle $\mathbf e=(v,w)$, the exact payment
formulas in \cite[Eqs.~(33) and~(37)]{kko21} and localized matching
capacity give
\[
-\frac{\E[s^{\rm pay}(\mathbf e)]}{\beta}
=p(x_{\mathbf e}-m_{\mathbf e,v}\ell_v-m_{\mathbf e,w}\ell_w)
\geq a^{(4)}x_{\mathbf e}
+pcF_v\frac{D_v}{U_v}m_{\mathbf e,v}
\]
when $v$ is incident to a bad bundle. We retain both the baseline
credit and the additional bad-mass credit. Summing over its incident
good bundles and using $\sum_{\mathbf e\ni v}m_{\mathbf e,v}=U_vZ_v$
gives
\begin{equation}\label{eq:additive_endpoint_credit}
-\frac{\E[s^{\rm pay}(\delta^\rightarrow(v)\cap E_{\rm g})]}{\beta}
\geq a^{(4)}I_v+pcF_vZ_vD_v.
\end{equation}
For $U_v=0$, the second term is zero and the same bound follows from
the baseline bound. No independence between reduction and increase
events is used.

Let $T(v):=-\E[W(\delta(v))]/\beta$. In the bad-incidence case,
internal bad mass is at most $1/2+h$, so
$I_v\geq3/2-h-U_v$. The repair is zero on horizontal top bundles;
the upward good mass has credit at least $m(U_v-D_v)$. Hence
\[
T(v)\geq a^{(4)}I_v+m(U_v-D_v)+pcF_vZ_vD_v
\geq a^{(4)}(3/2-h-U_v)+mU_v\geq A^{(1)}(u).
\]
The second step uses $F_vZ_v\geq1-\epsilon_B$ and the displayed
row comparisons. The last uses $U_v\leq V$ and $a^{(4)}\geq a\geq m$.

If $v$ has no internal bad bundle, its good boundary mass is at least
$2-D_v$, giving $T(v)\geq m(2-D_v)$. The neighboring surpluses
$F_w-\ell_w$ are nonnegative even for a three-child parent. Thus the
same exact payment identity, now using $B_v/\beta\leq U_v-D_v$ and
$F_vZ_v\geq1$, gives
\[
T(v)\geq p(F_vU_vZ_v-B_v/\beta-\alpha I_v)
\geq p(D_v-\alpha(2+d_0)).
\]
All complementary edges have nonnegative expected credit. Multiply
the good-mass bound by $p$ and this bound by $m$, then add and divide
by $p+m$. The terms in $D_v$ cancel, giving $T(v)\geq A^{(0)}(u)$.

For a remaining one-sided near-cycle cut, let $S'$ be the union of
its component's non-root atoms. By \cite[Lemma~2.10]{kko22},
$x(\delta(S)\cap\delta(S'))\leq1+d_0$; every edge in the difference
is bottom, so the bottom mass is at least $1-d_0$.
For a triangle parent $Q$ with children $a_1,a_2$, the cut identity
\[
2x(E(a_1,a_2))=x(\delta(a_1))+x(\delta(a_2))-x(\delta(Q))
\]
gives bottom mass at least $1-d_0/2\geq1-d_0$.
The bottom credit therefore gives $T(S)\geq B(u)$.
These cases exhaust the one-sided classification. The root has
almost-surely even boundary and is excluded.
\end{proof}

Different parameter rows can favor different cut classes. We mix their expected
credits on a common hierarchy before centering the resulting vector, then add the
two-sided repair.

\begin{lemma}[Uniformization after mixing cutwise credit]
\label{lem:final24_uniform}
Fix $0<u\leq H$, one geometric hierarchy for this threshold, and
amplitude $\beta>0$.
For each row with $b_j(u)>0$, use the preceding lemma on that hierarchy.
Choose nonnegative weights $w_j$ summing to one, supported on such rows,
and put
\[
\begin{aligned}
\bar m&:=\sum_jw_jm_j(u),&
A_w&:=\sum_jw_jA_j(u),&B_w&:=\sum_jw_jB_j(u),\\
M_w&:=\min\{A_w,B_w\},&
C_w&:=\frac{M_w}{(1-\bar m)(2+u)+M_w},&
\kappa_w&:=C_w-R_2(u)u.
\end{aligned}
\]
There is a random vector $z^{(u)}$ such that
$z^{(u)}_e\geq-\beta x_e$ on every edge,
$z^{(u)}(\delta(S))\geq0$ on every relevant odd $u$-near-minimum
cut, and $\E[z^{(u)}_e]\leq-\kappa_w\beta x_e$.
\end{lemma}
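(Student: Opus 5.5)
The plan follows the outline already sketched in the technique overview. For each row $j$ in the support of $w$, Lemma~\ref{lem:final24_payment} gives a payment vector $s^{\rm pay}_j$ on the fixed hierarchy, and Lemma~\ref{lem:prelim_hat_pair} at $\eta=u$ gives separated repairs $r^{(1)}_j,r^{(2)}_j$. All rows use independent auxiliary coins on the same sampled tree. Put $W_j:=s^{\rm pay}_j+r^{(1)}_j$ and $W:=\sum_jw_jW_j$. The candidate vector is
\[
z^{(u)}:=\lambda(W-\E[W])-C_w\beta x+r^{(2)},
\]
where $\lambda:=(2+u)/((1-\bar m)(2+u)+M_w)$ and $r^{(2)}$ is the two-sided repair from Lemma~\ref{lem:sharpened_two_sided_repair}. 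Its parameter is chosen so that it supplies $(2+u)\beta$ on odd two-sided cuts, and its expectation is at most $R_2(u)u\beta x_e$. Note that $\E[W]$ is deterministic, so subtracting it costs nothing pointwise on cuts whose expected credit is large.

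\emph{Coordinate bound.} Use Lemma~\ref{lem:final24_credit} rowwise: on good edges $W_{j,e}\geq-\beta x_e$ pointwise and $-\E[W_{j,e}]\geq m_j\beta x_e$, while on bad edges both vanish. Averaging gives
\[
W_e-\E[W_e]\geq-\beta x_e\sum_jw_j(1-m_j)\cdot\mathbf 1_{\rm good}
\]
up to mixing. Here I must handle that a given edge may be good for some rows and bad for others. Bad-for-row-$j$ contributes $0\geq -(1-m_j)\beta x_e$, so the bound $-(1-\bar m)\beta x_e$ holds uniformly. Then $\lambda(1-\bar m)+C_w=1$, which is exactly the normalization in the statement, and $r^{(2)}\geq0$ give $z_e\geq-\beta x_e$.

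\emph{Cut bounds.} I take the cut classes in turn.
\begin{itemize}
\item For a relevant odd one-sided non-root cut $S$, Lemma~\ref{lem:prelim_hat_pair}(b) gives $W_j(\delta(S))\geq0$ pointwise for every row. Lemma~\ref{lem:final24_credit} gives $-\E[W_j(\delta(S))]\geq\beta A_j$ or $\beta B_j$, according to the class of $S$. Since the hierarchy is common, the class is row-independent, so the mixed expected credit is at least $\beta M_w$. Hence
\[
z(\delta(S))\geq\lambda\beta M_w-C_w\beta x(\delta(S))\geq\beta(\lambda M_w-C_w(2+u))=0,
\]
using $x(\delta(S))<2+u$.
\item For an odd two-sided cut, the centered part is at least $-\beta x(\delta(S))>-(2+u)\beta$ by the coordinate bound, and $r^{(2)}$ adds $(2+u)\beta$.
\item The root is never odd.
\end{itemize}

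\emph{Expectation.} We have $\E[z_e]=-C_w\beta x_e+\E[r^{(2)}_e]\leq-(C_w-R_2(u)u)\beta x_e=-\kappa_w\beta x_e$.

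The main obstacle is the coordinate step across rows with differing good sets. Within a row, the edgewise credit $m_j$ holds only on good edges, and I must confirm that the centered lower bound $W_e-\E[W_e]\geq-(1-\bar m)\beta x_e$ still holds edge by edge. It does, because each row's term is either $0$ (bad) or $\geq-(1-m_j)\beta x_e$ (good). A secondary check is that $\lambda>0$ and $C_w\in(0,1)$, which follow from $M_w>0$ and $\bar m<1$.
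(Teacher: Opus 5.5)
Your proposal is correct and follows essentially the same route as the paper's proof: the same centered vector $\lambda(W-\E[W])-C_w\beta x$ with $\lambda=(2+u)/((1-\bar m)(2+u)+M_w)$, the same rowwise coordinate bound that tolerates differing good sets because bad edges contribute zero, the shared cut classification giving mixed expected credit at least $\beta M_w$, and a single row-independent two-sided repair. The one point you leave implicit, which the paper also states only briefly, is that $b_j(u)>0$ makes each $m_j\in(0,1)$ and each $A_j(u),B_j(u)$ positive; this is what guarantees $\lambda>0$ and $0<C_w<1$.
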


\begin{proof}
The hierarchy construction in Lemma~\ref{lem:final24_hierarchy}
uses only $x$, the distinguished edge, and the near-minimum threshold.
Fix it, its polygon representations, and its cut classes before choosing
the parameter rows. In particular, a cut has the same degree-parent or
remaining one-sided classification in every row. The good sets,
matchings, and reduction events may differ. Realize their auxiliary
laws conditionally on the same exact-law tree, preserving each row's
synchronized pairs.

Put $W:=\sum_jw_jW_j$, $M:=2+u$, and
\[
D:=(1-\bar m)M+M_w,\quad C:=M_w/D,\quad\lambda:=M/D.
\]
Every selected row has $0<m_j<1$ and positive cut credits. Thus $D>0$,
$0<C<1$, and
\[
\lambda(1-\bar m)+C=1,\quad\lambda M_w=CM.
\]
On a good edge in row $j$,
$(W_j)_e-\E[(W_j)_e]\geq-(1-m_j)\beta x_e$; on a bad edge both
terms are zero, so this bound still holds. Averaging gives the bound
$W_e-\E[W_e]\geq-(1-\bar m)\beta x_e$ without requiring the good
sets to agree. Define
\[
q:=\lambda(W-\E[W])-C\beta x,\quad z^{(u)}:=q+r^{(2)}.
\]
Then $q_e\geq-\beta x_e$ and $\E[q_e]=-C\beta x_e$.

On a relevant odd one-sided cut, each $W_j(\delta(S))\geq0$.
The shared cut classification gives expected credit at least
$\beta A_w$ or $\beta B_w$, hence at least $\beta M_w$.
Therefore
\[
q(\delta(S))\geq\beta(\lambda M_w-Cx(\delta(S)))
\geq\beta(\lambda M_w-CM)=0.
\]
On a relevant odd two-sided cut, the coordinate bound loses at most
$M\beta$. Add one two-sided repair from
Lemma~\ref{lem:prelim_hat_pair}, with amplitude $\beta$, to supply
this amount. Its construction depends only on the common hierarchy
and tree, not on the chosen row; no separate copy is charged per row.
It is nonnegative and costs at most $R_2(u)u\beta x_e$ in expectation.
This proves all three assertions.

The expectations and weights are analytical quantities under the exact
tree and auxiliary laws. The algorithm only samples a tree and computes
its minimum join; it does not construct these centered witnesses.
The implementation comparison is applied once after all layers are combined.
\end{proof}

\subsection{Layering and the final certificate}
\label{sec:layered_certificate}

A certificate at one threshold controls only cuts sufficiently close to minimum.
We combine finitely many such certificates on the same tree, using each cut's
excess to pay for the remaining layers.

\begin{lemma}[Finite-layer combination]
\label{lem:finite_layer_combination}
Let $0=u_0<u_1<\cdots<u_N=H$, and define
\[
\beta(u):=\frac{u}{4+2u},\quad
\Delta_i:=\beta(u_i)-\beta(u_{i-1}).
\]
Suppose that, on the same sampled tree, $V_i$ satisfies
$(V_i)_e\geq-\Delta_i x_e$ and
$\E[(V_i)_e]\leq-\kappa_i\Delta_i x_e$.
Assume also that $V_i(\delta(S))\geq0$ for every odd tree cut
not containing $e_0$ with $x(\delta(S))<2+u_i$.
Then $Z:=\sum_{i=1}^N V_i$ satisfies
\[
Z_e\geq-\beta(H)x_e,\quad
\E[Z_e]\leq-\sum_{i=1}^N\kappa_i\Delta_i x_e,
\]
and $x(\delta(S))/2+Z(\delta(S))\geq1$ for every such odd cut,
without a restriction on its $x$-value.
\end{lemma}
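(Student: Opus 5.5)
The plan is to verify the three conclusions of Lemma~\ref{lem:finite_layer_combination} separately; the coordinate and expectation bounds are immediate, and all the content is in the cut inequality.

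\textbf{Coordinate and expectation bounds.} Since $\beta$ is increasing with $\beta(0)=0$, every $\Delta_i>0$ and $\sum_i\Delta_i=\beta(H)$ by telescoping. Summing $(V_i)_e\geq-\Delta_ix_e$ gives $Z_e\geq-\beta(H)x_e$. Linearity of expectation gives the bound on $\E[Z_e]$ directly, since all layers are defined on the same tree and no independence is needed.

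\textbf{Cut inequality.} Fix an odd tree cut $S$ avoiding $e_0$ and write $x(\delta(S))=2+\epsilon$ with $\epsilon\geq0$; subtour feasibility gives $\epsilon\geq0$. Let $k$ be the number of indices with $u_k\leq\epsilon$, so that layers $i>k$ have $\epsilon<u_i$ and $k=N$ when $\epsilon\geq H$. We split
\[
Z(\delta(S))=\sum_{i\leq k}V_i(\delta(S))+\sum_{i>k}V_i(\delta(S)).
\]
For the second sum, each term is nonnegative by the layer hypothesis. For the first sum, we use only the coordinate bounds:
\[
\sum_{i\leq k}V_i(\delta(S))\geq-\sum_{i\leq k}\Delta_i\,x(\delta(S))=-\beta(u_k)(2+\epsilon).
\]
It therefore suffices to prove $(2+\epsilon)/2-\beta(u_k)(2+\epsilon)\geq1$, which is equivalent to $\beta(u_k)\leq\epsilon/(2(2+\epsilon))=\beta(\epsilon)$. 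This holds because $\beta$ is increasing and $u_k\leq\epsilon$. For $k=0$ the required inequality reduces to $\epsilon\geq0$.

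The only delicate point is calibrating $\beta(u)=u/(4+2u)$. It is exactly the largest uniform coordinate loss that a cut of value $2+u$ can absorb while keeping $y$-value at least one. This is why layers below a cut's excess can be paid from that excess, and layers above it are covered by the per-layer cut guarantees.
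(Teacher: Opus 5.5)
Your proposal is correct and matches the paper's proof. Both telescope the coordinate bounds and use linearity for the expectation. Both then split the layers at the last index with $u_i\leq\epsilon$, apply the per-layer cut guarantee to the higher layers and the coordinate bounds to the lower ones, and conclude from $\beta(u_k)\leq\beta(\epsilon)$ by monotonicity of $\beta$.
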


\begin{proof}
The coordinate and expectation bounds follow by telescoping and linearity.
For an odd cut not containing $e_0$, put $\rho=x(\delta(S))-2\geq0$
and $j:=\max\{i\in\{0,\ldots,N\}:u_i\leq\rho\}$.
Layers above $j$ are nonnegative on this cut, while the other layers give
\[
\frac{x(\delta(S))}{2}+Z(\delta(S))
\geq1+\frac\rho2-\beta(u_j)(2+\rho)\geq1,
\]
because $\beta(u_j)\leq\beta(\rho)$.
This includes grid endpoints and $\rho>H$.
Auxiliary constructions can be jointly realized conditionally on the
common tree. Neither independence between layers nor persistence of an
edge's type is needed.
\end{proof}

We finish by applying the finite-layer lemma to the prescribed threshold grid and
row-mixing schedule. The resulting exact sum gives the slack vector used in the
proof of the main theorem.

\begin{lemma}[Mixed-row layered slack certificate]
\label{lem:final24_layered}
Use Definition~\ref{def:final24_parameters}. There is a random vector
$Z$ with $Z_e\geq-\beta(H)x_e$ and
\[
\frac{x(\delta(S))}{2}+Z(\delta(S))\geq1
\]
for every cut avoiding $e_0$ with odd tree boundary. Moreover,
$\E[Z_e]\leq-B_*x_e$, where
\[
B_*:=\frac{371494773516081479041}{133333333517622666730346463630000000000}
>2.78621079751\cdot10^{-18}.
\]
\end{lemma}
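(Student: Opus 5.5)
The plan is to instantiate Lemma~\ref{lem:finite_layer_combination} on the uniform grid $u_i:=iH/N$, $0\leq i\leq N=10^6$, with $H=138217/10^{14}$. Each layer vector $V_i$ comes from Lemma~\ref{lem:final24_uniform}, applied at threshold $u_i$ with amplitude $\beta:=\Delta_i=\beta(u_i)-\beta(u_{i-1})$.

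\textbf{Step 1: validity of each layer.} Every threshold satisfies $u_i\leq H<1.4\cdot10^{-9}$, so Lemma~\ref{lem:final24_hierarchy} gives a valid hierarchy with conservative error $d=14u_i<d_0$. Lemmas~\ref{lem:candidate_payment_replay}, \ref{lem:prelim_hat_pair} and~\ref{lem:final24_credit} then apply to every row $j$ with $b_j(u_i)>0$. Fix the hierarchy at $u_i$ first. Choose weights $w^{(i)}$ according to the thirty-three-interval schedule: on each interval, two designated rows are mixed, with weights chosen so that their mixed credits $A_w$ and $B_w$ agree, or else a single row is used. Lemma~\ref{lem:final24_uniform} then gives $V_i=z^{(u_i)}$ with three properties: $(V_i)_e\geq-\Delta_ix_e$; $V_i(\delta(S))\geq0$ on every relevant odd $u_i$-near-minimum cut; and $\E[(V_i)_e]\leq-\kappa_i\Delta_ix_e$, where
\[
\kappa_i:=C_{w^{(i)}}(u_i)-R_2(u_i)u_i .
\]
The relevant cuts are exactly the one-sided and two-sided classes, and these exhaust the near-minimum cuts. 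The root needs no constraint because it has even parity almost surely, as shown in the root-parity paragraph. Hence the cut hypothesis of Lemma~\ref{lem:finite_layer_combination} holds for every odd cut avoiding $e_0$ with $x(\delta(S))<2+u_i$. All layers are realized conditionally on one common exact-law tree.

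\textbf{Step 2: combination.} Lemma~\ref{lem:finite_layer_combination} gives $Z:=\sum_iV_i$ with $Z_e\geq-\beta(H)x_e$, with the odd-cut inequality for all cuts avoiding $e_0$, and with
\[
\E[Z_e]\leq-\Big(\sum_{i=1}^N\kappa_i\Delta_i\Big)x_e .
\]
It remains to show that $\sum_i\kappa_i\Delta_i\geq B_*$.

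\textbf{Step 3: the rational ledger, which is the main obstacle.} Evaluating $10^6$ terms exactly requires monotonicity in $u$. On each schedule interval $[u_a,u_b]$ with fixed weights, the quantities $m_j(u)$, $A_j(u)$ and $B_j(u)$ are nonincreasing in $u$, since $b_j(u)=a_{{\rm bot},j}-R_1(u)u$ decreases. The function $C_w$ is increasing in $M_w$ and in $\bar m$, and $R_2(u)u$ increases. So on a subinterval one may lower-bound $\kappa_i$ by its value at the right endpoint, or, more sharply, evaluate each term exactly in rational arithmetic. Note that $\Delta_i$ is rational in $u_i$.

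I would first try exact per-grid-point rational evaluation, summed by computer. If a closed form is preferred, bound $\kappa(u)$ below on each of the thirty-three intervals by an explicit rational function of $u$ and sum it against $\Delta_i$. The displayed rational $B_*$ is presumably the exact output of such a ledger. The genuine difficulty is twofold: certifying $b_j(u)>0$ and the positivity premises of Lemma~\ref{lem:final24_credit} for every used row on its interval, and checking that the chosen weights make $M_w$ large enough that the recorded lower bound for $\kappa$ is valid at every grid point. Both checks reduce to finitely many rational comparisons, carried out with the arithmetic rules of Section~\ref{sec:arithmetic_details}. Finally, compare $B_*$ with $2.78621079751\cdot10^{-18}$ by clearing denominators.
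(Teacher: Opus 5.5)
Your Steps~1 and~2 match the paper's construction. It uses the grid $u_i=iH/N$ with amplitude $\Delta_i$ in layer $i$. For each threshold it fixes one hierarchy before mixing the scheduled rows with equalizing weights, applies Lemma~\ref{lem:final24_uniform} to each layer, and combines the layers on one tree with Lemma~\ref{lem:finite_layer_combination}.

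The gap is in Step~3, which is where the stated constant comes from. You guess that $B_*$ is ``the exact output'' of a ledger, which is wrong: $B_*$ is a closed-form lower bound for $\sum_i\kappa_i\Delta_i$, not its value. You then offer two unspecified computations, and neither connects to this particular rational. Evaluating all $10^6$ terms exactly would give some rational that you would still have to compare with $B_*$, and your proposal never exhibits that comparison.

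The missing step is short.
\begin{itemize}
\item Since $\beta'(u)=1/(2+u)^2$ is decreasing, every layer satisfies $\Delta_i\geq H/(N(2+H)^2)$.
\item Certify $\kappa_i>0$ at every grid point, via the comparison $M_i>R_iD_i$.
\item Set $k_i:=\lfloor10^{18}\kappa_i\rfloor\geq0$, computed by Euclidean division of the exact rational $\kappa_i$.
\end{itemize}
Nonnegativity of $k_i$ is exactly what lets you replace each $\Delta_i$ by its uniform lower bound:
\[
\sum_i\kappa_i\Delta_i\geq\frac{10^{-18}H}{N(2+H)^2}\sum_ik_i .
\]
With $\sum_ik_i=8063294099483019$, the sum of the thirty-three block totals, the right side equals
\[
\frac{138217\sum_ik_i}{(2\cdot10^{14}+138217)^2\,10^{10}} .
\]
Cancelling a common factor $3$ gives exactly the displayed fraction $B_*$. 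This reduces the whole ledger to an integer sum and needs no monotonicity of $\kappa$ in $u$.

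Your fallback route also needs repair. The equalizing weights $w_j=\xi_k/(\xi_k-\xi_j)$, with $\xi_j=A_j(u_i)-B_j(u_i)$, change from layer to layer. A right-endpoint bound ``with fixed weights'' therefore requires actually freezing the weights in the construction, and then nothing guarantees you reach $B_*$.

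Finally, add $\xi_j\xi_k<0$ in every paired layer to your list of checks. This is what makes the equalizing weights positive, and it is also why $\kappa_i>0$ must be certified before the substitution above.
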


\begin{proof}
Set $u_i:=iH/N$ for $0\leq i\leq N$, with $H=138217/10^{14}$ and
$N=10^6$. The hierarchy domain holds because
$H<1.4\cdot10^{-9}$ and $14H<d_0$.
Every local estimate in the thirty-four rows holds throughout $d\leq d_0$;
no former row-specific absorption endpoint is required.

Use the following explicit support schedule. A singleton has weight one.
For a pair $\{j,k\}$, put $\xi_j:=A_j(u_i)-B_j(u_i)$ and use
\[
w_j:=\frac{\xi_k}{\xi_k-\xi_j},\quad
w_k:=\frac{-\xi_j}{\xi_k-\xi_j}.
\]
The exact checks give $\xi_j\xi_k<0$ in every paired layer, so both weights
are positive and their sum is one. They satisfy $A_w=B_w$.
In every selected row $b_j(u_i)>0$, and the coefficient
$\kappa_i:=\kappa_w(u_i)$ from Lemma~\ref{lem:final24_uniform} is
positive. Define $k_i:=\lfloor10^{18}\kappa_i\rfloor$.
The support schedule and exact block sums are
\begin{center}
\begin{tabular}{r r c r}
\toprule
First layer & Last layer & Selected rows & Sum of $k_i$\\
\midrule
1&14088&$\{10,30\}$&217260955401051\\
14089&14090&$\{30,34\}$&30637421414\\
14091&44479&$\{16,34\}$&458767788390354\\
44480&89026&$\{6,16\}$&648090572825924\\
89027&134640&$\{6,11\}$&633530002969706\\
134641&166368&$\{11,20\}$&422714025435708\\
166369&184478&$\{20,25\}$&234674686479670\\
184479&195204&$\{8,25\}$&136726465521807\\
195205&302814&$\{5,8\}$&1278458890762992\\
302815&388420&$\{5,12\}$&895156800224777\\
388421&411665&$\{12,26\}$&224275150546967\\
411666&411669&$\{26,31\}$&37901962082\\
411670&435255&$\{17,31\}$&219350764086864\\
435256&435256&$\{17,21\}$&9124613525\\
435257&474337&$\{1,21\}$&345229810035612\\
474338&563639&$\{1,22\}$&703465604769660\\
563640&563661&$\{18,22\}$&158659574817\\
563662&610946&$\{18,32\}$&324318491964897\\
610947&610947&$\{27,32\}$&6505963460\\
610948&658498&$\{13,27\}$&292477716623809\\
658499&675981&$\{3,13\}$&98902504761856\\
675982&783682&$\{3,7\}$&501708911379360\\
783683&794120&$\{7,23\}$&38770847097740\\
794121&802164&$\{23,28\}$&28683268975445\\
802165&802168&$\{28,33\}$&14004100977\\
802169&810104&$\{14,33\}$&27276565132550\\
810105&839259&$\{9,14\}$&91483445017895\\
839260&895836&$\{2,9\}$&138264810144107\\
895837&952903&$\{2,15\}$&82279356325765\\
952904&981887&$\{15,24\}$&18033363879912\\
981888&981894&$\{19,24\}$&2420909725\\
981895&996713&$\{19,29\}$&3029676230309\\
996714&1000000&$\{4,29\}$&104369952282\\
\bottomrule
\end{tabular}
\end{center}
All computations use rational numbers, including the weights and floors;
Section~\ref{sec:arithmetic_details} specifies their evaluation.
Adding the thirty-three blocks gives
\[
\sum_{i=1}^{10^6}k_i=8063294099483019.
\]
No optimality assertion for the support schedule is needed.

Since $\beta'(u)=1/(2+u)^2$, the layer amplitudes satisfy
\[
\Delta_i:=\beta(u_i)-\beta(u_{i-1})
\geq\frac{H}{N(2+H)^2}.
\]
Consequently
\begin{equation}\label{eq:four_band_B}
\sum_i\kappa_i\Delta_i
\geq\frac{10^{-18}H}{N(2+H)^2}\sum_i k_i=B_*.
\end{equation}
In particular, the exact comparison gives the implementation reserve
\[
B_*-10^{-28}-2.78621\cdot10^{-18}>7.97419\cdot10^{-25}.
\]

At layer $i$, use amplitude $\Delta_i$ and mix the selected rows on
one fixed hierarchy. Jointly realize all the finite auxiliary laws
conditionally on the same sampled tree. Lemma~\ref{lem:final24_uniform}
supplies the coordinate and odd near-minimum-cut bounds, so
Lemma~\ref{lem:finite_layer_combination} proves all-cut feasibility and
the asserted expectation bound. Different layers need not have nested
hierarchies or persistent edge types. Since $\beta(H)<1/2$, the join
coordinates are nonnegative. The tour algorithm does not compute the
mixtures, auxiliary events, or centered expectations.
\end{proof}

\subsection{Exact arithmetic details}
\label{sec:arithmetic_details}

We give the arithmetic rules, finite ranges, and intermediate margins
needed to reproduce the numerical comparisons in the proof. The inputs
are precisely the thirty-four rows of Definition~\ref{def:final24_parameters}.
All subsequent quantities are computed from those rows and the formulas
cited below; the displayed rounded margins are conclusions, not
replacement inputs. The probability and conditioning arguments establish
the inequalities to be checked. The arithmetic below verifies their
scalar premises, not the underlying probabilistic assertions.

{\bf Rational arithmetic and exponential bounds.}
Represent a terminating decimal with $k$ decimal places by its integer
numerator divided by $10^k$. Perform addition, multiplication, division,
and comparisons over the rationals. Before dividing by a quantity whose
sign is used, check its strict positivity. For integers $a$ and $b>0$,
the rounding operations used in Definition~\ref{def:final24_parity} are
exactly
\[
\lfloor a/b\rfloor_{20}
=\frac{\lfloor10^{20}a/b\rfloor}{10^{20}},\quad
\lceil a/b\rceil_{20}
=-\frac{\lfloor-10^{20}a/b\rfloor}{10^{20}}.
\]
Thus downward and upward rounding enclose the input with width less
than $10^{-20}$, including when the input is negative.

For $0\leq x<42$, define
\[
t_0=1,\quad t_j=xt_{j-1}/j\ (1\leq j\leq40),\quad
L(x)=\sum_{j=0}^{40}t_j,\quad
U(x)=L(x)+\frac{xt_{40}}{41(1-x/42)}.
\]
The tail terms after $t_{41}$ have successive ratios at most $x/42$.
Consequently
\[
L(x)\leq e^x\leq U(x),\quad
E_-(x):=1/U(x)\leq e^{-x}\leq E_+(x):=1/L(x).
\]
These are rational functions, not floating-point exponential evaluations.
Use $E_-$ in a positive lower bound and $E_+$ in a positive upper bound;
reverse the choices after a negative coefficient. For an interval
$x\in[a,b]\subset[0,42)$, use
$e^{-x}\in[E_-(b),E_+(a)]$.
For general rational interval operations, addition uses the endpoint
sums, multiplication uses the minimum and maximum of the four endpoint
products, and reciprocation sends $[a,b]$ to $[1/b,1/a]$ when $a>0$.
These rules preserve inclusion. Round intermediate tail bounds upward
and updated atom bounds downward only as prescribed in
Lemma~\ref{lem:tail_profiles}; retain the parity rounding and final floors.
For the paired-event comparison use the same exponential formulas with
degree $41$ in place of $40$, and replace the denominator $1-x/42$
by $1-x/43$ accordingly.
For a nonnegative rational $v=a/b$ and integer scale $M=10^{20}$ or
$10^{30}$, choose the unique nonnegative integer $k$ satisfying
\[
k^2b\leq aM^2<(k+1)^2b.
\]
Then $k/M\leq\sqrt v<(k+1)/M$ supplies the specified directed root
enclosures using only integer comparisons.
For an upper root enclosure use $k/M$ when $k^2b=aM^2$, and
$(k+1)/M$ otherwise.
The atom-dependent coefficients require only polynomial sign tests.
For $\delta>0$ and $0<x\leq1/3$, put $q=q_L(x)$, $w=w_L(x)$ and
\[
\mathcal D_{L,\delta}(x)
=q^2\delta^2+4w\delta^3-4q^3-27w^2-18qw\delta.
\]
This is the discriminant of $wS^3+qS^2-\delta S+1$.
Its negative critical point has positive value, and its positive
minimum is nonpositive exactly when $\delta\geq d_L(x)$.
Hence its sign is positive below $X_L(\delta)$ and negative above.
The polynomials $12\mathcal D_{1/2,\delta}(x)/x^2$ and
$16\mathcal D_{1/4,\delta}(x)/x^2$ are, respectively,
\begin{align*}
&-30x^4+(72-36\delta)x^3+(3\delta^2+126\delta-9)x^2\\
&\hspace{1em}-(16\delta^3+12\delta^2+108\delta+48)x
+24\delta^3+12\delta^2,\\
&-54(\delta+1)x^3+(9\delta^2+126\delta+117)x^2\\
&\hspace{1em}-(16\delta^3+24\delta^2+72\delta+64)x
+16\delta^3+16\delta^2.
\end{align*}
Integer bisection on multiples of $10^{-30}$, after clearing positive
denominators, gives the upper enclosure
$\bar X_L=10^{-30}\lceil10^{30}X_L\rceil$; at $\delta=0$ use zero.
Replace $x_c,y_c$ by $\bar X_{1/2},\bar X_{1/4}$ in the formulas
for $J,K$ to obtain lower bounds $\bar J,\bar K$, since those
ratios decrease in $x$.
Both $c\bar J(c)$ and $c\bar K(c)$ increase with $c$.
Use them at the downward-rounded atoms before rounding tails upward
as in Lemma~\ref{lem:tail_profiles}. Concavity is required only of
the unrounded profiles, not these staircase evaluations.

For the second-atom coefficient, define
\[
P_\delta(r)=r^3+(18\delta+8)r^2
+(9\delta^2-72\delta-48)r+36\delta^3.
\]
The discriminant of
$9t^3+9rt^2-9r\delta t+r^2(4-r)$ equals $729r^3P_\delta(r)$.
For $2\leq r<4$, its negative critical value is positive;
its positive minimum is $rF_\delta(r)$. Thus $P_\delta$ has
the sign of $-F_\delta$. If $P_\delta(2)\geq0$, use $r_*=2$.
Otherwise bisection on $[2,4]$ at scale $10^{20}$ gives
$10^{-20}\lfloor10^{20}r_*\rfloor$, retaining a root on the grid
exactly. At $\delta=0$ use $r_*=4$. Take the maximum with the
exact $k_0$ as in Lemma~\ref{lem:atom_newton}.
As an independent arithmetic check, the same signs follow from
\[
P_\delta(r)=36\delta^3+9r\delta^2-18r(4-r)\delta
-4r^2(4-r)-3r(4-r)^2.
\]

In the large-window bound,
if $s\leq\sqrt{1-b}<s+10^{-30}$, use
$2bs/(1+s)\leq\phi(b)$; its error is less than $10^{-30}$.
The definitions of $T_{18}$ and the degree-$16$ parity initialization
retain their stated shorter sums. Also,
$c(x)=1-x+x^2/2-x^3/6\leq e^{-x}$ on $[0,1]$.

{\bf Probability comparisons.}
For a binomial atom at an exact mean $M$, use
\[
\mathcal B_{n,k}(M):=\binom nk(M/n)^k(1-M/n)^{n-k},
\quad \mathcal F_{n,k}(M):=\sum_{j=0}^{k}\mathcal B_{n,j}(M).
\]
Terms outside $0\leq j\leq n$ are zero. The finite dimension ranges
and mean endpoints are specified in
Lemma~\ref{lem:bernoulli_estimates}; substitute them in these two
expressions. The logarithmic inequalities in its proof cover every
larger dimension. In particular the CDF-at-two comparisons use
$3\leq n\leq99$ at $M=2.01$ and $3\leq n\leq9$ at $M=2.5$,
followed by the respective bounds valid for all $n\geq100$ and $n\geq10$.
A finite dimension calculation is not used in place of those arguments.

Here are the remaining event evaluations, with all outer conditioning
factors retained.
\begin{enumerate}[label=(\roman*)]
\item In Lemma~\ref{lem:candidate_good_threshold}, use the intervals
$[q_*,0.1]$ and $[0.1,q_c]$, with the indicated two tail profiles.
Check the atom, coefficient, ratio, degree-one, denominator, and lower
$W$ conditions at their worst endpoints, then the four restored
endpoint probabilities. In Lemma~\ref{lem:final24_competing}, use the
ordinary profile with $c=0.33$ and endpoint $0.11$ for the first branch.
Its two restored margins above $0.01330$ exceed $0.00066817$ and
$0.01180996$. The second branch retains its displayed mean and derivative
checks. The fixed-mean estimates cover all finite dimensions.
\item For the small non-half event, compute $E_{\delta_3,0}$ and $T_3$ by the
two finite updates in Definition~\ref{def:tail_profiles}. Check
$\delta_3>0$ and $2q_s>\delta_3$. At $q=0.04$ verify the atom,
coefficient, and factorial-ratio conditions in the small-event proof;
also
check $0<e-E(q)<0.02$, $1-E_+(L_0-E(q))>0.38$, and every inner
degree-one, denominator, and $0.015$ comparison printed in its proof.
Check $r+4d_0-e^2/2>0$, $A>0$, $\varphi>0.979$, and $D<0.015$.
Then evaluate $o_sG(q_s)$ and $o_sG(0.04)$, using upward tail and
excess enclosures inside decreasing expressions. The two concavity
arguments prove coverage of both continuous intervals. The large-rank
branch $q\geq0.04$ is checked separately as displayed.
\item For the large non-half event, take the minimum of its two endpoint
expressions on $[q_l,0.025]$ and its bounds on $[0.025,0.12]$ and
$[0.12,1]$. Replace each positive exponential factor by $E_-$.
The proved derivative sign excludes an interior minimum in the first range.
\item In Lemma~\ref{lem:cutoff_central_rank}, perform four mean updates
at each of $L_*,0.04,0.1$. Check that all atoms increase in $(0,1)$,
their final values exceed $0.75$, and $D_{L_*}>D_s>D_m>0$.
Compute the upper root enclosure and verify
$0<m_*<L_*<0.04$ and $m_*+m_*^2/D_{L_*}<L_*$.
Evaluate $W_0,W_1,W_2$ and every derivative and domain comparison in
the oriented-window proof. Use $D_s$ only on its small interval and
$D_m$ on the middle interval. For the direct mixed event, check both
derivatives, both fallback comparisons, and its full product $P_0P_e$.
The actual lower mean $1-3r-4d_0$ and its nonzero-tail bound are used;
no lower mean $0.994$ is required.
\item In Lemma~\ref{lem:candidate_reparam_527}, form the $21$ crossing
intervals and all $231$ ordered rectangles. For each rectangle take both
orientations and both separator branches. Compute the two zero-atom
updates at each lower crossing endpoint, the three separator updates,
all six conditioning probabilities, and the seven subset rows.
Check all atom domains, positive denominators, decreasing iterations,
probability ordering, and strict target-neighbor inequalities.
For capacity, the four-product bound may be retained whenever its complete
oriented event probability already exceeds $p$. Otherwise enumerate
the vertices as prescribed in Lemma~\ref{lem:coupled_subset_capacity}:
choose three independent subset normals and each of their eight endpoint
choices, solve the three rational linear equations, retain exactly the
solutions satisfying all seven closed subset intervals, and minimize
the four products there. Nonemptiness and positivity are checked.
Multiply by $Q_iE_-(2-i)/2$, using the degree-$41$ version of the
exponential enclosure for this step. Add the two orientations only
within the same branch, then compare to $p$.
\item For the polygon event put $q_M=(\epsilon_M-2d_0)/2$ and evaluate
$\frac{27}{4}(0.94)(q_M-10d_0)^2(1-q_M)(1-3d_0/2)$.
Its target is $P=sp$, not $p$. Check also the good-half matching
condition $\gamma_{\rm good}h<0.01330$ and the polygon-charge premises.
\end{enumerate}

The table below gives strict downward-rounded margins. Columns $S,W,D$
are the small-event, oriented-window, and direct-mixed probabilities
minus $p$, in units of $10^{-12}$. Column $G$ is the good-threshold
probability minus $\gamma_{\rm good}h$, in units of $10^{-9}$.
Column $C$ is $10^6$ times the minimum paired-event ratio to $p$ minus
one; column $B$ is the parity contradiction margin in units of $10^{-8}$.
Only the exact parameter rows and displayed formulas are inputs.
The rounded margins are not fed back into any calculation.

\begin{center}
\begin{tabular}{c r r r r r r}
\toprule
Row&$S$&$G$&$W$&$D$&$C$&$B$\\
\midrule
1&116.55539&8.29929&8.99263&8.99375&12.10133&4.76494\\
2&150.32379&8.28640&9.35568&9.35687&11.63227&4.83037\\
3&125.97651&8.26930&9.04100&9.04238&11.75221&4.77036\\
4&366438.36257&8.28699&9.25092&9.25213&11.38801&4.83052\\
5&109.52046&8.36560&8.93930&16.65899&12.12832&4.78380\\
6&104.44739&8.43372&8.90712&8.90820&12.06586&4.80786\\
7&137.66638&8.28314&9.10323&9.10435&11.57181&4.80317\\
8&5.92940&8.40021&8.90875&8.90981&11.98125&4.79638\\
9&6.39964&8.27253&9.23932&9.24051&11.69931&4.81015\\
10&5.66903&3.68091&8.92105&8.92213&12.21288&2.04463\\
11&5.69439&3.66479&8.94746&8.94855&12.17801&2.03715\\
12&5.75915&3.63861&9.01217&9.01326&12.09090&2.02550\\
13&5.83712&3.61258&9.08647&9.08759&11.94138&2.01496\\
14&5.95145&3.60075&9.18945&9.19062&11.67475&2.01371\\
15&182151.22512&3.59453&9.30774&9.30896&11.50076&2.01312\\
16&5.62509&2.88904&8.92170&8.92278&12.16328&1.58350\\
17&5.70038&2.86327&9.00405&9.00515&12.02764&1.56929\\
18&5.78240&2.84415&9.08691&9.08791&12.06074&1.55879\\
19&274656.67445&2.81722&9.30797&9.30918&11.46289&1.54431\\
20&5.65804&2.86665&8.95784&8.95893&12.17137&1.57222\\
21&5.71902&2.84882&9.02278&9.02388&12.06365&1.56237\\
22&5.75543&2.83971&9.06039&9.06150&11.99214&1.55734\\
23&5.87313&2.81727&9.17709&9.17826&11.70115&1.54512\\
24&274635.69474&2.80615&9.33445&9.33564&11.48165&1.53908\\
25&5.65719&2.86664&8.95710&8.95818&12.15669&1.57221\\
26&5.72043&2.84882&9.02396&9.02508&12.09520&1.56237\\
27&5.75502&2.83970&9.06051&9.06162&11.93247&1.55734\\
28&5.87627&2.81727&9.18044&9.18162&11.67738&1.54512\\
29&321952.77326&2.80614&9.33403&9.33521&11.46233&1.53907\\
30&5.63328&2.87591&8.93017&8.93126&12.20501&1.57733\\
31&5.71011&2.85052&9.01363&9.01475&12.07570&1.56327\\
32&5.76196&2.83641&9.06728&9.06840&11.96475&1.55549\\
33&5.88168&2.81637&9.18548&9.18663&11.68664&1.54458\\
34&5.63242&2.87590&8.92928&8.93039&12.20480&1.57731\\
\bottomrule
\end{tabular}
\end{center}
{\bf Bad-endpoint parity.}
For each row, evaluate the eight initial stages in
Definition~\ref{def:final24_parity}. Then enumerate the $128$ closed
bundle-probability cells in Lemma~\ref{lem:final24_bad_parity}.
Compute $D_i,\delta_i,L_i,q_i$ and check
\[
D_i>0,\quad 0<q_i<0.1,\quad
0<\delta_i<E_-(1.2)/2,\quad
2-2a_i+2\rho+b_i\Delta_0<1.2,\quad
1-E_+(L_i)>0.39.
\]
On each of the two tail intervals, check the endpoint atom and $J$
domains and the positive degree-one and denominator conditions.
For $[l,u]$ the lower numerator check is
$1+l-2L_i+(2-\lambda_u)3l^2/2>0$.
Evaluate the four endpoint bounds, take their minimum with $0.025$
to obtain $z_i$, and check $z_i>0$ and
$\widetilde b_i=\overline b-(1-b_i)z_i>0$.
Run the eight cell-specific stages. At each stage check that the split
argument belongs to $(0,1)$, the rank-three lower bound is positive,
and the next upper bound does not increase. The Newton coefficient is
$k(\min\{\ell_j,1\})$, not the constant two. If $\ell_j>1$, the
assumed conditional law is already impossible; the capped coefficient
also keeps every stated numerical operation defined.
The initial $\Delta_0$ values are all below $0.17$, and the minimum of
$D_i-2\widetilde B_7-\widetilde Q_7$ exceeds
$1.53907\cdot10^{-8}$ in all thirty-four rows, as recorded in column $B$.
Both bootstraps stop at the prescribed eighth stage; no convergence
assumption is used. The $128$ cells and the $21$ crossing intervals
have matching adjacent endpoints and include the endpoints of their
full domains. Their mean and tail enclosures hold throughout each
closed cell, not merely at sampled points.

{\bf Child parity, ancestor credit, and payment.}
Use the five values $y_l$ in Lemma~\ref{lem:candidate_ancestor}.
With $e_{\rm ch}=\epsilon_M/2+40d_0$, the precise upper enclosure is
\[
\widehat Q_2(y)=
\frac{1+(1-y+e_{\rm ch})^2E_+(2(1-y))}{2}+14d_0.
\]
Check the child-parity comparison in
Lemma~\ref{lem:polygon_child_parity} using $E_-$ on its right side.
Then compute $c_y=1-s\widehat Q_2(y)$ and
$\kappa_l=\min\{\varepsilon_l(1-\varepsilon_l),c_{y_l}\}$.
Check $y_l>e_{\rm ch}$, $\varepsilon_l^2>2d_0$,
$0<\kappa_1\leq\cdots\leq\kappa_5$, and the three last-level
comparisons at $z=0.35$ in that proof. The exact values in every row give
\[
\sum_{l=1}^5(\kappa_l-\kappa_{l-1})(0.35-y_l-2d_0)
>\chi(1+d_0).
\]
Each $\kappa_l$ is evaluated before any rounding. Monotonicity in $z$ and
$d$ then covers $0.35\leq z\leq0.99$. The adjacent ranges use the
endpoint expressions and the derivative of $F_d$ already printed in
Lemma~\ref{lem:candidate_ancestor}. This is a finite sum of credit
increments, so overlapping thresholds do not count the same credit twice.

The remaining matching and payment checks are direct substitutions in
Lemmas~\ref{lem:candidate_matching}, \ref{lem:candidate_ancestor}, and
\ref{lem:candidate_payment_replay}. In particular retain
$q_{\rm bad}<q_0$, $sq_0<1$, $a_{\rm bot}>a>0$,
$1-\sigma-2d_0>0$, the comparisons $J_1>J_2,J_3$, and all
row interfaces in Lemma~\ref{lem:final24_credit}.
The proofs specify the worst endpoint whenever $d$ varies in
$[0,d_0]$; one must use those monotonicity arguments, rather than infer
uniformity from a single numerical substitution.

{\bf Layer sum.}
Use the exact parameters in Definition~\ref{def:final24_parameters},
put $d=d_0$, and evaluate $u_i=iH/N$ for $1\leq i\leq N$.
First compute each row's
$\chi_j,\sigma_j,\rho_{{\rm sm},j},a_j,a_{{\rm bot},j}$ from that
definition, and $a_j^{(4)},V_j$ from
Lemma~\ref{lem:final24_credit}. Check
\[
1-\sigma_j-2d>0,\quad
0<a_j\leq a_j^{(4)},\quad a_j<a_{{\rm bot},j},\quad V_j<0.9.
\]
Let $S_i$ be the selected row set in the thirty-three-block table of
Lemma~\ref{lem:final24_layered}. For each $j\in S_i$, evaluate
\[
b_j(u_i),\ m_j(u_i),\ A_j^{(0)}(u_i),\ A_j^{(1)}(u_i),\
A_j(u_i),\ B_j(u_i)
\]
in this order, using Lemma~\ref{lem:final24_credit}.
Check $b_j(u_i)>0$, $0<m_j(u_i)<1$, and $A_j(u_i),B_j(u_i)>0$.
The denominators $1-u_i$ and $1-7u_i$ are positive since $u_i\leq H$.

Next evaluate the weights in Lemma~\ref{lem:final24_layered}, checking
$\xi_j\xi_k<0$ in every paired layer. Their sum is one and
$A_w(u_i)=B_w(u_i)$ in that case. Compute
$\overline m_i:=\bar m$, $A_w(u_i)$, $B_w(u_i)$,
$M_i:=M_w(u_i)$, $C_w(u_i)$, and $\kappa_i:=\kappa_w(u_i)$
from Lemma~\ref{lem:final24_uniform}. With
\[
D_i:=(1-\overline m_i)(2+u_i)+M_i,\quad
R_i:=\frac{5(2+u_i)u_i}{1-u_i},
\]
one has $D_i>0$ and $\kappa_i=(M_i-R_iD_i)/D_i$.
Check positivity by the exact comparison $M_i>R_iD_i$.

Write $\kappa_i=n_i/d_i$ with positive integers $n_i,d_i$.
Euclidean division determines the unique integers $k_i,e_i$ satisfying
\[
10^{18}n_i=k_id_i+e_i,\quad 0\leq e_i<d_i.
\]
Consequently
\[
\frac{k_i}{10^{18}}\leq\kappa_i<\frac{k_i+1}{10^{18}},
\quad k_i=\lfloor10^{18}\kappa_i\rfloor.
\]
This rule includes exact integer endpoints and does not require a
reduced fraction. Starting from zero, add the $k_i$ in increasing
order of $i$ within each block. All positivity, weight, and floor
comparisons are checked before including a summand, with no rounding
before the Euclidean division. This gives the thirty-three block sums printed
in Lemma~\ref{lem:final24_layered}; no additional data are required.
Their sum, Eq.~\eqref{eq:four_band_B}, and clearing positive
denominators give the stated $B_*$ and implementation reserve.

%% file: main.bbl
\begin{thebibliography}{ABCC07}

\bibitem[ABCC07]{applegate2007}
David~L. Applegate, Robert~E. Bixby, Va\v{s}ek Chv\'{a}tal, and William~J.
  Cook.
\newblock {\em The Traveling Salesman Problem: A Computational Study}.
\newblock Princeton University Press, 2007.

\bibitem[Aro98]{arora98}
Sanjeev Arora.
\newblock Polynomial time approximation schemes for {Euclidean} traveling
  salesman and other geometric problems.
\newblock {\em Journal of the ACM}, 45(5):753--782, 1998.

\bibitem[BBL09]{bbl09}
Julius Borcea, Petter Br{\"a}nd{\"e}n, and Thomas~M. Liggett.
\newblock Negative dependence and the geometry of polynomials.
\newblock {\em Journal of the American Mathematical Society}, 22(2):521--567,
  2009.

\bibitem[Chr76]{christofides76}
Nicos Christofides.
\newblock Worst-case analysis of a new heuristic for the traveling salesman
  problem.
\newblock Technical Report 388, Graduate School of Industrial Administration,
  Carnegie-Mellon University, 1976.

\bibitem[DFJ54]{dfj54}
George~B. Dantzig, D.~Ray Fulkerson, and Selmer~M. Johnson.
\newblock Solution of a large-scale traveling-salesman problem.
\newblock {\em Operations Research}, 2(4):393--410, 1954.

\bibitem[GKL24]{gkl24}
Leonid Gurvits, Nathan Klein, and Jonathan Leake.
\newblock From trees to polynomials and back again: New capacity bounds with
  applications to {TSP}.
\newblock In {\em 51st International Colloquium on Automata, Languages, and
  Programming ({ICALP} 2024)}, volume 297 of {\em Leibniz International
  Proceedings in Informatics ({LIPIcs})}, pages 79:1--79:20. Schloss
  Dagstuhl--Leibniz-Zentrum fuer Informatik, 2024.
\newblock Full version: arXiv:2311.09072v2.

\bibitem[GSS11]{oss11}
Shayan~Oveis Gharan, Amin Saberi, and Mohit Singh.
\newblock A randomized rounding approach to the traveling salesman problem.
\newblock In {\em Proceedings of the 52nd Annual {IEEE} Symposium on
  Foundations of Computer Science ({FOCS} 2011)}, pages 550--559. {IEEE}, 2011.

\bibitem[HK70]{heldkarp70}
Michael Held and Richard~M. Karp.
\newblock The traveling-salesman problem and minimum spanning trees.
\newblock {\em Operations Research}, 18(6):1138--1162, 1970.

\bibitem[KKO20]{kko20}
Anna~R. Karlin, Nathan Klein, and Shayan {Oveis Gharan}.
\newblock An improved approximation algorithm for {TSP} in the half integral
  case.
\newblock In {\em Proceedings of the 52nd Annual {ACM} {SIGACT} Symposium on
  Theory of Computing ({STOC} 2020)}, pages 28--39. Association for Computing
  Machinery, 2020.

\bibitem[KKO21]{kko21}
Anna~R. Karlin, Nathan Klein, and Shayan {Oveis Gharan}.
\newblock A (slightly) improved approximation algorithm for metric {TSP}.
\newblock In {\em Proceedings of the 53rd Annual {ACM} {SIGACT} Symposium on
  Theory of Computing ({STOC} 2021)}, pages 32--45. Association for Computing
  Machinery, 2021.
\newblock Full version: arXiv:2007.01409v6.

\bibitem[KKO22]{kko22}
Anna~R. Karlin, Nathan Klein, and Shayan {Oveis Gharan}.
\newblock A (slightly) improved bound on the integrality gap of the subtour
  {LP} for {TSP}.
\newblock In {\em Proceedings of the 63rd {IEEE} Annual Symposium on
  Foundations of Computer Science ({FOCS} 2022)}, pages 832--843. {IEEE}, 2022.
\newblock Full version: arXiv:2105.10043v3.

\bibitem[KKO23]{kko23}
Anna~R. Karlin, Nathan Klein, and Shayan {Oveis Gharan}.
\newblock A deterministic better-than-$3/2$ approximation algorithm for metric
  {TSP}.
\newblock In {\em Integer Programming and Combinatorial Optimization ({IPCO}
  2023)}, volume 13904 of {\em Lecture Notes in Computer Science}, pages
  261--274. Springer, 2023.

\bibitem[KLS15]{karpinski15}
Marek Karpinski, Michael Lampis, and Richard Schmied.
\newblock New inapproximability bounds for {TSP}.
\newblock {\em Journal of Computer and System Sciences}, 81(8):1665--1677,
  2015.

\bibitem[Mit99]{mitchell99}
Joseph S.~B. Mitchell.
\newblock Guillotine subdivisions approximate polygonal subdivisions: A simple
  polynomial-time approximation scheme for geometric {TSP}, $k$-{MST}, and
  related problems.
\newblock {\em SIAM Journal on Computing}, 28(4):1298--1309, 1999.

\bibitem[MS11]{momke_svensson11}
Tobias M\"omke and Ola Svensson.
\newblock Approximating graphic {TSP} by matchings.
\newblock In {\em Proceedings of the 52nd Annual {IEEE} Symposium on
  Foundations of Computer Science ({FOCS} 2011)}, pages 560--569. {IEEE}, 2011.

\bibitem[Muc12]{mucha12}
Marcin Mucha.
\newblock $13/9$-approximation for graphic {TSP}.
\newblock In {\em 29th International Symposium on Theoretical Aspects of
  Computer Science ({STACS} 2012)}, volume~14 of {\em Leibniz International
  Proceedings in Informatics ({LIPIcs})}, pages 30--41, 2012.

\bibitem[Ser78]{serdyukov78}
Anatoliy~I. Serdyukov.
\newblock O nekotorykh ekstremal'nykh obkhodakh v grafakh.
\newblock {\em Upravlyaemye Sistemy}, 17:76--79, 1978.

\bibitem[SV14]{sebo_vygen14}
Andr\'{a}s Seb\H{o} and Jens Vygen.
\newblock Shorter tours by nicer ears: $7/5$-approximation for the graph-{TSP},
  $3/2$ for the path version, and $4/3$ for two-edge-connected subgraphs.
\newblock {\em Combinatorica}, 34(5):597--629, 2014.

\bibitem[SV19]{straszak2019maximum}
Damian Straszak and Nisheeth~K. Vishnoi.
\newblock Maximum entropy distributions: Bit complexity and stability.
\newblock In {\em Proceedings of the Thirty-Second Conference on Learning
  Theory}, volume~99 of {\em Proceedings of Machine Learning Research}, pages
  2861--2891, 2019.

\bibitem[SWZ14]{schalekamp14}
Frans Schalekamp, David~P. Williamson, and Anke~van Zuylen.
\newblock $2$-matchings, the traveling salesman problem, and the subtour {LP}:
  A proof of the boyd--carr conjecture.
\newblock {\em Mathematics of Operations Research}, 39(2):403--417, 2014.

\bibitem[Wil90]{williamson90}
David~P. Williamson.
\newblock Analysis of the {Held--Karp} heuristic for the traveling salesman
  problem.
\newblock Master's thesis, Massachusetts Institute of Technology, Cambridge,
  MA, June 1990.
\newblock Also available as MIT LCS Technical Report TR-479.

\bibitem[Wol80]{wolsey80}
Laurence~A. Wolsey.
\newblock Heuristic analysis, linear programming and branch and bound.
\newblock In {\em Mathematical Programming Studies}, volume~13, pages 121--134.
  Springer, 1980.

\end{thebibliography}
